\newcommand{\Caugt}{\mathcal{C}^{\text{aug}}_t}
\newcommand{\Ct}{\mathcal{C}_t}
\newtheorem{theorem}{Theorem}
\newtheorem{proposition}[theorem]{Proposition}
\newtheorem{lemma}[theorem]{Lemma}
\newtheorem{definition}{Definition}
\newtheorem{example}{Example}
\newtheorem{assumption}{Assumption}
\renewcommand\thmcontinues[1]{Continued}
\DeclareMathOperator*{\argmax}{argmax}
\DeclareMathOperator*{\argmin}{argmin}
\newcommand{\Wcostly}{W_{\text{costly}}}
\newcommand{\Wcheap}{W_{\text{cheap}}}
\newcommand{\wcostly}{w_{\text{costly}}}
\newcommand{\wcheap}{w_{\text{cheap}}}
\newcommand{\MPlatform}{M^{\text{E}}}
\newcommand{\MTrivial}{M^{\text{R}}}
\newcommand{\MIdeal}{M^{\text{I}}}
\newcommand{\CCheap}{C}
\newcommand{\UtilityCostly}{U^I_b}
\newcommand{\CCostlyBaseline}{C^I_b}
\newcommand{\CEngagement}{C^E}
\newcommand{\V}{V}
\newcommand{\muengagement}{\mu^{\text{e}}}
\newcommand{\muideal}{\mu^{\text{i}}}
\newcommand{\murandom}{\mu^{\text{r}}}
\newcommand{\UserConsumption}{\textrm{UCQ}}
\newcommand{\UserUtility}{\textrm{UW}}
\newcommand{\RealizedEngagement}{\textrm{RE}}
\date{}
\begin{document}

\author[1]{Nicole Immorlica}
\author[2]{Meena Jagadeesan}
\author[1]{Brendan Lucier}
\affil[1]{\textit{Microsoft Research}}
\affil[2]{\textit{University of California, Berkeley}}

\title{Clickbait vs. Quality: How Engagement-Based Optimization Shapes the Content Landscape in Online Platforms\footnote{Authors in alphabetical order. This research was conducted in part while MJ was at Microsoft Research.}}

\maketitle

\begin{abstract}

Online content platforms commonly use engagement-based optimization when making recommendations. This encourages content creators to invest in quality, but also rewards gaming tricks such as clickbait. To understand the total impact on the content landscape, we study a game between content creators competing on the basis of engagement metrics and analyze the equilibrium decisions about investment in quality and gaming. First, we show the content created at equilibrium exhibits a \textit{positive correlation between quality and gaming}, and we empirically validate this finding on a Twitter dataset. Using the equilibrium structure of the content landscape, we then examine the downstream performance of engagement-based optimization along several axes.
Perhaps counterintuitively, the average quality of content consumed by users can decrease at equilibrium as gaming tricks become more costly for content creators to employ. Moreover, engagement-based optimization can perform worse in terms of user utility than a baseline with random recommendations, and engagement-based optimization is also suboptimal in terms of realized engagement relative to quality-based optimization. Altogether, our results highlight the need to consider content creator incentives when evaluating a platform's choice of optimization metric.
\end{abstract}

\section{Introduction}\label{sec:introduction}

Content recommendation platforms typically optimize \textit{engagement metrics} such as watch time, clicks, retweets, and comments (e.g., \cite{tiktok21, twitter2023algorithm}). Since engagement metrics increase with content quality, one might hope that engagement-based optimization would lead to desirable recommendations. However, engagement-based optimization has led to a proliferation of clickbait \citep{youtube2019}, incendiary content \citep{M20}, divisive content \citep{RBL21} and addictive content \citep{BST22}. A driver of these negative outcomes is that engagement metrics not only reward \textit{quality}, but also reward \textit{gaming tricks} such as clickbait that worsen the user experience. 

In this work, we examine how engagement-based optimization shapes the landscape of content available on the platform. We focus on the role of strategic behavior by content creators: competition to appear in a platform's recommendations influences what content they are incentivized to create \citep{BT18, JGS22, HKJKD22}. In the case of engagement-based optimization, we expect that creators strategically decide how much effort to invest in quality versus how much effort to spend on gaming tricks, both of which increase engagement. For example, since the engagement metric for Twitter includes the number of retweets \citep{twitter2023algorithm}---which includes both quote retweets (where the retweeter adds a comment) and non-quote retweets (without any comment)---creators can either increase quote retweets by using offensive or sensationalized language \citep{MPG23} or increase non-quote retweets by putting more effort into the quality of their content (Example \ref{example:linear}). When the engagement metric for video content includes total watch time \citep{tiktok21}, creators may either increase the ``span'' of their videos---by investing in quality---or instead increase the ``moreishness'' by leveraging behavioral weaknesses of users such as temptation \citep{KMR22} (Example \ref{example:KMR}). When the engagement metric includes clicks, creators can rely on clickbait headlines \citep{youtube2019} or actually improve content quality (Example \ref{example:clicks}). 

Intuitively, creators must balance two opposing forces when incorporating quality and gaming tricks in the content that they create. On one hand, it is expensive for creators to invest in quality, but it may be much cheaper to utilize gaming tricks that also increase engagement. On the other hand, gaming tricks generate disutility for users, which might discourage them from engaging with the content even if it is recommended by the platform. This raises the questions:
\begin{quote}
 \textit{Under engagement-based optimization, how do creators balance between quality and gaming tricks at equilibrium? What is the resulting impact on the content landscape and on the downstream performance of engagement-based optimization?}  
\end{quote}

To investigate these questions, we propose and analyze a game between content creators competing for user consumption through a platform that performs engagement-based optimization. We model the content creator as jointly choosing investment in quality and utilization of gaming tricks. Both quality and gaming tricks increase engagement from consumption, and utilizing gaming tricks is relatively cheaper for the creators than investing in quality. However, gaming decreases user utility, while quality increases user utility, and a user will not consume the content if their utility from consumption is negative. We study the Nash equilibrium in the game between the content creators. 

We first examine the balance between gaming tricks and quality amongst content created at equilibrium (Section \ref{sec:positivecorrelation}). Interestingly, we find that there is a \textit{positive correlation} between gaming and investment at equilibrium: higher-quality content typically exhibits \textit{higher} levels of gaming tricks. We prove that equilibria exhibit this positive correlation (Figure \ref{fig:equilibrium}; Theorem \ref{thm:positivecorrelationhomogeneous}), and we also empirically validate this finding on a Twitter dataset \citep{MCPWD23} (Figure \ref{fig:positivecorrelation} and Table \ref{tab:correlation_coefficients}). These results suggest that gaming tricks and quality should be viewed as \textit{complements}, rather than substitutes. 

Accounting for how the platform's metric shapes the content landscape at equilibrium, we then analyze the downstream performance of engagement-based optimization (Section \ref{sec:performance}). We uncover striking properties of engagement-based optimization along several performances axes and discuss implications for platform design (Figure \ref{fig:downstreamimpacts}). 
\begin{itemize}[leftmargin=*]
    \item \textbf{Content Quality.} First, we examine the average quality of content consumed by users and show that it can \emph{decrease} as gaming tricks become more costly for creators (Figure \ref{fig:qualityconsumption}; Theorem \ref{thm:comparisonuserconsumptiongaming}). In other words, as it becomes more difficult for content creators to game the engagement metric, the average content quality at equilibrium becomes worse. From a platform design perspective, this suggests that increasing the transparency of the platform's metric (which intuitively reduces gaming costs for creators) may improve the average quality of content consumed by users. 
    \item \textbf{User Engagement.} Next, we examine the realization of user engagement metrics at the equilibrium of content generation and user consumption. Even though engagement-based optimization perfectly optimizes engagement on a fixed content landscape, engagement-based optimization can perform worse than other baselines (e.g., optimizing directly for quality) at equilibrium (Figure \ref{fig:realizedengagement}; Theorem \ref{thm:comparisonengagement}). From a platform design perspective, this suggests that even if the platform's true objective is realized engagement, the platform might still prefer approaches other than engagement-based optimization when accounting for the way content creators will respond. 
    \item \textbf{User Welfare.} Finally, we examine the user welfare at equilibrium. We show that engagement-based optimization can lead to lower user welfare at equilibrium than even the conservative baseline of randomly recommending content (Figure \ref{fig:userwelfare}; Theorem \ref{thm:comparisonuserutility}). From a platform design perspective, this suggests that engagement-based optimization may not retain users in a competitive marketplace in the long-run. 
\end{itemize}
\noindent Altogether, these results illustrate the importance of factoring in the endogeneity of the content landscape when assessing the downstream impacts of engagement-based optimization.

\subsection{Related Work}\label{subsec:relatedwork}

Our work connects to research threads on \textit{content creator competition in recommender systems} and \textit{strategic behavior in machine learning}.

\paragraph{Content-creator competition in recommender systems.} An emerging line of work has proposed game-theoretic models of content creator competition in recommender systems, where content creators strategically choosing what content to create \citep{BTK17, BT18, BRT20} or the quality of their content \citep{DBLP:conf/www/GhoshM11, qian2022digital}. Some models embed content in a continuous, multi-dimensional action space, characterizing when specialization occurs \citep{JGS22} and the impact of noisy recommendations \citep{HKJKD22}. Other models capture that content creators compete for engagement \citep{YLNWX23} and general functions of platform ``scores'' across the content landscape \citep{YLSLZWX23}. These models have also been extended to dynamic settings, including where the platform learns over time \citep{DBLP:conf/innovations/GhoshH13, DBLP:conf/aaai/LiuH18, HJJS23}, where content creators learn over time \citep{BRT20, PMB23}. Notably, \citet{BSDX23} study a dynamic setting where the platform learns over time and content creators strategically choose the probability of feedback (clickrate) of their content. However, while these works all assume that creator utility depends only on winning recommendations (or only on content scores according to the platform metric \citep{YLNWX23, YLSLZWX23}), our model incorporates \textit{misalignment} between the platform's (engagement) metric and user utility.\footnote{A rich line of work (e.g., \citep{EW16, MBH21, KMR22, SVNAM21}) has identified sources of misalignment between engagement metrics and user utility and broader issues with inferring user preferences from observed behaviors; these sources of misalignment motivated us to incorporate gaming tricks which increase engagement but reduce user utility into our model.} In particular, our model and insights rely on the fact that creators only derive utility if their content is recommended \textit{and} the content generates nonnegative user utility.

Several other works study content creator competition under different modelling assumptions: e.g., where content quality is fixed and all creator actions are gaming \citep{MPG23}, where content creators have fixed content but may dynamically leave the platform over time  \citep{MCBSZB20, BT23, HHLOS24}, where the recommendation algorithm biases affect market concentration but content creators have fixed content \citep{calvano2023artificial, castellini2023recommender}, where the platform designs a contract determining payments and recommendations \citep{ZKJJ23}, where the platform creates its own content \citep{aridor2021recommenders}, and where the platform designs badges to incentivize user-generated content \citep{ISS15}. This line of work also builds on Hotelling models of product selection from economics (e.g. \citep{H29, S79}, see \citet{ProductDifferentiation} for a textbook treatment).

\paragraph{Strategic behavior in machine learning.} A rich line of work on \textit{strategic classification} (e.g. \citep{bruckner12pred, hardt16strat}) focuses primarily on agents strategically adapting their features in classification problems, whereas our work focuses on agents competing to win users in recommender systems. Some works also consider improvement (e.g. \citep{kleinberg19improvement, HILW20, ABBN22}), though also with a focus on classification problems. One exception is \citep{LGB21}, which studies ranking problems; however, the model in \citep{LGB21} considers all effort as improvement, whereas our model distinguishes between clickbait and quality. Other topics studied in this research thread include shifts to the population in response to a machine learning predictor (e.g. \citep{PZMH20}), strategic behavior from users (e.g. \citep{HMP23}), and incentivizing exploration (e.g., \citep{KMP13, FKKK14, SS21}).

\section{Model}\label{sec:model}

We study a stylized model for content recommendation in which an online platform recommends to each user a single piece of digital content within the content landscape available on the platform.\footnote{Our model can also capture a stream of content (e.g., see Example \ref{example:KMR}), even though we abstract away from this by focusing on one recommendation at a time.}  There are $P \ge 2$ content creators who each create a single piece of content and compete to appear in recommendations.  Building on the models of \citet{BT18, JGS22, HKJKD22, YLSLZWX23}, the content landscape is \textit{endogeneously} determined by the multi-dimensional actions of the content creators.

\subsection{Creator Costs, User Utility, and Platform Engagement}\label{subsec:assumptions}

Since our focus is on investment versus gaming, we project pieces of digital content into 2 dimensions $w= [\wcostly, \wcheap] \in \mathbb{R}^2_{\ge 0}$. The more costly dimension $\wcostly$ denotes a measure of the content's \textit{quality}, whereas the cheap dimension $\wcheap$ reflects the extent of \textit{gaming tricks} present in the content. These measures are normalized so that $w = [0,0]$ represents content generated by a creator who exerted no effort on quality or gaming.

We specify below how the costly and cheap dimensions impact \textit{creator costs}, \textit{user utility}, and \textit{platform engagement}. Using these specifications, we then provide additional intuition for the \textit{qualitative interpretation of quality and gaming tricks} in our model. 

\paragraph{Creator Costs.}
Each content creator pays a (one-time) cost of $c(w) \ge 0$ to create content $w \in \mathbb{R}_{\ge 0}^2$. We assume that $c$ is continuously differentiable in $w$ and satisfies the following additional assumptions. First, investing in quality content is costly:  $(\nabla (c(w)))_1 > 0$ for all $w \in \mathbb{R}_{\ge 0}^2$. Moreover, engaging in gaming tricks is either always free or always incurs a cost: either $(\nabla (c(w)))_2 > 0$ for all $w \in \mathbb{R}_{\ge 0}^2$ or $(\nabla (c(w)))_2 = 0$ for all $w \in \mathbb{R}_{\ge 0}^2$. Furthermore, creators have the option to opt out by not investing costly effort in either gaming tricks or quality: $c([0,0]) = 0$. Finally, costs go to $\infty$ in the limit: $\sup_{\wcostly} c([\wcostly, 0]) = \infty$.

\paragraph{User Utility.}
Each user has a type $t \in \mathcal{T} \subseteq \mathbb{R}_{\ge 0}$ that reflects the user's relative tolerance for gaming tricks. We assume that the type space $\mathcal{T}$ is finite.
A user with type $t \in \mathcal{T}$ receives utility $u(w,t) \in \mathbb{R}$ from consuming content $w \in \mathbb{R}_{\ge 0}^2$, where the utility function is normalized so that the user's outside option offers $0$ utility.  We assume that $u$ is continuously differentiable in $w$ for each $t \in \mathcal{T}$ and satisfies the following additional assumptions. Users derive positive utility from $\wcostly$ and negative utility from $\wcheap$: 
\begin{itemize}[leftmargin=*]
    \item For each $t \in \mathcal{T}$ and $\wcostly \in \mathbb{R}_{\ge 0}$: the utility $u([\wcostly, \wcheap], t)$ is strictly decreasing in $\wcheap$ and  approaches $-\infty$ as $\wcheap \rightarrow \infty$. 
    \item For each $t \in \mathcal{T}$ and $\wcheap \in \mathbb{R}_{\ge 0}$: the utility $u([\wcostly, \wcheap], t)$ is strictly increasing in $\wcostly$ and approaches $\infty$ as $\wcostly \rightarrow \infty$. 
\end{itemize}
Furthermore, higher types are more likely to have a nonnegative user utility than lower types, which captures that higher types are less sensitive to gaming tricks than lower types:  
\begin{itemize}[leftmargin=*]
\item For any $w \in \mathbb{R}_{\ge 0}^2$ and $t, t' \in \mathcal{T}$ such that $t' > t$: if $u(w,t) \ge 0$, then it holds that $u(w, t') \ge 0$. 
\end{itemize}

\paragraph{Engagement.}
If a user chooses to consume content $w$, this interaction generates platform engagement $\MPlatform(w) \in \mathbb{R}$. The engagement metric $\MPlatform(w)$ depends on the content $w$ but is independent of the user's type $t$ (conditional on the user choosing to consume the content). We assume that $\MPlatform$ is continuously differentiable in $w$ and satisfies the following additional assumptions. First, both cheap gaming tricks and investment in quality increase the engagement metric: $(\nabla \MPlatform(w))_1, (\nabla \MPlatform(w))_2 > 0$ for all $w \in \mathbb{R}_{\ge 0}^2$. Moreover, the engagement metric is nonnegative: $\MPlatform(w) \ge 0$ for all $w \in \mathbb{R}_{\ge 0}^2$. Finally, the relative cost of gaming tricks versus costly investment is less than the relative benefit: 
$\frac{(\nabla c(w))_2}{(\nabla c(w))_1} < \frac{(\nabla \MPlatform(w))_2}{(\nabla \MPlatform(w))_1}$ for all $w \in \mathbb{R}_{\ge 0}^2$.  In other words, it is more cost-effective for a creator to increase the engagement metric via gaming than via quality, for a user who would choose to consume the content either way.

\paragraph{Qualitative Interpretation of Quality and Gaming Tricks.} With this formalization of creator costs, user utility, and platform engagement in place, we turn to the qualitative interpretation of quality as measured by $\wcostly$ and gaming tricks as measured by $\wcheap$. Both quality and gaming tricks reflect effort by creators that increases engagement; however, quality captures effort that is beneficial to users (increases user utility), whereas gaming tricks captures effort that is harmful to users (reduces user utility). Moreover, since a creator can simultaneously invest effort into both quality and gaming tricks, a single piece of digital content can exhibit both gaming tricks and quality at the same time. In fact, high-quality content which also exhibits a sufficient level of gaming tricks can generate arbitrarily low user utility, which illustrates that quality does \textit{not} capture a user's level of appreciation of the content. We defer further discussion of quality and gaming tricks to Section \ref{subsec:examples}, where we instantiate our model within several real-world examples. 

\subsection{Timing and Interaction between the Platform, Users, and Content Creators}

The interaction between the platform, users, and content creators defines a game that proceeds in stages. 
 The timing of the game is as follows: 
\begin{enumerate}[leftmargin=*,label=\textbf{Stage \arabic*:}]
    \item Each content creator $i \in [P]$ simultaneously chooses what content $w_i \in \mathbb{R}_{\ge 0}^2$ to create. 
    These choices give rise to a content landscape $\mathbf{w} = (w_1, \ldots, w_P)$.  
    \item A user with type $t \sim \mathcal{T}$ is uniformly drawn and comes to the platform.
    \item  The platform observes the user's type and evaluates content $w$ according to a metric $M : \mathbb{R}_{\geq 0}^2 \to \mathbb{R}$ that maps each piece of content $w_i$ to a score $M(w_i)$. The platform optimizes $M$ over content available in the content landscape that generates nonnegative utility for the user. More formally, the platform selects content creator
    \[i^*(M; \textbf{w}) \in \argmax_{i \in [P]} (M(w_i) \cdot \mathbbm{1}[u(w_i, t) \ge 0]),\] breaking ties uniformly at random, and recommends the content $w_{i^*(M; \textbf{w})}$ to the user.
    \item The user consumes the  the recommended content $w_{i^*(M; \textbf{w})}$ if and only if $u(w_{i^*(M; \textbf{w})}, t) \ge 0$ (i.e., if and only if the content is at least as appealing as their outside option).
\end{enumerate}

We assume that content creators
know the user utility function $u$ and the distribution of $\mathcal{T}$ but do not know the specific realization of $t \sim \mathcal{T}$ in \textbf{Stage 2}. On the other hand, the platform can observe the realization $t \sim \mathcal{T}$. The platform can also observe the full content landscape $\textbf{w}$ and knows the user utility function $u$. This provides the platform with sufficient information to solve the optimization problem $\argmax_{i \in [P]} (M(w_i) \cdot \mathbbm{1}[u(w_i, t) \ge 0])$ in \textbf{Stage 3}.\footnote{The platform may be able to evaluate $\argmax_{i \in [P]} (M(w_i) \cdot \mathbbm{1}[u(w_i, t) \ge 0])$ with less information. For example, if $M = \MPlatform$, then $\MPlatform(w)$ can typically be estimated from observable data such as user behavior patterns without knowledge of $\wcostly$ and $\wcheap$. Moreover, since $\mathbbm{1}[u(w_i, t) \ge 0]$ captures the event that users click on the content $w_i$, if the platform has a predictor for clicks, this would provide them an estimate of $\mathbbm{1}[u(w_i, t) \ge 0]$.} The user knows their own type $t$ and the utility function $u$, and can also observe the content $w$ recommended to them, so they can evaluate whether $u(w_{i^*(M; \textbf{w})}, t) \ge 0$.\footnote{In reality, users may not always be able to perfectly observe $\wcostly$ and $\wcheap$ (or gauge their own utility) without consuming the content. Our model makes the simplifying assumption that user choice is noiseless.}

\paragraph{Equilibrium decisions of content creators.} The recommendation process defines a game played between the content creators, who strategically choose their content $w_i \in \mathbb{R}_{\ge 0}^2$ in \textbf{Stage 1}.  We assume that values are normalized so that a content creator receives a value of $1$ for being shown to a user.  Since the goods are digital, production costs are one-time and incurred regardless of whether the user consumes the content.
Creator $i$'s expected utility is therefore
\begin{equation}
\label{eq:producerutility}
   U_i(w_i; \mathbf{w}_{-i}) := \mathbb{E}[\mathbbm{1}[i^*(M; \textbf{w}) = i]] - c(w),
\end{equation}
where the expectation is over any randomness in user types $\mathcal{T}$. We allow content creators to randomize over their choice of content, and write $\mu_i \in \Delta(\mathbb{R}_{\ge 0}^2)$ for such a mixed strategy.  A (mixed) Nash equilibrium $(\mu_1, \ldots, \mu_P)$, for $\mu_i \in \Delta(\mathbb{R}_{\ge 0}^2)$, is a profile of mixed strategies that are mutual best-responses. 
Since the content creators are symmetric in our model, we will focus primarily on symmetric mixed Nash equilibria in which each creator employs the same mixed strategy, which must exist (see Theorem~\ref{thm:equilibriumexistence} below).
Note that the Nash equilibrium specifies the distribution over content landscapes $\mathbf{w}$. 

\paragraph{The platform's choice of metric $M$ in Stage 3.} 
We primarily focus on \textit{engagement-based optimization} where $M = \MPlatform$, meaning that the platform optimizes for engagement. As a benchmark, we also consider \textit{investment-based optimization} where $M(w) = \MIdeal(w) := \wcostly$ does not reward gaming tricks; however, note that this baseline is idealized, since $\wcostly$ is not always identifiable from observable data in practice. As another baseline, we consider \textit{random recommendations} where $M(w) = \MTrivial(w) := 1$ which captures choosing uniformly at random from all content that generates nonnegative user utility.

\subsection{Running examples}\label{subsec:examples}
We provide instantation of our models that serves as running examples throughout the paper. 
\begin{example}
\label{example:linear}
Consider an online platform such as Twitter which uses retweets as one of the terms its objective \citep{twitter2023algorithm}. However, Twitter does not differentiate between quote retweets (where the retweeter adds a comment) and non-quote retweets (where there is no added comment). Creators can cheaply increase quote retweets by increasing the offensiveness or sensationalism of the content \citep{MPG23}, or increase non-quote retweets by actually improving content quality. As a stylized model for this, let $\wcheap$ be the offensiveness of the content and let $\wcostly$ capture costly investment into content quality. Let the utility function of a user with type $t \in \mathcal{T} \subseteq \mathbb{R}_{>0}$ be the linear function $u(w, t) = \wcostly - (\wcheap / t) + \alpha$, where $\alpha \in \mathbb{R}$ is the baseline utility from no effort and $t$ captures the user's tolerance to offensive content. Let the platform metric $\MPlatform(w) = \wcostly + \wcheap$ and cost function $c(w) = \wcostly + \gamma \cdot \wcheap$ for $\gamma \in [0, 1)$ also be linear functions. The platform metric captures the idea that the platform does not distinguish between different types of retweets; the cost function captures the idea that it is relatively easier for creators to insert sensationalism into tweets, which requires just a few word changes, compared to improving content quality, which might require, for example, time-intensive fact-checking.
\end{example}

\begin{example}
\label{example:KMR}
Consider an online platform such as TikTok \citep{tiktok21} that incorporates watch time into its objective. Creators can increase watch time by: a) creating ``moreish'' content that keeps users watching a video stream even after they are deriving disutility from it, or b) increasing ``span'' by increasing the amount of substantive content, as modelled in \citet{KMR22}. More formally, let $\wcostly := \frac{q}{1-q}$ be a reparameterized version of the span $q\in [0,1]$, let $\wcheap := \frac{p}{1-p}$ be a reparameterized version of the moreishness $p\in [0,1]$.\footnote{In the model in \citet{KMR22}, users have two modes: System 1 (the ``addicted'' mode) and System 2 (the ``rational'' mode). Roughly speaking, the moreishness $p$ is the probability that the user continues to watch the video stream while in System 1, and the span $q$ is the analogous probability for System 2.} For a given user, let $v$ be the value derived from each time step from watching substantive content, let $W$ be the outside option for each time step, and let $t := v/W - 1 > 0$ capture the shifted ratio. In this notation, the engagement metric $\MPlatform$ and user utility $u$ from \citet{KMR22} take the following form: $\MPlatform([\wcheap,\wcostly]) := \wcostly + \wcheap + 1$
and $u(w, t) :=  W \cdot t \cdot \left(\wcostly- \wcheap/t + 1 \right)$
We further specify the cost function based on a linear combination of the expected amount of ``span'' time and the expected amount of ``moreish'' time that the user consumes: $c(w) := \wcostly + \gamma \cdot \wcheap$ 
where $\gamma \in [0,1)$ specifying the  cost of increasing moreishness relative to increasing span.\footnote{While Example \ref{example:linear} and Example \ref{example:KMR} differ in terms of real-world interpretations, the functional forms in the two examples are very similar. In particular, the cost functions $c(w)$ are identical, and the engagement functions $\MPlatform$ are identical up to a scalar shift of $1$. The user utility $u$ in Example \ref{example:KMR} is equal to the user utility $u$ in Example \ref{example:linear} with $\alpha = 1$ and with a multiplicative shift of $W \cdot t$.  } 
\end{example}

\begin{example}
\label{example:clicks}
Consider an online platform such as YouTube that historically used clicks as one of the terms in their objective \citep{youtube2019}. Creators can cheaply increase clicks by leveraging clickbait titles or thumbnails  or by increasing the quality of their content. As a stylized model for this, let $\wcheap$ capture how flashy or sensationalized the title or thumbnail is, and let $\wcostly$ capture the quality of the content. 
The number of clicks $\MPlatform(w)$ increases with both clickbait $\wcheap$ and quality $\wcostly$, and user utility $u(w, t)$ increases with quality $\wcostly$ and decreases with clickbait $\wcheap$. A user quits the platform if their utility falls below zero. (This means the event $\mathbbm{1}[u(w, t) \ge 0]$ captures that the user does not quit the platform, rather than the event that the user clicks the content, for this particular example.) 
\end{example}

\begin{figure}[t]
    \centering
        \begin{subfigure}[b]{0.32\textwidth}
        \centering
        \includegraphics[width=\textwidth]{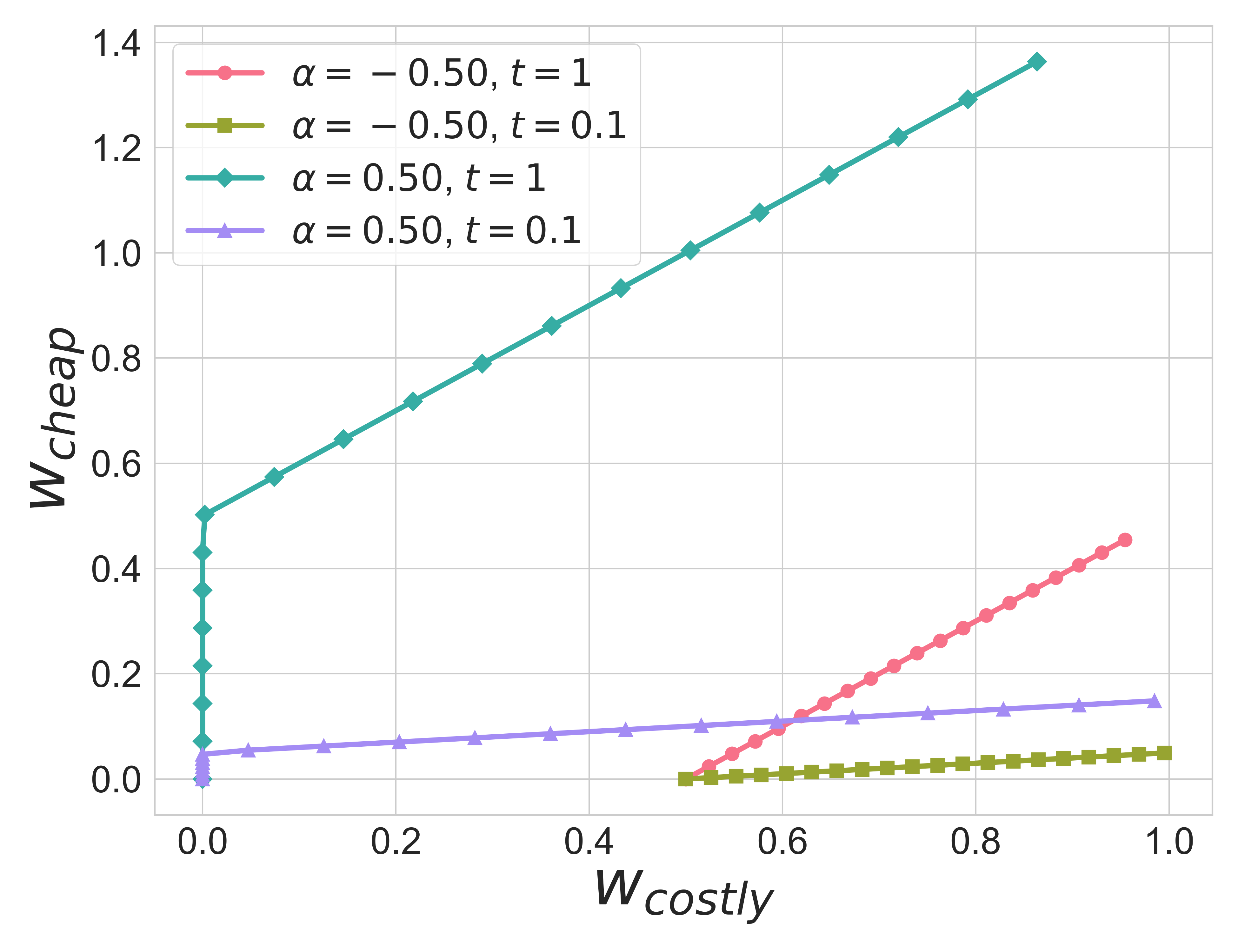}
        \caption{$|\mathcal{T}| = 1$}
        \label{fig:1type}
    \end{subfigure}
    \begin{subfigure}[b]{0.32\textwidth}
        \centering
        \includegraphics[width=\textwidth]{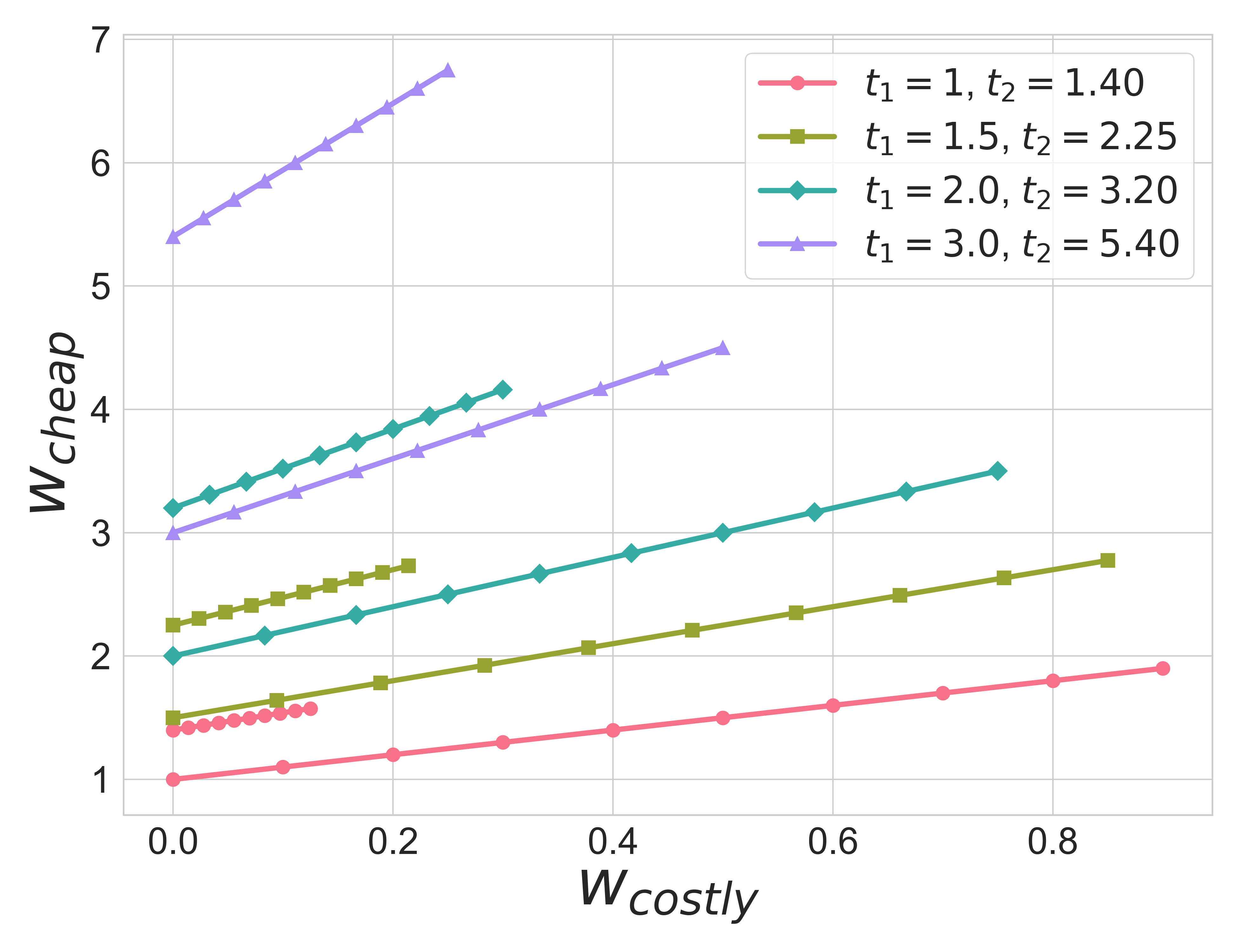}
        \caption{$|\mathcal{T}| = 2$}
        \label{fig:2types}
    \end{subfigure}
    \hfill
    \begin{subfigure}[b]{0.32\textwidth}
        \centering
        \includegraphics[width=\textwidth]{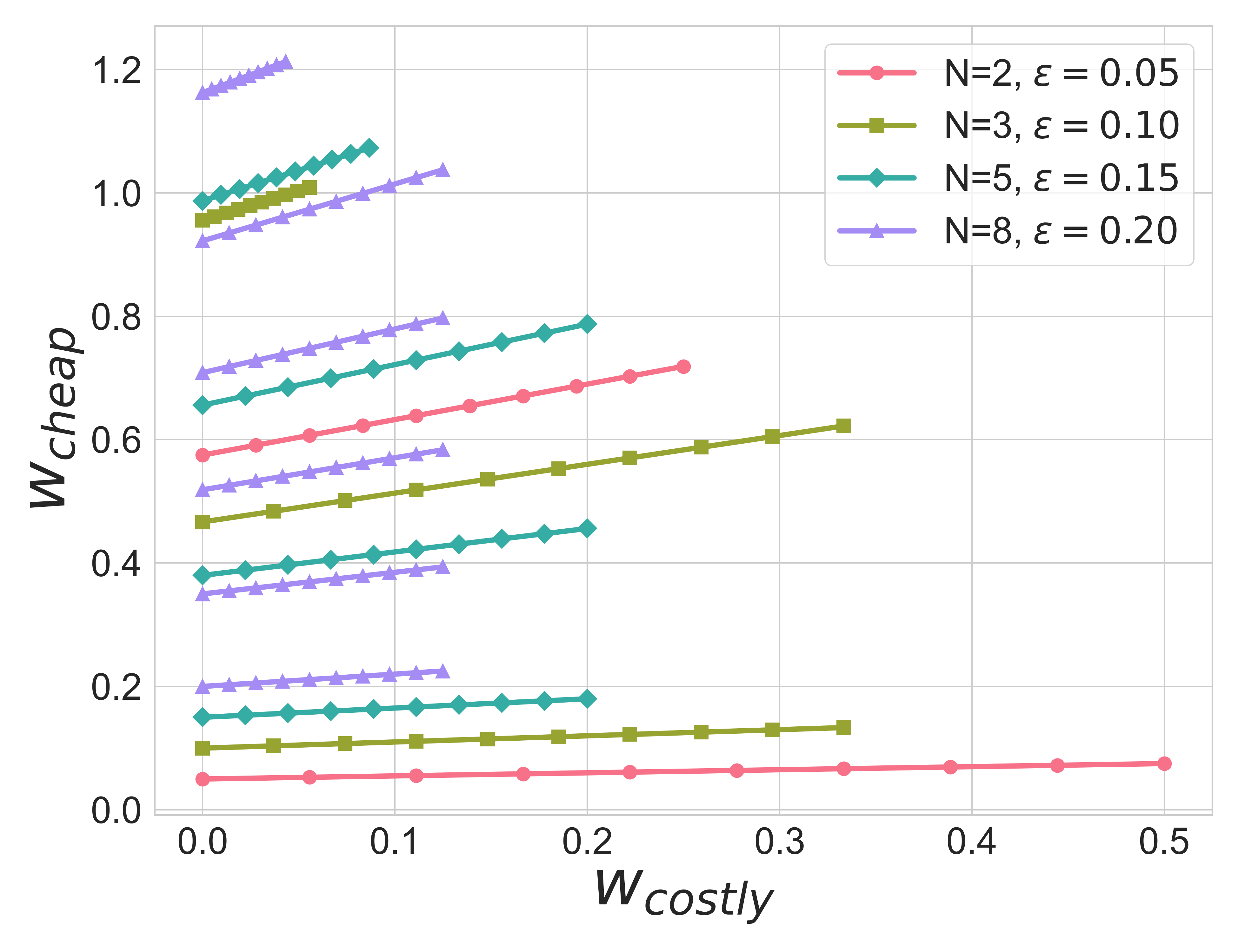}
        \caption{$|\mathcal{T}| = N$}
        \label{fig:Ntypes}
    \end{subfigure}
    \caption{Support of a symmetric mixed equilibrium for engagement-based optimization in \Cref{example:linear}. The parameter settings are $\gamma = 0.1$ (left), $\alpha = 1$, $\gamma = 0$, $\mathcal{T} = \left\{t_1, t_2\right\}$ (middle), and $\alpha = 1$, $\gamma = 0$, $\mathcal{T} = \mathcal{T}_{N, \epsilon}$ (right). The support exhibits positive correlation between gaming tricks $\wcheap$ and investment in quality $\wcostly$ (Proposition \ref{prop:positivecorrelation} and Theorem \ref{thm:positivecorrelationhomogeneous}). For homogeneous users (left), the slope varies with the type $t$ and the intercept varies with the baseline utility $\alpha$ (Theorem \ref{thm:onetype}). For heterogeneous users with $N$ well-separated types (right), the support consists of $N'$ disjoint line segments with varying slopes and intercepts, where $N' < N$ in several cases (Theorem \ref{thm:Ntypes}).} 
    \label{fig:equilibrium}
\end{figure}

\subsection{Equilibrium existence and overview of equilibrium characterization results}\label{subsec:equilibriumcharacterization}

We show that a symmetric mixed equilibrium exists for engagement-based optimization for arbitrary setups.  
\begin{theorem}
\label{thm:equilibriumexistence}
Let $\mathcal{T} \subseteq \mathbb{R}_{\ge 0}$ be any finite type space. Then a symmetric mixed equilibrium exists in the game between content creators with $M = \MPlatform$. 
\end{theorem}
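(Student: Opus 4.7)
The plan is to compactify the action space and then invoke an existence theorem for symmetric mixed Nash equilibria in compact discontinuous games, in the spirit of Reny (1999).

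First, I would argue that attention can be restricted to strategies supported on a compact set $K \subset \mathbb{R}_{\ge 0}^2$. Since the gross utility from being recommended is at most $1$ and $\sup_{\wcostly} c([\wcostly, 0]) = \infty$, there is a threshold $B_1$ such that any $w$ with $\wcostly > B_1$ yields strictly negative payoff and is thus strictly dominated by $[0,0]$ (which yields payoff $\ge 0$). For $\wcheap$, one of the two cases in the cost assumption holds: if $\wcheap$ is costly, the same cost/boundedness argument yields a bound $B_2$; if $\wcheap$ is costless, then since $u(w,t)\to -\infty$ as $\wcheap\to\infty$ for every $t$ and $\mathcal T$ is finite, some $B_2$ ensures that any $w$ with $\wcheap>B_2$ satisfies $u(w,t)<0$ for every type, so it is never recommended and is weakly dominated by $[0,0]$. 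I thus restrict to $K := [0,B_1]\times[0,B_2]$, a compact square.

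Second, I would apply a discontinuous-game existence theorem to the compact game on $K^P$. The payoff $U_i(w_i; \mathbf w_{-i})$ is bounded, measurable, and discontinuous only where some $\MPlatform(w_j) = \MPlatform(w_k)$ or where $u(w_j, t) = 0$. The two conditions to verify are \emph{payoff security} (each player can secure a payoff close to the current one via a nearby action even under small perturbations of opponents) and \emph{reciprocal upper-semicontinuity} of $\sum_i U_i$. Payoff security holds because a small increase in $\wcostly$ both strictly increases $\MPlatform$ (breaking ties in favor of the deviator) and strictly increases $u(\cdot,t)$ (moving strictly into $\{u > 0\}$ for types that were on the boundary), so the improved payoff is robust to small opponent perturbations. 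Reciprocal upper-semicontinuity holds because the uniform tie-breaking rule together with the closed-inequality indicator $\mathbbm{1}[u \ge 0]$ makes the probability that some creator is shown upper-semicontinuous in $\mathbf w$, while the costs are continuous.

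Finally, symmetry of the game (all creators share the same cost, utility, and engagement functions) allows selection of a symmetric mixed equilibrium, either via a symmetric fixed-point argument applied to the self-map $\mu \mapsto \mathrm{BR}(\mu,\dots,\mu)$ on $\Delta(K)$, or by averaging an arbitrary mixed equilibrium over permutations of players. The main obstacle is the careful verification of payoff security at profiles where multiple creators tie in $\MPlatform$ or have $u(\cdot,t)=0$ for some type; the monotonicity of $\MPlatform$ and $u$ in $\wcostly$ provides a canonical direction for a securing deviation, which is the key mechanism that makes these discontinuities benign and brings the problem within the scope of Reny's framework.
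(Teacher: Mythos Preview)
Your high–level plan matches the paper's: compactify the action space and invoke Reny's framework for discontinuous games, with the key observation that a small increase in $\wcostly$ simultaneously breaks engagement ties and pushes each $u(\cdot,t)$ strictly into the positive region, furnishing the securing deviation. The paper's compactification is essentially the same as yours (bound $\wcostly$ by costs, then bound $\wcheap$ via the utility constraint at that $\wcostly$-bound), and the securing perturbation is exactly the one you identify.

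Where the two routes diverge is in \emph{which} Reny condition is targeted. You aim at the general better-reply-security route (payoff security plus reciprocal upper semicontinuity of $\sum_i U_i$), which yields a possibly asymmetric mixed equilibrium, and then propose to symmetrize. The paper instead applies Reny's symmetric-game corollary directly, checking (i) upper semicontinuity of the \emph{diagonal} payoff $\mu \mapsto U(\mu;\mu,\ldots,\mu)$ and (ii) \emph{diagonal} better-reply security. This gives a symmetric equilibrium in one shot and avoids the symmetrization step entirely. Your first symmetrization option (a fixed point of $\mu \mapsto \mathrm{BR}(\mu,\ldots,\mu)$) is essentially what that corollary formalizes, but making it rigorous requires precisely the diagonal conditions the paper verifies, so you would end up doing the same work.

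There is one genuine error: your second symmetrization option, ``averaging an arbitrary mixed equilibrium over permutations of players,'' does not work. If $(\mu_1,\ldots,\mu_P)$ is a Nash equilibrium and $\bar\mu = \tfrac{1}{P}\sum_i \mu_i$, the profile $(\bar\mu,\ldots,\bar\mu)$ is in general only a correlated equilibrium of the symmetrized game, not a Nash equilibrium; best responses need not be preserved under convex combinations of opponent profiles. So you should drop that alternative and either verify diagonal better-reply security directly (as the paper does) or make the symmetric fixed-point argument precise, which amounts to the same thing.
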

\noindent Since the game has an infinite action space and has discontinuous utility functions, the proof of Theorem \ref{thm:equilibriumexistence} relies on equilibrium existence technology for \textit{discontinuous} games \citep{reny99}. We defer the full proof to \Cref{appendix:proofsmodel}.

Although the symmetric mixed equilibrium does not appear to permit a clean closed-form characterization in general,   we compute closed-form expressions for a symmetric mixed equilibrium $\muengagement(P, c, u, \mathcal{T})$ under further structural assumptions
(Figure \ref{fig:equilibrium}; \Cref{sec:equilibriumcharacterizations}). When users are \textit{homogeneous} (i.e. $\mathcal{T} = \left\{t\right\}$), we compute a symmetric mixed equilibrium for all possible settings of $P$, $c$, and $u$ (Figure \ref{fig:1type}; Theorem \ref{thm:onetype}). We also consider \textit{heterogeneous} users  (i.e. where $|\mathcal{T}| > 1$) under further restrictions: we assume the gaming tricks are costless, place a linearity assumption on the costs $c$ and engagement metric $\MPlatform$ that is satisfied by Examples \ref{example:linear}-\ref{example:KMR}, and focus on the case of $P = 2$ creators. We compute a symmetric mixed equilibrium for arbitrary type spaces $\mathcal{T} = \left\{t_1, t_2\right\}$ with two types (Figure \ref{fig:2types};  Theorem \ref{thm:2types}) and for arbitrarily large type spaces with sufficiently ``well-separated'' types such as $\mathcal{T}_{N, \epsilon} := \left\{(1+ \epsilon) (1 + 1/N)^{i-1} - 1 \mid 1 \le i \le N \right\}$ (Figure \ref{fig:Ntypes}; Theorem \ref{thm:Ntypes}).

We also provide closed-form expressions for a symmetric mixed equilibrium for investment-based optimization  $\muideal(P, c, u, \mathcal{T})$ and random recommendations $\murandom(P, c, u, \mathcal{T})$ under certain structural assumptions (\Cref{sec:equilibriumcharacterizationsbaselines}).

\section{Positive correlation between quality and gaming tricks}\label{sec:positivecorrelation}

When the platform optimizes engagement metrics $\MPlatform$, each content creator \textit{jointly} determines how much to utilize gaming tricks and invest in quality. The creators' equilibrium decisions of how to balance gaming and quality in turn determine the properties of content in the content landscape. In this section, we show that there is a positive correlation between gaming and quality: that is, content that exhibits higher levels of gaming typically exhibits higher investment in quality. We prove that the equilibria satisfy this property (\Cref{subsec:theoreticalbalance}), and we empirically validate this property on a dataset \citep{MCPWD23} of Twitter recommendations (\Cref{subsec:empirical}). 

\subsection{Theoretical analysis of balance between gaming and quality}\label{subsec:theoreticalbalance}

We theoretically analyze the balance of gaming and quality at equilibrium as follows. Since the content landscape $\mathbf{w} = [w_1, \ldots, w_P]$ at equilibrium consists of content $w_i \sim \mu_i$ for $i \in [P]$, the set of content that shows up in the content landscape with nonzero probability is equal to $\cup_{i \in [P]} \text{supp}(\mu_i)$. We examine the relationship between the quality $\wcostly$ and the level of gaming $\wcheap$ for $w \in \cup_{i \in [P]} \text{supp}(\mu_i)$. 

For general type spaces, we show that the set $\cup_{i \in [P]} \text{supp}(\mu_i)$ of content is contained in a union of curves, each exhibiting ``positive correlation'' between $\wcheap$ and $\wcostly$ (Figure \ref{fig:equilibrium}). 
\begin{proposition}
\label{prop:positivecorrelation}
Let $\mathcal{T} \subseteq \mathbb{R}_{\ge 0}$ be any finite type space, and suppose that gaming is not costless (i.e. $(\nabla(c(w)))_2 > 0$ for all $w \in \mathbb{R}_{\ge 0}^2$). There exist weakly increasing functions $f_t: \mathbb{R}_{\ge 0} \rightarrow \mathbb{R}_{\ge 0}$ for each $t \in \mathcal{T}$ such that at any (mixed) Nash equilibrium $(\mu_1, \mu_2, \ldots, \mu_P)$ in the game with $M = \MPlatform$, the set of content $\cup_{i \in [P]} \text{supp}(\mu_i)$ is contained in the following set: 
\[\cup_{i \in [P]} \text{supp}(\mu_i) \subseteq \left(\cup_{t \in \mathcal{T}} \underbrace{\left\{(f_t(\wcheap), \wcheap) \mid \wcheap \ge 0 \right\}}_{(A)}\right) \cup \underbrace{\left\{ (0,0)
\right\}}_{(B)}.\]
\end{proposition}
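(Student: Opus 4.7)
My plan is to show that every non-origin $w$ in the support of the equilibrium lies on a zero-utility curve $\{w' : u(w', t) = 0\}$ for some type $t \in \mathcal{T}$, and then to parameterize each such curve as the graph of a weakly increasing function $\wcostly = f_t(\wcheap)$ via the implicit function theorem. Given a competitor profile $\mu_{-i}$, creator $i$'s utility at content $w$ depends on $w$ only through the pair $(\MPlatform(w), T(w))$, where $T(w) := \{t \in \mathcal{T} : u(w, t) \ge 0\}$, because the platform's argmax selection compares $\MPlatform$ scores only among acceptable content. By monotonicity of $u$ in $t$, $T(w) = \{t \in \mathcal{T} : t \ge t^*(w)\}$ for $t^*(w) := \min T(w)$. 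On any region where $T(w) = T$ is constant, $U_i(w) = V_T(\MPlatform(w)) - c(w)$ (up to an additive constant in $T$) for some weakly increasing $V_T$ depending on $\mu_{-i}$.

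\textbf{Key perturbation step.} I would show that any $w \in \mathrm{supp}(\mu_i) \setminus \{(0,0)\}$ with $\wcostly > 0$ satisfies $u(w, t^*(w)) = 0$. Suppose not; then $u(w, t) > 0$ strictly for every $t \in T(w)$. Define $dw := (-\lambda, 1)$ with $\lambda := (\nabla \MPlatform(w))_2 / (\nabla \MPlatform(w))_1 > 0$, so $\nabla \MPlatform(w) \cdot dw = 0$. For sufficiently small $\epsilon > 0$, continuity of $u$ keeps $T(w + \epsilon dw) = T(w)$, and the first-order change in cost is
\[\nabla c(w) \cdot dw = (\nabla c)_1 \left[\frac{(\nabla c)_2}{(\nabla c)_1} - \frac{(\nabla \MPlatform)_2}{(\nabla \MPlatform)_1}\right] < 0,\]
by the ``gaming is cost-effective'' assumption in the model. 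Since $\MPlatform(w)$ is preserved to first order and $c$ strictly decreases, $U_i(w + \epsilon dw) > U_i(w)$ for small $\epsilon$, contradicting the best-response property of $w$.

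\textbf{Defining $f_t$ and concluding.} For each $t \in \mathcal{T}$, define $f_t: \mathbb{R}_{\ge 0} \to \mathbb{R}_{\ge 0}$ by letting $f_t(\wcheap)$ be the unique $\wcostly \ge 0$ solving $u(\wcostly, \wcheap, t) = 0$ when such a solution exists (equivalently, when $u(0, \wcheap, t) \le 0$) and $f_t(\wcheap) := 0$ otherwise. Existence and uniqueness follow from $u$ being strictly increasing in $\wcostly$ and unbounded as $\wcostly \to \infty$. Where $f_t(\wcheap) > 0$, the implicit function theorem gives $f_t'(\wcheap) = -\partial_{\wcheap} u / \partial_{\wcostly} u > 0$, using $\partial_{\wcostly} u > 0$ and $\partial_{\wcheap} u < 0$; on the complementary region $f_t \equiv 0$, and continuity at the boundary makes $f_t$ weakly increasing throughout. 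By the perturbation step, every non-origin $w$ in the support with $\wcostly > 0$ lies on $\mathrm{graph}(f_{t^*(w)})$ because $u(w, t^*(w)) = 0$; corner points of the form $(0, \wcheap)$ also lie on $\mathrm{graph}(f_{t^*(w)})$ (either via the implicit branch if $u(0, \wcheap, t^*(w)) = 0$, or via the zero region otherwise).

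\textbf{Main obstacle.} The delicate step is guaranteeing that the first-order cost reduction in the perturbation argument translates into a strict utility improvement, despite the potential discontinuity of $V_T$ caused by atoms in the competitor-engagement distribution (tie-breaking on $\MPlatform$ creates jumps in $V_T$). The fix is a standard undercutting argument ruling out such atoms at equilibrium: any atom in the competitor engagement distribution would itself be a profitable deviation opportunity, since a creator could marginally exceed it to secure a discrete jump in winning probability at infinitesimal cost. Once atomlessness of the relevant competitor distribution is established, $V_T$ is continuous at $\MPlatform(w)$ and the $O(\epsilon^2)$ second-order change in $\MPlatform$ cannot dominate the $O(\epsilon)$ decrease in $c$. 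A secondary subtlety is the corner case $\wcostly = 0$, which is absorbed by the zero-region convention for $f_t$ together with a direct first-order check compatible with the ``gaming cost-effectiveness'' inequality.
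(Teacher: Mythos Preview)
Your perturbation direction and your definition of $f_t$ are correct and coincide with the paper's: the paper defines the same ``minimum-investment function'' $f_t(\wcheap) = \inf\{\wcostly \ge 0 : u([\wcostly,\wcheap],t) \ge 0\}$ and uses the same direction (normal to $\nabla \MPlatform$, costly coordinate negative, cheap coordinate positive) together with the cost-effectiveness inequality $\frac{(\nabla c)_2}{(\nabla c)_1} < \frac{(\nabla \MPlatform)_2}{(\nabla \MPlatform)_1}$.

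Where you diverge is in your ``main obstacle.'' The undercutting argument you propose to rule out atoms is not straightforward for arbitrary (possibly asymmetric) equilibria, and in fact the paper's own equilibrium constructions carry atoms at $(0,0)$. More importantly, the atom question is a red herring. The paper avoids it entirely by not perturbing $U_i$ directly: instead it considers the competitor-independent program $\min_w c(w)$ subject to $\MPlatform(w) \ge \MPlatform(w_0)$ and $u(w,t^*(w_0)) \ge 0$, and shows (using exactly your perturbation direction) that any optimizer $w^*$ must lie on $\mathrm{graph}(f_{t^*(w_0)})$. Since $w^*$ weakly dominates $w_0$ in both engagement and acceptance set, it weakly dominates in winning probability against \emph{any} competitor profile---atomic or not---while strictly reducing cost whenever $w_0$ is off the curve. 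Equivalently, you can patch your argument by following the exact level curve of $\MPlatform$ (implicit function theorem) instead of the linear segment, so that $\MPlatform$ and hence $V_T(\MPlatform)$ are preserved exactly; then the second-order issue and the atom question simply do not arise.
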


Proposition \ref{prop:positivecorrelation} guarantees positive correlation within each of (at most) $|\mathcal{T}|$ curves in the support. While this does not guarantee positive correlation across the full support in general, it does imply this global form of positive correlation for the \textit{homogeneous users}. We make this explicit in the following corollary of Proposition \ref{prop:positivecorrelation}. 
\begin{theorem}
\label{thm:positivecorrelationhomogeneous}
Suppose that users are homogenous (i.e. $\mathcal{T} = \left\{t\right\}$) and gaming is not costless (i.e. $\nabla(c(w))_2 > 0$ for all $w \in \mathbb{R}_{\ge 0}^2$). Let $(\mu_1, \mu_2, \ldots, \mu_P)$ be any (mixed) Nash equilibrium in the game with $M = \MPlatform$, and let $w^1, w^2 \in \cup_{i \in [P]} \text{supp}(\mu_i)$ be any two pieces of content in the support. If $\wcheap^2 \ge \wcheap^1$, then $\wcostly^2 \ge \wcostly^1$.
\end{theorem}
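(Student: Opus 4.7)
The plan is to derive the theorem as a direct specialization of Proposition \ref{prop:positivecorrelation} to the homogeneous case $\mathcal{T} = \{t\}$, reducing it to a short case analysis followed by one subtle corner case.

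First, I would invoke Proposition \ref{prop:positivecorrelation} to obtain a weakly increasing function $f_t : \mathbb{R}_{\ge 0} \to \mathbb{R}_{\ge 0}$ such that $\cup_{i \in [P]} \text{supp}(\mu_i) \subseteq C \cup \{(0,0)\}$, where $C := \{(f_t(\wcheap), \wcheap) : \wcheap \ge 0\}$. Given the two points $w^1, w^2$ in the support with $\wcheap^2 \ge \wcheap^1$, the proof splits into three cases depending on where each point lies. If both are on $C$, then $\wcostly^i = f_t(\wcheap^i)$ and weak monotonicity of $f_t$ gives $\wcostly^2 \ge \wcostly^1$ directly. If $w^1 = (0,0)$, then $\wcostly^1 = 0 \le \wcostly^2$ is immediate. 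The only nontrivial subcase is $w^2 = (0,0)$ and $w^1 \in C$: here $\wcheap^2 = 0$ combined with $\wcheap^1 \le \wcheap^2$ forces $\wcheap^1 = 0$, so $w^1 = (f_t(0), 0)$, and the desired inequality reduces to establishing $f_t(0) = 0$.

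The hard part is closing this last subcase. I expect to resolve it by exploiting that Proposition \ref{prop:positivecorrelation} only asserts a \emph{containment}, giving us freedom in the value of $f_t$ at arguments not realized by the equilibrium support. If no point $(\wcostly, 0)$ with $\wcostly > 0$ lies in the support, I would simply redefine $f_t(0) := 0$, which preserves both weak monotonicity and the containment. Otherwise, both $(0,0)$ and some $(\wcostly, 0)$ with $\wcostly > 0$ sit in the support, and I would aim for a contradiction via the Nash indifference condition: both must yield the same expected utility, so the extra winning probability of $(\wcostly, 0)$ over $(0,0)$ must equal $c(\wcostly, 0) > 0$. Invoking strict monotonicity of $\MPlatform$ in $\wcostly$, together with the symmetric structure of the equilibrium and the shape of other creators' support on the $\wcheap = 0$ axis, should yield a profitable deviation from one of the two points and close the argument.
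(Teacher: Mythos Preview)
Your overall plan is exactly the paper's: specialize Proposition~\ref{prop:positivecorrelation} to $|\mathcal{T}|=1$ and read off monotonicity from the single curve. The paper's actual proof is only two sentences --- after recording the containment $\cup_i\,\text{supp}(\mu_i)\subseteq\{[f_t(\wcheap),\wcheap]:\wcheap\ge 0\}\cup\{[0,0]\}$ it simply writes ``Using the fact that $f_t$ is weakly increasing and $f_t(0)\ge 0$, we see that if $\wcheap^2\ge\wcheap^1$, then $\wcostly^2\ge\wcostly^1$.'' In particular it does not isolate your hard subcase, so your redefinition trick in~(a) and the contradiction plan in~(b) already go beyond what the paper writes down.

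The plan for~(b), however, cannot be completed as sketched. First, the Nash indifference equation you invoke presumes that $(0,0)$ and $(f_t(0),0)$ are best responses for the \emph{same} creator, whereas the statement concerns $\cup_i\text{supp}(\mu_i)$ across a possibly asymmetric profile, so no single indifference equation ties the two points together. Second, and more fundamentally, $(f_t(0),0)$ can lie in the support purely as the limit point of the curve $\{(f_t(\wcheap),\wcheap):\wcheap>0\}$ while carrying zero mass --- this already happens in the paper's own symmetric equilibrium of Definition~\ref{def:homogeneous} whenever $f_t(0)>0$, since the continuous part of $\Wcheap$ has positive density all the way down to $\wcheap=0^+$ and $f_t$ is continuous. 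A measure-zero boundary point of the support cannot be excluded by any profitable-deviation argument, so the contradiction you are hoping for in~(b) is simply not available. Your main case analysis already reproduces what the paper proves; the extra boundary work you propose cannot be carried through as planned.
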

Theorem \ref{thm:positivecorrelationhomogeneous} shows that a creator's investment in quality content weakly increases with the creator's utilization of gaming tricks. This illustrates a positive correlation between gaming tricks and investment in quality in the content landscape. Perhaps surprisingly, this positive correlation indicates that even high-quality content on the content landscape will have clickbait headlines or exhibit other gaming tricks. Thus, gaming tricks and investment should be viewed as \textit{complements} rather than substitutes.

We provide a proof sketch of Proposition \ref{prop:positivecorrelation} (Theorem \ref{thm:positivecorrelationhomogeneous} follows immediately as a corollary).
\begin{proof}[Proof sketch of Proposition \ref{prop:positivecorrelation}]
Let us first interpret the two types of sets in Proposition \ref{prop:positivecorrelation}. For each $t$, the set (A) is a one-dimensional curve specified by $f_t$ where the costly component is weakly increasing in the cheap component. 
We construct $f_t$ to be the \textit{minimum-investment function}
\[f_t(\wcheap) = \inf \left\{\wcostly \mid \wcostly \ge 0, u([\wcostly, \wcheap], t) \ge 0 \right\}.\]
The value $f_t(\wcheap)$ captures the minimum investment level in quality needed to achieve nonnegative utility for type $t$ users, given $\wcheap$ utilization of gaming tricks. For example, the function $f_t$ takes the following form in \Cref{example:linear}:
\begin{example}[continues=example:linear]%
The function $f_t$ can be taken to be $f_t(\wcheap) = \max(0, (\wcheap/t) - \alpha)$ (this follows from Lemma \ref{lemma:one-to-one} in \Cref{appendix:auxlemmas} and Lemma \ref{lemma:ctaugunion} in \Cref{appendix:proofssecpositivecorrelation}). 
As $t$ increases (and users becomes more tolerant to gaming tricks), the slope of $f_t$ decreases. As a result, an increase in utilization of gaming tricks results in less of an increase in investment in quality. 
\end{example}
\noindent The set (B) of costless actions captures creators ``opting out'' of the game by not expending any costly effort in producing their content.

We show that the set (A) captures all of the content that a creator might reasonably select if they are optimizing for being recommended to a user with type $t$. In particular, if a creator is optimizing for type-$t$ users, they will invest the minimum amount in quality to maintain nonnegative utility on those users. We further 
show that when best-responding to the other content creators, a creator will either optimize for winning one of the user types $t \in \mathcal{T}$ or opt out by expending no costly effort. We defer the full proof to \Cref{appendix:proofssecpositivecorrelation}. 
\end{proof}

\subsection{Empirical analysis on Twitter dataset}\label{subsec:empirical}

\begin{figure*}[t]
    \centering
        \begin{subfigure}[b]{0.32\textwidth}
        \centering
        \includegraphics[width=\textwidth]{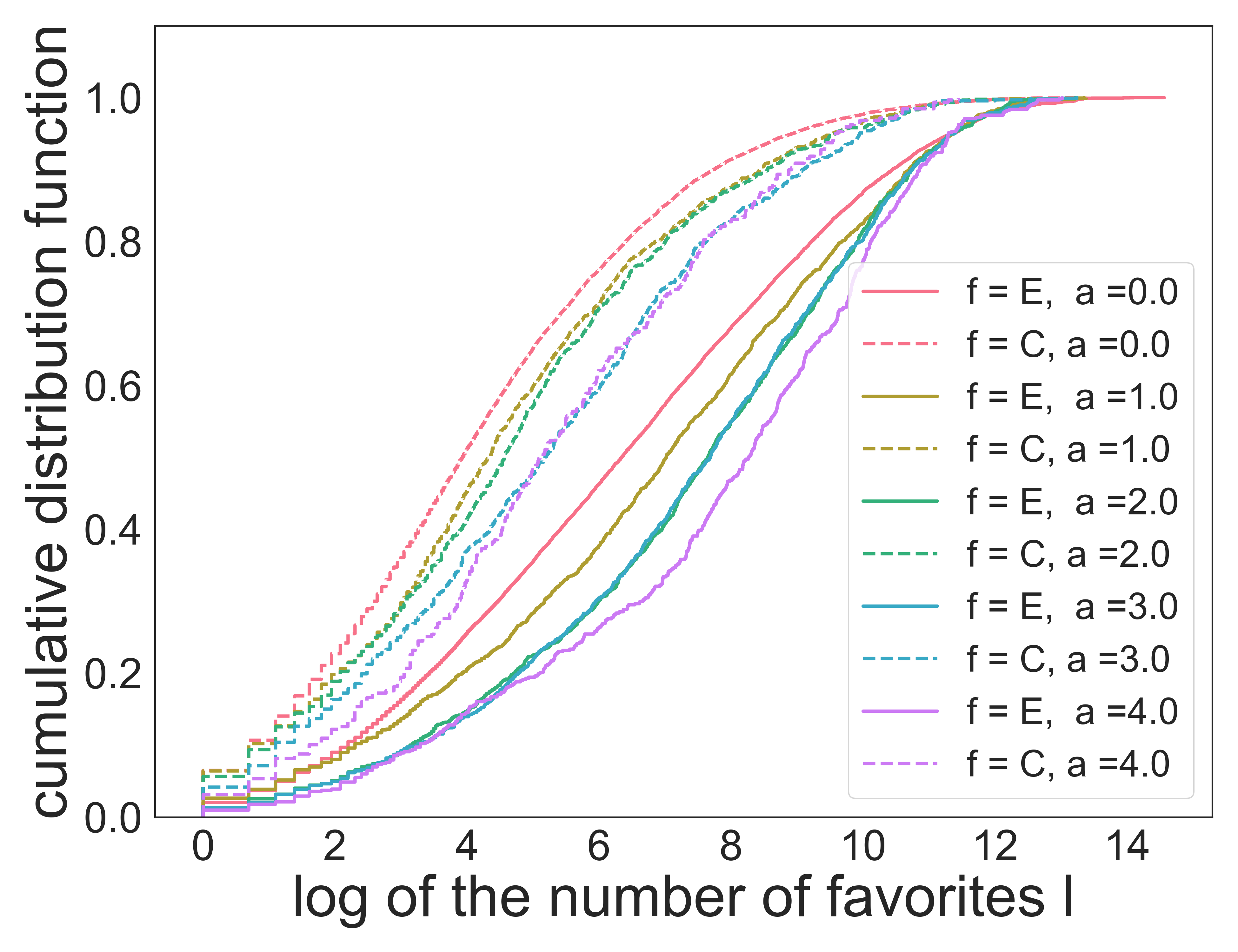}
        \caption{$\mathcal{G} = \left\{ P, \neg P \right\}$}
        \label{fig:all}
    \end{subfigure}
    \begin{subfigure}[b]{0.32\textwidth}
        \centering
        \includegraphics[width=\textwidth]{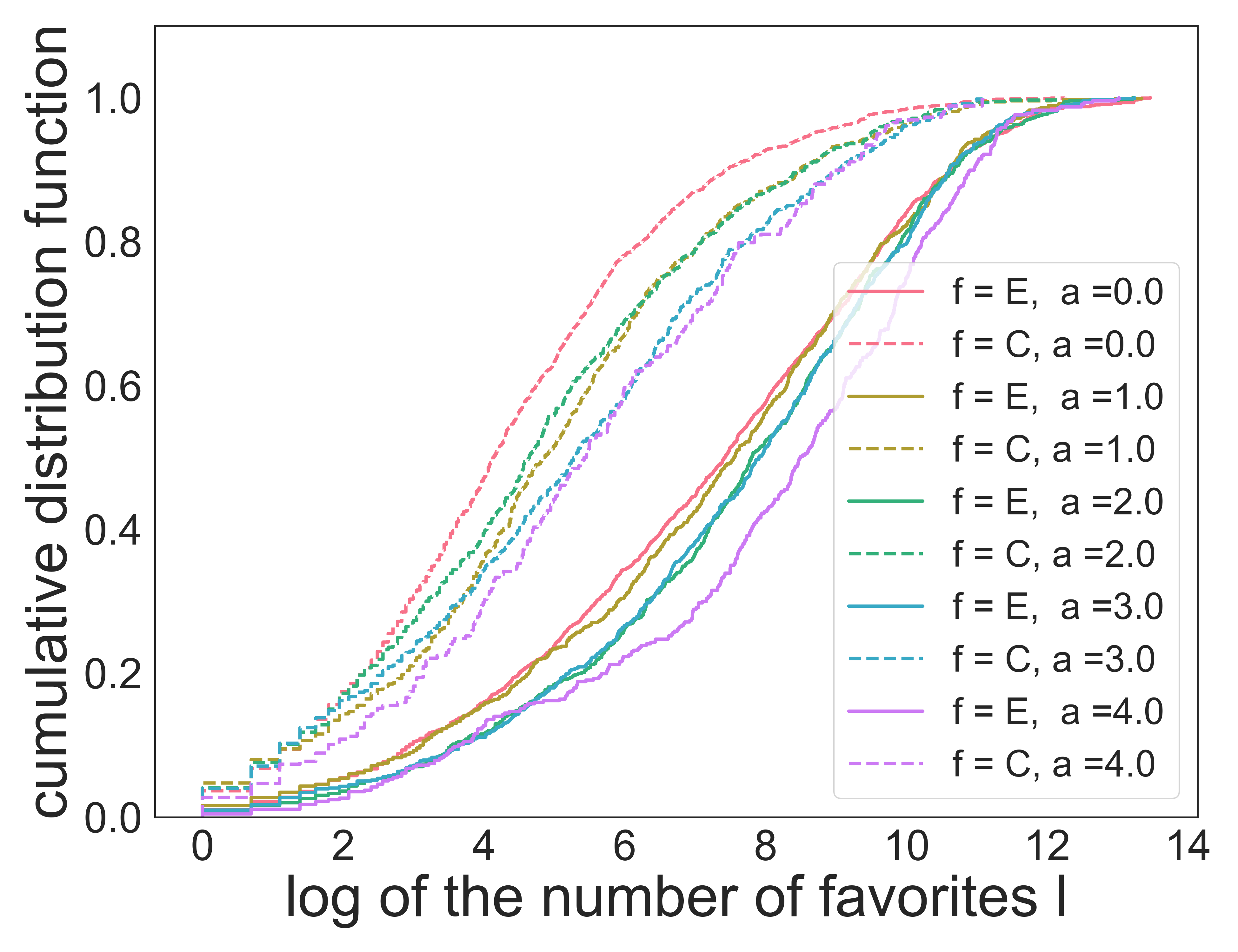}
        \caption{$\mathcal{G} = \left\{ P \right\}$}
        \label{fig:political}
    \end{subfigure}
    \hfill
    \begin{subfigure}[b]{0.32\textwidth}
        \centering
        \includegraphics[width=\textwidth]{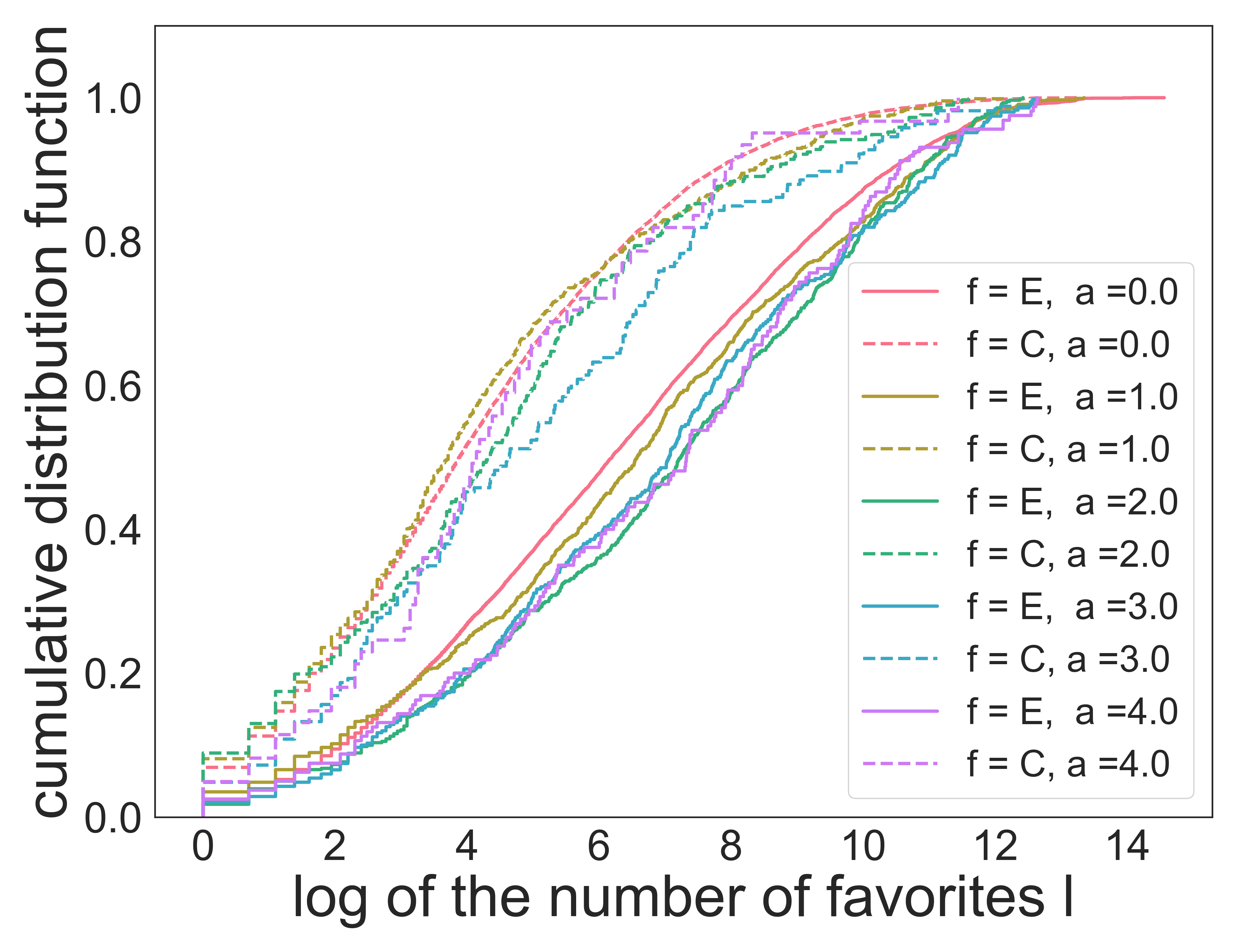}
        \caption{$\mathcal{G} = \left\{ \neg P \right\}$}
        \label{fig:entertainment}
    \end{subfigure}
    \caption{Cumulative distribution function $H_{a, f, \mathcal{G}}$ of the number of favorites ($\wcostly = l$) conditioned on different angriness levels ($\wcheap = a$) on a dataset \citep{MCPWD23} of tweets from the engagement-based feeds ($f = E$) and chronological feeds ($f = C$). The tweet genre is unrestricted (left), restricted to political tweets (middle), and restricted to not political tweets (right). The cdf for higher values of $a$ appears to stochastically dominate the cdf for lower values of $a$, suggesting a positive correlation between $\wcheap$ and $\wcostly$. The stochastic dominance is more pronounced for political tweets than for non-political tweets, and it occurs for engagement-based and chronological feeds. }
    \label{fig:positivecorrelation}
\end{figure*}

We next provide empirical validation for the positive correlation between gaming and investment on a Twitter dataset \citep{MCPWD23}. 
The dataset consists of survey responses from 1730 participants, each of whom was asked several questions about each of the top ten tweets in their personalized and chronological feeds. Using the user survey responses, we associate each tweet with a tuple:
\[(f, g, a, l)  \in \left\{E, C \right\} \times \left\{P, \neg P \right\} \times \left\{0, 1, 2, 3 , 4\right\} \times \mathbb{Z}_{\ge 0}.\]
The \textit{feed} $f \in \left\{E, C\right\}$ captures whether the tweet was in the user's engagement-based feed ($f = E$) or chronological feed ($f = C$). The \textit{genre} $g \in \left\{P, \neg P \right\}$ captures whether the user labelled the content as in the political genre $(g = P)$ or not $(g = \neg P)$. The \textit{angriness level} $a \in \left\{0, 1, 2, 3 , 4\right\}$ captures the reader's evaluation of how angry the author appears in their tweet, rated numerically between $0$ and $4$.\footnote{The survey question asked: ``How is [author-handle] feeling in their tweet?'' \citep{MCPWD23}} The \textit{number of favorites} $l \in \mathbb{Z}_{\ge 0}$ captures the number of favorites (i.e. ``heart reactions'') of the tweet. Let $D$ be the multiset $D$ of tuples from the tweets in the dataset, and let $\mathcal{D}$ be the distribution where $(f, g, a, l)$ is drawn uniformly from the multiset $D$.  

We map this empirical setup to \Cref{example:linear} as follows. Since $\wcheap$ is intended to capture the offensiveness of content in \Cref{example:linear}, we estimate $\wcheap$ by the angriness level $a$. Since $\wcostly$ is intended to capture the costly investment into content quality in \Cref{example:linear}, we estimate $\wcostly$ by the number of favorites $l$. We expect that increasing author angriness $\wcheap$ 
decreases user utility, drawing upon intuition from \citet{M20} that incendiary or divisive content drives engagement by provoking outrage in users. Furthermore, we expect that higher quality content would generally receive more favorites $\wcostly$ and lead to higher user utility, (if the author angriness level is held constant).\footnote{The dataset \citep{MCPWD23}  also includes other author and reader emotions besides author angriness (such as author happiness or reader sadness). The reason that we focus on author angriness is that we believe it to most closely match the interpretation of ``gaming tricks'' in our model: while we expect increasing author angriness to decrease user utility (as described above), we might not expect increasing other emotions, such as author happiness, to decrease user utility.}

We analyze the relationship between the number of favorites ($\wcostly$) and the angriness $(\wcheap)$ with two different approaches: 
\begin{itemize}[leftmargin=*]
\item \textit{Stochastic dominance of conditional distributions:} Given an angriness level $a \in \left\{0, 1, 2, 3 , 4\right\}$, feed $f \in \left\{E, C \right\} $ and subset of genres $\mathcal{G} \subseteq \left\{P, \neg P \right\}$, consider the random variable $\ln(L)$ where $(F, G, A, L)$ is drawn from the conditional distribution $\mathcal{D} \mid (A = a, F = f, G \in \mathcal{G})$. We let $H_{a, f,\mathcal{G}}$ denote the cumulative density function of this random variable. We visually examine the extent to which $H_{a, f, \mathcal{G}}$ stochastically dominates $H_{a', f, \mathcal{G}}$ when $a > a'$. 
\item \textit{Correlation coefficient:} Given a feed $f \in \left\{E, C \right\} $ and subset of genres $\mathcal{G} \subseteq \left\{P, \neg P \right\}$, we compute the multiset 
\begin{equation}
 S_{f, \mathcal{G}}:= \left\{(a, l) \mid (f, g, a, l) \in D \mid f = f', g' \in \mathcal{G} \right\}
\end{equation}
We compute the Spearman's rank correlation coefficient $\rho_{f, \mathcal{G}} \in [-1, 1]$ of the multiset $S_{f, \mathcal{G}}$ and a corresponding p-value $p_{f, \mathcal{G}}$.\footnote{The p-value is for a one-sided hypothesis test with null hypothesis that $a$ and $l$ have no ordinal correlation, calcuated using the scipy.stats.spearmanr Python library. } 
\end{itemize}

\begin{table}
    \centering
    \begin{tabular}{c|c|c|c}
         & $\mathcal{G} = \left\{P, \neg P \right\}$ & $\mathcal{G} = \left\{P \right\}$  & $\mathcal{G} = \left\{\neg P \right\}$ \\ \hline
      $f = E$ & $0.131$  & $0.092$  & $0.048$  \\  
      & $(2  \cdot 10^{-76})$ & $(2 \cdot 10^{-10})$ & $(3  \cdot 10^{-9})$  \\\hline
      $f = C$ & $0.086$ & $0.138$  & $0.004$  \\
     & $(2 \cdot 10^{-33})$  & $(4.49  \cdot 10^{-19})$ & $(3 \cdot 10^{-1})$ \\
    \end{tabular}
    \caption{Correlation coefficient $\rho_{f, \mathcal{G}}$ (with $p$-value $p_{f, \mathcal{G}}$ in parentheses) between the number of favorites ($\wcostly = l$) and the angriness level ($\wcheap = a$) on a dataset \citep{MCPWD23} of tweets from the engagement-based feeds ($f = E$) and chronological feeds ($f = C$) and across political ($P$) and non-political ($\neg P$) tweets. The correlation coefficient is positive (though weak) and statistically significant in all cases except for non-political tweets in the chronological feed. Moreover, correlations are stronger for political than for non-political tweets. }
    \label{tab:correlation_coefficients}
\end{table}

\paragraph{Stochastic dominance of conditional distributions.} Figure \ref{fig:positivecorrelation} shows the cumulative distribution function $H_{a, f, \mathcal{G}}$ for different values of $a$, $f$, and $\mathcal{G}$. The primary finding is that in all of the plots, the cdf for higher values of $a$ visually appears to stochastically dominate the cdf for lower values of $a$. This stochastic dominance reflects a higher author's angriness level $\wcheap = a$ leads to higher numbers of favorites $\wcostly = l$, thus suggesting that content with higher levels of gaming $\wcheap$ also exhibit higher quality $\wcostly$. 

Interestingly, the stochastic dominance is most pronounced when  $\mathcal{G} = \left\{P, \neg P \right\}$ and $\mathcal{G} = \left\{P\right\}$, but less pronounced when $\mathcal{G} = \left\{\neg P\right\}$. This aligns with the intuition that increasing author angriness more effectively increases engagement for political tweets than for non-political tweets.\footnote{For non-political tweets, we expect other types of gaming tricks are employed.} Moreover, within $\mathcal{G} = \left\{P, \neg P \right\}$ and $\mathcal{G} = \left\{P \right\}$, the stochastic dominance occurs for both $f = E$ and $f= C$. We view each of $f = E$ and $f= C$ as capturing a different slice of the content landscape: the fact that stochastic dominance occurs in two different slices suggests it broadly occurs in the content landscape.

\paragraph{Correlation coefficient.} Table \ref{tab:correlation_coefficients} shows $\rho_{f, \mathcal{G}}$ for different genres of tweets and feeds. Interestingly, the correlation coefficient is positive in all cases, which suggests that content with higher levels of gaming tend to exhibit higher levels of investment in quality. However, the correlation is somewhat weak: this may be due to angriness ratings being incomparable across different survey participants. Nonetheless, the correlation is stronger for political content, which again aligns with the intuition that increasing author angriness is more effective in increasing engagement for political tweets.\footnote{For many other emotions (both positive and negative) measured in \citep{MCPWD23}, the analogous correlation coefficients are also positive. For negative emotions, these coefficients could also be interpreted as correlations between gaming tricks and quality within our model. On the other hand, for positive emotions, where increasing the level of the positive emotion might \textit{increase} (rather than decrease) user utility, the resulting correlation coefficient does not have a clear interpretation within our model. }

\section{Performance of engagement-based optimization at equilibrium}\label{sec:performance}

\begin{figure}[t]
    \centering
        \begin{subfigure}[b]{0.32\textwidth}
        \centering
        \includegraphics[width=\textwidth]{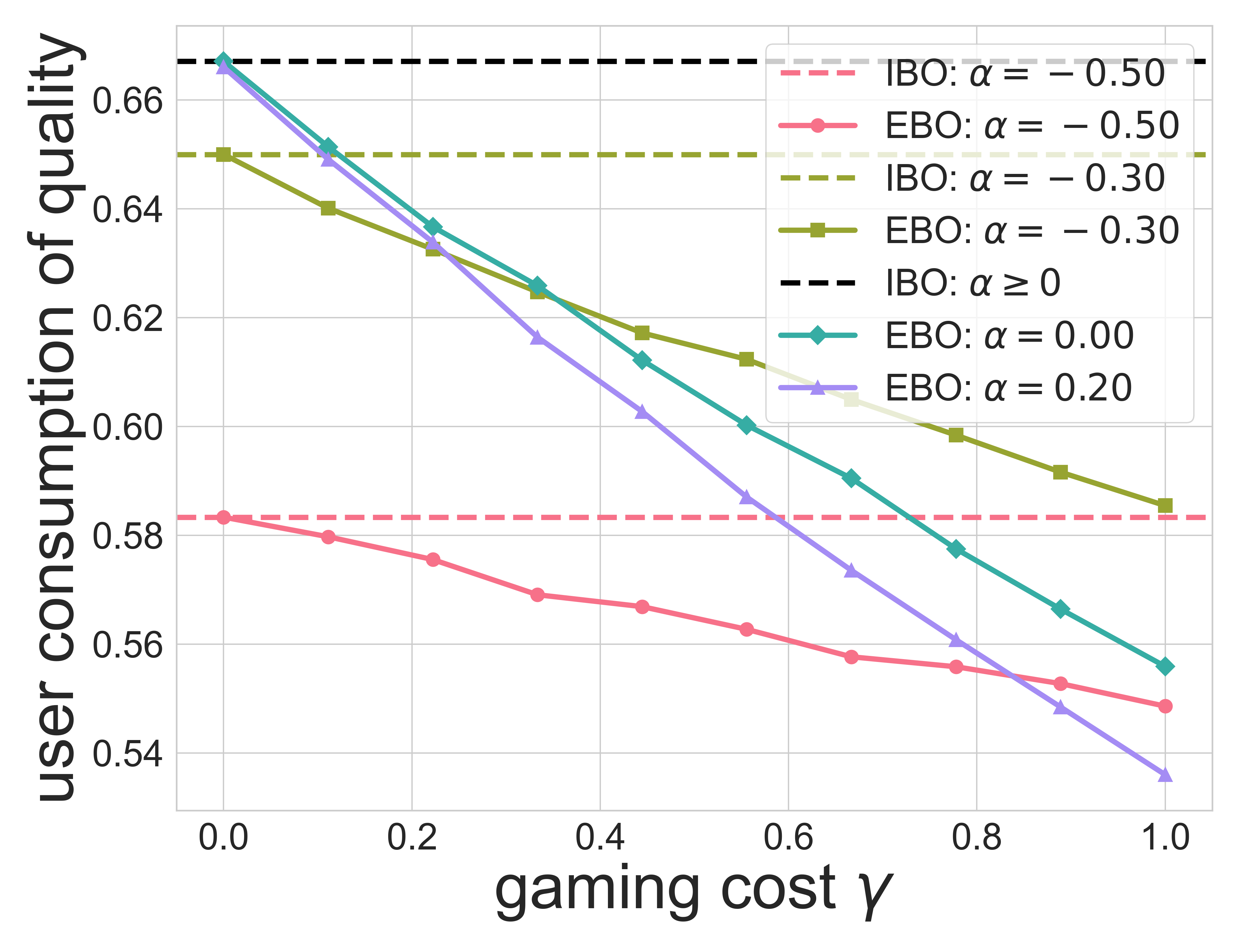}
        \caption{User consumption of quality}
        \label{fig:qualityconsumption}
    \end{subfigure}
    \begin{subfigure}[b]{0.32\textwidth}
        \centering
        \includegraphics[width=\textwidth]{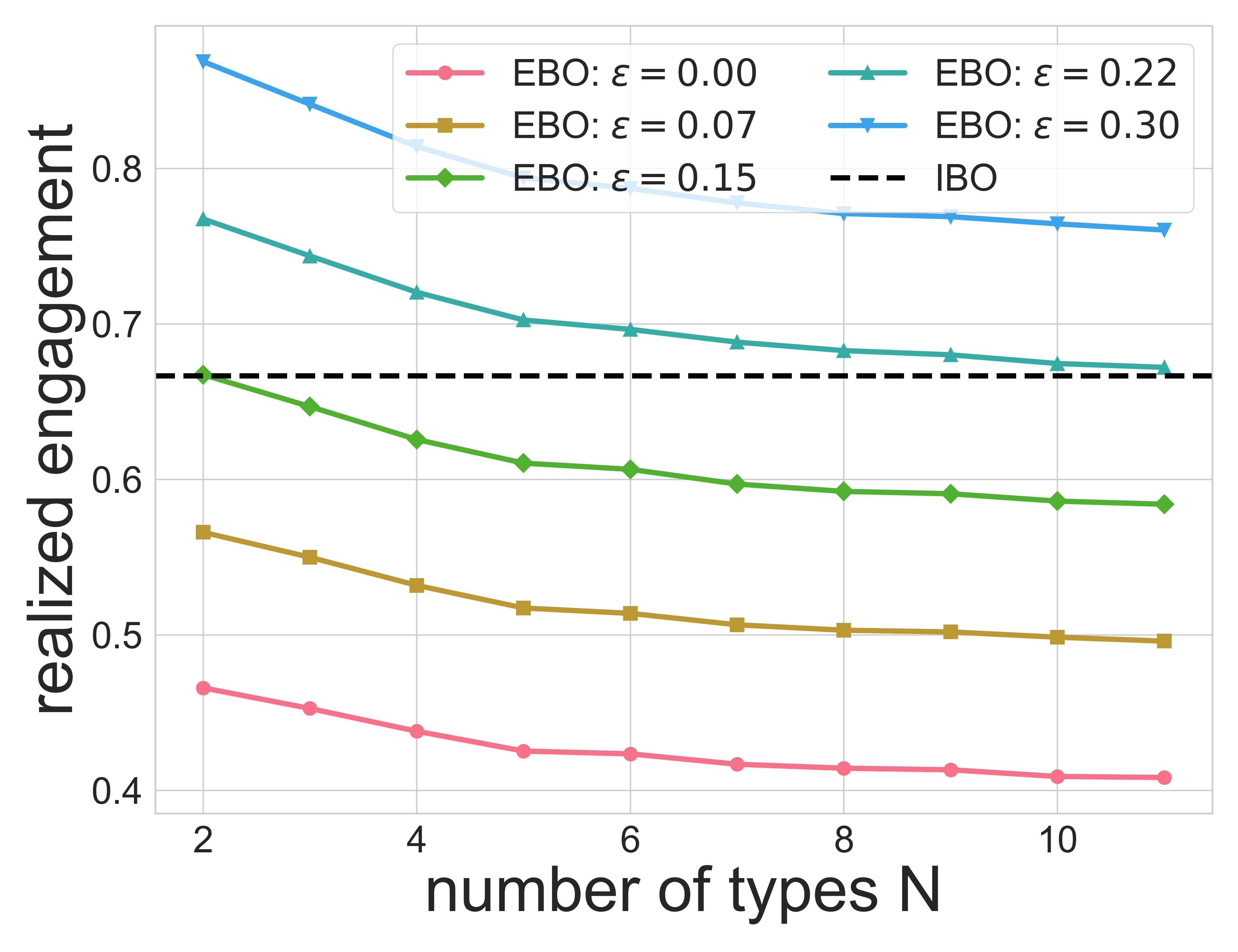}
        \caption{Realized engagement}
        \label{fig:realizedengagement}
    \end{subfigure}
    \hfill
    \begin{subfigure}[b]{0.32\textwidth}
        \centering
        \includegraphics[width=\textwidth]{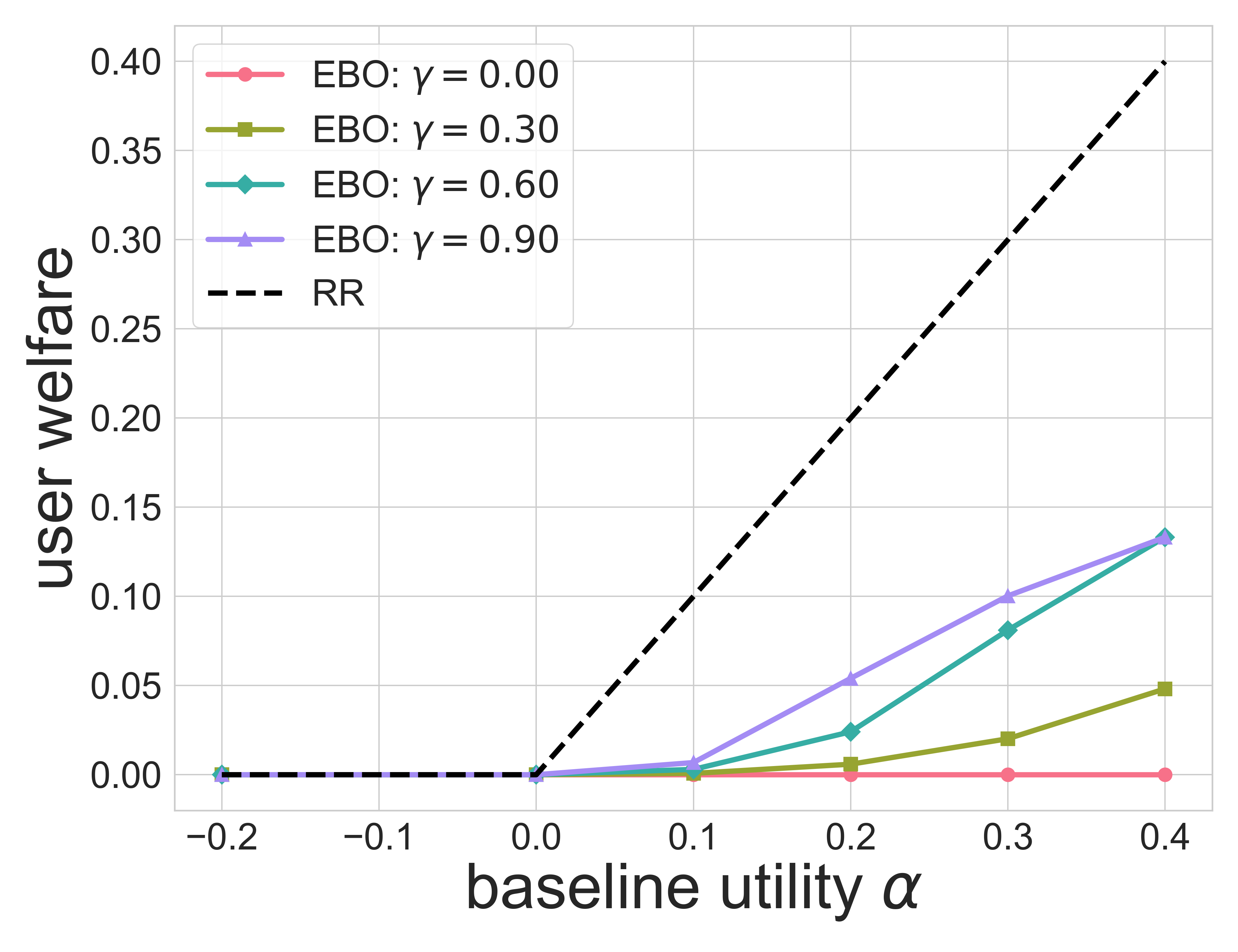}
        \caption{User welfare}
        \label{fig:userwelfare}
    \end{subfigure}
    \caption{Equilibrium performance of engagement-based optimization (EBO) in \Cref{example:linear} with $P = 2$ creators along several performance axes (left to right). The performance is numerically estimated from 100,000 samples from the equilibrium distributions (\Cref{sec:equilibriumcharacterizations}). The parameter settings are $\mathcal{T} = \left\{1\right\}$ (left), $\mathcal{T} = \mathcal{T}_{N, \epsilon}$, $\alpha = 1$, and $\gamma = 0$ (middle), and $\mathcal{T} = \left\{5\right\}$ (right). The equilibrium performance of investment-based optimization (IBO) and random recommendations (RR) are analytically computed from the equilibrium distributions (Theorem \ref{thm:investmentbased} and Theorem \ref{thm:randomrecommendations}) and shown as baselines. User consumption of quality  can decrease with gaming costs (left; Theorems \ref{thm:comparisonuserconsumptiongaming}-\ref{thm:comparisonuserconsumptioninvestment}), realized engagement can be lower for EBO than for IBO (middle; Theorem \ref{thm:comparisonengagement}), and user welfare can be lower for EBO than for RR (right; Theorem \ref{thm:comparisonuserutility}).}
    \label{fig:downstreamimpacts}
\end{figure}

In this section, taking into account the structure of the the content landscape at equilibrium, we investigate the downstream performance of engagement-based optimization.  As baselines, we consider investment-based optimization (an idealized baseline that optimizes directly for quality $\MIdeal(w) = \wcostly$) and random recommendations (a trivial baseline that results in randomly choosing from content that achieves nonnegative user utility). We highlight striking aspects of these comparisons (Figure \ref{fig:downstreamimpacts}), considering three qualitatively different performance axes: user consumption of quality (\Cref{sec:equilibriuminvestment}),  realized engagement (\Cref{sec:platformimplications}), and user utility (\Cref{subsec:userimplications}).   

Our comparisons take into account the \textit{endogeneity of the content landscape}: i.e., that the content landscape at equilibrium depends on the choice of metric. The possibility of multiple equilibria casts ambiguity on which equilibrium to consider. To resolve this ambiguity, we will focus on the (symmetric mixed) equilibria from our characterization results in \Cref{sec:equilibriumcharacterizationsbaselines} and \Cref{sec:equilibriumcharacterizations}.  That is, throughout this section we will focus on equilibria  $\muengagement(P, c, u, \mathcal{T})$ (for engagement-based optimization), $\muideal(P, c, u, \mathcal{T})$  (for investment-based optimization), and  $\murandom(P, c, u, \mathcal{T})$ (for random recommendations).

For ease of exposition, in this section, we focus on Example \ref{example:linear} for different parameter settings of the baseline utility $\alpha$, the gaming cost level $\gamma$, the number of creators $P$, and the type space $\mathcal{T}$.  The results in this section directly translate to other instantations of our model including \Cref{example:KMR}.

\subsection{User consumption of quality}\label{sec:equilibriuminvestment}
We first consider the average quality of content consumed by users (formalized below), focusing on the case of homogeneous users in \Cref{example:linear}. We show that as gaming costs increase, the performance of engagement-based optimization \textit{worsens}; in fact, unless gaming is \textit{costless}, engagement-based optimization performs strictly worse than investment-based optimization.

We formalize user consumption of quality by: 
\[\UserConsumption(M; \mathbf{w}) := \mathbb{E}\left[
\MIdeal(w_{i^*(M; \textbf{w})}) \cdot \mathbbm{1}[u(w_{i^*(M; \textbf{w})}, t) \ge 0]\right],\]
which only counts content quality if the content is actually consumed by the user. 
Taking into account the endogeneity of the content landscape, the user consumption of quality at a symmetric mixed Nash equilibrium $\mu^M$ is:
\[\mathbb{E}_{\mathbf{w} \sim (\mu^M)^P}
\left[\UserConsumption(M; \mathbf{w})\right].\]

The following result shows that in \Cref{example:linear} the average user consumption of quality strictly \textit{decreases} as gaming costs (parameterized by $\gamma$) become more expensive (Figure \ref{fig:qualityconsumption}).   
\begin{theorem}
\label{thm:comparisonuserconsumptiongaming}
Suppose that users are homogeneous (i.e. $\mathcal{T} = \left\{t\right\}$). For any sufficiently large baseline utility $\alpha > -1$, bounded gaming costs $\gamma \in [0,1)$, and any number of creators $P \ge 2$, the user consumption of quality $\mathbb{E}_{\mathbf{w} \sim \left(\muengagement(P, c, u, \mathcal{T})\right)^P}[\UserConsumption(\MPlatform; \mathbf{w})]$ for engagement-based optimization is strictly decreasing in $\gamma$.
\end{theorem}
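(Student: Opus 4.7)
The plan is to leverage the closed-form characterization of the symmetric mixed equilibrium $\muengagement$ from Theorem~\ref{thm:onetype} for homogeneous users. Since Theorem~\ref{thm:positivecorrelationhomogeneous} tells us the equilibrium support lies on the minimum-investment curve $\wcostly = \max(0,\,\wcheap/t - \alpha)$, and since engagement-based optimization ranks creators by the scalar score $m := \MPlatform(w) = \wcostly + \wcheap$, I will reparameterize by $m$ and reduce the problem to a one-dimensional all-pay-auction computation.

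First, I compute the cost-along-the-curve as a piecewise-linear function of the score:
\[
c_\gamma(m) \;=\; \begin{cases}\gamma\,m, & m \in [0,\,t\alpha],\\[3pt] \dfrac{(1+\gamma t)(m+\alpha)}{t+1}-\alpha, & m \in [t\alpha,\,m_{\max}(\gamma)],\end{cases}
\]
and observe that the standard symmetric-mixed indifference condition for an all-pay auction yields the equilibrium score CDF $F_\gamma(m) = c_\gamma(m)^{1/(P-1)}$, with upper endpoint $m_{\max}(\gamma)$ pinned down by $c_\gamma(m_{\max}(\gamma)) = 1$, giving $m_{\max}(\gamma) = (1+\alpha)(t+1)/(1+\gamma t) - \alpha$. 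The hypothesis that $\alpha$ is sufficiently large ensures that $m_{\max}(\gamma) > t\alpha$ for every $\gamma \in [0,1)$, so the equilibrium always places positive mass on the second (investment-active) piece of the curve.

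Next, I will express the user consumption of quality as an integral. The winning score is the maximum of $P$ i.i.d.\ draws from $F_\gamma$ and so has CDF $c_\gamma(m)^{P/(P-1)}$; its associated quality is $\wcostly(m) = (m-t\alpha)/(t+1)$ on the second piece and zero on the first; and because the support lies on the min-investment curve, $\mathbbm{1}[u(w,t)\ge 0] = 1$ almost surely. An integration by parts then gives
\[
\mathbb{E}_{\mathbf{w}\sim(\muengagement)^P}\!\bigl[\UserConsumption(\MPlatform;\mathbf{w})\bigr]
\;=\; \frac{1}{t+1}\int_{t\alpha}^{m_{\max}(\gamma)}\!\bigl(1-c_\gamma(m)^{P/(P-1)}\bigr)\,dm.
\]

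Finally, strict monotonicity in $\gamma$ follows from two simultaneous effects: (i) pointwise on the second piece, $\partial c_\gamma(m)/\partial \gamma = t(m+\alpha)/(t+1) > 0$, so the integrand $1 - c_\gamma(m)^{P/(P-1)}$ is strictly decreasing in $\gamma$ on a set of positive measure; and (ii) the upper endpoint $m_{\max}(\gamma)$ is itself strictly decreasing in $\gamma$, while the integrand is strictly positive on the interior of the integration domain. Both effects push the integral strictly downward, yielding the claimed monotonicity. The main obstacle I anticipate is not this final monotonicity computation but rather confirming that the equilibrium from Theorem~\ref{thm:onetype} really takes the clean piecewise all-pay-auction form described above across the entire range $\gamma \in [0,1)$ and that the support straddles the knee $m = t\alpha$; verifying absence of atoms, uniqueness of the indifference solution, and that no off-curve deviations are profitable is precisely what the ``sufficiently large $\alpha$'' hypothesis secures.
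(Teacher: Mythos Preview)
Your approach is essentially the paper's: both compute the equilibrium CDF explicitly from Theorem~\ref{thm:onetype} and show it increases pointwise in $\gamma$, hence $\mathbb{E}[\max_i \wcostly^{(i)}]$ strictly decreases. The paper parametrizes by $\wcostly$ directly (Proposition~\ref{prop:equilibriumhomogenousengagement}) and invokes stochastic dominance in one line; you instead parametrize by the engagement score $m$, but since $\wcostly = (m-t\alpha)/(t+1)$ on the investment-active piece, your $c_\gamma(m)$ is exactly the paper's $\wcostly(1+\gamma t) + \gamma t\alpha$ after the change of variables, and your integral-plus-moving-endpoint argument is equivalent to their stochastic-dominance step. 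The paper's route is slightly cleaner because working in $\wcostly$ avoids tracking the endpoint separately.

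One genuine slip to flag: your claim that ``sufficiently large $\alpha$'' ensures $m_{\max}(\gamma) > t\alpha$ is backwards. That inequality is equivalent to $\alpha\gamma t < 1$, which \emph{fails} for large $\alpha$ whenever $\gamma > 0$. The paper's hypothesis $\alpha > -1$ is only a lower bound (so that when $-1 < \alpha \le 0$ the atom at $[0,0]$ has mass $(-\alpha)^{1/(P-1)} < 1$), not a largeness condition. Relatedly, your piecewise formula tacitly assumes $\alpha > 0$; for $-1 < \alpha \le 0$ the first piece $[0,t\alpha]$ is absent and is replaced by a $\gamma$-independent atom at $[0,0]$, though the monotonicity argument on the remaining continuous part goes through identically.
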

\begin{proof}[Proof sketch of Theorem \ref{thm:comparisonuserconsumptiongaming}]
For sufficiently large values of $\wcostly$, creators  compete their utility down to $0$, so the only remaining strategic choice is how they choose to trade off effort spent on gaming versus investment. If gaming is costly, then creators need to expend more of their effort on gaming to achieve a desired increase in engagement, so they will necessarily devote less effort to investment in quality. In contrast, if gaming is costless, creators devote all of their effort to investment. To formalize this intuition, we explicitly compute user consumption of quality using the equilibrium characterization. We defer the proof to \Cref{appendix:comparisonuserconsumptiongaming}.
\end{proof}

Theorem \ref{thm:comparisonuserconsumptiongaming} thus has a striking consequence for platform design: to improve user consumption of quality, it can help to reduce the costs of gaming tricks as much as possible. One concrete approach for reducing gaming costs is to increase the \textit{transparency} of the platform's metric, for example by publishing the metric in an interpretable manner. In particular, if a content creator does not have access to the platform's metric, they would have to expend effort to learn the metric to game it; on the other hand, transparency would reduce these costs. Perhaps countuitively, our results suggest that increasing transparency can \textit{improve} user consumption of quality in the presence of strategic content creators.\footnote{This finding bears some resemblance to results in the strategic classification literature \citep{GNETR21, BPWZ22}. For example, \citet{GNETR21} shows that transparency is the optimal policy in terms of optimizing the decision-maker's accuracy. However, a lack of transparency is suboptimal in \cite{GNETR21} because it prevents the decision-maker from being able to fully anticipating strategic behavior; in contrast, a lack of transparency is suboptimal in our setting because it leads effort to be spent on figuring how to game the classifier rather than investing in quality.} In particular, our results suggest the recent trend  
of recommender systems publishing their algorithms (e.g., \citet{twitter2023algorithm})  may improve user consumption of quality content, and encourage the continued release of recommendation algorithms more broadly. 

To further understand the impact of gaming costs $\gamma$, we compare the performance of engagement-based optimization with the performance of investment-based optimization (which does not depend on $\gamma$). We treat the performance of investment-based optimization as an ``idealized baseline'' for $\UserConsumption$: the reason is that for any \textit{fixed} content landscape $\mathbf{w}$, investment-based optimization maximizes the $\UserConsumption(M; \textbf{w})$ across all possible metrics $M$, because the objectives exactly align. 
The following result shows that engagement-based optimization performs strictly worse than investment-based optimization unless gaming tricks are \textit{costless} (Figure \ref{fig:qualityconsumption}). 
\begin{theorem}
\label{thm:comparisonuserconsumptioninvestment}
Suppose that users are homogeneous (i.e. $\mathcal{T} = \left\{t\right\}$). 
For any sufficiently large baseline utility $\alpha > -1$, bounded gaming costs $\gamma \in [0,1)$, and any number of creators $P \ge 2$, it holds that:
\[\mathbb{E}_{\mathbf{w} \sim \left(\muengagement(P, c, u, \mathcal{T})\right)^P}[\UserConsumption(\MPlatform; \mathbf{w})] \le \mathbb{E}_{\mathbf{w} \sim \left(\muideal(P, c, u, \mathcal{T})\right)^P}[\UserConsumption(\MIdeal; \mathbf{w})],\]
with equality if and only if $\gamma = 0$. 
\end{theorem}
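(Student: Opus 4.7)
My plan is to derive Theorem \ref{thm:comparisonuserconsumptioninvestment} by combining the strict monotonicity in $\gamma$ from Theorem \ref{thm:comparisonuserconsumptiongaming} with a direct computation that the two sides coincide exactly at $\gamma = 0$. By Theorem \ref{thm:comparisonuserconsumptiongaming}, the quantity $\mathbb{E}_{\mathbf{w} \sim (\muengagement)^P}[\UserConsumption(\MPlatform; \mathbf{w})]$ is strictly decreasing in $\gamma \in [0,1)$, so it suffices to establish the single equality $\mathbb{E}_{\mathbf{w} \sim (\muengagement)^P}[\UserConsumption(\MPlatform; \mathbf{w})]|_{\gamma=0} = \mathbb{E}_{\mathbf{w} \sim (\muideal)^P}[\UserConsumption(\MIdeal; \mathbf{w})]$ and then invoke the monotonicity to conclude both the inequality and the ``equality iff $\gamma = 0$'' clause.

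To prove this equality, I would use the explicit equilibrium characterizations from Theorem \ref{thm:onetype} (for EBO) and Theorem \ref{thm:investmentbased} (for IBO) and argue that both equilibria reduce to the same symmetric all-pay auction in $\wcostly$ with unit value. Under IBO, since $\wcheap$ never improves the platform score $\MIdeal(w) = \wcostly$ but weakly raises the cost, creators set $\wcheap = 0$; for sufficiently large $\alpha$, the user-utility condition $\wcostly + \alpha \ge 0$ is slack for every $\wcostly \ge 0$, so the user always consumes, and the game is the standard all-pay auction with cost $\wcostly$ and value $1$. Under EBO at $\gamma = 0$, Proposition \ref{prop:positivecorrelation} and the minimum-investment argument pin the support to the curve $\wcheap = t(\wcostly + \alpha)$: since gaming is free, creators push $\wcheap$ up to the user-utility boundary to maximize engagement for a given quality. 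Along this curve, engagement equals $(1+t)\wcostly + t\alpha$, which is strictly increasing in $\wcostly$, while the cost remains $\wcostly$. So each creator's effective strategic variable is $\wcostly$, the winner is the creator with the largest $\wcostly$, and the game reduces to the same all-pay auction as under IBO.

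Since the two equilibria induce the same marginal distribution of $\wcostly$ (the symmetric all-pay Nash equilibrium with CDF $F(c) = c^{1/(P-1)}$ on $[0,1]$) and in both cases the user consumes the winning content (utility is $0$ at the EBO boundary and $\wcostly + \alpha > 0$ under IBO), the expected quality of consumed content $\mathbb{E}[\wcostly_{\text{winner}}]$ coincides, giving the desired equality at $\gamma = 0$. Combined with the strict monotonicity from Theorem \ref{thm:comparisonuserconsumptiongaming}, this yields strict inequality for $\gamma > 0$ and equality precisely at $\gamma = 0$.

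The main obstacle I anticipate is aligning the boundary behavior at $\gamma = 0$ with the exact closed forms produced by Theorems \ref{thm:onetype} and \ref{thm:investmentbased}. In particular, I need to rule out atoms of mass at the opt-out point $(0,0)$ that might desynchronize the two marginal distributions over $\wcostly$, and verify that the ``sufficiently large $\alpha$'' hypothesis is enough to keep the entire IBO support inside the user-consumes region and to ensure the EBO support lies along the interior branch $\wcostly = \wcheap/t - \alpha$ of $f_t$ rather than the degenerate branch $\wcostly = 0$. Both are bookkeeping tasks against the explicit equilibria, but they carry most of the technical weight of the argument.
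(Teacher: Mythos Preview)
Your proposal is correct and is essentially the paper's own argument: the paper proves Theorems \ref{thm:comparisonuserconsumptiongaming} and \ref{thm:comparisonuserconsumptioninvestment} together by reducing $\UserConsumption$ on each equilibrium support to $\mathbb{E}\bigl[\max_{i} \wcostly^{(i)}\bigr]$ (Lemma \ref{lemma:comparisonuserconsumption}) and then comparing the explicit marginal CDFs of $\Wcostly$ from Propositions \ref{prop:equilibriumhomogenousengagement} and \ref{prop:equilibriumhomogenousinvestment}, which coincide at $\gamma = 0$ and otherwise exhibit strict stochastic dominance of the IBO law over the EBO law. The obstacle you flag is not a problem: for $-1 < \alpha < 0$ both equilibria carry the \emph{same} atom $(-\alpha)^{1/(P-1)}$ at $\wcostly = 0$, so the two $\Wcostly$-marginals still match at $\gamma = 0$ across the full range $\alpha > -1$, not just for large $\alpha$.
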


Theorem \ref{thm:comparisonuserconsumptioninvestment} illustrates that reducing the gaming costs to $0$ is \textit{necessary} for engagement-based optimization to perform as well as the idealized baseline. This serves as a further motivation for a social planner to try to reduce gaming costs as much as possible, for example through increased transparency as discussed above. 

We caution that reducing gaming costs to $0$ is not \textit{sufficient} to guarantee that engagement-based optimization performs as well as investment-based optimization, if users are heterogeneous. The following result shows that for heterogeneous users, the average user consumption of quality of engagement-based optimization can be significantly lower than the average user consumption of quality of investment-based optimization. 
\begin{proposition}
\label{prop:userconsumptionNtypes}
For any $N \ge 2$, there exists an instance with $\gamma = 0$ and a type space $\mathcal{T}$ of $N$ well-separated types such that the average user consumption of quality of engagement-based optimization is less than the average user consumption of quality of investment-based optimization: 
\[\mathbb{E}_{\mathbf{w} \sim \left(\muengagement(P, c, u,  \mathcal{T})\right)^P}[\UserConsumption(\MPlatform; \mathbf{w})] \le \frac{1}{N} < \frac{2}{3} = \mathbb{E}_{\mathbf{w} \sim \left(\muideal(P, c, u, \mathcal{T})\right)^P}[\UserConsumption(\MIdeal; \mathbf{w})].\]   
\end{proposition}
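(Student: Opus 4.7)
The plan is to take, for each $N \ge 2$, the instance of \Cref{example:linear} with $P = 2$ creators, gaming cost $\gamma = 0$, baseline utility $\alpha = 0$, and type space $\mathcal{T} = \mathcal{T}_{N, \epsilon}$ for $\epsilon > 0$ sufficiently small that Theorem~\ref{thm:Ntypes} applies.  I then evaluate $\UserConsumption$ on both sides of the inequality using the explicit equilibria given by the characterization results (Theorem~\ref{thm:investmentbased} for $\muideal$ and Theorem~\ref{thm:Ntypes} for $\muengagement$).

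For the investment-based benchmark, since the platform ranks only on $\wcostly$ while gaming tricks are costless and ignored by $\MIdeal$, each creator's best response sets $\wcheap = 0$, reducing the game to a symmetric two-player all-pay auction on $\wcostly$ with linear cost and unit prize.  The symmetric equilibrium is $\wcostly \sim \mathrm{Unif}[0,1]$, and with $\alpha = 0$ every recommendation has $u(w,t) = \wcostly \ge 0$ and is consumed.  Therefore $\UserConsumption(\MIdeal;\mathbf{w}) = \mathbb{E}[\max(X,Y)] = 2/3$ for independent $X, Y \sim \mathrm{Unif}[0,1]$.

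For the engagement-based side, Theorem~\ref{thm:Ntypes} characterizes $\muengagement$ as a mixture supported on (at most $N'$ of) the rays $R_i := \{(\wcheap/t_i, \wcheap) : \wcheap \ge 0\}$, one per type $t_i \in \mathcal{T}$, consistent with \Cref{prop:positivecorrelation}.  The key structural observation is that content on $R_i$ yields $u(w, t_j) = \wcheap(1/t_i - 1/t_j)$, which is nonnegative if and only if $t_j \ge t_i$, so content placed on $R_i$ is consumed only by the $(N - i + 1)/N$ fraction of users with sufficiently high tolerance.  I then decompose
\[\UserConsumption(\MPlatform; \mathbf{w}) = \sum_{i=1}^{N} \Pr[\text{winning content lies on } R_i] \cdot \mathbb{E}\!\left[\wcostly \cdot \mathbbm{1}[u \ge 0] \,\middle|\, \text{winner on } R_i\right],\]
plug in the explicit per-ray distribution of $\wcheap$ from Theorem~\ref{thm:Ntypes}, and verify that the sum is at most $1/N$.

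The main technical obstacle is extracting and summing over the per-ray probability weights and $\wcheap$ distributions produced by Theorem~\ref{thm:Ntypes}.  Intuitively, higher-type rays $R_i$ are strictly more engagement-efficient per unit of creator cost, so the equilibrium distribution concentrates probability mass on large $i$; however, the consumed-quality contribution of such rays is simultaneously suppressed by the eligibility discount $(N - i + 1)/N$ and by the shallow slope $1/t_i$, which keeps $\wcostly$ small for any fixed level of $\wcheap$ along $R_i$.  Balancing these two discount mechanisms across rays, with the well-separation of $\mathcal{T}_{N,\epsilon}$ ensuring the per-ray analyses effectively decouple, yields $\UserConsumption \le 1/N$ and hence the strict inequality $1/N < 2/3$ for every $N \ge 2$.
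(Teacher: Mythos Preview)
Your construction with $\alpha = 0$ breaks precisely at the point where you invoke Theorem~\ref{thm:Ntypes}. That theorem requires Assumption~\ref{assumption:linearity}, namely that the induced cost function has the form $\CEngagement_t(m) = \max(0, a_t(m+s) - 1)$ with a \emph{type-independent} shift $s$. In Example~\ref{example:linear} with $\gamma = 0$ one computes $\CEngagement_t(m) = \max(0, (m - t\alpha)/(1+t))$, so matching coefficients forces $a_t = 1/(1+t)$ and $s = 1 + t(1-\alpha)$; this is independent of $t$ only when $\alpha = 1$. With your $\alpha = 0$, the shift would have to be $s = 1+t$, violating the assumption, and the $N$-type equilibrium characterization you rely on is simply unavailable. (Your ray description $R_i = \{(\wcheap/t_i,\wcheap)\}$ is also tied to $\alpha = 0$; with $\alpha = 1$ the curve is $\{(\max(0,\wcheap/t_i - 1),\wcheap)\}$.)

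The paper's instance is identical to yours except that it takes $\alpha = 1$, so that Theorem~\ref{thm:Ntypes} applies. The investment-based side is unchanged: with $\alpha = 1$ every $[\wcostly,0]$ has $u = \wcostly + 1 > 0$, the equilibrium is still $\wcostly \sim \mathrm{Unif}[0,1]$, and $\mathbb{E}[\max(X,Y)] = 2/3$. On the engagement side, the paper's argument is considerably simpler than your proposed per-ray decomposition: it just reads off from Definition~\ref{def:Ntypes} that on each mixture component $\Wcheap^i$ the support terminates where $C_{t_i}(\wcheap) = \wcheap/t_i - 1$ reaches $1/N$ (for $i < N'$) or at most $1/N$ (for $i = N'$), and since $\wcostly = f_{t_i}(\wcheap) = \wcheap/t_i - 1$ on the support, this gives $\wcostly \le 1/N$ pointwise for every $w$ in $\text{supp}(\muengagement)$. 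The crude bound $\UserConsumption(\MPlatform;\mathbf{w}) \le \max_{w \in \mathbf{w}} \wcostly \le 1/N$ then finishes immediately, with no need to track which ray wins, which user types consume it, or how probability mass is distributed across rays.
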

Proposition \ref{prop:userconsumptionNtypes}  illustrates that it is possible for the average user consumption of quality of engagement-based optimization to approach $0$ as $N \rightarrow \infty$ while the performance of investment-based optimization stays constant and nonzero, when gaming costs are reduced to $0$.
This result suggests that other interventions, beyond reducing gaming costs, may be necessary to ensure that engagement-based optimization does not substantially degrade the overall quality of content being consumed by users.

\subsection{Realized engagement}\label{sec:platformimplications}
We next consider how much engagement is realized by user consumption patterns, when accounting for the fact that users only consume recommendations that generate nonnegative utility for them. We show that even though engagement-based optimization maximizes realized engagement on any \textit{fixed} content landscape, engagement-based optimization can be suboptimal at equilibrium, when taking into account the endogeneity of the content landscape.  

We formalize realized engagement by 
\[\RealizedEngagement(M; \textbf{w}) := \mathbb{E}[\MPlatform(w_{i^*(M; \textbf{w})}) \cdot \mathbbm{1}[u(w_{i^*(M; \textbf{w})}, t)\ge 0]].\]
When taking into account the endogeneity of the content landscape, the realized engagement at a symmetric mixed Nash equilibrium $\mu^M$ is $\mathbb{E}_{\mathbf{w} \sim (\mu^M)^P} \left[\RealizedEngagement(M; \mathbf{w})\right]$.

To show the suboptimality of engagement-based optimization at equilibrium, we construct an instance where engagement-based optimization performs strictly worse than investment-based optimization (Figure \ref{fig:realizedengagement}). 
\begin{theorem}
\label{thm:comparisonengagement}
For sufficiently large $N$, there exists an instance with a type space $\mathcal{T}$ of $N$ well-separated types such that the realized engagement of engagement-based optimization is less than the realized engagement of investment-based optimization: 
\[\mathbb{E}_{\mathbf{w} \sim \left(\muengagement(P, c, u, \mathcal{T})\right)^2}[\RealizedEngagement(\MPlatform; \mathbf{w})] < \mathbb{E}_{\mathbf{w} \sim \left(\muideal(P, c, u, \mathcal{T})\right)^2}[\RealizedEngagement(\MIdeal; \mathbf{w})].\]
\end{theorem}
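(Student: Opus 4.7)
The plan is to specialize to the instance already used in Proposition \ref{prop:userconsumptionNtypes}: Example \ref{example:linear} with $P = 2$ creators, costless gaming $\gamma = 0$, a small positive baseline utility $\alpha > 0$, and the well-separated type space $\mathcal{T} = \mathcal{T}_{N,\epsilon}$ for $\epsilon$ small and $N$ large. I would then compute (or sharply bound) $\RealizedEngagement$ under both the EBO and IBO equilibria using Theorems \ref{thm:Ntypes} and \ref{thm:investmentbased}, and show that the EBO value is strictly smaller for $N$ sufficiently large.

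The IBO side is routine. Because $M = \MIdeal$ rewards only $\wcostly$ while gaming strictly lowers $u$, no creator has any incentive to set $\wcheap > 0$, so the symmetric equilibrium $\muideal$ supplied by Theorem \ref{thm:investmentbased} places $\wcheap = 0$ and (by a standard Bertrand-type computation with $\gamma = 0$ and linear cost) distributes $\wcostly$ uniformly on $[0,1]$. Since then $u(w,t) = \wcostly + \alpha > 0$ for every type, every user consumes, so the realized engagement equals $\mathbb{E}[\max(X_1,X_2)] = 2/3$, matching the $2/3$ that already appears in Proposition~\ref{prop:userconsumptionNtypes}.

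For the EBO side I would invoke Proposition~\ref{prop:positivecorrelation} and Theorem~\ref{thm:Ntypes}: the symmetric equilibrium support is contained in the union of one-dimensional segments $\{(x,\, t_i(x+\alpha)) : x \geq 0\}$ indexed by $t_i \in \mathcal{T}$, plus possibly the opt-out point. On the segment indexed by $t_i$, content is consumed exactly by users with $t \geq t_i$, and when consumed it contributes platform score $\MPlatform(w) = (1+t_i)\wcostly + t_i\alpha$. This lets me split
\[
\RealizedEngagement(\MPlatform;\mathbf{w}) \;=\; \UserConsumption(\MPlatform;\mathbf{w}) \;+\; \mathbb{E}\!\left[\wcheap \cdot \mathbbm{1}[u(w,t) \geq 0]\right],
\]
where the first term is already bounded by $1/N$ via Proposition~\ref{prop:userconsumptionNtypes}, and the second equals $\sum_i q_i \cdot t_i \bigl(\mathbb{E}[\wcostly \mid \text{segment } i]+\alpha\bigr) \cdot (N-i+1)/N$, with $q_i$ the equilibrium weight on segment $i$ prescribed by Theorem~\ref{thm:Ntypes}.

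The main obstacle is controlling this gaming term globally. I plan to combine three ingredients: (i) a uniform bound $t_i \leq t_N < e$, valid for all $i$ when $\epsilon$ is small; (ii) the ``cost accounting'' identity $\mathbb{E}[\wcostly] = 1/2 - V^*$, obtained from symmetry, the fact that $c(w) = \wcostly$ for $\gamma = 0$, and the fact that exactly one creator is selected in \textbf{Stage 3}; and (iii) an opt-out deviation analysis showing that playing $(0,0)$ against an opponent on segment $t_i$ wins with probability exactly $(i-1)/(2N)$, which forces $V^*$ to be close to $1/2$ and in turn pins down how much equilibrium mass can sit on low-$i$ (high-consumption) segments. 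Combining (i)--(iii) with the explicit weights from Theorem~\ref{thm:Ntypes}, I expect the gaming contribution to shrink as $O(1/N)$, so that the total $\RealizedEngagement(\MPlatform)$ is at most $O(1/N)$, which is strictly less than the IBO value $2/3$ once $N$ is large enough. The delicate step is ruling out an equilibrium that concentrates mass on a low segment $t_i$ with large $\mathbb{E}[\wcostly \mid \text{segment }i]$, since there $w_{\text{cheap}}$ would be consumed by many users; ingredient (iii) is what prevents this, because any such concentration would leave the opt-out deviation strictly profitable.
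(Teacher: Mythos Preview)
Your proposal has a genuine gap at the central step: the claim that the ``gaming contribution'' to $\RealizedEngagement(\MPlatform)$ shrinks as $O(1/N)$ is incorrect. On segment $t_i$ you have $\wcheap = t_i(\wcostly + \alpha)$, so even granting $\wcostly \le 1/N$ the engagement score is $\MPlatform(w) = (1+t_i)\wcostly + t_i\alpha \approx t_i\alpha$. The equilibrium characterization in Theorem~\ref{thm:Ntypes} (and likewise Proposition~\ref{prop:userconsumptionNtypes}) is only available when Assumption~\ref{assumption:linearity} holds, and for Example~\ref{example:linear} with $\gamma=0$ that forces $\alpha = 1$: one computes $\CEngagement_t(m)=\max\bigl(0,\tfrac{m-t\alpha}{1+t}\bigr)$, so the required common shift is $s=1+t(1-\alpha)$, which is type-independent only at $\alpha=1$. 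With $\alpha=1$, your own expression already gives $\tfrac{1}{N}\sum_{i=1}^{N'} t_i\bigl(\mathbb{E}[\wcostly\mid i]+1\bigr)\approx \tfrac{1}{N}\sum_{i=1}^{N'} t_i$, and since $t_i\approx e^{i/N}-1$ this average is $\Theta(1)$ (numerically $\approx 0.25$), not $O(1/N)$. More directly: the $(N-N'+1)/N\approx 1/e$ fraction of users with type at least $t_{N'}$ accept \emph{every} piece of equilibrium content, so for them the realized engagement is the full $\max_j \MPlatform(w_j)$, which is of order $t_{N'}\approx e^{1-1/e}-1\approx 0.88$. Your ingredients (i)--(iii) do not circumvent this: (ii) at best controls $\mathbb{E}[\wcostly]$, but the gaming term is dominated by $t_i\alpha$, not by $t_i\,\wcostly$. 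If instead you send $\alpha\to 0$ to kill the $t_i\alpha$ term, you lose the equilibrium characterization entirely and have nothing to compute with.

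The paper's route is different in kind. It fixes $\alpha=1$, accepts that both $\RealizedEngagement(\MPlatform)$ and $\RealizedEngagement(\MIdeal)=2/3$ are $\Theta(1)$, and shows the comparison is genuinely delicate. It first upper-bounds $\RealizedEngagement(\MPlatform;\mathbf{w})\le \max_{w\in\mathbf{w}}\MPlatform(w)$, then computes the limiting cdf of the engagement $\MPlatform(W)$ under $\muengagement$ as $N\to\infty$ (after a shift it converges to $v\mapsto \ln\bigl(1/(1-\ln v)\bigr)$), and finally establishes the pointwise analytic inequality $\ln\bigl(1/(1-\ln v)\bigr)\ge v-1$ on the relevant interval. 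That inequality yields first-order stochastic dominance of the IBO engagement distribution over the EBO one, hence the strict comparison of expectations. There is no $O(1/N)$ shortcut; the work is in the limiting-cdf computation and that last inequality.
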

\begin{proof}[Proof sketch of 
Theorem \ref{thm:comparisonengagement}]

The main ingredient of the proof of Theorem \ref{thm:comparisonengagement} is constructing and analyzing an instance where engagement-based optimization achieves a low realized engagement. 
We construct the instance to be Example \ref{example:linear} with costless gaming ($\gamma = 0$), baseline utility $\alpha = 1$, and $P = 2$ creators, and type space $\mathcal{T}_{N, \epsilon} := \left\{(1+ \epsilon) (1 + 1/N)^{i-1} - 1 \mid 1 \le i \le N \right\}$ for sufficiently small $\epsilon > 0$ and sufficiently large $N$.\footnote{While Theorem \ref{thm:comparisonengagement} focuses on the limit as $N \rightarrow \infty$, the numerical estimates shown in Figure \ref{fig:realizedengagement} suggest that the result applies for any $N \ge 2$.}

One key aspect of $\mathcal{T}_{N,\epsilon}$ is that the heterogeneity in user types segments the market and significantly reduces investment in quality. In particular, since user types are well-separated in the type space $\mathcal{T}_{N, \epsilon}$, creators can't realistically compete for multiple types at the same time and must choose a single type to focus on. At a high-level, this segments the market, and a single creator can only hope to win $O(1/N)$ users and thus only invests $O(1/N)$ in costly effort. (Note that there are some subtleties, because a creator who targets a lower type might win a higher type if none of the other creators target the higher type in that particular realization of randomness.) In the limit as $N \rightarrow \infty$, we show that the investment in quality approaches $0$. 

However, for engagement-based optimization, a low investment in quality does not directly imply a low realized engagement. This is because if users are high type (and thus highly tolerant of gaming tricks), creators can utilize a high level of gaming tricks without investing at all in quality, while still maintaining nonnegative utility for these users. Thus, to show that the realized engagement is low, we must consider the distribution over user types. 
The construction of $\mathcal{T}_{N,\epsilon}$  appropriately balances two forces: (1) making the user types sufficiently well-separated to reduce the investment in quality, and (2) making the user types as low as possible to reduce engagement from gaming tricks. 

To analyze the realized engagement of this construction, we first upper bound $\RealizedEngagement(\MPlatform; \textbf{w})$ by the maximum engagement achieved by any content in the content landscape $\max_{w \in \textbf{w}} \MPlatform(w)$. It then remains to analyze the engagement distribution of $\MPlatform(w)$ for $w$ in the equilibrium distribution. Although the cdf of the engagement distribution is messy for any given value of $N$, it approaches a continuous distribution in the limit. This enables us to show that:
  \[\limsup_{\epsilon \rightarrow 0} \limsup_{N \rightarrow \infty} \mathbb{E}_{\mathbf{w} \sim \left(\muengagement(2, c, u, \mathcal{T}_{N, \epsilon})\right)^2}[\RealizedEngagement(\MPlatform; \mathbf{w})] < \mathbb{E}_{\mathbf{w} \sim \left(\muideal(2, c, u, \mathcal{T}_{N, \epsilon})\right)^2}[\RealizedEngagement(\MIdeal; \mathbf{w})]. \]
We defer the proof to \Cref{appendix:proofrealizedengagement}. 
  
\end{proof}

Theorem \ref{thm:comparisonengagement} has an interesting platform design consequence: even if the platform wants to optimize realized engagement (e.g., because their revenue comes from advertising), engagement-based optimization is not necessarily the optimal approach. In particular, Theorem \ref{thm:comparisonengagement} illustrates potential benefits of using investment-based optimization, even though $\MIdeal$ does not directly reward engagement. A practical challenge is that investment-based optimization is often difficult for the platform to implement because $\wcostly$ may not be directly observable; nonetheless, the platform may be able to perform a noisy version of investment-based optimization by collecting (sparse) feedback about content quality from users. Theorem \ref{thm:comparisonengagement} raises the possibility that noisy versions of investment-based optimization may be worthwhile for the platform to pursue, even if the platform's goal is to maximize realized engagement. 

As a caveat, Theorem \ref{thm:comparisonengagement} does rely on users types being heterogeneous. In fact, the following result shows that for homogeneous users, engagement-based optimization performs at least as well as investment-based optimization in terms of realized engagement.
\begin{proposition}
 \label{prop:comparisonengagementhomogeneous}
Suppose that users are homogeneous ($\mathcal{T} = \left\{t\right\}$). For any sufficiently large baseline utility $\alpha > -1$, bounded gaming costs $\gamma \in [0, 1)$, and any number of creators $P \ge 2$, the realized engagement of engagement-based optimization is at least as large as the the realized engagement of investment-based optimization:
 \[\mathbb{E}_{\mathbf{w} \sim \left(\muengagement(P, c, u, \mathcal{T})\right)^P}[\RealizedEngagement(\MPlatform; \mathbf{w})] \ge \mathbb{E}_{\mathbf{w} \sim \left(\muideal(P, c, u, \mathcal{T})\right)^P}[\RealizedEngagement(\MIdeal; \mathbf{w})]. \]
\end{proposition}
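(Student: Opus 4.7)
The plan is to couple the two equilibria through the distribution of the ``winner's cost,'' which turns out to be identical in both cases, and then to observe that on the $\muengagement$-support engagement dominates cost while on the $\muideal$-support engagement equals cost.

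First I would invoke Theorem~\ref{thm:onetype} for Example~\ref{example:linear} with $\mathcal{T} = \{t\}$ to make both $\muideal$ and $\muengagement$ explicit. Under $\muideal$, every creator can be taken to play $\wcheap = 0$: increasing $\wcheap$ never raises $\MIdeal(w) = \wcostly$ and (weakly) raises cost, and for sufficiently large $\alpha$ the constraint $u(w,t) \ge 0$ is already slack at $\wcheap = 0$. The usual all-pay-style indifference condition, pinned by the opt-out $(0,0)$ which forces $U^\ast = 0$, then yields marginal cost CDF $F_C(C) = C^{1/(P-1)}$ on $[0,1]$. Under $\muengagement$, Proposition~\ref{prop:positivecorrelation} places the support on the curve $\wcostly = f_t(\wcheap) = \max(0, \wcheap/t - \alpha)$. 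On this curve both $\MPlatform(w) = \wcostly + \wcheap$ and $c(w) = \wcostly + \gamma \wcheap$ are continuous and strictly increasing in $\wcheap$, so ``max $\MPlatform$'' coincides with ``max $c$,'' and the same indifference argument again delivers marginal cost CDF $F_C(C) = C^{1/(P-1)}$ on $[0,1]$.

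Next I would use the pointwise inequality $\MPlatform(w) - c(w) = \wcheap(1 - \gamma) \ge 0$, valid for every $w$ in the support of either equilibrium because $\gamma \in [0,1)$. Under $\muideal$ the winner plays $\wcheap = 0$, so her engagement equals her cost; under $\muengagement$ the winner may have $\wcheap > 0$, so her engagement is at least her cost. Because both supports lie in the region $u(w,t) \ge 0$, the indicator in the definition of $\RealizedEngagement$ is $1$ at the winning content. Combining these observations,
\[
\mathbb{E}_{\mathbf{w} \sim (\muengagement)^P}[\RealizedEngagement(\MPlatform; \mathbf{w})] \;\ge\; \mathbb{E}[C_{\max}] \;=\; \mathbb{E}_{\mathbf{w} \sim (\muideal)^P}[\RealizedEngagement(\MIdeal; \mathbf{w})],
\]
where $C_{\max}$ is the maximum of $P$ i.i.d.\ draws from $F_C$ and is identically distributed under both equilibria.

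The main obstacle is confirming that the two equilibria genuinely share the same marginal cost distribution. This rests on two ingredients: (i) the opt-out $(0,0)$ forces $U^\ast = 0$ in both games, and (ii) in both equilibria the winning rule reduces to ``largest cost,'' so the symmetric indifference condition $F_C(C)^{P-1} = C$ holds. The case $\gamma = 0$ is delicate because on $\muengagement$ the strip $\wcheap \in [0, t\alpha]$ is costless; here I would appeal to the explicit characterization in Theorem~\ref{thm:onetype} to confirm either that the strip carries no equilibrium mass or that any mass it carries is concentrated at cost $C = 0$ and therefore does not affect $\mathbb{E}[C_{\max}]$. The ``sufficiently large $\alpha$'' hypothesis is what makes this clean characterization applicable and simultaneously guarantees that $\wcheap = 0$ is feasible under $\muideal$.
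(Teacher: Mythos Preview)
Your argument is correct and follows a genuinely different route from the paper's. The paper computes the explicit CDF of the engagement $\Wcostly+\Wcheap$ under $\muengagement$ (via the change of variable $z=\wcostly(1+t)+t\alpha$, which turns the cost $\wcostly+\gamma t(\wcostly+\alpha)$ into $\frac{1+\gamma t}{1+t}\,z-\frac{t\alpha(1-\gamma)}{1+t}$), separately writes down the CDF of $\Wcostly$ under $\muideal$, and then compares these two CDFs pointwise to establish stochastic dominance of the corresponding maxima. Your approach sidesteps all of this algebra by observing that the marginal \emph{cost} distribution is identical under the two equilibria---both being forced to satisfy $F_C(C)^{P-1}=C$ by the zero-profit indifference condition---and then using the pointwise inequality $\MPlatform(w)-c(w)=\wcheap(1-\gamma)\ge 0$, which holds with equality on the $\muideal$ support. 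Your coupling is more conceptual; the paper's explicit CDFs, on the other hand, make the magnitude of the gap visible.

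One small caveat: when $-1<\alpha<0$, both equilibria place an atom at $[0,0]$, and there $u([0,0],t)=\alpha<0$, so your claim that the indicator in $\RealizedEngagement$ is always $1$ at the winner is not literally true. This does not break the argument, since that atom has cost $0$ and $\MPlatform([0,0])=0$, so the event ``all creators play the atom'' contributes $0$ to both sides, and on the complementary event the winner is still the max-cost creator with $u\ge 0$. If you are reading ``sufficiently large $\alpha$'' as $\alpha\ge 0$, the issue does not arise at all; the paper's proof, however, explicitly handles both signs of $\alpha$.
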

\noindent While Proposition \ref{prop:comparisonengagementhomogeneous} does show that engagement-based optimization can generate nontrivial realized engagement when users are homogeneous, we expect that when the user base exhibits sufficient diversity in tolerance towards gaming tricks, investment-based optimization would be an appealing alternative to engagement-based optimization.

\subsection{User welfare}\label{subsec:userimplications}

Finally, we consider user utility realized by user consumption patterns, which can be interpreted as \textit{user welfare}. We show that engagement-based optimization can alarmingly perform worse than random recommendations in terms of user welfare.\footnote{We view random recommendations as a conservative baseline, since $\MTrivial$ does not reward investment or gaming.} 

We formalize user welfare by 
\[\UserUtility(M; \textbf{w}) := \mathbb{E}[u(w_{i^*(M; \textbf{w})}, t) \cdot \mathbbm{1}[u(w_{i^*(M; \textbf{w})}, t) \ge 0]].\]
Taking into account the endogeneity of the content landscape, the user welfare at a symmetric mixed Nash equilibrium $\mu^M$ is $\mathbb{E}_{\mathbf{w} \sim (\mu^M)^P} \left[\UserUtility(M; \mathbf{w})\right]$.  

The following result shows that for homogeneous users, engagement-based optimization always performs at least as poorly as random recommendations, and can even perform strictly worse than random recommendations under certain conditions (Figure \ref{fig:userwelfare}).  
\begin{theorem}
\label{thm:comparisonuserutility}
Suppose that users are homogeneous (i.e. $\mathcal{T} = \left\{t\right\}$), and that gaming costs $\gamma \in [0,1)$ are bounded. If baseline utility $\alpha > 0$ is positive, the user welfare of engagement-based optimization is strictly lower than the user welfare of random recommendations: 
\[\mathbb{E}_{\mathbf{w} \sim \left(\muengagement(P, c, u, \mathcal{T})\right)^P}[\UserUtility(\MPlatform; \mathbf{w})] < \mathbb{E}_{\mathbf{w} \sim \left(\murandom(P, c, u, \mathcal{T})\right)^P}[\UserUtility(\MTrivial; \mathbf{w})].\] If baseline utility $\alpha \le 0$ is nonpositive, engagement-based optimization and random recommendations both result in zero user welfare: 
\[\mathbb{E}_{\mathbf{w} \sim \left(\muengagement(P, c, u, \mathcal{T})\right)^P}[\UserUtility(\MPlatform; \mathbf{w})] = \mathbb{E}_{\mathbf{w} \sim \left(\murandom(P, c, u, \mathcal{T})\right)^P}[\UserUtility(\MTrivial; \mathbf{w})] = 0.\]
\end{theorem}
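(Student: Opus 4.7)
The plan is to handle the two equilibria separately, exploiting the positive-correlation structure of $\muengagement$ to pin down pointwise values of $u$ on its support, and treating the RR equilibrium by a short direct argument. Under random recommendations $\MTrivial \equiv 1$ rewards neither quality nor gaming, so a best-responding creator simply picks the cheapest action with $u(w,t) \ge 0$. When $\alpha > 0$ that action is the free $(0,0)$, which has $u = \alpha$, so the all-$(0,0)$ profile is a symmetric equilibrium and the user always consumes utility $\alpha$. When $\alpha \le 0$, the cheapest qualifying action (obtained by minimizing $\wcostly + \gamma\wcheap$ subject to $\wcostly \ge \wcheap/t - \alpha$) achieves $u = 0$ exactly. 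Either way, $\mathbb{E}_{\mathbf{w} \sim (\murandom)^P}[\UserUtility(\MTrivial;\mathbf{w})] = \max(\alpha, 0)$.

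Next I would apply Proposition \ref{prop:positivecorrelation}: the support of $\muengagement$ lies in $\{(f_t(\wcheap), \wcheap) : \wcheap \ge 0\} \cup \{(0,0)\}$ with $f_t(\wcheap) = \max(0, \wcheap/t - \alpha)$ as computed in the continuation of \Cref{example:linear}. Along the curve, $u(w,t) = \alpha - \wcheap/t$ on the segment $\wcheap \in [0, t\alpha]$ (where $\wcostly = 0$) and $u(w,t) = 0$ on the segment $\wcheap > t\alpha$ (where $\wcostly = \wcheap/t - \alpha$). When $\alpha \le 0$ the first segment degenerates, so $u(w,t) = 0$ on the curve; the only other support point $(0,0)$ has $u = \alpha \le 0$, for which the user either obtains $0$ (if $\alpha = 0$) or does not consume (if $\alpha < 0$). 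Thus EBO welfare equals $0 = $ RR welfare, giving the second clause.

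For the case $\alpha > 0$, I would argue that the winner is always the creator with maximal $\wcheap$, since $\MPlatform(w) = \wcostly + \wcheap$ is continuous and strictly increasing in $\wcheap$ along the support (slope $1$ on the low segment and $1 + 1/t$ on the high segment), and that $u$ at the winner is strictly less than $\alpha$ whenever that maximal $\wcheap$ is positive (equalling $\alpha - \wcheap/t$ on the low segment and $0$ on the high segment). It then suffices to show $\Pr_{\muengagement}[\wcheap > 0] > 0$, which I would establish by ruling out the only degenerate alternative, namely that $\muengagement$ is a point mass at $(0,0)$. This is the main obstacle, and I would resolve it by direct deviation: if every opponent plays $(0,0)$, then a unilateral switch to $(0,\epsilon)$ preserves $u = \alpha - \epsilon/t > 0$, produces engagement $\epsilon$ strictly above the opponents' zero engagement, and thus wins with probability $1$ at cost $\gamma\epsilon$, yielding payoff $1 - \gamma\epsilon$, which exceeds the $1/P$ payoff of $(0,0)$ for small $\epsilon$. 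Combined with positive correlation (which forces $\wcostly = 0$ whenever $\wcheap = 0$ since $f_t(0) = 0$ for $\alpha > 0$), this yields $\Pr[\max_i \wcheap_i > 0] > 0$, and integrating the pointwise bound $u(\text{winner}) \le \alpha$ with strict inequality on this event gives $\mathbb{E}[\UserUtility(\MPlatform;\mathbf{w})] < \alpha = \mathbb{E}[\UserUtility(\MTrivial;\mathbf{w})]$.
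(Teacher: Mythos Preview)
Your approach is correct and mirrors the paper's: both compute the RR welfare as $\max(\alpha,0)$, then observe that along the support of $\muengagement$ the user utility never exceeds $\alpha$ and is strictly below $\alpha$ whenever $\wcheap>0$, so strict inequality follows once $\muengagement$ is not entirely concentrated at $(0,0)$.

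One technical caveat: Proposition~\ref{prop:positivecorrelation} assumes gaming is not costless, so invoking it does not cover $\gamma=0$. The support structure you need follows for all $\gamma\in[0,1)$ directly from Definition~\ref{def:homogeneous} (the explicit construction of $\muengagement$ the theorem refers to), and that is what you should cite. The paper's proof does exactly this and in fact reads off the stronger fact $\Pr_{\muengagement}[\wcheap=0]=0$ from the CDF there (since $C_t(0)=c([f_t(0),0])=0$ when $\alpha>0$), which makes your deviation argument unnecessary---though your argument is also valid and slightly more self-contained.
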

\begin{proof}[Proof sketch of Theorem \ref{thm:comparisonuserutility}]
We first focus on the simple case where gaming tricks are free $(\gamma = 0)$ and the baseline utility is positive ($\alpha \ge 0$). For engagement-based optimization, creators will increase gaming tricks until the user utility drops down to $0$, which means the user welfare at equilibrium is $0$. In contrast, for random recommendations, creators do not expend effort on either gaming tricks or investment; thus, the user welfare at equilibrium is $u([0, 0], t) > 0$, which is strictly higher than the user welfare for engagement-based optimization. The other cases, though a bit more involved, follow from similar intuition: for engagement-based optimization, creators choose the balance between gaming tricks and investment in quality that drives user utility as close to zero as possible, whereas for random recommendations, creators choose the minimum amount of investment to achieve nonzero user utility. We defer the full proof to \Cref{appendix:comparisonuserutility}. 
\end{proof}

From a platform design perspective, Theorem \ref{thm:comparisonuserutility} highlights the pitfalls of engagement-based optimization for users. In particular, the user welfare of engagement-based optimization can fall below the conservative baseline where users randomly select content on their own (and the content landscape shifts in response). This suggests that engagement-based optimization may not retain users in the long-run, especially in a competitive marketplace with multiple platforms.

It is important to note that for \textit{heterogeneous users}, engagement-based optimization does not always perform as poorly as random recommendations. 
In the following result, we turn to Example \ref{example:KMR} and construct instances with 2 user types where engagement-based optimization outperforms random recommendations.\footnote{While all of our other results in \Cref{sec:performance} apply to Examples \ref{example:linear} and \ref{example:KMR}, Propositions \ref{prop:userutility2types}-\ref{prop:userutility2typeswellseparated} only apply to Example \ref{example:KMR}. The extra factor of $t$ in the utility function formalization in Example \ref{example:KMR} turns out to be necessary for these results.}  
\begin{proposition}
\label{prop:userutility2types}
Consider \Cref{example:KMR}. There exist instances with 2 types where the user welfare of engagement-based optimization is higher than the user welfare of random recommendations: 
\[\mathbb{E}_{\mathbf{w} \sim \left(\muengagement(2, c, u, \mathcal{T})\right)^2}[\UserUtility(\MPlatform; \mathbf{w})] > \mathbb{E}_{\mathbf{w} \sim \left(\muideal_{2, c, u, \mathcal{T}}\right)^2}[\UserUtility(\MTrivial; \mathbf{w})]. \]
\end{proposition}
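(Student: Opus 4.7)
The plan is to exhibit an explicit two-type instance of \Cref{example:KMR} in which the symmetric mixed equilibrium under engagement-based optimization strictly beats random recommendations in user welfare. I would take $P = 2$, costless gaming $\gamma = 0$, and $\mathcal{T} = \{t_1, t_2\}$ with $0 < t_1 \ll t_2$. Under random recommendations, a creator's selection probability depends only on whether their content has nonnegative user utility, so they strictly prefer the cheapest action that is consumable for both types; since $u([0,0], t) = Wt > 0$, the unique symmetric equilibrium places all mass at the origin, giving $\mathbb{E}_{\murandom}[\UserUtility(\MTrivial; \mathbf{w})] = W(t_1 + t_2)/2$.

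By Proposition \ref{prop:positivecorrelation} together with the minimum-investment function $f_t(\wcheap) = \max(0, \wcheap/t - 1)$ for \Cref{example:KMR}, the support of $\muengagement$ lies on the two rays $R_j := \{(\wcostly, t_j(\wcostly + 1)) : \wcostly \ge 0\}$, $j \in \{1, 2\}$, together with possibly the origin. On $R_j$ the engagement is $(1 + t_j)(\wcostly + 1)$; a direct computation shows that consuming a point on $R_1$ yields a type-$t_2$ user utility $W(t_2 - t_1)(\wcostly + 1) > 0$, while every other (consuming type, ray) pair produces zero user utility (on-frontier types get $0$, and $R_2$ is not consumable for type $t_1$). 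The origin, if in the support, contributes $Wt$ to a consuming type-$t$ user. Decomposing EBO welfare accordingly, the central positive term is $\tfrac{1}{2} W(t_2 - t_1)\, \mathbb{E}[(\wcostly_{i^*} + 1)\, \mathbbm{1}[w_{i^*} \in R_1] \mid t = t_2]$, with all remaining contributions non-negative.

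I would then invoke the explicit form of $\muengagement$ from Theorem \ref{thm:2types} to bound this $R_1$-at-$t_2$ term from below. The event that $R_1$ wins the engagement contest at type $t_2$ occurs either when both creators play $R_1$ (the larger $\wcostly$ wins), or when an $R_1$ creator's quality is large enough that $(1 + t_1)(\wcostly + 1) > (1 + t_2)(\wcostly' + 1)$ dominates an $R_2$ creator's quality $\wcostly'$. After discarding the non-negative origin contributions, the desired inequality reduces to $\mathbb{E}[(\wcostly_{i^*} + 1)\, \mathbbm{1}[w_{i^*} \in R_1] \mid t = t_2] > (t_1 + t_2)/(t_2 - t_1)$; the right-hand side tends to $1$ as $t_1 \to 0^+$ with $t_2$ fixed, so it suffices to show that the equilibrium places enough mass on moderately high-$\wcostly$ points of $R_1$.

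The main obstacle is quantitatively controlling the $\wcostly$ distribution along $R_1$ under $\muengagement$, since the indifference conditions across $R_1$, $R_2$, and the origin are coupled. I would handle this either by a limiting argument as $t_1 \to 0^+$ with $t_2$ fixed — in which $R_1$ collapses to pure-quality content $(\wcostly, 0)$ and the indifference system of Theorem \ref{thm:2types} simplifies to a tractable one-sided all-pay structure — or by direct numerical verification at a concrete triple $(t_1, t_2, W)$, analogous to the methodology behind Figure \ref{fig:userwelfare}, which reduces the analytic task to checking the explicit conditional-expectation inequality above at the chosen instance.
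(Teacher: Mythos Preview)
Your decomposition matches the paper's: random recommendations put all mass at the origin, giving welfare $W(t_1+t_2)/2$; the only strictly positive contribution to engagement-based welfare is from type-$t_2$ users consuming $R_1$ content, with utility $W(t_2-t_1)(\wcostly+1)$; and the problem reduces to showing
\[
\mathbb{E}\bigl[(\wcostly_{i^*}+1)\,\mathbbm{1}[w_{i^*}\in R_1]\,\big|\, t=t_2\bigr] \;>\; \frac{t_1+t_2}{t_2-t_1}.
\]
The paper follows exactly this route and also invokes Theorem~\ref{thm:2types}. The gap is in how you propose to finish.

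First, the regime is wrong. If $t_2$ is fixed with $t_2>1/2$, then as $t_1\to 0$ one has $a_{t_1}/a_{t_2}\to 1+t_2>3/2$, which is Case~1 of Definition~\ref{def:2types}: the $V\mid T=t_2$ support lies entirely above the $V\mid T=t_1$ support, so for a type-$t_2$ user $R_1$ wins only when \emph{both} creators play $R_1$ (probability $1/4$) and then $\wcostly_{i^*}+1\le 3/2$; the left side above is at most $3/8<1$ and the inequality fails. This is exactly Proposition~\ref{prop:userutility2typeswellseparated}. The paper's construction uses \emph{nearly homogeneous} types --- $c:=a_{t_1}/a_{t_2}$ only slightly above $1$, i.e., Case~3. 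Second, your claimed simplification (``the indifference system reduces to a one-sided all-pay structure'' as $t_1\to 0$) is incorrect: the two-ray mixture persists in the limit with weights $\mathbb{P}[T=t_1]=2-c$ and $\mathbb{P}[T=t_2]=c-1$ (Lemma~\ref{lemma:property2types}), and no collapse to a one-dimensional game occurs. Third, numerical verification at a fixed instance is not a proof. What the paper does instead is exploit the Case-3 geometry directly: the support of $V\mid T=t_2$ is contained in that of $V\mid T=t_1$, so whenever $\max(v_1,v_2)$ exceeds the top of the $t_2$-segment the winner is certainly on $R_1$ with $\wcostly+1$ at least the threshold value. Reading off the threshold and the exceedance probability from Definition~\ref{def:2types} yields, in the limit $t_1\to 0$,
\[
\text{LHS}\ \ge\ \Bigl(1-\Bigl(\tfrac{(3-c)(c-1)}{2-c}\Bigr)^{2}\Bigr)\cdot\frac{c(3-c)}{2(2-c)},
\]
which equals $1$ at $c=1$ and strictly exceeds $1$ for $c$ slightly larger (about $1.11$ at $c=1.1$). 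This analytic lower bound is the missing step in your plan.
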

\noindent On the other hand, we also construct instances with two types where user welfare of random recommendations outperforms the user welfare of engagement-based optimization, thus behaving similarly to the case of homogeneous users. 
\begin{proposition}
\label{prop:userutility2typeswellseparated}
Consider \Cref{example:KMR}. There exist instances with 2 types where the user welfare of engagement-based optimization is lower than the user welfare of random recommendations: 
\[\mathbb{E}_{\mathbf{w} \sim \left(\muengagement(2, c, u, \mathcal{T})\right)^2}[\UserUtility(\MPlatform; \mathbf{w})] < \mathbb{E}_{\mathbf{w} \sim \left(\muideal_{2, c, u, \mathcal{T}}\right)^2}[\UserUtility(\MTrivial; \mathbf{w})]. \]
\end{proposition}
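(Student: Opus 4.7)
The plan is to exhibit a two-type instance in which the mechanism driving Theorem \ref{thm:comparisonuserutility} in the homogeneous case survives under well-separated heterogeneous types. I would instantiate Example \ref{example:KMR} with $P = 2$ creators, costless gaming $\gamma = 0$, equiprobable types $\mathcal{T} = \{t_1, t_2\}$, and $t_2/t_1$ sufficiently large that type-2-targeted content is not viable for type-1 users and has strictly higher engagement than any type-1-targeted content.

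First I would handle the random-recommendations baseline. Since $\MTrivial \equiv 1$ does not reward either dimension of effort while any $\wcheap > 0$ strictly shrinks the set of viable types without any offsetting benefit, the unique symmetric best response is $(0,0)$. This yields per-type utility $u([0,0], t) = Wt$ and total welfare $\mathbb{E}_{\mathbf{w} \sim (\murandom(2, c, u, \mathcal{T}))^{2}}[\UserUtility(\MTrivial; \mathbf{w})] = W(t_1+t_2)/2$.

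Next, I would invoke Theorem \ref{thm:2types} to describe $\muengagement(2, c, u, \mathcal{T})$. With well-separated types, the support of each creator's mixed strategy splits into a ``type-$j$ branch'' along $\wcheap = t_j(1+\wcostly)$, on which type-$j$'s utility is driven exactly to zero. Two structural features are decisive: (i) type-1 users can only consume type-1-branch content, since type-2-branch content has too much gaming to be viable for them; and (ii) type-2 users, facing a choice between the two branches, always receive the type-2-branch content on grounds of engagement whenever it is available. Consequently, type-1 users always receive utility zero, and type-2 users receive positive utility only in the event that both creators happen to target type 1.

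Finally I would upper-bound the EBO welfare. From the indifference conditions in Theorem \ref{thm:2types}, I would extract the per-creator targeting probability $q$ of the type-1 branch and the investment supports $[0, \bar w_1], [0, \bar w_2]$ on the two branches (I expect $q = \Pr[t_1]$ and $\bar w_1$ on the order of $\Pr[t_1]$ for the equiprobable case). In the only nonzero-welfare event---both creators targeting the type-1 branch and a type-2 user arriving---the utility contribution is $W(t_2 - t_1)(1 + \wcostly^{\max})$ with $\wcostly^{\max} \le \bar w_1$, so total EBO welfare is bounded above by $\Pr[t_2] \cdot q^2 \cdot W(t_2 - t_1)(1 + \bar w_1)$. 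A direct algebraic comparison against $W(t_1 + t_2)/2$ then gives the strict inequality for all sufficiently large $t_2/t_1$. The main obstacle is pinning down $q$ and $\bar w_1$ precisely from the equilibrium characterization and verifying the well-separation hypotheses used to establish (i) and (ii); once that equilibrium structure is in hand, the concluding comparison is elementary.
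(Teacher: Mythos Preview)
Your proposal is correct and follows essentially the same approach as the paper. The paper instantiates Example~\ref{example:KMR} with $P=2$, $\gamma=0$, and type space $\{t_1,t_2\}=\{\epsilon,\,c(1+\epsilon)-1\}$ for $c\ge 1.5$, computes the random-recommendation welfare as $W(t_1+t_2)/2$ exactly as you do, and then bounds the EBO welfare using the same structural facts you identify: only type-$2$ users can receive positive utility, this happens only when both creators land on the type-$1$ branch, and on that branch $\wcostly\le 1/2$; this yields the bound $\tfrac{3}{16}W(t_2-t_1)$. Your anticipated values $q=1/2$ and $\bar w_1=1/2$ are exactly what falls out of Definition~\ref{def:2typeswellseparated}, so once you invoke Theorem~\ref{thm:2types} the ``main obstacle'' you flag dissolves and the comparison $\tfrac{3}{16}(t_2-t_1)<\tfrac{1}{2}(t_1+t_2)$ holds for every well-separated pair, not just asymptotically large $t_2/t_1$.
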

\noindent Interestingly, the construction in
Proposition \ref{prop:userutility2types} relies on the types \textit{not} being too well-separated while the construction in Proposition \ref{prop:userutility2typeswellseparated} relies on the types being sufficiently well-separated. This raises the interesting question of characterizing the relative performance of random recommendation and engagement-based optimization in greater generality, which we defer to future work. 

\section{Equilibrium characterization results for baseline approaches }\label{sec:equilibriumcharacterizationsbaselines}

Within our analysis in \Cref{sec:performance}, we leveraged closed-form equilibrium characterizations for investment-based optimization and random recommendations in several cases. In this section, we state these characterizations. To state our characterizations, we define a distribution $\muideal(P, c, u, \mathcal{T})$ for investment-based optimization and a distribution $\murandom(P, c, u, \mathcal{T})$ for random recommendations. 

Since neither baseline approach directly incentivizes gaming tricks, the distributions $\muideal(P, c, u, \mathcal{T})$ and $\murandom(P, c, u, \mathcal{T})$ both satisfy $\wcheap = 0$ for all $w$ in the support (i.e., the marginal distribution of $\Wcheap$ is a point mass at $0$). We can thus convert the two-dimensional action space into a one-dimensional action space specified by $\wcostly$, where the cost function is 
\begin{equation}
\label{eq:baselinecost}
\CCostlyBaseline(\wcostly) :=  c([\wcostly, 0])    
\end{equation}
and the utility function is: 
\begin{equation}
 \label{eq:baselineutility}  
 \UtilityCostly(\wcostly, t) :=  u([\wcostly, 0], t).
\end{equation}

We place the following structural assumptions on the type space and utilities which simplify the equilibrium structure. For each type $t \in \mathcal{T}$, let $\beta_t$ be the minimum level of investment needed to achieve nonnegative utility:
\[ \beta_t := \min \left\{ \wcostly \mid \UtilityCostly(\wcostly) \ge 0 \right\}.\]
We assume that either (1) users are homogeneous ($\mathcal{T} = 1$), or (2) users are heterogeneous and no user requires investment to achieve nonnegative utility (i.e., $\beta_t = 0$ for all $t \in \mathcal{T}$). 

We now specify the marginal distribution of quality $\Wcostly$ for investment-based optimization (Section \ref{sec:investment}) and random recommendations (Section \ref{sec:randomrecommendations}). 
We defer the proofs to \Cref{appendix:equilibriumcharacterizationsbaselines}.

\subsection{Characterization for investment-based optimization}\label{sec:investment}

We first consider investment-based optimization where $M = \MIdeal$. 

When users are homogeneous ($\mathcal{T} = \left\{t \right\}$), we define the marginal distribution of $\Wcostly$ for $\muideal(P, c, u, \mathcal{T})$ by: 
\[ 
\mathbb{P}[\Wcostly \le \wcostly] = 
\begin{cases}
\min\left(1, \CCostlyBaseline(\beta_{t})\right)^{1/(P-1)} & \text{ if } 0 \le \wcostly \le \beta_{t} \\
\min\left(1, \CCostlyBaseline(\wcostly) \right)^{1/(P-1)} & \text{ if } \wcostly \ge \beta_{t}.
\end{cases}
\]

When users are heterogeneous and $\beta_t = 0$ for all $t \in \mathcal{T}$, we define the marginal distribution of $\Wcostly$ for $\muideal(P, c, u, \mathcal{T})$ by: 
\[ 
\mathbb{P}[\Wcostly \le \wcostly] = 
\min\left(1, \CCostlyBaseline(\wcostly) \right)^{1/(P-1)}.
\]

We show that $\muideal(P, c, u, \mathcal{T})$ is a symmetric mixed equilibrium.  
\begin{theorem}
\label{thm:investmentbased}
Suppose that either (a) $|\mathcal{T}| = 1$ or (b) $\beta_t = 0$ for all $t \in \mathcal{T}$. Then, the distribution $\muideal(P, c, u, \mathcal{T})$ is a symmetric mixed Nash equilibrium in the game with $M = \MIdeal$.
\end{theorem}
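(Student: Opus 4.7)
The plan is to reduce the two-dimensional problem to a one-dimensional one in the quality coordinate, and then check the indifference conditions of the symmetric all-pay-auction form for the proposed mixed strategy. First I would argue that any best response under $M = \MIdeal$ can be taken to have $\wcheap = 0$: since $\MIdeal(w)$ depends only on $\wcostly$, raising $\wcheap$ never improves the platform score, but by the monotonicity assumptions in Section 2, it weakly shrinks the set of user types for which $\mathbbm{1}[u(w,t) \ge 0] = 1$ and weakly increases the cost $c$. Hence $\wcheap = 0$ weakly dominates any positive $\wcheap$, and the game reduces to a one-dimensional game on $\wcostly \ge 0$ with cost $\CCostlyBaseline$ and per-type utility $\UtilityCostly$.

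Next I would handle the heterogeneous case (b). When $\beta_t = 0$ for every $t \in \mathcal{T}$, monotonicity of $u$ in $\wcostly$ gives $\UtilityCostly(\wcostly,t) \ge 0$ for all $\wcostly \ge 0$ and all $t$, so the indicator $\mathbbm{1}[u \ge 0]$ is identically one and the game becomes a standard symmetric all-pay auction in which the creator with the largest $\wcostly$ wins. For the candidate CDF $F(\wcostly) = \min(1, \CCostlyBaseline(\wcostly))^{1/(P-1)}$, continuity of $\CCostlyBaseline$ together with $\CCostlyBaseline(0) = 0$ forces no atom at $0$, so the win probability at $\wcostly$ equals $F(\wcostly)^{P-1} = \min(1, \CCostlyBaseline(\wcostly))$. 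This yields expected payoff $0$ at every support point; deviations above the threshold $\wcostly^{\max}$ at which $\CCostlyBaseline(\wcostly^{\max}) = 1$ only add cost beyond the maximum attainable value of $1$ and are strictly unprofitable.

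For the homogeneous case (a), the main subtlety is the atom of mass $\CCostlyBaseline(\beta_t)^{1/(P-1)}$ at $\wcostly = 0$. I would first show that any $\wcostly \in (0, \beta_t)$ is strictly dominated by $\wcostly = 0$, because both satisfy $\MIdeal(w) \cdot \mathbbm{1}[u(w,t) \ge 0] = 0$ and so induce identical selection probabilities, while the former pays strictly positive cost. Using the model's convention that the platform optimizes only over content with nonnegative user utility, playing $\wcostly = 0$ yields expected value $0$ whether or not opponents invest: if at least one opponent plays in $[\beta_t, \wcostly^{\max}]$ that opponent wins outright, and if every opponent plays $0$ no content is eligible so no creator is selected. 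On the upper branch $[\beta_t, \wcostly^{\max}]$ the CDF is continuous, so the probability of strictly beating all $P-1$ opponents equals $F(\wcostly)^{P-1} = \CCostlyBaseline(\wcostly)$, giving payoff $0$ that matches the payoff at $0$. Deviations to $\wcostly > \wcostly^{\max}$ give $1 - \CCostlyBaseline(\wcostly) < 0$, and when $\CCostlyBaseline(\beta_t) > 1$ the formula collapses to a unit atom at $0$ for which any upward deviation costs more than the maximum attainable value of $1$. The main obstacle is exactly this bookkeeping around the atom at $0$ and the dominated interior segment $(0, \beta_t)$; once that is handled, the rest reduces to routine all-pay-auction indifference calculations.
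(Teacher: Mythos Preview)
Your proposal is correct and follows essentially the same route as the paper's proof: reduce to $\wcheap = 0$ via the domination argument, then verify the all-pay-auction indifference conditions separately in the two cases. Your treatment of the atom at $0$ in the homogeneous case (in particular, the observation that when every creator sits at $0$ no content is eligible and hence no one is paid) is exactly what the paper relies on when it says ``$H(w)=0$ follows easily'' at $\wcostly=0$; you have simply made that step explicit, and you also spelled out the degenerate sub-case $\CCostlyBaseline(\beta_t)>1$, which the paper leaves implicit.
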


\subsection{Characterization for random recommendations}\label{sec:randomrecommendations}

We next consider random recommendations where $M = \MTrivial$. 

First, we consider the case where users are homogeneous (i.e., $\mathcal{T} = \left\{t\right\}$). Let $\kappa$ be minimum cost to achieve $0$ user utility, truncated at $1$: that is, $\kappa := \min\left(1, \CCostlyBaseline(\beta_t)\right)$. Let the probability $\nu$ be defined as follows: $\nu = 0$ if $\kappa \le 1/P$, and otherwise $\nu \in [0,1]$ is the unique value such that  
that $\sum_{i=0}^{P-1}\nu^i =  P \cdot \kappa$. We define the marginal distribution of $\Wcostly$ for $\murandom(P, c, u, \mathcal{T})$ by
\begin{align*}
&\mathbb{P}[\Wcostly = \wcostly] = \begin{cases}
    \nu & \text{ if } \wcostly = 0 \\
    1 - \nu & \text{ if } \wcostly = \beta_t \\
    0 & \text{ otherwise. }
\end{cases}
\end{align*}

When users are heterogeneous and $\beta_t = 0$ for all $t \in \mathcal{T}$, we define the marginal distribution of $\Wcostly$ for $\murandom(P, c, u, \mathcal{T})$ to be a point mass at $\wcostly = 0$. 

We show that $\murandom(P, c, u, \mathcal{T})$ is a symmetric mixed equilibrium.
\begin{theorem}
\label{thm:randomrecommendations}
Suppose that either (a) $|\mathcal{T}| = 1$ or (b) $\beta_t = 0$ for all $t \in \mathcal{T}$. Then, the distribution $\murandom(P, c, u, \mathcal{T})$ is a symmetric mixed Nash equilibrium in the game with $M = \MTrivial$.
\end{theorem}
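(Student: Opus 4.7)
The plan is to show $\murandom(P, c, u, \mathcal{T})$ is a symmetric Nash equilibrium by (i) reducing the two-dimensional strategy space to a small set of candidate pure strategies, and then (ii) computing expected utilities under the proposed mixed strategy to verify indifference (for interior mixing) or weak dominance (at the pure extremes). For the reduction: since $M = \MTrivial \equiv 1$ does not depend on $w$, setting $\wcheap > 0$ is weakly dominated by $\wcheap = 0$ (it weakly raises cost and can only push content below the $u \ge 0$ threshold, reducing selection probability). In the homogeneous case $\mathcal{T} = \{t\}$, once $\wcheap = 0$ the utility $\UtilityCostly(\wcostly, t)$ is nonnegative exactly when $\wcostly \ge \beta_t$, and platform selection depends on the content only through this binary threshold. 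Thus any $\wcostly \in (0, \beta_t)$ is strictly dominated by $\wcostly = 0$, and any $\wcostly > \beta_t$ is strictly dominated by $\wcostly = \beta_t$, leaving just the two candidates $\{0, \beta_t\}$.

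Next I would compute expected utilities against the proposed mixed strategy, where each opponent plays $\beta_t$ with probability $1 - \nu$ and $0$ with probability $\nu$. Playing $\wcostly = 0$ yields utility $0$: if any opponent plays $\beta_t$ they have score $1$ while mine is $0$ and I am excluded from the argmax; if all opponents also play $0$, no creator has $u \ge 0$ and no selection is made (consistent with the description of $\MTrivial$ as uniform choice over content with nonnegative user utility). Playing $\wcostly = \beta_t$ at cost $\CCostlyBaseline(\beta_t)$: conditional on $k$ of the $P-1$ opponents also playing $\beta_t$, I win with probability $1/(k+1)$, and combining the identity $\binom{P-1}{k}/(k+1) = \binom{P}{k+1}/P$ with the binomial theorem yields
\[
\sum_{k=0}^{P-1}\binom{P-1}{k}(1-\nu)^{k}\nu^{P-1-k}\cdot\frac{1}{k+1} \;=\; \frac{1 - \nu^{P}}{P(1-\nu)} \;=\; \frac{1}{P}\sum_{i=0}^{P-1}\nu^{i}.
\]
Setting the resulting expected utility $\tfrac{1}{P}\sum_{i=0}^{P-1}\nu^{i} - \CCostlyBaseline(\beta_t)$ equal to $0$ produces exactly the indifference equation $\sum_{i=0}^{P-1}\nu^{i} = P\cdot \CCostlyBaseline(\beta_t)$ posited in the definition of $\nu$. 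When $\CCostlyBaseline(\beta_t) \le 1/P$ the LHS already equals $1$ at $\nu = 0$, so $\nu = 0$ (everyone plays $\beta_t$) is an equilibrium since $1/P - \CCostlyBaseline(\beta_t) \ge 0$ weakly dominates the $0$-payoff; when $\CCostlyBaseline(\beta_t) \in (1/P, 1]$ the LHS is continuous and strictly increasing from $1$ to $P$ on $[0,1]$, so a unique interior $\nu \in (0, 1]$ makes the indifference hold; when $\CCostlyBaseline(\beta_t) > 1$, the truncation $\kappa = \min(1, \CCostlyBaseline(\beta_t)) = 1$ forces $\nu = 1$ (everyone plays $0$), which is an equilibrium since $\beta_t$ yields at most $1 - \CCostlyBaseline(\beta_t) < 0$.

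For the heterogeneous case with $\beta_t = 0$ for every $t$, the marginal distribution is a point mass at $\wcostly = 0$: every piece of $\wcheap = 0$ content already achieves $u \ge 0$ for all types, so all creators tie at score $1$ and each is selected with probability $1/P$ at zero cost; any deviation with $\wcostly > 0$ strictly raises cost without changing selection probability, and any $\wcheap > 0$ weakly reduces selection probability at weakly higher cost, so the point mass is an equilibrium. The main obstacles are light: executing the binomial-sum identity above and gluing the three $\nu$-subcases via the truncation $\kappa = \min(1, \CCostlyBaseline(\beta_t))$. The one conceptually delicate point is the tie-breaking convention when no creator has $u \ge 0$: the indifference derivation uses that in this degenerate situation no creator is selected and no value of $1$ is earned, which matches the verbal characterization of $\MTrivial$ as ``choosing uniformly at random from all content that generates nonnegative user utility.''
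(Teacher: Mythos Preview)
Your proof is correct and follows essentially the same approach as the paper: reduce to $\wcheap = 0$, then to the binary choice $\{0,\beta_t\}$ in the homogeneous case, and verify indifference via the binomial identity $\mathbb{E}[1/(1+Y)] = \tfrac{1}{P}\sum_{i=0}^{P-1}\nu^i$, with the heterogeneous $\beta_t=0$ case handled by the trivial point mass at $[0,0]$. The paper's proof is slightly terser (it folds your third subcase $\CCostlyBaseline(\beta_t)>1$ into $\kappa\in(1/P,1]$ via the truncation $\kappa=\min(1,\CCostlyBaseline(\beta_t))$, and it does not explicitly flag the tie-breaking convention you correctly identify as delicate), but the logic is identical.
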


\section{Equilibrium characterization for engagement optimization}\label{sec:equilibriumcharacterizations}

Within our analysis of the performance of  engagement-based optimization in Section \ref{sec:performance}, we implicitly leveraged closed-form characterizations of the symmetric mixed equilibria for engagement-based optimization in several concrete instantiations. In this section, we state these closed-form characterizations. To state our characterizations, we define a distribution $\muengagement(P, c, u, \mathcal{T})$ over $\mathbb{R}^2_{\ge 0}$, when $\mathcal{T}$ has a single type (Definition \ref{def:homogeneous}), when $\mathcal{T}$ has two types under further assumptions (Definition \ref{def:2types}), 
and when $\mathcal{T}$ has $N$ well-separated types under further assumptions (Definition \ref{def:Ntypes}). We will show $\muengagement(P, c, u, \mathcal{T})$ is a symmetric mixed Nash equilibria for engagement-based optimization in each case.

To simplify the notation in our specification of $\muengagement(P, c, u, \mathcal{T})$, we convert the two-dimensional action space into the following union of $|T|$ one-dimensional curves that specifies the support of the equilibria. We define the \textit{minimum-investment functions} $f_t: \mathbb{R}_{\ge 0} \rightarrow \mathbb{R}_{\ge 0}$, as follows:
\begin{equation}
\label{eq:ft}
 f_t(\wcheap) := \inf \left\{\wcostly \mid \wcostly \ge 0, u([\wcostly, \wcheap], t) \ge 0 \right\}, 
\end{equation}
so $f_t$ captures the amount of investment needed to offset the disutility from $\wcheap$ level of gaming tricks for users of type $t$. Within each one-dimensional curve, the content $w$ is entirely specified by the cheap component $\wcheap$, which motivates us to define a one-dimensional cost function for content along each curve: 
\begin{equation}
\label{eq:ccheap}
 \CCheap_t(\wcheap) := c([f_t(\wcheap), \wcheap]). 
\end{equation}
For example, the functions $f_t$ and $\CCheap_t$ take the following form in \Cref{example:linear}:
\begin{example}[continues=example:linear]%
The functions $f_t$ and $\CCheap_t$ are as follows: 
\begin{align*}
f_t(\wcheap) &= \max(0, (\wcheap/t) - \alpha) \\
\CCheap_t(\wcheap) &= 
\begin{cases}
\wcheap (\gamma + 1/t) - \alpha & \text{ if } \wcheap > \max(0, t \cdot \alpha) \\
\wcheap \cdot \gamma  & \text{ if } \wcheap \le t \cdot \alpha. 
\end{cases}
\end{align*}
As $t$ increases (and users becomes more tolerant to gaming tricks), the slope of $f_t$ and $\CCheap_t$ both decrease. The minimum-investment $f_t$ is independent of $\gamma$, but the cost function increases with $\gamma$. 
\end{example}

In \Cref{appendix:homogeneouscharacterization}, we focus on homogeneous users. In \Cref{appendix:additionalassumptions}, we state additional assumptions for the case of heterogeneous users. In \Cref{appendix:characterizationNtypes}, we focus on well-separated types, and in \Cref{appendix:characterization2types} we consider two arbitrary types. We defer proofs to \Cref{appendix:proofsequilibriumcharacterizations}.

\subsection{Equilibrium characterization for homogeneous users}\label{appendix:homogeneouscharacterization}
 
We first focus on the case where $\mathcal{T} = \left\{ t \right\}$ has a single type (Figure \ref{fig:1type}).
\begin{definition}
\label{def:homogeneous}
We define the distribution $(\Wcostly, \Wcheap) \sim \muengagement(P, c, u, \mathcal{T})$ over $\mathbb{R}^2_{\ge 0}$ as follows. Let $f_t$ be defined by \eqref{eq:ft}, and let $\CCheap_t$ be defined by \eqref{eq:ccheap}. The marginal distribution $\Wcheap$ is defined by: 
\[\mathbb{P}[\Wcheap \le \wcheap] = \left(\min(1, \CCheap_t(\wcheap))\right)^{1/(P-1)}.\]
For each $\wcheap \in \text{supp}(\Wcheap)$, the conditional distribution $\Wcostly \mid \Wcheap = \wcheap$ is defined as follows: if $\wcheap > 0$, then $\Wcostly \mid \Wcheap = \wcheap$ is a point mass at $f_t(\wcheap)$; if $\wcheap = 0$, then $\Wcostly \mid \Wcheap = \wcheap$ is a point mass at $0$.    
\end{definition}
For example, the distribution takes the following form within Example \ref{example:linear}.
\begin{example}[continues=example:linear]%
Let $P = 2$, $\gamma = 0.1$, $\alpha = 0.5$, and $|\mathcal{T}| = 1$. Then, $\Wcheap$ and $\Wcostly$ are both distributed as uniform distributions and $\muengagement(P, c, u, \mathcal{T})$ is supported on a line segment (Figure \ref{fig:1type}).
\end{example}

We prove that $\muengagement(P, c, u, \mathcal{T})$ is a symmetric mixed equilibrium.
\begin{theorem}
\label{thm:onetype}
If $|\mathcal{T}| = 1$, the distribution $\muengagement(P, c, u, \mathcal{T})$ is a symmetric mixed equilibrium in the game with $M = \MPlatform$. 
\end{theorem}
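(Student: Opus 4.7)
The plan is to verify that $\muengagement(P, c, u, \{t\})$ is a symmetric mixed Nash equilibrium by computing each creator's expected utility from every action in the support of $\muengagement$, showing it equals a common constant $K$, and then ruling out profitable deviations off the support. The support decomposes into the minimum-investment curve $\mathcal{C} = \{(f_t(\wcheap), \wcheap) : 0 < \wcheap \le w^*\}$, where $w^*$ is the unique value with $\CCheap_t(w^*) = 1$, together with an atom at $(0,0)$ of mass $\CCheap_t(0)^{1/(P-1)}$ whenever $f_t(0) > 0$. I would first record that $f_t$ is continuous and weakly increasing, that any content on $\mathcal{C}$ satisfies $u = 0$ and is therefore valid, and that both the engagement map $g(\wcheap) := \MPlatform((f_t(\wcheap), \wcheap))$ and the cost $\CCheap_t(\wcheap)$ are strictly increasing in $\wcheap$, using the strict monotonicity of $\MPlatform$ in each coordinate.

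For a creator playing $(f_t(\wcheap), \wcheap)$ with $\wcheap > 0$: opponents at the opt-out atom are disqualified (invalid content), and opponents on $\mathcal{C}$ with $\wcheap_j > 0$ have valid scores $g(\wcheap_j)$ that the creator beats iff $\wcheap \ge \wcheap_j$. Continuity of $F$ above $0$ makes ties a measure-zero event, so the win probability is $F(\wcheap)^{P-1} = \min(1, \CCheap_t(\wcheap)) = \CCheap_t(\wcheap)$; subtracting the cost $\CCheap_t(\wcheap)$ yields utility $0$. At the opt-out atom $(0,0)$ (present only when $f_t(0) > 0$), the content is disqualified, the win probability is $0$, and the utility is $-c([0,0]) = 0$. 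Hence $K = 0$ on the entire support.

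To rule out deviations: any $w$ with $u(w,t) < 0$ gives utility $-c(w) \le 0$. For $w$ with $u(w,t) \ge 0$ and $w \notin \mathcal{C}$, I have $\wcostly > f_t(\wcheap)$ and hence $\MPlatform(w) > g(\wcheap)$; by strict monotonicity of $g$ there is a unique $\tilde w > \wcheap$ with $g(\tilde w) = \MPlatform(w)$ (the degenerate subcase $\MPlatform(w) > \sup g$ is handled using $\sup_{\wcostly} c([\wcostly, 0]) = \infty$). The win probability is $F(\tilde w)^{P-1} = \min(1, \CCheap_t(\tilde w))$. The crux is the geometric inequality $c(w) \ge \CCheap_t(\tilde w)$, proved by parametrizing the $\MPlatform$-level set through $w$ by $\wcheap$ via $\wcostly = \phi(\wcheap)$ determined implicitly (so $\phi'(\wcheap) = -(\nabla \MPlatform)_2/(\nabla \MPlatform)_1$), and computing $\tfrac{d}{d\wcheap}\, c((\phi(\wcheap), \wcheap)) = [(\nabla c)_2 (\nabla \MPlatform)_1 - (\nabla c)_1 (\nabla \MPlatform)_2]/(\nabla \MPlatform)_1$. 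This derivative is strictly negative by the cost-effectiveness assumption $(\nabla c)_2/(\nabla c)_1 < (\nabla \MPlatform)_2/(\nabla \MPlatform)_1$. Integrating along the level set from $\wcheap$ up to $\tilde w$, which remains in $\mathbb{R}_{\ge 0}^2$ since the $\wcostly$-coordinate decreases monotonically from its initial value to $f_t(\tilde w) \ge 0$, yields $c(w) > \CCheap_t(\tilde w)$, so the deviation utility $F(\tilde w)^{P-1} - c(w) < 0 = K$.

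I expect the main obstacle to be the level-set cost-comparison argument together with its edge cases. In the saturated regime $\CCheap_t(\tilde w) > 1$ (above $w^*$, where $F$ is clamped at $1$), the same monotonicity still delivers $c(w) > \CCheap_t(\tilde w) > 1$, hence utility $\le 1 - c(w) < 0$. The boundary case $f_t(0) = 0$ eliminates the opt-out atom and shifts the lower endpoint of the curve to the origin, but the argument is unchanged. Once the geometric inequality and these boundary cases are in place, indifference on the support together with strict sub-optimality off-support confirms that $\muengagement(P, c, u, \{t\})$ is a symmetric mixed Nash equilibrium.
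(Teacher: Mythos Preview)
Your proof is correct and follows the same scaffold as the paper's: show $H(w)=0$ on the support and $H(w)\le 0$ elsewhere. The substantive difference is in how you rule out deviations with $u(w,t)\ge 0$ lying strictly above the minimum-investment curve. The paper appeals to a separate optimization lemma (on the program $\min_w c(w)$ subject to $u(w,t)\ge 0$ and $\MPlatform(w)\ge m$), proved by local first-order perturbations, to conclude that any best response must already lie in $\Caugt$; this lemma is then reused for the heterogeneous-type characterizations. Your level-set integration reaches the same conclusion in one stroke: parametrizing the $\MPlatform$-level curve and signing $\tfrac{d}{d\wcheap}c(\phi(\wcheap),\wcheap)$ via the cost-effectiveness inequality yields the global comparison $c(w)>\CCheap_t(\tilde w)$ directly. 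Your route is more self-contained and geometric for this single theorem; the paper's modular packaging pays off because the optimization lemma is reused later. Two small remarks: the degenerate branch $\MPlatform(w)>\sup g$ is in fact vacuous for $u(w,t)\ge 0$, since the level curve through $w$ must intersect $\Caugt$ (either at a zero-utility point or, if $\phi$ hits zero first, at a point $(0,\wcheap')$ with $f_t(\wcheap')=0$); and your claim ``$\wcostly>f_t(\wcheap)$ whenever $w\notin\mathcal C$'' omits the on-curve points with $\wcheap>w^*$, but you handle those correctly in your saturated-regime discussion.
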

\noindent In fact, we further prove that $\muengagement(P, c, u, \mathcal{T})$ is the unique symmetric mixed equilibrium when gaming tricks are costly. 
\begin{theorem}
\label{thm:onetypeunique}
Suppose that $|\mathcal{T}| = 1$ and gaming is costly (i.e. $(\nabla c(w))_2 > 0$ for all $w \in \mathbb{R}_{\ge 0}^2$). Then, if $\mu$ is a symmetric mixed equilibrium in the game with $M = \MPlatform$, it holds that $\mu = \muengagement(P, c, u, \mathcal{T})$. 
\end{theorem}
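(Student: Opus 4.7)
The plan is to reduce the problem to a one-dimensional all-pay-auction-style analysis using Proposition \ref{prop:positivecorrelation}, and then to pin down the marginal distribution of $\Wcheap$ via indifference and boundary arguments. By Proposition \ref{prop:positivecorrelation}, any symmetric mixed equilibrium $\mu$ in the game with $M = \MPlatform$ has $\mathrm{supp}(\mu) \subseteq \{(f_t(\wcheap), \wcheap) : \wcheap > 0\} \cup \{(0,0)\}$, so $\mu$ is determined by the marginal CDF $G$ of $\Wcheap$ together with the deterministic prescription that $\Wcostly = f_t(\Wcheap)$ whenever $\Wcheap > 0$ and $\Wcostly = 0$ at the opt-out. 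Along the curve, the one-dimensional cost $\CCheap_t(\wcheap) = c([f_t(\wcheap),\wcheap])$ is strictly increasing in $\wcheap$ (since gaming is costly and $f_t$ is weakly increasing), and $\MPlatform([f_t(\wcheap),\wcheap])$ is likewise strictly increasing in $\wcheap$. So between two on-curve creators, the one with the larger $\wcheap$ is recommended outright, and Stage 1 reduces to a one-dimensional all-pay game with cost $\CCheap_t$.

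Next I would establish two standard structural properties of the equilibrium marginal $G$ on $(0,\infty)$: \emph{(i)} $G$ has no atom at any $\wcheap^{\star} > 0$, since otherwise a creator could shift mass from $\wcheap^{\star}$ to $\wcheap^{\star} + \epsilon$, gaining a discrete jump in winning probability against the atom at only an $O(\epsilon)$ cost increase; and \emph{(ii)} the support of $G$ on $(0,\infty)$ is an interval, since otherwise a creator at the top of any gap could move mass to just above the bottom of the gap, strictly decreasing cost without changing the winning probability. With (i) and (ii) in place, the standard indifference condition yields $G(\wcheap)^{P-1} - \CCheap_t(\wcheap) = V$ for some constant $V$ and every $\wcheap$ in the support of $G$ on $(0,\infty)$.

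It remains to pin down $V$ and the endpoints of the support. At the top $\wcheap_{\max}$ one must have $G(\wcheap_{\max}) = 1$ (otherwise deviating slightly upward is strictly profitable), which gives $V = 1 - \CCheap_t(\wcheap_{\max})$. At the bottom, I would equate the limiting on-curve utility with the utility of the opt-out $(0,0)$, computed carefully from the Stage 3 rule; this needs to be handled separately in the two cases $f_t(0) = 0$ (the opt-out lies on the curve) and $f_t(0) > 0$ (the opt-out is an off-curve point that may carry its own atom). In each case, using that $\CCheap_t$ is continuous and strictly increasing and that $\CCheap_t(\wcheap) \to \infty$ as $\wcheap \to \infty$ (ensured by $\sup_{\wcostly} c([\wcostly,0]) = \infty$ together with monotonicity of $f_t$), the boundary system admits a unique solution, forcing $G(\wcheap) = (\min(1, \CCheap_t(\wcheap)))^{1/(P-1)}$, which matches Definition \ref{def:homogeneous}. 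Combined with the deterministic description of $\Wcostly \mid \Wcheap$ from Proposition \ref{prop:positivecorrelation}, this yields $\mu = \muengagement(P, c, u, \mathcal{T})$. The main obstacle I anticipate is the boundary analysis in the case $f_t(0) > 0$, where the size of the opt-out atom, the lower endpoint of the curve portion of the support, and the constant $V$ must simultaneously satisfy the equilibrium conditions; verifying that this system pins down $G$ uniquely is the delicate step.
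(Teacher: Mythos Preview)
Your approach is essentially the same as the paper's: both restrict to the curve via Proposition~\ref{prop:positivecorrelation}/Lemma~\ref{lemma:ctaugunion}, reduce to a one-dimensional all-pay contest (the paper parametrizes by engagement $m=\MPlatform([f_t(\wcheap),\wcheap])$ rather than by $\wcheap$, but these are equivalent by Lemma~\ref{lemma:increasing}), and then run the standard no-atom, interval-support, indifference, and boundary arguments, splitting on whether $f_t(0)=0$ or $f_t(0)>0$. One small correction: Proposition~\ref{prop:positivecorrelation} gives $\wcheap\ge 0$, not $\wcheap>0$, so in the case $f_t(0)>0$ you must also rule out an atom at the on-curve point $(f_t(0),0)$ (distinct from the opt-out $(0,0)$); the same overbidding argument handles this, and the paper does exactly that, further splitting into subcases according to whether $\CCheap_t(0)=c([f_t(0),0])$ is below, equal to, or above $1$.
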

\noindent The fact that $\muengagement(P, c, u, \mathcal{T})$ is the unique symmetric mixed equilibrium under costly gaming tricks and homogeneous users provides additional justification for our focus on $\muengagement(P, c, u, \mathcal{T})$ in \Cref{sec:performance}. 

We do note that although $\muengagement(P, c, u, \mathcal{T})$ is unique within the class of symmetric equilibrium, there typically do exist asymmetric equilibria. For example, if $P =3$, the mixed strategy profile where $\mu_1$ is a point mass at $[0,0]$ and $\mu_2 = \mu_3 = \muengagement(2, c, u, t)$ is an equilibrium. Extending our analysis and results to asymmetric equilibria is an interesting direction for future work. 

\subsection{Additional assumptions for characterization results for multiple types}\label{appendix:additionalassumptions}

In our characterization results for heterogeneous users, we require the following additional assumptions. One key assumption is the following linearity condition on the \textit{induced cost function} given by the optimization program: 
\begin{equation}
\label{eq:inducedcost}
\CEngagement_t(m) := \min_w c(w) \text{ s.t. } u(w, t) \ge 0, \MPlatform(w) \ge m.    
\end{equation}
which captures the minimum production cost to create content with engagement at least $m$ and nonnegative user utility.
\begin{assumption}[Linearity of cost functions]
\label{assumption:linearity}
We assume that there exists coefficients $a_t > 0$ for $t \in \mathcal{T}$, intercept $b > 0$, and shift parameter $s \ge -1 \cdot \min_{w \in \mathbb{R}^2_{\ge 0}} \MPlatform(w)$ such that:
\begin{enumerate}
    \item The coefficients $a_t$ are strictly decreasing: $a_{t_1} > a_{t_2}$ for all $t_2 > t_1$. 
    \item The induced cost function is a \textit{nonnegative part of a linear function}: that is, $\CEngagement_t(m) = \max(0, a_t (m + s) - 1)$ for all $m \in \mathbb{R}$. 
\end{enumerate}
\end{assumption}
\noindent Assumption \ref{assumption:linearity} guarantees that there is a linear relationship between costs and engagement. Apart from Assumption \ref{assumption:linearity}, we further assume that gaming tricks are costless (that is, $(\nabla(c(w)))_2 = 0$ for all $w \in \mathbb{R}_{\ge 0}^2$) and that $u([0,0], t) \ge 0$ for all $t \in \mathcal{T}$ (i.e. no costly effort is required to meet the user utility constraint for any user).

These assumptions are satisfied by the linear functional forms in \Cref{example:linear} and \Cref{example:KMR} with specific parameter settings. 
\begin{example}[continues=example:linear]
For this setup with $\alpha = 1$ and $\gamma = 0$, the cost function assumptions are satisfied for $a_t = \frac{1}{1 + t}$ and $s = 1$. 
\end{example}
\begin{example}[continues=example:KMR]
For this setup with $\gamma = 0$, the cost function assumptions are satisfied for $a_t = \frac{1}{1 + t} = \frac{W}{v}$
and $s = 0$.  
\end{example}

\subsection{Characterization for $N$ well-separated types}\label{appendix:characterizationNtypes}

Interestingly, even under the assumptions in \Cref{appendix:additionalassumptions}, the symmetric mixed equilibrium structure is already complex for the case of 2 arbitrary types (as we will show in \Cref{appendix:characterization2types}). Nonetheless, the equilibrium structure turns out to be significantly cleaner under a ``well-separated'' assumption on the types: $a_{t_1} \ge 1.5 a_{t_2}$. This motivates us to restrict to ``well-separated'' types in our analysis of type spaces $\mathcal{T}$ of arbitrary size. The appropriate generalization of the 2-type condition turns out to be:
\[
 a_{t_1} \ge \left(1 + 
\frac{1}{N} 
 \right) a_{t_2} \ge \ldots \ge \left(1 + 
 \frac{1}{N} 
 \right)^{N'-1} a_{N'} > 0.   
\]

As a warmup, let's first consider the case of 2 well-separated types satisfying $a_{t_1} \ge 1.5 a_{t_2}$. The equilibrium is a mixture of $2$ distributions, one for each type. The distribution for type $t_i$ looks similar to the equilibrium distribution in Definition \ref{def:homogeneous} for homogeneous users of type $t_i$, with the modification that there is a factor of $2$ multiplier on the cumulative density function of $\Wcheap$. 
\begin{definition}
\label{def:2typeswellseparated}
Let $\mathcal{T} = \left\{t_1, t_2 \right\}$ be a type space consisting of two types. Furthermore, suppose that gaming tricks are costless (that is, $(\nabla(c(w)))_2 = 0$ for all $w \in \mathbb{R}_{\ge 0}^2$) and suppose that $u([0, 0], t) \ge 0$ for all $t \in \mathcal{T}$. Suppose that Assumption \ref{assumption:linearity} holds with coefficients satisfying $a_{t_1} \ge 1.5 a_{t_2} > 0$. We define the distribution $(\Wcheap, \Wcostly) \sim \muengagement(P, c, u, \mathcal{T}, \MPlatform)$ to be be a mixture of the following 2 distributions $(\Wcheap^1, \Wcostly^1)$ and $(\Wcheap^2, \Wcostly^2)$, where the mixture weights are $0.5$ and $0.5$. Let $f_t$ be defined by \eqref{eq:ft}, and let $\CCheap_t$ be defined by \eqref{eq:ccheap}. The random variables $\Wcheap^1$ and $\Wcheap^2$ are defined by: 
\begin{align*}
 \mathbb{P}[\Wcheap^1 \le \wcheap] &= 
 \min\left(2 \cdot \CCheap_{t_1}(\wcheap),1\right) \\
  \mathbb{P}[\Wcheap^2 \le \wcheap] &= \min\left(4 \cdot \CCheap_{t_2}(\wcheap),1\right),
\end{align*}
and where for $1 \le i \le 2$ and $\wcheap \in \text{supp}(\Wcostly^i)$, the distribution $\Wcostly^i \mid \Wcheap^i = \wcheap$ is a point mass at $f_{t_i}(\wcheap)$. 
\end{definition}
\begin{proposition}
\label{prop:2typeswellseparated}
Let $\mathcal{T} = \left\{t_1, t_2 \right\}$ be a type space consisting of two types. Furthermore, suppose that gaming tricks are costless (that is, $(\nabla(c(w)))_2 = 0$ for all $w \in \mathbb{R}_{\ge 0}^2$) and suppose that $u([0, 0], t) \ge 0$ for all $t \in \mathcal{T}$. Suppose that Assumption \ref{assumption:linearity} holds with coefficients satisfying $a_{t_1} \ge 1.5 a_{t_2} > 0$. Let $\muengagement(P, c, u, \mathcal{T}, \MPlatform)$ be defined according to Definition \ref{def:2typeswellseparated}. Then, $\muengagement(P, c, u, \mathcal{T}, \MPlatform)$ is a symmetric mixed equilibrium in the game with $M = \MPlatform$. 
\end{proposition}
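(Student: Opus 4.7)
The plan is to verify directly that $\muengagement(P, c, u, \mathcal{T}, \MPlatform)$ from Definition~\ref{def:2typeswellseparated} is a symmetric Nash equilibrium by checking the two standard conditions: every pure strategy in the support yields the same expected utility $U^*$, and every pure strategy outside the support yields utility at most $U^*$. Since $P = 2$, the analysis reduces to comparing each pure strategy against a single opposing creator drawing from the prescribed mixture.

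The first step I would take is to show that the separation hypothesis $a_{t_1} \ge 1.5\,a_{t_2}$ forces the two curve-components to have disjoint engagement supports. Converting the prescribed $\wcheap$-CDFs into engagement-CDFs using the change of variables $m = \MPlatform([f_{t_i}(\wcheap), \wcheap])$, Assumption~\ref{assumption:linearity} gives that $\tilde{G}_1(m) = \min(2\,\CEngagement_{t_1}(m), 1)$ saturates at $m = 3/(2a_{t_1}) - s$, while $\tilde{G}_2(m) = \min(4\,\CEngagement_{t_2}(m), 1)$ first becomes positive at $m = 1/a_{t_2} - s$. The hypothesis $a_{t_1} \ge 1.5\,a_{t_2}$ is exactly the inequality $3/(2a_{t_1}) \le 1/a_{t_2}$, so along the $t_1$-curve's engagement range the opponent's $t_2$-curve CDF is identically $0$, and along the $t_2$-curve's range the opponent's $t_1$-curve CDF is identically $1$. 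This decoupling makes the subsequent utility computations nearly linear.

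Next, I would compute the expected utility of a deviator playing $(f_{t_i}(\wcheap), \wcheap)$ in the support of $(\Wcheap^i, \Wcostly^i)$ by conditioning on the user type (each drawn with probability $1/2$) and on the curve the opponent selects (each with probability $1/2$). Two mechanisms drive the bookkeeping: $t_2$-curve content has $u(w, t_1) < 0$ for any $\wcheap > 0$ and is removed by the platform's feasibility filter, while the platform's tie-breaking among mutually infeasible contents governs the payoff when two creators both present $t_2$-curve content to a type-$t_1$ user. Substituting the linear cost $\CEngagement_{t_i}(m) = a_{t_i}(m+s) - 1$ and the prescribed CDF slopes, the $m$-dependent terms cancel on each curve, so the utility is constant on each curve's support. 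The specific CDF factors $2$ and $4$ and the equal mixture weights are precisely tuned so that the two resulting constants match the common value $U^*$.

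The main obstacle is ruling out profitable deviations outside the support. The most delicate deviation is playing on the $t_2$-curve with engagement in the free-cost window $(3/(2a_{t_1}) - s,\ 1/a_{t_2} - s)$ opened up by strict separation: there the deviator pays zero cost while beating every $t_1$-curve opponent among type-$t_2$ users and tying with every $t_2$-curve opponent on type-$t_1$ users. The bound $a_{t_1} \ge 1.5\,a_{t_2}$ is tight for exactly this deviation, since at $a_{t_1} = 1.5\,a_{t_2}$ the window collapses to a point, while stronger separation makes the deviation utility strictly below $U^*$. Remaining deviations — opting out at $(0,0)$, overinvesting so that $\wcostly > f_{t_i}(\wcheap)$, and $t_1$-curve strategies pushing engagement into the $t_2$-curve engagement range — follow by monotonicity of $\CEngagement_{t_i}$ and arguments directly paralleling the homogeneous-user proof of Theorem~\ref{thm:onetype}.
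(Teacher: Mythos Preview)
Your approach is correct but takes a different route from the paper. The paper proves Proposition~\ref{prop:2typeswellseparated} as a one-line corollary of the more general Theorem~\ref{thm:Ntypes}: with $N=2$ the well-separatedness hypothesis of that theorem specializes to exactly $a_{t_1} \ge (1+\tfrac{1}{N})\,a_{t_2} = 1.5\,a_{t_2}$, and one checks that $N'=2$ since $\sum_{i=1}^{2}\tfrac{1}{N-i+1}=\tfrac{1}{2}+1\ge 1$ while $\sum_{i=1}^{1}\tfrac{1}{N-i+1}=\tfrac{1}{2}<1$. Your direct verification is essentially the $N=2$ specialization of Lemma~\ref{lemma:Ntypesanalysis}, which is the core computation behind Theorem~\ref{thm:Ntypes}; the paper's approach buys modularity (the two-type case comes for free once the general machinery is in place), whereas yours is self-contained and avoids appealing to a result whose proof is itself a lengthy case analysis.

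One minor imprecision worth flagging: the free-cost-window deviation you highlight on the $t_2$-curve actually yields expected utility \emph{exactly} $U^{*}=\tfrac{1}{4}$ regardless of how wide the gap is, not ``strictly below $U^{*}$'' under strict separation as you claim. The deviator wins only in the event that the opponent is on the $t_1$-curve and the user has type $t_2$, contributing $\tfrac{1}{2}\cdot\tfrac{1}{2}$; in all other scenarios the deviator receives zero (for a type-$t_1$ user the deviator's content is infeasible, and against a $t_2$-curve opponent with a type-$t_2$ user the opponent has strictly higher engagement). This does not affect the equilibrium conclusion, but the tightness of the $1.5$ threshold manifests differently than you suggest --- it is the point at which the supports cease to be disjoint and the equilibrium structure itself changes (to Case~2 of Definition~\ref{def:2types}), not the point at which this particular deviation becomes profitable.
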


We are now ready to generalize Definition \ref{def:2typeswellseparated} to $N \ge 2$ ``well-separated'' types. The distribution is again $\muengagement(P, c, u, \mathcal{T})$ a mixture of distributions: however, it is surprisingly not a mixture of $N$ distributions, but rather a mixture of $N' \le N$ distributions corresponding to the first $N'$ types $t_1, \ldots, t_{N'}$. The distribution for $t_i$ again looks similar to the equilibrium distribution in Definition \ref{def:homogeneous} for homogeneous users with type $t_i$, but again with a multiplicative rescaling on the cdf of $\Wcheap$. The multiplicative rescaling is $N$ for $N'-1$ out of $N'$ types.  
\begin{definition}
\label{def:Ntypes}
Let $\mathcal{T} = \left\{t_1, \ldots, t_N \right\}$ be a type space consisting of $N$ types, let $P = 2$, 
suppose that gaming tricks are costless (that is, $(\nabla(c(w)))_2 = 0$ for all $w \in \mathbb{R}_{\ge 0}^2$) and suppose that $u([0, 0], t) \ge 0$ for all $t \in \mathcal{T}$. Suppose that Assumption \ref{assumption:linearity} holds with coefficients satisfying
\[
 a_{t_1} \ge \left(1 + 
\frac{1}{N} 
 \right) a_{t_2} \ge \ldots \ge \left(1 + 
 \frac{1}{N} 
 \right)^{N'-1} a_{N'} > 0.   
\]
We define the distribution $(\Wcheap, \Wcostly) \sim \muengagement(P, c, u, \mathcal{T})$ to be a mixture of the following $N'$ distributions $\left\{(\Wcheap^i, \Wcostly^i)\right\}_{1 \le i \le N'}$, where $N' \in \mathbb{Z}_{\ge 1}$ is the minimum number such that $\sum_{i=1}^{N'} \frac{1}{N-i+1} \ge 1$. The mixture weight $\alpha^i$ on $(\Wcheap^i, \Wcostly^i)$ is 
\[\alpha^i :=
\begin{cases}
  \frac{1}{N-i+1} & \text{ if }  1 \le i \le N' -1\\
  1 - \sum_{i'=1}^{N'-1} \frac{1}{N-i'+1} & \text{ if }  i = N'.
\end{cases}
\] The random vectors $(\Wcheap^i, \Wcostly^i)$ are defined as follows. Let $f_t$ be defined by \eqref{eq:ft}, and let $\CCheap_t$ be defined by \eqref{eq:ccheap}. 
The marginal distribution of $\Wcheap^i$ is defined by:
\[
\mathbb{P}[\Wcheap^i \le \wcheap] =
\begin{cases}
 \min\left(N \cdot \CCheap_{t_i}(\wcheap), 1\right) & \text{ if } 1 \le i \le N' - 1 \\
 \min\left(\frac{N}{N-N'+1} \cdot \left(1 - \sum_{j=1}^{N'-1} \frac{1}{N-j+1} \right)^{-1} \cdot \CCheap_{t_i}(\wcheap), 1\right) & \text{ if } i = N'. 
\end{cases}
\]
For each $1 \le i \le N'$ and $\wcheap \in \text{supp}(\Wcheap^i)$, the conditional distribution $\Wcostly^i \mid \Wcheap^i = \wcheap$ is a point mass at $f_{t_i}(\wcheap)$. 

\end{definition}
For the cases of $N$ well-separated types, we show that $\muengagement(P, c, u, \mathcal{T})$ is a symmetric mixed Nash equilibrium. 
\begin{theorem}
\label{thm:Ntypes}    
Let $\mathcal{T} = \left\{t_1, \ldots, t_N \right\}$ be a type space consisting of $N$ types, let $P = 2$, suppose that gaming tricks are costless (that is, $(\nabla(c(w)))_2 = 0$ for all $w \in \mathbb{R}_{\ge 0}^2$), and suppose that $u([0, 0], t) \ge 0$ for all $t \in \mathcal{T}$. Suppose that Assumption \ref{assumption:linearity} holds with coefficients satisfying
\begin{equation}
\label{eq:wellseparated}
 a_{t_1} \ge \left(1 + \frac{1}{N} \right) a_{t_2} \ge \ldots \ge \left(1 + \frac{1}{N} \right)^{N'-1} a_{N'} > 0.   
\end{equation}
Let $\muengagement(P, c, u, \mathcal{T})$ be defined according to Definition \ref{def:Ntypes}. Then, $\muengagement(P, c, u, \mathcal{T})$ is a symmetric mixed Nash equilibrium in the game with $M = \MPlatform$. 
\end{theorem}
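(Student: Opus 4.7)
The plan is to verify that $\muengagement(2, c, u, \mathcal{T})$ is a symmetric mixed Nash equilibrium directly: for each pure strategy $(i, m)$ in its support, compute the expected payoff $V(i, m)$ against an opponent drawn from $\muengagement$, show these payoffs equal a common value $V^*$, and show no deviation strictly exceeds $V^*$. First I reduce the deviation space. Any content $w$ with $\wcostly \notin \{f_{t_j}(\wcheap)\}_{j \in [N]}$ is weakly dominated by shifting $\wcostly$ down to the next lower minimum-investment curve, preserving the set of consuming types while reducing cost. Hence a deviation is indexed by a target type $i \in [N]$ (content recommendable to types $\{t_j : j \ge i\}$) and a gaming level, which reparameterizes by engagement $m$ with cost $\CEngagement_{t_i}(m) = \max(0, a_{t_i}(m+s) - 1)$ via Assumption \ref{assumption:linearity}.

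The key structural observation is that the engagement supports of the $N'$ sub-distributions are pairwise disjoint and arranged in increasing order. For $i < N'$, $M^i$ is uniform on $[\underline{m}_i, \bar{m}_i] := [1/a_{t_i} - s, (1+1/N)/a_{t_i} - s]$ with density $N a_{t_i}$; for $i = N'$, $M^{N'}$ has density $N a_{t_{N'}}/((N - N' + 1)\alpha^{N'})$ on a narrower interval, reflecting the larger CDF multiplier in Definition \ref{def:Ntypes}. The well-separated condition \eqref{eq:wellseparated} is exactly $\bar{m}_i \le \underline{m}_{i+1}$ for $i < N'$, so for $m$ in sub-distribution $i$'s support only $G_i(m) := \mathbb{P}[M^i \le m]$ varies with $m$ while every other $G_j$ is constant at $0$ or $1$.

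To verify indifference on the support, decompose $V(i, m)$ by conditioning on the uniformly drawn user type $t_k$. For $k < i$ the creator's content is non-recommendable for $t_k$ and contributes $0$ (the creator loses to any recommendable opponent, and non-consumed content yields no value); for $k \ge i$ the contribution is $\sum_{j > k, j \le N'}\alpha^j + \sum_{j \le \min(k, N')}\alpha^j G_j(m)$ (wins against non-recommendable opponents and against recommendable ones with lower engagement). For $m \in [\underline{m}_i, \bar{m}_i]$ the only $m$-dependent term is $\tfrac{N - i + 1}{N}\alpha^i G_i(m)$, whose derivative in $m$ is $\alpha^i (N - i + 1) \cdot (\text{density of } G_i)/N$; this equals $a_{t_i}$ by direct substitution (using $\alpha^i(N - i + 1) = 1$ for $i < N'$, and the $(N - N' + 1)\alpha^{N'}$ factor in the density cancelling for $i = N'$), exactly matching the marginal cost, so $V(i, m) = V_i^*$ is constant on each sub-distribution's support.

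The main obstacle will be showing that $V_i^* = V^*$ agrees across $i \in [N']$ and ruling out deviations. For constancy across sub-distributions, I compare $V(i, \bar{m}_i)$ (cost $1/N$ for $i < N'$) with $V(i + 1, \underline{m}_{i+1})$ (cost $0$); their winning probabilities differ only in the $k = i$ contribution, which equals $\sum_{j \le N'}\alpha^j = 1$ under target $i$ and $0$ under target $i+1$, a drop of $1/N$ when averaged over user types that exactly matches the cost savings, so $V_{i+1}^* = V_i^*$; an analogous bookkeeping handles the $i = N' - 1$ to $i = N'$ boundary via the scaled density. For no profitable deviation: (a) same target $i \in [N']$ with $m \notin [\underline{m}_i, \bar{m}_i]$: the marginal winning probability in any foreign region $[\underline{m}_l, \bar{m}_l]$ equals $a_{t_l}$ (by the same density-times-coefficient identity) and vanishes in gaps, so by $a_{t_l} < a_{t_i}$ for $l > i$ combined with saturation $G_l(m) = 1$ for $l < i$ and cost vanishing for $m < \underline{m}_i$, $V$ strictly decreases on either side of the support. (b) targeting an unserved $i > N'$: marginal analysis pushes the optimal $m$ to some $\bar{m}_l$ with $l \le N'$, where direct computation yields $V(i, \bar{m}_l) = (N - i + 1)(\sum_{j \le l}\alpha^j)/N - \max(0, a_{t_i}(\bar{m}_l + s) - 1)$, which one bounds above by $V^*$ using the weight identities; the well-separated ordering of $a_{t_j}$'s precisely ensures that no combination of cost reduction (from low $a_{t_i}$) and winning-probability gain (from large $l$) produces a profitable deviation to an unserved type.
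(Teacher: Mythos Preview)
Your approach is essentially the paper's: reduce to deviations on the curves $\mathcal{C}_{t_i}$, reparametrize by (target type, engagement), and verify a common payoff on the support together with no profitable deviation off it. The paper executes this via a five-case direct computation (Lemma~\ref{lemma:Ntypesanalysis}); your derivative-based bookkeeping is a cosmetic variant of the same argument.

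Two places need tightening. First, your reduction is misstated: shifting $\wcostly$ down at fixed $\wcheap$ lowers engagement as well as cost, so it is not weak domination. The correct move (cf.\ Lemma~\ref{lemma:bestresponsestronger}) is to raise $\wcheap$ until you hit the curve of the lowest consuming type; since gaming is costless this preserves cost, weakly raises engagement, and keeps the consuming-type set, which is what actually justifies restricting to $(i,m)$ deviations. Second, in part~(a) your claim that the marginal winning probability in a foreign region $l$ equals $a_{t_l}$ is only correct for $l>i$. For $l<i$ the coefficient of $G_l$ in $V(i,\cdot)$ is $\tfrac{N-i+1}{N}\alpha^l$ (only user types $k\ge i$ count), giving marginal $\tfrac{N-i+1}{N-l+1}\,a_{t_l}$, not $a_{t_l}$. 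Your conclusion survives because the cost is zero for $m<\underline{m}_i$, so $V(i,m)$ is weakly increasing there and bounded by $V(i,\underline{m}_i)=V^*$; but the stated reason is wrong. For part~(b), your marginal argument is more elaborate than needed and the claim that the optimum sits at some $\bar{m}_l$ requires further casework when $1/a_{t_i}-s$ falls inside region $N'$; the paper simply bounds $V(i,m)\le \tfrac{N-i+1}{N}\le \tfrac{N-N'}{N}$ and then uses the defining inequality of $N'$, namely $\sum_{j=1}^{N'-1}\tfrac{1}{N-j+1}\ge 1-\tfrac{1}{N-N'+1}$, to get $\tfrac{N-N'}{N}\le V^*$.
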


\subsection{Characterization for 2 types}\label{appendix:characterization2types}

\begin{figure}[h]
    \centering
    \begin{subfigure}{.45\textwidth}
        \centering

            \begin{tikzpicture}
    \coordinate (x) at (0,1); 
    \coordinate (y) at (1,1); 
    \coordinate (a) at (2.5,0); 
    \coordinate (b) at (3.5,0); 
  \fill[red] (x) circle (2pt);
            \fill[red] (y) circle (2pt);
            \fill[blue] (a) circle (2pt);
            \fill[blue] (b) circle (2pt);
    \draw[red, thick] (x) -- (y) node[midway,above=10pt,font=\small, text=red] {supp$(\V | T = t_1)$};
    \draw[blue, thick] (a) -- (b) node[midway,below=10pt,font=\small, text=blue] {supp$(\V | T = t_2)$};

    \node[left][text=red] at (x) {$\frac{1}{a_{t_1}}$};
    \node[right][text=red] at (y) {$\frac{3}{2 \cdot a_{t_1}}$};
    \node[left][text=blue] at (a) {$\frac{1}{a_{t_2}}$};
    \node[right][text=blue] at (b) {$\frac{5}{4 \cdot a_{t_1}}$};
\end{tikzpicture}
        \caption{Case 1: $a_{t_1} / a_{t_2} \ge 1.5$}
        \label{fig:sub1}
    \end{subfigure}%
        \hfill 
    \begin{subfigure}{.45\textwidth}
        \centering
                \begin{tikzpicture}
    \coordinate (x) at (0,1); 
    \coordinate (y) at (2,1); 
    \coordinate (a) at (1,0); 
    \coordinate (b) at (3,0); 
  \fill[red] (x) circle (2pt);
            \fill[red] (y) circle (2pt);
            \fill[blue] (a) circle (2pt);
            \fill[blue] (b) circle (2pt);
    \draw[red, thick] (x) -- (y) node[midway,above=10pt,font=\small, text=red] {supp$(\V | T = t_1)$};
    \draw[blue, thick] (a) -- (b) node[midway,below=10pt,font=\small, text=blue] {supp$(\V | T = t_2)$};

    \node[left][text=red] at (x) {$\frac{1}{a_{t_1}}$};
    \node[right][text=red] at (y) {$\frac{1}{2 a_{t_2} \cdot \left( \frac{a_{t_1}}{a_{t_2}} - 1 \right)}$};
    \node[left][text=blue] at (a) {$\frac{1}{a_{t_2}}$};
    \node[right][text=blue] at (b) {$\frac{1}{a_{t_2}} \cdot \left( 2 - \frac{a_{t_1}}{2 a_{t_2}} \right)$};
\end{tikzpicture}
        \caption{Case 2: $(5 - \sqrt{5})/2 \le a_{t_1} / a_{t_2} \le 1.5$}
        \label{fig:sub2}
    \end{subfigure}
    \vspace{1cm} 

    \begin{subfigure}{.9\textwidth}
        \centering
            \begin{tikzpicture}
    \coordinate (x) at (0,1); 
    \coordinate (y) at (4,1); 
    \coordinate (a) at (1.5,0); 
    \coordinate (b) at (3,0); 
  \fill[red] (x) circle (2pt);
            \fill[red] (y) circle (2pt);
            \fill[blue] (a) circle (2pt);
            \fill[blue] (b) circle (2pt);
    \draw[red, thick] (x) -- (y) node[midway,above=10pt,font=\small, text=red] {supp$(\V | T = t_1)$};
    \draw[blue, thick] (a) -- (b) node[midway,below=10pt,font=\small, text=blue] {supp$(\V | T = t_2)$};

    \node[left][text=red] at (x) {$\frac{1}{a_{t_1}}$};
    \node[right][text=red] at (y) {$\frac{1}{a_{t_1}} + \left(\frac{1}{a_{t_1}} - \frac{1}{2 a_{t_2}} \right) \left(\frac{3 - \frac{a_{t_1}}{a_{t_2}}}{2 - \frac{a_{t_1}}{a_{t_2}}} \right)$};
    \node[left][text=blue] at (a) {$\frac{1}{a_{t_2}}$};
    \node[right][text=blue] at (b) {$\frac{3 - \frac{a_{t_1}}{a_{t_2}}}{2 a_{t_2} \left(2 - \frac{a_{t_1}}{a_{t_2}} \right)}$};
\end{tikzpicture}
        \caption{Case 3: $1 < a_{t_1} / a_{t_2} \le (5 - \sqrt{5})/2$}
        \label{fig:sub3}
    \end{subfigure}
    \caption{The support of $(\V,T)$ in Definition \ref{def:2types} for different values of $a_{t_1} / a_{t_2}$. The red  line shows the support of $\V \mid T = t_1$, and the blue line shows the support of $\V \mid T = t_2$. If $a_{t_1}$ and $a_{t_2}$ are sufficiently far apart (Case 1), then the supports are disjoint. When $a_{t_1}$ and $a_{t_2}$ become closer (Case 2), the supports start to overlap, and when $a_{t_1}$ and $a_{t_2}$ are sufficiently close (Case 3), the support of $\V \mid T = t_2$ is contained in the support of $\V \mid T = t_1$.}
    \label{fig:MT2types}
\end{figure}
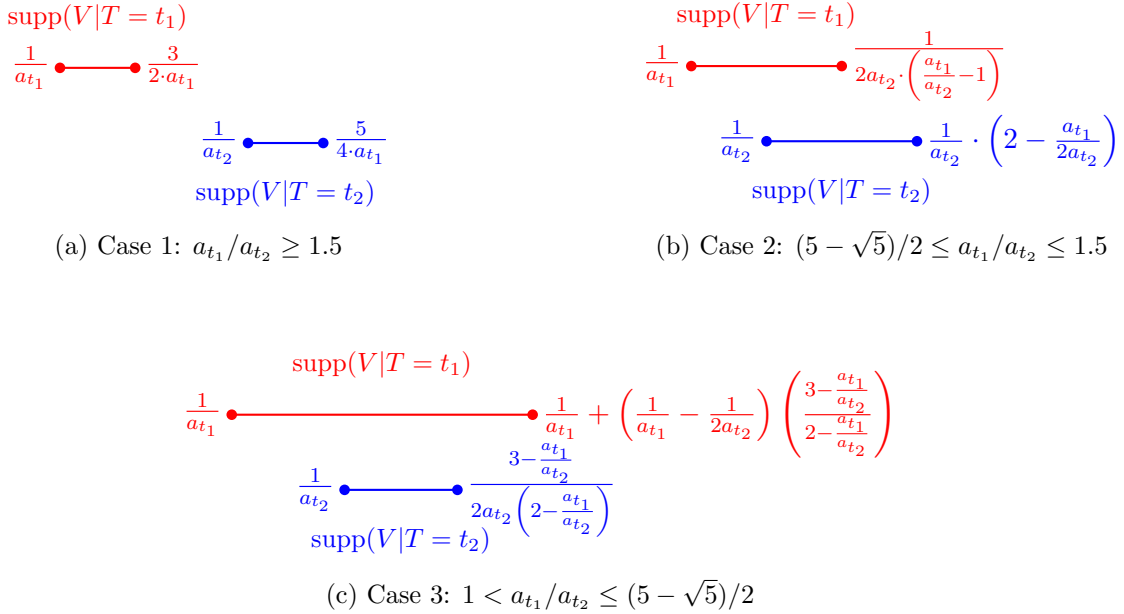

For the case of 2 arbitrary types, it is cleaner to work in the following reparametrized space $S := \left\{(\MPlatform([\wcostly, \wcheap]) - s, t) \mid t \in \mathcal{T}, u([\wcostly, \wcheap], t) = 0 \right\} \subseteq \mathbb{R} \times \left\{t_1, t_2 \right\}$ than directly over the content space $\mathbb{R}_{\ge 0}$. We map each $(v, t) \in S$ to the unique content $h(v, t) \in \mathbb{R}_{\ge 0}^2$ of the form $h(v, t) = [f_{t}(\wcheap), \wcheap]$ such that $\MPlatform([f_{t}(\wcheap), \wcheap]) = v - s$. Conceptually, $h(v,t)$ captures content with engagement $v - s$ optimized for winning type $t$. In our characterization, rather than define directly a distribution over content, we instead define a random vector $(\V, T)$ over $S$, which corresponds a distribution $W$ over content defined so $W \mid (\V, T) = (v,t)$ is a point mass at $h(v, t)$. 

We split our characterization into three cases depending on the relationship between $\text{supp}(V \mid T = t_1)$ and $\text{supp}(V \mid T = t_2)$ (see Figure \ref{fig:MT2types}). When types are well-separated, it turns out that $\text{supp}(V \mid T = t_1)$ and $\text{supp}(V \mid T = t_2)$. When types are closer together, the supports are two overlapping line segments, and when types are very close together, the support $\text{supp}(V \mid T = t_2)$ is contained in the support $\text{supp}(V \mid T = t_1)$ and $\text{supp}(V \mid T = t_2)$. We formally define the characterization as follows. 
\begin{definition}
\label{def:2types}
Let $\mathcal{T} = \left\{t_1, t_2 \right\}$ be a type space consisting of two types, let $P = 2$, suppose that gaming tricks are costless (that is, $(\nabla(c(w)))_2 = 0$ for all $w \in \mathbb{R}_{\ge 0}^2$), and suppose that $u([0, 0], t) \ge 0$ for all $t \in \mathcal{T}$. Suppose that Assumption \ref{assumption:linearity} holds with parameters $b$, $s$, and $a_{t_1} > a_{t_2} > 0$. We define the distribution $W \sim \muengagement(P, c, u, \mathcal{T})$ as follows. Let $f_t$ be defined by \eqref{eq:ft}. Below we define a random vector $(\V, T)$ over $\mathbb{R} \times \left\{t_1, t_2 \right\}$; the distribution 
$W \mid (\V, T) = (v,t)$ is a point mass at $W = [f_{t}(\wcheap), \wcheap]$ such that $\MPlatform([f_{t}(\wcheap), \wcheap]) = v - s$.

\textbf{Case 1 ($a_{t_1} / a_{t_2} \ge 1.5$):} 
We define the random vector $(\V,T)$ so 
where $\V$ has density $g$ defined to be:  
\[
g(v) :=
\begin{cases}
0 & \text{ if } v \le \frac{1}{a_{t_1}} \\
a_{t_1} & \text{ if } \frac{1}{a_{t_1}} \le v \le \frac{3}{2 a_{t_1}} \\
2 a_{t_2} & \text{ if } \frac{1}{a_{t_2}} \le v \le \frac{5}{4 a_{t_2}} \\ 
0 & \text{ if } v \ge \frac{5}{4 a_{t_2}} \\ 
\end{cases}
\]
and where $T \mid \V = v$ is distributed according to 
\[ 
\begin{cases}
   \mathbb{P}[T = t_1 \mid \V =v] = 1 & \text{ if } v \in \left[\frac{1}{a_{t_1}},  \frac{3}{2 a_{t_1}}\right] \\
   \mathbb{P}[T = t_1 \mid \V =v] = 0 & \text{ if } v \in \left[\frac{1}{a_{t_2}} ,  \frac{5}{4 a_{t_2}} \right]
\end{cases}
\]

\textbf{Case 2 ($(5 - \sqrt{5})/2 \le a_{t_1} / a_{t_2} \le 1.5$)} 
We define the random vector $(\V,T)$ so 
where $\V$ has density $g$ defined to be:  
\[
g(v) :=
\begin{cases}
0 & \text{ if } v\le \frac{1}{a_{t_1}} \\
a_{t_1} & \text{ if } \frac{1}{a_{t_1}} \le v \le \frac{1}{a_{t_2}} \\
2 a_{t_2} & \text{ if } \frac{1}{a_{t_2}} \le v \le \frac{1}{a_{t_2}} \left( 2 - \frac{a_{t_1}}{2 a_{t_2}} \right)\\ 
0 & \text{ if } v \ge \frac{1}{a_{t_2}} \left( 2 - \frac{a_{t_1}}{2 a_{t_2}} \right) \\ 
\end{cases}
\]
and where $T \mid \V = v$ is distributed according to 
\[ 
\begin{cases}
   \mathbb{P}[T = t_1 \mid \V =v] = 1 & \text{ if } v \in \left[\frac{1}{a_{t_1}}, \frac{1}{a_{t_2}} \right] \\
  \mathbb{P}[T = t_1 \mid \V = v] = \frac{a_{t_1}}{a_{t_2}} - 1  & \text{ if } v \in \left[\frac{1}{a_{t_2}}, \frac{1}{2 a_{t_2} \cdot \left(\frac{a_{t_1}}{a_{t_2}} - 1\right)} \right] \\
   \mathbb{P}[T = t_1 \mid \V =v] = 0 & \text{ if } v \in \left[\frac{1}{2 a_{t_2} \cdot \left(\frac{a_{t_1}}{a_{t_2}} - 1\right)} , \frac{1}{a_{t_2}} \left( 2 - \frac{a_{t_1}}{2 a_{t_2}} \right) \right]
\end{cases}
\]
 
\textbf{Case 3 ($1 < a_{t_1} / a_{t_2} \le (5 - \sqrt{5})/2$)} 
We define the random vector $(\V, T)$ so where $\V$ has density $g$ defined to be:  
\[
g(v) :=
\begin{cases}
0 & \text{ if } v \le \frac{1}{a_{t_1}} \\
a_{t_1} & \text{ if } \frac{1}{a_{t_1}} \le v \le \frac{1}{a_{t_2}} \\
2 a_{t_2} & \text{ if }  \frac{1}{a_{t_2}} \le v \le \frac{3 - \frac{a_{t_1}}{a_{t_2}}}{2 a_{t_2} \left(2 - \frac{a_{t_1}}{a_{t_2}} \right)} \\ 
a_{t_1} & \text{ if } \frac{3 - \frac{a_{t_1}}{a_{t_2}}}{2 a_{t_2} \left(2 - \frac{a_{t_1}}{a_{t_2}} \right)}  \le v \le \frac{1}{a_{t_1}} + \left(\frac{1}{a_{t_1}} - \frac{1}{2 a_{t_2}} \right) \left(\frac{3 - \frac{a_{t_1}}{a_{t_2}}}{2 - \frac{a_{t_1}}{a_{t_2}}} \right)  \\ 
0 & \text{ if } v \ge \frac{1}{a_{t_1}} + \left(\frac{1}{a_{t_1}} - \frac{1}{2 a_{t_2}} \right) \left(\frac{3 - \frac{a_{t_1}}{a_{t_2}}}{2 - \frac{a_{t_1}}{a_{t_2}}} \right). 
\end{cases}
\]
and where $T \mid \V = v$ is distributed according to 
\[ 
\begin{cases}
  \mathbb{P}[T = t_1 \mid \V = v] = 1 
  & \text{ if } v \in \left[\frac{1}{a_{t_1}}, \frac{1}{a_{t_2}} \right] \\
  \mathbb{P}[T = t_1 \mid \V = v] = \frac{a_{t_1}}{a_{t_2}} - 1 
  & \text{ if } v \in \left[\frac{1}{a_{t_2}}, \frac{3 - \frac{a_{t_1}}{a_{t_2}}}{2 a_{t_2} \left(2 - \frac{a_{t_1}}{a_{t_2}} \right)}\right] \\
  \mathbb{P}[T = t_1 \mid \V = v] = 1 
  & \text{ if } v \in \left[\frac{3 - \frac{a_{t_1}}{a_{t_2}}}{2 a_{t_2} \left(2 - \frac{a_{t_1}}{a_{t_2}} \right)}, \frac{1}{a_{t_1}} + \left(\frac{1}{a_{t_1}} - \frac{1}{2 a_{t_2}} \right) \left(\frac{3 - \frac{a_{t_1}}{a_{t_2}}}{2 - \frac{a_{t_1}}{a_{t_2}}} \right) \right]
\end{cases}
\]
\end{definition}

For the case of 2 types, we show that $\muengagement(P, c, u, \mathcal{T}, \MPlatform)$ is a symmetric mixed equilibrium.
\begin{theorem}
\label{thm:2types}    
Let $\mathcal{T} = \left\{t_1, t_2 \right\}$ be a type space with 2 types, let $P = 2$, 
suppose that gaming tricks are costless (that is, $(\nabla(c(w)))_2 = 0$ for all $w \in \mathbb{R}_{\ge 0}^2$) and suppose that $u([0, 0], t) \ge 0$ for all $t \in \mathcal{T}$. Suppose that Assumption \ref{assumption:linearity} holds with coefficients $a_{t_1} > a_{t_2} > 0$. Let  $\muengagement(P, c, u, \mathcal{T})$ be defined according to Definition \ref{def:2types}. Then  $\muengagement(P, c, u, \mathcal{T})$ is a symmetric mixed equilibrium in the game with $M = \MPlatform$. 
\end{theorem}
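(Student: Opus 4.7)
}
The plan is to verify directly that the proposed symmetric strategy $\muengagement(2, c, u, \mathcal{T})$ is a mutual best response, handling each of the three cases of Definition \ref{def:2types} in turn. First I will reparametrize so that a content choice $h(v, t)$ has production cost $\max(0, a_t v - 1)$ by Assumption \ref{assumption:linearity} (applied with $v$ playing the role of $m + s$). Since gaming is costless and $u([0,0], t) \ge 0$ for all $t$, any $v \ge 1/a_t$ is feasible for either target $t$, and opting out corresponds to $v = 0$ with zero cost. Crucially, because higher types are more tolerant, content $h(v, t_1)$ is consumable by both types, while $h(v, t_2)$ is only consumable by type $t_2$. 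This asymmetry is what drives the structure of the characterization.

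Next, for each case, I will compute the expected utility of a deviation to $(v, t)$ when the opponent draws $(V, T) \sim \muengagement$. The win probability decomposes as
\[
\tfrac{1}{2} \cdot \mathbb{P}[\text{win against user of type } t_1] + \tfrac{1}{2} \cdot \mathbb{P}[\text{win against user of type } t_2],
\]
where, if $t = t_1$, winning type $t_1$ requires the opponent's content to be unconsumable by $t_1$ (i.e.\ $T = t_2$) or the opponent to play a weakly lower $V$ on the $T = t_1$ branch, and winning type $t_2$ reduces to $V \le v$; if $t = t_2$, the $t_1$ user is automatically lost and winning the $t_2$ user again reduces to $V \le v$. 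Subtracting $\max(0, a_t v - 1)$ gives expected utility. I will show this quantity is constant on $\mathrm{supp}(\muengagement)$ and weakly smaller elsewhere. The densities $g$ and conditional type probabilities in Definition \ref{def:2types} are reverse-engineered precisely so that both types offer the same marginal return at each $v$ in the support: for example in Case 1 I expect the equilibrium payoff to be exactly $1/4$ for both $(v, t_1)$ on $[1/a_{t_1}, 3/(2 a_{t_1})]$ and $(v, t_2)$ on $[1/a_{t_2}, 5/(4 a_{t_2})]$, and the marginal type mixture $\mathbb{P}[T = t_1 \mid V = v] = a_{t_1}/a_{t_2} - 1$ in Cases 2 and 3 is the unique value making the marginal benefit of raising $v$ on the $t_1$ branch equal $a_{t_1}$ and on the $t_2$ branch equal $a_{t_2}$ simultaneously.

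The main obstacle will be verifying optimality at deviation points in Cases 2 and 3, where the two type supports overlap and the density $g$ is piecewise defined. I expect the argument to proceed by differentiating the expected utility in $v$ piece by piece and checking that the slope equals the cost slope $a_t$ on the support and is weakly smaller just outside the support; the case boundaries $a_{t_1}/a_{t_2} = 3/2$ and $a_{t_1}/a_{t_2} = (5 - \sqrt{5})/2$ are exactly the thresholds at which the conditional density or type probability would otherwise turn negative, forcing the reconfiguration into Case 2 or Case 3 (in particular, the extra third piece of $\mathrm{supp}(V \mid T = t_1)$ appearing in Case 3). I will also check the three kinds of unilateral deviations: (a) opting out with $v = 0$ yields utility $0$, so I need the equilibrium payoff to be nonnegative in every case; (b) raising $v$ above the top of type $t$'s support increases cost without further increasing win probability; (c) playing $v$ in a gap of the support yields a weakly worse utility since the marginal return in the gap is strictly below $a_t$ by construction of $g$.

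Finally, the endpoints of each interval in Figure \ref{fig:MT2types} must be consistent: the total mass under $g$ should equal $1$, the conditional marginals $\mathbb{P}[T = t_i \mid V = v]$ should give valid probabilities, and each piece of the support should lie in $[1/a_{t_i}, \infty)$ so that costs remain in the linear regime $a_t v - 1$. These are routine linear-arithmetic checks in $a_{t_1}/a_{t_2}$, and I expect them to close out the proof after the indifference and deviation computations above.
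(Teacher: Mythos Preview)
Your proposal is correct and follows essentially the same route as the paper: reparametrize to the $(v,t)$ space via $h$, compute the deviator's expected payoff piecewise for each case of Definition~\ref{def:2types}, and verify that it is constant on the support and no larger off it (the paper packages the Case~2 and Case~3 computations as Lemmas~\ref{lemma:case2}--\ref{lemma:case3}, with equilibrium value $\tfrac{a_{t_1}}{2a_{t_2}}-\tfrac{1}{2}$, and handles Case~1 via Theorem~\ref{thm:Ntypes}).

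Two small corrections. First, the reduction to deviations of the form $h(v,t)$ needs an explicit argument: the paper supplies this as Lemma~\ref{lemma:bestresponsestronger}, whose content is that since gaming is free one may always push $\wcheap$ up until hitting some $\mathcal{C}_t$ without losing any user or paying more. Second, ``opting out with $v=0$ yields utility $0$'' is not quite right: a zero-cost deviation (e.g.\ $w=[0,0]$) still wins the type-$t_1$ user whenever the opponent targets $t_2$, so its payoff is $\tfrac{1}{2}\mathbb{P}[T=t_2]$, which equals $\tfrac{1}{4}$ in Case~1 and $\tfrac{a_{t_1}}{2a_{t_2}}-\tfrac{1}{2}$ in Cases~2--3 --- i.e.\ exactly the equilibrium payoff, not zero. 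This does not break your argument (the deviation is still not \emph{profitable}), but the check you actually need is that zero-cost play matches, rather than falls below, the support payoff.
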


\section{Discussion}

In this work, we study content creator competition for engagement-based recommendations that reward both quality and gaming tricks (e.g. clickbait). Our model further captures that a user only tolerates gaming tricks in sufficiently high-quality content, which also shapes content creator incentives. Our first result (Theorem \ref{thm:positivecorrelationhomogeneous}) suggests that gaming and quality are complements for the content creators, which we empirically validate on a Twitter dataset. We then analyze the downstream performance of engagement-based optimization at equilibrium. We show that 
higher gaming costs can lead to lower average consumption of quality (Theorem \ref{thm:comparisonuserconsumptiongaming}), engagement-based optimization can be suboptimal even in terms of (realized) engagement (Theorem \ref{thm:comparisonengagement}), the user welfare of engagement-based optimization can fall below that of random recommendations (Theorem \ref{thm:comparisonuserutility}).

More broadly, our results illustrate how content creator incentives can influence the downstream impact of a content recommender system, which poses challenges when evaluating a platform's metric. In particular, there is a disconnect between how a platform's engagement metric behaves on a fixed content landscape and how the same metric behaves on an endogeneous content landscape shaped by the metric. Interestingly, this disconnect manifests in both performance measures relevant to the platform and performance measures relevant to society as a whole. We hope that our work encourages future evaluations of recommendation policies--- both of platform metrics and societal impacts---to carefully account for content creator incentives. 

\section{Acknowledgments}

We thank Nikhil Garg and Smitha Milli for useful comments on this paper. 

\bibliographystyle{plainnat}

\bibliography{bib.bib}

\newpage 
\appendix

\section{Auxiliary definitions and lemmas}\label{appendix:auxlemmas}

In our analysis of equilibria, it will be helpful to work with several quantities. We first define $\Ct$ to be the set of content that achieves $0$ utility for that type. That is:
\begin{equation}
\label{eq:Ct}
    \mathcal{C}_t := \left\{[\wcostly, \wcheap] \mid u([\wcostly, \wcheap], t) = 0 \right\}.
\end{equation}
We also define an augmented version of these sets that also includes content with $\wcostly = 0$ that achieving positive utility. 
\begin{equation}
\label{eq:caugt}
  \Caugt := \left\{[\wcostly, \wcheap] \mid u([\wcostly, \wcheap], t) = 0 \right\} \cup \left\{[0, \wcheap] \mid \wcheap \in [0, \min_{w' \mid u(w', t) = 0} \wcheap')] \right\},   
\end{equation}

The set $\Caugt$ turns out to be closely related to the function $f_t$ defined in \eqref{eq:ft}. 
\begin{lemma}
\label{lemma:one-to-one}
The set $\Caugt$ can be written as: 
\[\Caugt = \left\{ [f_t(\wcheap), \wcheap] \mid \wcheap \ge 0 \right\} \]
where $f_t$ is defined by \eqref{eq:ft}. 
\end{lemma}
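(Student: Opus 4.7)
The plan is to show both set inclusions by unpacking the definition of $f_t$ using the strict monotonicity and continuity of $u(\cdot, t)$ in each coordinate. The key preliminary fact to establish is the existence of a threshold $\wcheap^0 \ge 0$ characterizing when the utility constraint binds.

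First I would define $\wcheap^0$ as follows: if $u([0,0], t) \le 0$, set $\wcheap^0 := 0$; otherwise, since $u([0, \wcheap], t)$ is continuous and strictly decreasing in $\wcheap$ and approaches $-\infty$, the intermediate value theorem gives a unique $\wcheap^0 > 0$ with $u([0, \wcheap^0], t) = 0$. For $\wcheap \le \wcheap^0$ we have $u([0, \wcheap], t) \ge 0$, so trivially $f_t(\wcheap) = 0$. For $\wcheap > \wcheap^0$ we have $u([0, \wcheap], t) < 0$, and since $u([\wcostly, \wcheap], t) \to \infty$ as $\wcostly \to \infty$ with strict monotonicity in $\wcostly$, IVT again produces a unique $\wcostly^*(\wcheap) > 0$ with $u([\wcostly^*(\wcheap), \wcheap], t) = 0$; this infimum is attained, so $f_t(\wcheap) = \wcostly^*(\wcheap)$.

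Next I would use this characterization to verify both inclusions. For one direction, take $w = [\wcostly, \wcheap] \in \Caugt$. If $u(w, t) = 0$, then by strict monotonicity in $\wcostly$ one has $u([\wcostly', \wcheap], t) < 0$ for all $\wcostly' < \wcostly$, so $f_t(\wcheap) = \wcostly$ and $w = [f_t(\wcheap), \wcheap]$. Otherwise $\wcostly = 0$ and $\wcheap \le \wcheap^0$, which by the above gives $f_t(\wcheap) = 0 = \wcostly$. For the reverse direction, for any $\wcheap \ge 0$ split into the two cases: if $\wcheap \le \wcheap^0$ then $f_t(\wcheap) = 0$ and $[0, \wcheap]$ lies in the ``opt-out'' portion of $\Caugt$ (noting that $\wcheap^0$ coincides with $\min_{w' \mid u(w', t) = 0} \wcheap'$ by construction); if $\wcheap > \wcheap^0$ then $u([f_t(\wcheap), \wcheap], t) = 0$ so $[f_t(\wcheap), \wcheap]$ lies in the zero-utility curve.

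There is no substantive obstacle; the main care is in handling the boundary value $\wcheap = \wcheap^0$ consistently so that the point $[0, \wcheap^0]$ is picked up correctly by both descriptions, and in verifying that the ``min'' in the definition of $\Caugt$ matches the threshold $\wcheap^0$ produced by the IVT argument. Once the piecewise description of $f_t$ is in hand, the double inclusion reduces to a short case analysis.
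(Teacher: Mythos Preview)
Your proposal is correct and follows essentially the same route as the paper's proof: both arguments reduce to showing that $f_t(\wcheap)$ is either zero (when $u([0,\wcheap],t)\ge 0$) or the unique root of $u([\cdot,\wcheap],t)=0$ (otherwise), and then do the two inclusions by casing on which regime $\wcheap$ lies in. The paper's write-up is slightly more condensed---it cases directly on whether $u([f_t(\wcheap),\wcheap],t)$ equals $0$ or is strictly positive rather than introducing an explicit threshold $\wcheap^0$---but the content is the same.

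One small slip to patch: when $u([0,0],t) < 0$ you set $\wcheap^0 = 0$ and then assert that for $\wcheap \le \wcheap^0$ one has $u([0,\wcheap],t) \ge 0$, which is false at $\wcheap = 0$ in that subcase. In this regime the opt-out portion of $\Caugt$ is actually empty (since $\min_{w':u(w',t)=0} \wcheap' = 0$, making the half-open interval $[0,0)$ empty), and $f_t(0)>0$ is produced by the same IVT argument you already use for $\wcheap > \wcheap^0$. So the fix is just to treat $u([0,0],t)<0$ as its own subcase; the rest of your argument goes through unchanged.
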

\begin{proof}
First, we show that $\Caugt \subseteq \left\{ (f_t(\wcheap), \wcheap) \mid \wcheap \ge 0 \right\}$. If $w \in \Caugt$, then either $u(w, t) = 0$ or $\wcostly = 0$ and $\wcheap \in [0, \min_{w' \mid u(w', t) = 0} \wcheap')]$. If $u(w, t) = 0$, since investing in quality is costly, it must hold that $\wcostly = f_t(\wcheap)$. Next, suppose that  $\wcheap \in [0, \min_{w' \mid u(w', t) = 0} \wcheap')]$. We observe that $\min_{w' \mid u(w', t) = 0} \wcheap'$ is the unique value of $\wcheap'$ such that $u([0, \wcheap'], t) = 0$. This implies that $u([0, \wcheap], t) \ge 0$, so $f_t(\wcheap) = 0$ as desired. 

Next, we show that $\left\{ [f_t(\wcheap), \wcheap] \mid \wcheap \ge 0 \right\} \subseteq \Caugt$. Let $w = [f_t(\wcheap), \wcheap]$ for some $\wcheap \ge 0$. If $u(w, t) = 0$, then $w \in \Caugt$ as desired. If $u(w,t) > 0$, then it must hold that $\wcostly = 0$ (otherwise, it would be possible to lower $\wcostly$ while keeping utility nonnegative, which contradicts the fact that $f_t(\wcheap) = \wcostly$), so $w \in \Caugt$.  
\end{proof}
We prove that the function $f_t$ is weakly increasing. 
\begin{lemma}
\label{lemma:increasing}
The function $f_t$ as defined in \eqref{eq:ft} is weakly increasing. Moreover, the function $\MPlatform([f_t(\wcheap), \wcheap])$ is strictly increasing in $\wcheap$. 
\end{lemma}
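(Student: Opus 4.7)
The plan is to prove the two assertions of the lemma essentially from the monotonicity axioms on $u$ and $\MPlatform$, using only the definition $f_t(\wcheap) = \inf\{\wcostly \ge 0 : u([\wcostly,\wcheap],t) \ge 0\}$. No calculus or implicit function theorem is needed; the argument is purely order-theoretic.

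For weak monotonicity of $f_t$, I will fix $\wcheap' > \wcheap \ge 0$ and show that every feasible $\wcostly$ at level $\wcheap'$ is also feasible at level $\wcheap$. Concretely, if $\wcostly \ge 0$ satisfies $u([\wcostly,\wcheap'],t) \ge 0$, then since $u$ is strictly decreasing in its cheap coordinate, $u([\wcostly,\wcheap],t) > u([\wcostly,\wcheap'],t) \ge 0$, so $\wcostly$ lies in the set whose infimum defines $f_t(\wcheap)$. Taking the infimum over all such $\wcostly$ yields $f_t(\wcheap') \ge f_t(\wcheap)$.

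For strict monotonicity of $\wcheap \mapsto \MPlatform([f_t(\wcheap), \wcheap])$, I will do a two-case split based on whether $f_t$ strictly increases or stays constant. Fix $\wcheap' > \wcheap \ge 0$ and set $w = [f_t(\wcheap),\wcheap]$, $w' = [f_t(\wcheap'),\wcheap']$. By the previous paragraph, $f_t(\wcheap') \ge f_t(\wcheap)$, so both coordinates of $w'$ are at least the corresponding coordinates of $w$, and the second coordinate is strictly larger. In the case $f_t(\wcheap') > f_t(\wcheap)$, both coordinates strictly increase, and since $(\nabla \MPlatform)_1, (\nabla \MPlatform)_2 > 0$ everywhere (which by continuous differentiability implies strict monotonicity of $\MPlatform$ in each argument separately), $\MPlatform(w') > \MPlatform(w)$. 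In the case $f_t(\wcheap') = f_t(\wcheap)$, only the second coordinate strictly increases, and strict monotonicity of $\MPlatform$ in its second coordinate again gives $\MPlatform(w') > \MPlatform(w)$.

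The only subtlety I anticipate is the step going from positive partials of $\MPlatform$ to strict monotonicity in each variable along a line where the other is held fixed; I will justify this by a one-variable mean value argument applied to $s \mapsto \MPlatform([\wcostly, s])$ (or the analog in the first coordinate), which is $C^1$ with strictly positive derivative by the hypotheses on $\MPlatform$. This is routine and not really an obstacle. There is no delicate interplay with the cost function or with the relative cost-benefit inequality; the lemma is purely about $f_t$ and $\MPlatform$, so the proof is short and the presentation above is essentially the full argument.
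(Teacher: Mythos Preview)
Your proposal is correct and follows essentially the same approach as the paper's proof: the paper also uses that $u$ is strictly decreasing in the cheap coordinate to show $f_t$ is weakly increasing, and then combines weak monotonicity of $f_t$ with strict monotonicity of $\MPlatform$ in each argument to conclude the second claim. Your version is slightly more explicit (arguing via containment of feasible sets rather than plugging in $f_t(\wcheap')$ directly, and spelling out the case split for the second part), but the underlying ideas are identical.
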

\begin{proof}
Suppose that $\wcheap^1 \ge \wcheap^2$. We claim that $f_t(\wcheap^1) \ge f_t(\wcheap^2)$. To see this, note that 
\[u([f_t(\wcheap^1), \wcheap^2], t) > u([f_t(\wcheap^1), \wcheap^1], t) \ge 0,\]
which proves the first statement.

To see that $\MPlatform([f_t(\wcheap), \wcheap])$ is increasing, note that $f_t$ is a weakly increasing function (see Lemma \ref{lemma:increasing}) and that $\MPlatform$ is strictly increasing in both of its arguments. 
\end{proof}

We next show the following properties of the optima of \eqref{eq:inducedcost}. 
\begin{lemma}
\label{lemma:uniqueness}
The optimization program $\inf_{w \in \mathbb{R}_{\ge 0}^2} c(w) \text{ s.t. } u(w, t) \ge 0, \MPlatform(w) \ge m$ satisfies the following properties:
\begin{enumerate}[leftmargin=*]
\item For any $m \in \left\{\MPlatform(w) \mid w \in \Caugt \right\}$, the optimization program is feasible and any optimum $w^*$ satisfies $w^* \in \Caugt$. 
\item If $m \in \left\{\MPlatform(w) \mid w \in \Caugt \right\}$ and $\CEngagement_t(m) > 0$, the optimization program has a unique optimum $w^*$ and moreover $\MPlatform(w^*) = m$. 
\end{enumerate}
\end{lemma}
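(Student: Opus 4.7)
The plan is to split the argument into (i) feasibility plus existence of a minimizer, (ii) structural constraint that any minimizer sits on $\Caugt$, and (iii) for Part 2, uniqueness and $\MPlatform(w^*) = m$ via the one-dimensional parameterization of $\Caugt$.

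For feasibility in Part 1, the hypothesis $m \in \{\MPlatform(w) \mid w \in \Caugt\}$ directly supplies a point $w_0 \in \Caugt$ with $\MPlatform(w_0) = m$, and $w_0$ satisfies $u(w_0, t) \ge 0$ by definition of $\Caugt$. For existence, I would take a minimizing sequence $w^n$ with $c(w^n) \le c(w_0)$. Since $c([\wcostly, \wcheap]) \ge c([\wcostly, 0])$ (gaming costs are non-negative) and $c([\cdot, 0]) \to \infty$, the coordinates $\wcostly^n$ are bounded; then since $u \to -\infty$ as $\wcheap \to \infty$ at bounded $\wcostly$, the $\wcheap^n$ are also bounded. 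Extracting a convergent subsequence and using continuity of $c$, $u$, $\MPlatform$ yields a minimizer $w^*$.

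To show $w^* \in \Caugt$, I argue by contradiction. Because $w^*$ is feasible, $u(w^*, t) \ge 0$, and the definition of $\Caugt$ forces $w^* \notin \Caugt$ to mean $u(w^*, t) > 0$ and $\wcostly^* > 0$. I then split on whether the $\MPlatform$ constraint is slack or tight. If $\MPlatform(w^*) > m$, decreasing $\wcostly^*$ by a small $\tau > 0$ strictly decreases $c$ (since $(\nabla c)_1 > 0$) while $u$ and $\MPlatform$ stay nonnegative/feasible by their strict slack, contradicting optimality. If $\MPlatform(w^*) = m$, I use the implicit function theorem to construct a curve $\wcostly^*(\tau) = \wcostly^* - \tau$, $\wcheap^*(\tau)$ along the level set $\{\MPlatform = m\}$, with $\eta'(0) = (\nabla \MPlatform)_1 / (\nabla \MPlatform)_2$. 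The first-order change in cost is $\tau\bigl((\nabla c)_2(\nabla \MPlatform)_1/(\nabla \MPlatform)_2 - (\nabla c)_1\bigr)$, which is strictly negative by the standing assumption $(\nabla c)_2/(\nabla c)_1 < (\nabla \MPlatform)_2 / (\nabla \MPlatform)_1$. Since $u(w^*, t) > 0$ gives slack in the utility constraint, the perturbation remains feasible for small $\tau$, once again contradicting optimality.

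For Part 2, I would apply Lemma \ref{lemma:one-to-one} to write $\Caugt = \{[f_t(\wcheap), \wcheap] : \wcheap \ge 0\}$ and Lemma \ref{lemma:increasing} to note that $\wcheap \mapsto \MPlatform([f_t(\wcheap), \wcheap])$ is strictly increasing. Thus there is a unique $\wcheap_0$ with $\MPlatform([f_t(\wcheap_0), \wcheap_0]) = m$, and the feasible portion of $\Caugt$ is the tail $\wcheap \ge \wcheap_0$. By Part 1 the problem reduces to minimizing $\CCheap_t(\wcheap) = c([f_t(\wcheap), \wcheap])$ on $[\wcheap_0, \infty)$. The final step is to verify strict monotonicity of $\CCheap_t$ on this tail. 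If gaming is costly, strict monotonicity follows from $(\nabla c)_2 > 0$ combined with $f_t$ weakly increasing. If gaming is costless, then $\CCheap_t(\wcheap) = c([f_t(\wcheap), 0])$ and strictness reduces to $f_t$ being strictly increasing on $[\wcheap_0, \infty)$; here the hypothesis $\CEngagement_t(m) > 0$ is used to rule out $f_t$ being flat at zero on this range, because otherwise $[0, \wcheap_0]$ would be feasible with $c = 0$, forcing $\CEngagement_t(m) = 0$. Strict monotonicity then yields that the unique minimizer is at $\wcheap = \wcheap_0$, giving both $\MPlatform(w^*) = m$ and uniqueness.

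The main obstacle is the level-set perturbation in Part 1 Case (ii): one must invoke the implicit function theorem to stay exactly on $\{\MPlatform = m\}$ (as opposed to only to first order), verify non-negativity of coordinates and the utility constraint along the perturbation, and obtain a strict first-order decrease of $c$ from the relative-slope assumption. Everything else reduces to direct monotonicity computations once Lemmas \ref{lemma:one-to-one}–\ref{lemma:increasing} are in hand.
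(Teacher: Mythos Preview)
Your approach is correct and, for Part~1, essentially matches the paper's. The heart of both arguments is the same perturbation: decrease $\wcostly$ and increase $\wcheap$ along the level set $\{\MPlatform = m\}$, and use the standing assumption $(\nabla c)_2/(\nabla c)_1 < (\nabla \MPlatform)_2/(\nabla \MPlatform)_1$ to obtain a strict first-order decrease in cost; the paper records this as its property (P2). Your use of the implicit function theorem to stay \emph{exactly} on the level set is in fact slightly more careful than the paper's first-order treatment along the direction normal to $\nabla \MPlatform$.

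For Part~2 the organization differs. The paper proves an additional property (P1) --- that $\MPlatform(w^*)=m$ whenever both coordinates of $w^*$ are strictly positive, via a separate perturbation normal to $\nabla u$ with both coordinates decreasing --- and then obtains uniqueness from the fact that at most one point of $\Caugt$ has a given positive cost. You instead reduce (via Part~1 and Lemmas~\ref{lemma:one-to-one}--\ref{lemma:increasing}) to minimizing $\CCheap_t(\wcheap)$ over $[\wcheap_0,\infty)$ and establish strict monotonicity of $\CCheap_t$ there, which bypasses (P1) entirely and is arguably more direct. One small point to make explicit in the costless-gaming case: note that $f_t$ cannot be flat at a \emph{positive} value (if $f_t(\wcheap_1)=f_t(\wcheap_2)=a>0$ with $\wcheap_1<\wcheap_2$, then $0=u([a,\wcheap_1],t)>u([a,\wcheap_2],t)=0$ by strict monotonicity of $u$ in $\wcheap$, a contradiction), so ruling out flatness at zero is indeed sufficient for strict monotonicity of $\CCheap_t$ on the tail.
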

\begin{proof}
Suppose that $m \in \left\{\MPlatform(w) \mid w \in \Caugt \right\}$. 

First, we show that the optimization program is feasible. Suppose that $w$ is such that $\MPlatform(w) = m$. Using the fact that $u([\wcostly', \wcheap], t)$ approaches $\infty$ as $\wcostly' \rightarrow \infty$, we see that there exists $\wcostly' \ge \wcostly$ such that $\MPlatform([\wcostly', \wcheap]) \ge \MPlatform(w) = m$ and $u([\wcostly', \wcheap], t) \ge 0$, as desired. 

Next, we show that there exists $w \in \mathbb{R}_{\ge 0}^2$ such that $u(w,t) \ge 0$, $\MPlatform(w) \ge m$, and $c(w) = \CEngagement_t(m)$. To make the domain compact, observe that there exists $w' \in \mathbb{R}_{\ge 0}^2$ such that $\MPlatform(w') = m$ by assumption, which means that $\CEngagement_t(m) \le c(w')$. The set
\begin{align*}
 &\left\{w \in \mathbb{R}_{\ge 0}^2 \mid c(w) \le c(w'), u(w, t) \ge 0, \MPlatform(w) \ge m \right\} \\
 &= \left\{w \in \mathbb{R}_{\ge 0}^2 \mid \MPlatform(w) \ge m \right\} \cap \left\{w \in \mathbb{R}_{\ge 0}^2 \mid u(w,t) \ge 0 \right\} \cap c^{-1} \left([0, c(w')]\right) .
\end{align*}
The first two terms are closed, and the last term is compact (because the preimage of a continuous function of a compact set is compact). This means that the intersection is compact. Now, we use the fact that the $\inf$ of a continuous function over compact set is achievable. 

Let $w^*$ be an optima. We show the following two properties: 
\begin{enumerate}[label=(P\arabic*)]
    \item If $\wcostly^*, \wcheap^* > 0$, then $\MPlatform(w^*) = m$. 
    \item If $\wcostly^* > 0$, then $u(w^*, t) = 0$. 
\end{enumerate}

First, we show (P1). Assume for sake of contradiction that $\MPlatform(w) > m$. Let $d$ be the direction normal to $\nabla u(w)$ where the costly coordinate is negative and the cheap coordinate is negative. 
We see that 
\begin{align*}
 \langle d, \nabla \MPlatform(w^*) \rangle &= -|d_1| (\nabla \MPlatform(w^*))_1 - |d_2| (\nabla \MPlatform(w^*))_2 < 0 \\
  \langle d, \nabla c(w^*) \rangle &= -|d_1| (\nabla c(w^*))_1 - |d_2| (\nabla c(w^*))_2 < 0  \\
  \langle d, \nabla u(w^*) \rangle &= 0.
\end{align*}
This proves there exists $\epsilon > 0$ such that $w' = w + \epsilon d$ satisfies $\MPlatform(w') \ge m$, $u(w', t) \ge 0$, and $c(w') < c(w^*)$, which is a contradiction. 

Next, we show (P2). Assume for sake of contradiction that $u(w^*, t) > 0$. Let $d$ be the normal direction to $\nabla \MPlatform(w^*)$ where the costly coordinate is negative and the cheap coordinate is positive. We see that 
\begin{align*}
 \langle d, \nabla u(w^*, t) \rangle &= -|d_1| (\nabla u(w))_1 + d_2 (\nabla u(w))_2 < 0 \\
  \langle d, \nabla \MPlatform(w^*) \rangle &= 0,
\end{align*}
Moreover, we can see that $\langle d, \nabla c(w^*) \rangle = -|d_1| (\nabla c(w))_1 + d_2 (\nabla c(w))_2 < 0$, since this can be written as:
\[\frac{(\nabla c(w))_1}{(\nabla c(w))_2} > \frac{|d_2|}{|d_1|} = \frac{(\nabla \MPlatform(w))_1}{(\nabla \MPlatform(w))_2},\]
which holds by assumption. This proves there exists $\epsilon > 0$ such that $w' = w + \epsilon d$ satisfies $\MPlatform(w') \ge m$, $u(w', t) \ge 0$, and $c(w') < c(w^*)$, which is a contradiction. 

We now show that $w^* \in \Caugt$. First, suppose that $\wcostly^* = 0$. Then, using the fact that $u(w^*, t) \ge 0$, we see that $f_t(\wcheap^*) = 0 = \wcostly*$, so by Lemma \ref{lemma:one-to-one}, $w^* \in \Caugt$. Next, suppose that $\wcostly^* > 0$. Then we see that $u(w^*, t) = 0$ by (P2), so $w^* \in \Caugt$. 

For the remainder of the analysis, we assume that $c(w^*) = \CEngagement_t(m) > 0$. 

If gaming is costless ($(\nabla(c(w)))_2 = 0$ for all $w$) and $c(w^*) > 0$, then it must hold that $\wcostly^* > 0$. This implies that $u(w^*, t) = 0$. This means that there is a unique value $w \in \Caugt$ such that $c(w) = \CEngagement_t(m)$, so this implies that $w^*$ is the unique optima. If $\wcheap^* > 0$, then we can apply (P1) to see that $\MPlatform(w^*) = m$. If $\wcheap^* = 0$, the fact that $[0, \wcostly^*] \in \Caugt$ implies that $\MPlatform(w^*) = \inf_{w \in \Caugt} \MPlatform(w)$. By the assumption that $m \in \left\{\MPlatform(w) \mid w \in \Caugt \right\}$, this means that $m = \MPlatform(w^*)$ as desired.

If gaming is costly ($(\nabla(c(w)))_2 = 0$ for all $w$) and $\CEngagement_t(m) > 0$, then there is a unique value $w \in \Caugt$ such that $c(w) = c(w^*)$, which shows there is a unique optima. If $\wcheap^* > 0$ and $\wcostly^* > 0$, then (P1) implies that $m = \MPlatform(w^*)$. If $\wcheap^* = 0$, then the fact that $[0, \wcostly^*] \in \Caugt$ implies that $\MPlatform(w^*) = \inf_{w \in \Caugt} \MPlatform(w)$. Finally, suppose that $\wcostly^* = 0$. Assume for sake of contradiction that 
$\MPlatform([0, \wcheap^*]) > m$. Then there exists $\wcheap < \wcheap^*$ such that $\MPlatform([0, \wcheap]) \ge m$, $c([0, \wcheap']) < c([0, \wcheap^*])$, and $u([0, \wcheap'], t) \ge u([0, \wcheap^*], t)$, which would mean that $w^*$ is not an optima, which is a contradiction. 
\end{proof}

We next prove the following properties of the equilibrium characterizations for Example \ref{example:linear} in the case of homogeneous users. First, we analyze the marginal distribution of quality of the symmetric mixed equilibrium for engagement-based optimization. 
\begin{proposition}
\label{prop:equilibriumhomogenousengagement}

Consider \Cref{example:linear} with sufficiently high baseline utility $\alpha > -1$, bounded gaming costs $\gamma \in [0, 1)$, and homogeneous users ($\mathcal{T} = \left\{t\right\}$). LThe distribution $(\Wcostly, \Wcheap) \sim \muengagement(P, c, u, \mathcal{T})$ (where $\muengagement(P, c, u, \mathcal{T})$ is specified as in Theorem \ref{thm:investmentbased}) satisfies:
\[\mathbb{P}[\Wcostly \le \wcostly] = 
\begin{cases}
(-\alpha)^{1/(P-1)} & \text{ if }  0 \le \wcostly \le -\alpha \\
  \left(\min(1, \wcostly + \gamma \cdot t \cdot (\wcostly + \alpha))\right)^{1/(P-1)} & \text{ if } \wcostly \ge \max(0, -\alpha).
\end{cases}
\]
\end{proposition}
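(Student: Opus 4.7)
}

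The plan is to invoke Theorem~\ref{thm:onetype}, which tells us that $\muengagement(P,c,u,\mathcal{T})$ is exactly the distribution specified by Definition~\ref{def:homogeneous}, and then simply translate this specification into a statement about the marginal of $\Wcostly$ by substituting the explicit expressions for $f_t$ and $\CCheap_t$ available for Example~\ref{example:linear}. From the continued version of Example~\ref{example:linear} in Section~\ref{sec:equilibriumcharacterizations}, I have the closed forms
\[
 f_t(\wcheap) = \max(0,\wcheap/t - \alpha), \qquad
 \CCheap_t(\wcheap) = \begin{cases} \wcheap(\gamma+1/t)-\alpha & \text{if } \wcheap > \max(0,t\alpha),\\ \wcheap\,\gamma & \text{if } \wcheap \le t\alpha, \end{cases}
\]
and Definition~\ref{def:homogeneous} gives $\mathbb{P}[\Wcheap\le\wcheap]=(\min(1,\CCheap_t(\wcheap)))^{1/(P-1)}$ together with the fact that $\Wcostly = f_t(\Wcheap)$ whenever $\Wcheap>0$ and $\Wcostly=0$ whenever $\Wcheap=0$.

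Next, I would split into cases according to the sign of $\alpha$. Assume first that $\alpha<0$ (this is where the first branch of the target formula is nontrivial, since $\max(0,-\alpha)=-\alpha>0$). Note $t\alpha<0$, so $\CCheap_t$ is given everywhere on $\wcheap\ge 0$ by the first piece $\wcheap(\gamma+1/t)-\alpha$; in particular $\CCheap_t(0)=-\alpha\in(0,1)$ because $\alpha>-1$. For $0\le\wcostly\le-\alpha$, the only way to have $\Wcostly\le\wcostly$ is to land on the atom at $\Wcheap=0$ (since $\Wcheap>0$ forces $\Wcostly=f_t(\Wcheap)\ge f_t(0^+)=-\alpha$, using $\alpha<0$), whose probability is $(\min(1,\CCheap_t(0)))^{1/(P-1)}=(-\alpha)^{1/(P-1)}$, giving the first branch of the claim. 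For $\wcostly\ge-\alpha$, I use that $f_t$ is strictly increasing on $\{\wcheap\ge 0\}$ and $f_t(\wcheap)\le\wcostly \iff \wcheap\le t(\wcostly+\alpha)$; plugging $\wcheap=t(\wcostly+\alpha)$ into $\CCheap_t$ yields $\wcostly+\gamma t(\wcostly+\alpha)$, so
\[
 \mathbb{P}[\Wcostly\le\wcostly] = \bigl(\min(1,\wcostly+\gamma t(\wcostly+\alpha))\bigr)^{1/(P-1)},
\]
which is the second branch.

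For $\alpha\ge 0$, the first branch reduces to the single point $\wcostly=0$ (or is empty), and the analysis collapses: $f_t(\wcheap)=0$ for all $\wcheap\in[0,t\alpha]$, so the mass at $\Wcostly=0$ equals $\mathbb{P}[\Wcheap\le t\alpha]=(\min(1,\gamma t\alpha))^{1/(P-1)}$, which is exactly what the second-branch formula gives at $\wcostly=0$. For $\wcostly>0$ the same inversion argument as above applies, and I just need to check continuity at the breakpoint $\wcheap=t\alpha$ of $\CCheap_t$, but this is immediate since both pieces agree there.

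The main obstacles are purely bookkeeping: (i) keeping track of the atom at $\Wcheap=0$, which gets mapped to $\Wcostly=0$ rather than to $f_t(0)=-\alpha$ when $\alpha<0$, and (ii) verifying that the inversion of $f_t$ lines up with the correct piece of the piecewise $\CCheap_t$. Neither step requires new ideas beyond a change-of-variable computation; the hypothesis $\alpha>-1$ is used only to ensure $-\alpha<1$ so that the $\min(1,\cdot)$ is inactive in the first branch, and $\gamma\in[0,1)$ was only needed upstream to guarantee the preconditions of Theorem~\ref{thm:onetype}.
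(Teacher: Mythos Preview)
Your proposal is correct and follows essentially the same route as the paper's proof: invoke the one-type characterization (Theorem~\ref{thm:onetype}/Definition~\ref{def:homogeneous}), split on the sign of $\alpha$ (which in the paper is phrased as $\beta_t>0$ versus $\beta_t=0$, but this is the same case split since $\beta_t=\max(0,-\alpha)$), and then convert the CDF of $\Wcheap$ into the CDF of $\Wcostly$ via the inverse $f_t^{-1}(\wcostly)=t(\wcostly+\alpha)$, substituting the Example~\ref{example:linear} closed forms for $\CCheap_t$. Your handling of the atom at $\Wcheap=0$ when $\alpha<0$ and the continuity check at the breakpoint of $\CCheap_t$ when $\alpha\ge 0$ are exactly the bookkeeping points the paper addresses as well.
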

\begin{proof}
Let $\beta_t = \min \left\{ \wcostly \mid u([\wcostly, 0]) \ge 0 \right\}$ be the minimum investment level. We apply the equilibrium characterization in \Cref{thm:onetype}. We split into two cases: (1) $\beta_t > 0$ and (2) $\beta_t = 0$.

\paragraph{Case 1: $\beta_t > 0$.} The minimum-investment function $f_t$ is strictly increasing so $f_t^{-1}$ is well-defined. Using the the equilibrium characterization in \Cref{thm:onetype}, we observe that: 
\[\mathbb{P}[\Wcostly \le \wcostly] = 
\begin{cases}
 \left(\min(1, c([\beta_t, 0]))\right)^{1/(P-1)} & \text{ if }  0 \le \wcostly \le \beta_t \\
  \left(\min(1, c([\wcostly, f_t^{-1}(\wcostly)]))\right)^{1/(P-1)} & \text{ if } \wcostly \ge \beta_t.
\end{cases}
\]  
Now, using the specification in \Cref{example:linear}, where $c([\wcostly, \wcheap]) = \wcostly + \gamma \cdot \wcheap$, $u(w, t) = \wcostly - (\wcheap / t) + \alpha$, and $f_t(\wcheap) = \max(0, (\wcheap / t) - \alpha)$, we can simplify this expression. In particular, we see that $\beta_t = \max(0, -\alpha) = -\alpha \le 1$ (since $\alpha > -1$ by assumption). Moreover, $f_t^{-1}(\wcostly) = t \cdot (\wcostly + \alpha)$ and $c([\wcostly, f_t^{-1}(\wcostly)]) = \wcostly + \gamma \cdot t \cdot (\wcostly + \alpha)$. Together, this yields the desired expression for this case. 

\paragraph{Case 1: $\beta_t = 0$.} Even though the minimum-investment function $f_t$ is no longer strictly increasing in general, it is strictly increasing on a restricted interval. Let $\delta_t = \inf \left\{ \wcheap \mid \wcheap \ge 0,  f_t(\wcheap) > 0\right\}$ be the minimum value such that strictly positive quality is required to maintain nonnegative utility. We see that $f_t(\wcheap)$ is strictly increasing for $\wcheap > \delta_t$. This means that for $\wcostly > 0$, the inverse $f_t^{-1}$ exists. Using the the equilibrium characterization in \Cref{thm:onetype}, we observe that: 
\[\mathbb{P}[\Wcostly \le \wcostly] = 
\begin{cases}
 \left(\min(1, C_t(\delta_t)) \right)^{1/(P-1)} & \text{ if }  \wcostly = 0 \\
  \left(\min(1, c([\wcostly, f_t^{-1}(\wcostly)]))\right)^{1/(P-1)} & \text{ if } \wcostly \ge \delta_t.
\end{cases}
\] 
Now, using the specification in \Cref{example:linear}, where $c([\wcostly, \wcheap]) = \wcostly + \gamma \cdot \wcheap$, $u(w, t) = \wcostly - (\wcheap / t) + \alpha$, and $f_t(\wcheap) = \max(0, (\wcheap / t) - \alpha)$, we can simplify this expression. In particular, we see that $\delta_t = t \cdot \alpha$ and $C_t(\delta_t) = t \cdot \alpha \cdot \gamma$. Moreover, as above, we see that $f_t^{-1}(\wcostly) = t \cdot (\wcostly + \alpha)$ and $c([\wcostly, f_t^{-1}(\wcostly)]) = \wcostly + \gamma \cdot t \cdot (\wcostly + \alpha)$. 
Together, this yields the desired expression. 
\end{proof}
Next, we analyze the marginal distribution of quality of the symmetric mixed equilibrium for investment-based optimization. 
\begin{proposition}
\label{prop:equilibriumhomogenousinvestment} 
Consider \Cref{example:linear} with bounded gaming costs $\gamma \in [0, 1)$ and sufficiently high baseline utility $\alpha > -1$. Furthermore, suppose that either (a) users are homogeneous ($\mathcal{T} = \left\{t\right\}$), or (b) the baseline utility satisfies $\alpha \ge 0$. The distribution $(\Wcostly, \Wcheap) \sim \muideal(P, c, u, \mathcal{T})$ (where $\muideal(P, c, u, \mathcal{T})$ is specified as in Theorem \ref{thm:investmentbased}) satisfies:
\[\mathbb{P}[\Wcostly \le \wcostly] = 
\begin{cases}
(-\alpha)^{1/(P-1)} & \text{ if }  0 \le \wcostly \le -\alpha \\
  \left(\min(1, \wcostly)\right)^{1/(P-1)} & \text{ if } \wcostly \ge \max(0, -\alpha).
\end{cases}
\]
\end{proposition}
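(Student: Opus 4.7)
The plan is to directly invoke Theorem~\ref{thm:investmentbased} and specialize the generic expression to the linear functional forms of \Cref{example:linear}. First I would verify that the hypotheses of Theorem~\ref{thm:investmentbased} are met in each of the two cases. Recalling that $\CCostlyBaseline(\wcostly) = c([\wcostly, 0]) = \wcostly$ and $\UtilityCostly(\wcostly, t) = u([\wcostly, 0], t) = \wcostly + \alpha$ in \Cref{example:linear}, one computes
\[
\beta_t \;=\; \min\{\wcostly \ge 0 \mid \wcostly + \alpha \ge 0\} \;=\; \max(0, -\alpha).
\]
In case (a), $|\mathcal{T}| = 1$ so hypothesis (a) of Theorem~\ref{thm:investmentbased} holds. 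In case (b), $\alpha \ge 0$ forces $\beta_t = 0$ for every $t \in \mathcal{T}$, so hypothesis (b) of Theorem~\ref{thm:investmentbased} holds. In either case the theorem yields that $\muideal(P, c, u, \mathcal{T})$ is a symmetric mixed equilibrium under investment-based optimization, with $\Wcheap$ a point mass at $0$ and with the marginal cdf of $\Wcostly$ given by the piecewise formula stated in \Cref{sec:investment}.

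Next I would substitute $\CCostlyBaseline(\wcostly) = \wcostly$ into that piecewise formula. In case (a) with $\alpha < 0$, we have $\beta_t = -\alpha$, and since $\alpha > -1$ implies $\beta_t < 1$, we get $\min(1, \CCostlyBaseline(\beta_t)) = -\alpha$. Therefore the cdf equals $(-\alpha)^{1/(P-1)}$ on $[0, -\alpha]$ and $\min(1,\wcostly)^{1/(P-1)}$ for $\wcostly \ge -\alpha$, matching the claim. In case (a) with $\alpha \ge 0$, or in case (b), we have $\beta_t = 0$, so the first piece $0 \le \wcostly \le \beta_t$ collapses to the single point $\wcostly = 0$, where both branches of the claimed formula evaluate to $0 = (-\alpha)^{1/(P-1)}$ (interpreting $(-\alpha)^{1/(P-1)} = 0$ when $\alpha = 0$; for $\alpha > 0$ the interval $[0,-\alpha]$ is empty and only the second piece applies) and the second piece $\min(1, \wcostly)^{1/(P-1)}$ holds for all $\wcostly \ge 0 = \max(0,-\alpha)$. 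In every sub-case this matches the formula in the proposition statement.

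There is no real obstacle here: the content of the proposition is just the translation of the abstract characterization of Theorem~\ref{thm:investmentbased} into the concrete parameters of \Cref{example:linear}. The only subtlety to handle carefully is the boundary case $\alpha \ge 0$, where one must verify that the two pieces of the piecewise formula are consistent at the transition point $\wcostly = \max(0, -\alpha)$, and in particular that the case $0 \le \wcostly \le -\alpha$ is vacuous (or reduces to the single point $\{0\}$ consistently with the second piece) rather than contradictory.
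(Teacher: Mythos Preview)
Your proposal is correct and follows essentially the same approach as the paper: both invoke Theorem~\ref{thm:investmentbased}, compute $\beta_t = \max(0,-\alpha)$ from the linear utility in \Cref{example:linear}, and substitute $\CCostlyBaseline(\wcostly)=\wcostly$ into the generic cdf formula. Your treatment is in fact more careful than the paper's, which does not explicitly verify the hypotheses in case~(b) or discuss the boundary behavior when $\alpha \ge 0$.
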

\begin{proof}
Let $\beta_t = \min \left\{ \wcostly \mid u([\wcostly, 0]) \ge 0 \right\}$ be the minimum investment level. We apply the equilibrium characterization in \Cref{thm:investmentbased}. We observe that: 
\[\mathbb{P}[\Wcostly \le \wcostly] = 
\begin{cases}
 \left(\min(1, c([\beta_t, 0]))\right)^{1/(P-1)} & \text{ if }  0 \le \wcostly \le \beta_t \\
  \left(\min(1, c([\wcostly, 0]))\right)^{1/(P-1)} & \text{ if } \wcostly \ge \beta_t.
\end{cases}
\]
Now, using the specification in \Cref{example:linear}, where $c([\wcostly, 0]) = \wcostly$ and $u(w, t) = \wcostly - (\wcheap / t) + \alpha$, we can simplify these expressions. In particular, we see that $\beta_t = \max(0, -\alpha) \le 1$ (since $\alpha > -1$ by assumption). Together, this yields the desired expression. 
\end{proof}

Finally, we analyze the marginal distribution over $T$ of $(\V, T)$ in \Cref{def:2types} for Cases 2-3. 
\begin{lemma}
\label{lemma:property2types}
Consider the setup of \Cref{def:2types}. If $1 \le a_{t_1} / a_{t_2} \le 1.5$, then it holds that:
\begin{align*}
\mathbb{P}[T = t_1]  &= 2 - \frac{a_{t_1}}{\cdot a_{t_2}} \\
\mathbb{P}[T = t_2] &= \frac{a_{t_1}}{\cdot a_{t_2}} - 1
\end{align*}
\end{lemma}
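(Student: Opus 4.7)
The plan is a direct computation of $\mathbb{P}[T = t_1]$ via the law of total probability, using the explicit piecewise densities and conditional probabilities in Definition \ref{def:2types}, and then obtaining $\mathbb{P}[T = t_2]$ from $1 - \mathbb{P}[T = t_1]$. The hypothesis $1 \le a_{t_1}/a_{t_2} \le 1.5$ spans Case 2 and Case 3 of the definition, so I would split the argument accordingly. Writing $r := a_{t_1}/a_{t_2}$ will keep the algebra readable, and it is helpful to factor $1/a_{t_2}$ out of all of the endpoints, so that each piece of $[1/a_{t_1}, \ldots]$ has length of the form $\tfrac{1}{a_{t_2}} \cdot (\text{polynomial in } r)/(\text{polynomial in } r)$.

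For Case 2, the support of $V$ splits into three consecutive intervals. On $[1/a_{t_1}, 1/a_{t_2}]$ the density is $a_{t_1}$ and $\mathbb{P}[T = t_1 \mid V = v] = 1$, contributing $a_{t_1} \cdot (1/a_{t_2} - 1/a_{t_1}) = r - 1$. On $[1/a_{t_2}, \tfrac{1}{2 a_{t_2}(r-1)}]$ the density is $2 a_{t_2}$ and $\mathbb{P}[T = t_1 \mid V = v] = r - 1$, contributing $2 a_{t_2} \cdot \tfrac{1}{a_{t_2}} \cdot \tfrac{3 - 2r}{2(r-1)} \cdot (r-1) = 3 - 2r$ after simplification. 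The final interval contributes $0$ because $\mathbb{P}[T = t_1 \mid V = v] = 0$ there. Summing gives $(r-1) + (3 - 2r) = 2 - r$, which is exactly $2 - a_{t_1}/a_{t_2}$.

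For Case 3 the support of $V$ has four pieces. Pieces one and two are identical to Case 2 in form (with the Case 3 endpoints), and the new fourth piece has density $a_{t_1}$ and $\mathbb{P}[T = t_1 \mid V = v] = 1$ on $\bigl[\tfrac{3 - r}{2 a_{t_2}(2-r)},\, \tfrac{1}{a_{t_1}} + (\tfrac{1}{a_{t_1}} - \tfrac{1}{2 a_{t_2}}) \tfrac{3 - r}{2 - r}\bigr]$. The main bookkeeping is combining the upper endpoint of this interval into a common denominator, which simplifies to $\tfrac{5 - r}{2 r a_{t_2}}$, so that the length of the fourth interval equals $\tfrac{r^2 - 5r + 5}{r a_{t_2}(2 - r)}$ and its contribution is $\tfrac{r^2 - 5r + 5}{2 - r}$. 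The second interval contributes $\tfrac{(r-1)^2}{2 - r}$ by the same kind of calculation as in Case 2. Adding the three nonzero contributions and using the factorization $2 r^2 - 7 r + 6 = (2r - 3)(r - 2)$ collapses the sum to $(r - 1) - (2r - 3) = 2 - r$, matching the claim.

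The main obstacle is purely algebraic bookkeeping, in particular getting the common-denominator simplification in Case 3 right; the factorization $2r^2 - 7r + 6 = (2r-3)(r-2)$ is the key cancellation that makes the rational part telescope to a linear function of $r$. Once $\mathbb{P}[T = t_1] = 2 - a_{t_1}/a_{t_2}$ is established in both cases, $\mathbb{P}[T = t_2] = a_{t_1}/a_{t_2} - 1$ follows since $\mathbb{P}[T = t_1] + \mathbb{P}[T = t_2] = 1$ (one can also sanity check that these probabilities equal the integrals of $g(v)$ over the intervals where $\mathbb{P}[T = t_2 \mid V = v] > 0$, using the given densities).
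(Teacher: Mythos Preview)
Your proposal is correct and takes essentially the same approach as the paper: a direct computation via the law of total probability using the piecewise density $g$ and the conditional probabilities from Definition~\ref{def:2types}, split into Cases~2 and~3. The only difference is that you compute $\mathbb{P}[T=t_1]$ first and deduce $\mathbb{P}[T=t_2]$, whereas the paper computes $\mathbb{P}[T=t_2]$ first; this makes the paper's Case~3 calculation slightly shorter (only the middle interval contributes to $T=t_2$), but the arguments are equivalent.
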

\begin{proof}
We separately analyze Case 2 and Case 3 in \Cref{def:2types}.

\paragraph{Case 2: $(5 - \sqrt{5}) / 2 \le a_{t_1} / a_{t_2} \le 1.5$.} We observe that: 
\begin{align*}
\mathbb{P}[T = t_2] 
&= 2 a_{t_2} \cdot \left( \frac{1}{2 a_{t_2} \left(\frac{a_{t_1}}{a_{t_2}} - 1 \right)} - \frac{1}{a_{t_2}} \right) \cdot  \left(2 - \frac{a_{t_1}}{a_{t_2}}  \right) \\
&+  2 a_{t_2} \cdot \left(\frac{1}{a_{t_2}} \cdot \left(2 - \frac{a_{t_1}}{2 \cdot a_{t_2}}\right) -  \frac{1}{2 a_{t_2} \left(\frac{a_{t_1}}{a_{t_2}} - 1 \right)}\right) \\
&= \left( \frac{1}{\left(\frac{a_{t_1}}{a_{t_2}} - 1 \right)} - 2 \right) \cdot  \left(2 - \frac{a_{t_1}}{a_{t_2}}  \right) +  \left(4 - \frac{a_{t_1}}{\cdot a_{t_2}} -  \frac{1}{ \left(\frac{a_{t_1}}{a_{t_2}} - 1 \right)}\right) \\ 
&= \frac{a_{t_1}}{a_{t_2}} - \left( \frac{a_{t_1}}{a_{t_2}} - 1 \right) \frac{1}{\left(\frac{a_{t_1}}{a_{t_2}} - 1 \right)} \\
&= \frac{a_{t_1}}{\cdot a_{t_2}} - 1. 
\end{align*}
This also implies that:
\[ \mathbb{P}[T = t_1] =1 -\mathbb{P}[T = t_2] = 2 - \frac{a_{t_1}}{a_{t_2}} \]
as desired. 

\paragraph{Case 3: $1 \le a_{t_1} / a_{t_2} \le (5 - \sqrt{5}) / 2$.}
We observe that: 
\begin{align*}
\mathbb{P}[T = t_2] 
&= 2 a_{t_2} \cdot \left( \frac{3 - \frac{a_{t_1}}{a_{t_2}}}{2 a_{t_2} \cdot \left(2 - \frac{a_{t_1}}{a_{t_2}} \right)} - \frac{1}{a_{t_2}} \right)  \cdot \left( 2 - \frac{a_{t_1}}{a_{t_2}} \right)\\
&= \left(\frac{3 - \frac{a_{t_1}}{a_{t_2}}}{2 - \frac{a_{t_1}}{a_{t_2}}} - 2 \right)  \cdot \left( 2 - \frac{a_{t_1}}{a_{t_2}} \right)\\
&= \left(\frac{\frac{a_{t_1}}{a_{t_2}} - 1}{2 - \frac{a_{t_1}}{a_{t_2}}} \right)  \cdot \left( 2 - \frac{a_{t_1}}{a_{t_2}} \right)\\
&= \frac{a_{t_1}}{a_{t_2}} - 1.
\end{align*}
This also implies that:
\[ \mathbb{P}[T = t_1] =1 -\mathbb{P}[T = t_2] = 2 - \frac{a_{t_1}}{a_{t_2}} \]
as desired. 

\end{proof}

\section{Proofs for \Cref{sec:model}}\label{appendix:proofsmodel}

We prove Theorem \ref{thm:equilibriumexistence}. The proof follows similarly to existence of equilibrium proof in \citep[Proposition 2,][]{JGS22}, and we similarly leverage equilibrium existence technology for discontinuous games \citep{reny99}.

The proof will use the following lemma, which is a simple fact about continuously differentiable functions that we reprove for completeness. 
\begin{lemma}
\label{lemma:lipschitz}
Let $Q \subseteq \mathbb{R}_{\ge 0}^2$ be a compact set. Any continuously differentiable function $f: Q \rightarrow \mathbb{R}_{\ge 0}$ is Lipschitz in the metric $d(x,y) = \|x - y\|_2$.   
\end{lemma}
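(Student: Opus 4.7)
The plan is to reduce to bounding the gradient on a compact neighborhood of $Q$ and then invoke the mean value theorem in the short-distance regime, handling the long-distance regime by a trivial bound on $|f|$ itself. The statement "$f$ is continuously differentiable on $Q$" is understood in the standard sense that $f$ extends to a $C^1$ function on some open neighborhood $U \supseteq Q$; without this, the notion of derivative on the boundary of $Q$ is ambiguous.

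First I would pick a compact "thickening" $K$ with $Q \subseteq \mathrm{int}(K) \subseteq K \subseteq U$. Since $\nabla f$ is continuous on $U$ and hence on the compact set $K$, the quantity $M := \max_{z \in K} \|\nabla f(z)\|_2$ is finite. Next, by compactness of $Q$ and openness of $\mathrm{int}(K)$, the distance $\delta := \min_{x \in Q} d(x, \mathbb{R}^2 \setminus K)$ is strictly positive, so every Euclidean ball of radius $\delta$ centered at a point of $Q$ is contained in $K$.

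With these constants in hand, the Lipschitz bound splits into two cases. For $x, y \in Q$ with $\|x - y\|_2 < \delta$, the entire line segment from $x$ to $y$ lies in $K$, so applying the fundamental theorem of calculus to $t \mapsto f(x + t(y-x))$ and using Cauchy--Schwarz yields $|f(x) - f(y)| \le M \|x - y\|_2$. For $x, y \in Q$ with $\|x - y\|_2 \ge \delta$, continuity of $f$ on the compact set $Q$ gives $B := \max_{z \in Q} |f(z)| < \infty$, so $|f(x) - f(y)| \le 2B \le (2B/\delta)\|x - y\|_2$. Setting $L := \max(M, 2B/\delta)$ gives a uniform Lipschitz constant.

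The only mildly subtle point, which I expect to be the main (minor) obstacle, is that $Q$ is not assumed to be convex, so we cannot directly apply the mean value theorem on segments between arbitrary pairs of points in $Q$. The thickening $K$ and the threshold $\delta$ are introduced precisely to guarantee that short segments stay inside the region where $\nabla f$ is controlled, while the global bound on $|f|$ handles the rest.
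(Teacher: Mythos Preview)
Your proposal is correct and, in fact, more careful than the paper's own argument. The paper's proof simply observes that the gradient map is continuous, hence bounded on the compact set $Q$ by some constant $B$, and concludes immediately that $f$ is Lipschitz. This last step implicitly uses that the line segment between any two points of $Q$ stays inside a region where the gradient bound applies---which is automatic when $Q$ is convex (and in the paper's only application of the lemma, $Q$ is a closed rectangle $[0,\wcostly^{\text{max}}]\times[0,\wcheap^{\text{max}}]$, so convexity holds). Your thickening of $Q$ to a compact $K\subseteq U$ together with the short-distance/long-distance split is exactly what is needed to make the statement go through for \emph{arbitrary} compact $Q$ as the lemma is phrased. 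The trade-off: the paper's two-line argument suffices for its intended use, while yours actually proves the lemma as stated.
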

\begin{proof}
By assumption, the gradient mapping $G: \mathbb{R}_{\ge 0}^2 \rightarrow \mathbb{R}^2_{\ge 0}$ given by $G(w) = \nabla(w)$ is continuous. Any continuous function on a compact set is bounded, so we know that for some constant $B$ it holds that $\|\nabla(w)\|_2 \le B$ for all $w \in Q$. Since the gradient is bounded, this means that the function is Lipschitz as desired. 
\end{proof}

We are now ready to prove Theorem \ref{thm:equilibriumexistence}.
\begin{proof}[Proof of Theorem \ref{thm:equilibriumexistence}]

We leverage a standard result about the existence of symmetric, mixed strategy equilibria in discontinuous games (see Corollary 5.3 of \citep{reny99}). We adopt the terminology of \citet{reny99}  and refer the reader to \citet{reny99} for a formal definition of the conditions. 

First, the game is symmetric by construction. In particular, the creators have symmetric utility functions. Even though the functions $U_i$ as written are not explicitly symmetric, the fact that we break ties uniformly at random means that $U_i(w_i; \textbf{w}_{-i}) = U_j(w_i; \textbf{w}_{-i})$ for all $i, j \in [P]$. We thus let $U(w_1, w_{-1}) = U_1(w_1, w_{-1})$ denote this utility function for the remainder of the analysis. 

To show the existence of a symmetric mixed equilibrium, it suffices to show that: (1) the action space is convex and compact and (2) the game is diagonally better-reply secure.

\paragraph{Creator action space is convex and compact.} In the current game, the action space $\mathbb{R}_{\ge 0} \times \mathbb{R}_{\ge 0}$ is not compact. However,  we show that we can define a modified game, where the action space is convex and compact, and where an equilibrium in this modified game is also an equilibrium in the original game. For the remainder of the proof, we analyze this modified game. 

We define the modified action space as follows. 
Let $\wcostly^{\text{max}}$ be defined to be:
\[\wcostly^{\text{max}} = 1 +  \sup \left\{ \wcostly \ge 0 \mid c([\wcostly, 0]) \le 1 \right\} .\] Let $\wcheap^{\text{max}}$ be defined to be:
\[\wcheap^{\text{max}} := 1 + \sup \left\{
\wcheap \ge 0 \mid \text{there exists } t \in \mathcal{T} \text{ such that } u([\wcostly^{\text{max}}, \wcheap], t) \ge 0 \right\}. \]
(We add an additive factor of $1$ slack to guarantee that there exists a best-response by a creator will be in the \textit{interior} of the action space and not on the boundary.) We take the action space to be \[\mathcal{W} := [0, \wcostly^{\text{max}}] \times [0, \wcheap^{\text{max}}],\]
which is compact and convex by construction. 

We show that for any distribution $\mu$ over $\mathcal{W}$, there exists a best-response $w^* \in \mathbb{R}_{\ge 0}^2$ to: 
\[\argmax_{w \in \mathbb{R}_{\ge 0}^2} \mathbb{E}_{\textbf{w}_{-i} \sim \mu^{P-1}} [U_i(w_i; \textbf{w}_{-i})]  \]
such that $w^*$ is in the interior of $\mathcal{W}$. To show this, let $w^*$ be any best-response to the above optimization program. First we show that $\wcostly^* < \wcostly^{\text{max}}$. Assume for sake of contradiction that $\wcostly^* \ge \wcostly^{\text{max}}$. Then it must hold that 
\[c(w^*) \ge c([\wcostly^*, 0]) \ge c([\wcostly^{\text{max}}, 0]) > 1.\] 
This means that 
\[\mathbb{E}_{\textbf{w}_{-i} \sim \mu^{P-1}} [U_i(w^*; \textbf{w}_{-i})] < 0 \le \mathbb{E}_{\textbf{w}_{-i} \sim \mu^{P-1}} [U_i([0, 0]; \textbf{w}_{-i})],\]
which is a contradiction. This proves that $\wcostly^* < \wcostly^{\text{max}}$ as desired. We next show that we can construct a best-response $w'$ such that $\wcostly' = \wcostly^*$ and $\wcheap' < \wcheap^{\text{max}}$. If $u(w', t) \ge 0$ for some $t \in \mathcal{T}$, then it must hold that $\wcheap^* < \wcheap^{\text{max}}$, so we can take $\wcheap' = \wcheap^*$. If $u(w', t) < 0$ for all $t \in \mathcal{T}$, then we see that:
\[\mathbb{E}_{\textbf{w}_{-i} \sim \mu^{P-1}} [U_i(w^*; \textbf{w}_{-i})] \le 0 \le \mathbb{E}_{\textbf{w}_{-i} \sim \mu^{P-1}} [U_i([\wcostly^*, 0]; \textbf{w}_{-i})],\] so we can take $\wcheap' = 0 < \wcheap^{\text{max}}$. 
Altogether, this proves that there exists a best-response $w'$ satisfying $\wcostly' < \wcostly^{\text{max}}$ and $\wcheap' < \wcheap^{\text{max}}$, which means that $w'$ is in the interior of $\mathcal{W}$

This proves that any symmetric mixed equilibrium of the game with restricted action space $\mathcal{W}$ will also be a symmetric mixed equilibrium of the new game. 

\paragraph{Establishing diagonal better reply security.} In this analysis, we slightly abuse notation and implicitly extend the definition of each utility function $U$ to mixed strategies by considering expected utility. 

First, we show  the payoff function $U(\mu; [\mu, \ldots, \mu])$ (where $\mu$ is a distribution over the action space $\mathcal{W}$) is upper semi-continuous in $\mu$ with respect to the weak* topology. Using the fact that each creator receives a $1/P$ fraction of users in expectation at a symmetric solution, we see that:
\[ U(\mu; [\mu,\ldots, \mu]) = \frac{1}{P} \cdot \frac{1}{|\mathcal{T}|} \sum_{t \in \mathcal{T}} \left(1 - \left(\int_{w} \mathbbm{1}[u(w,t) < 0] d\mu\right)^{P-1}\right) - \int_w c(w) d\mu. \]
Since $c$ is a continuous function, we see immediately that $\int_w c(w) d\mu$ is continuous in $\mu$. Moreover, for each $t \in \mathcal{T}$, we see that $\int_{w} \mathbbm{1}[u(w,t) < 0] d\mu$ is lower semi-continuous in $\mu$.  This proves that $U(\mu; [\mu, \ldots, \mu])$ is upper upper semi-continuous in $\mu$ as desired.

For each relevant payoff in the closure of the graph of the game's diagonal payoff function, we construct an action that secures that payoff along the diagonal. More formally, let $(\mu^*, \alpha^*)$ be in the closure of the graph of the game's diagonal payoff function, and suppose that $(\mu^*, \ldots, \mu^*)$ is not an equilibrium; it suffices to show that a creator can secure a payoff of $\alpha > \alpha^*$ along the diagonal at $(\mu^*, \ldots, \mu^*)$. Since $U$ is upper semi-continuous, it actually suffices to show the statement for $(\mu^*, \alpha^*)$ where $\alpha^* = U(\mu^*; [\mu^*, \ldots, \mu^*])$ and $(\mu^*, \ldots, \mu^*)$ is not an equilibrium. For each such $(\mu^*, \alpha^*)$, we construct $\mu^{\text{sec}}$ that secures a payoff of $\alpha > \alpha^*$ along the diagonal at $(\mu^*, \ldots, \mu^*)$ as follows.

Since $(\mu^*, \ldots, \mu^*)$ is not an equilibrium and since there exists a best-response in the interior of $\mathcal{W}$ as shown above, we know that there exists $w$ in the interior of $\mathcal{W}$ such that:
\[U(w; [\mu^*, \ldots, \mu^*]) > U(w; [\mu^*, \ldots, \mu^*]) = \alpha^*.\]
Since we want to find $w$ that achieves high profit in an open neighborhood of $\mu^*$, we need to strengthen the above statement; we can achieve by this by appropriately perturbing $w$ (which we can do since $w$ is in the interior of $\mathcal{W}$). First, we can perturb $w$ to $\tilde{w}$ such that the distribution $\MPlatform(w')$ where $w' \sim \mu^*$ does not have a point mass at $\MPlatform(\tilde{w})$, and such that:
\[ U(\tilde{w}; [\mu^*, \ldots, \mu^*]) = \frac{1}{|\mathcal{T}|} \sum_{t \in \mathcal{T}} \mathbbm{1}[u(\tilde{w}, t) \ge 0] \cdot \left(\mathbb{P}_{w' \sim \mu^*} [\MPlatform(\tilde{w}) > \MPlatform(w') \text{ or } u(w', t) < 0]\right)^{P-1} - c(\tilde{w}) > \alpha^*.\]
Now, we construct $w^{\text{sec}}$ as a perturbation of $\tilde{w}$ along the costly dimension $\wcheap$ to add $\epsilon$ slack to the constraint $\MPlatform(\tilde{w}) > \MPlatform(w')$. Since $\MPlatform$ is strictly increasing in the expensive component and since $\mathbb{P}_{w' \sim \mu^*}[u(w', t) \in (0, \epsilon)] \rightarrow 0$ as $\epsilon \rightarrow 0$\footnote{To see this, let $S_i = (0, 2^{-n})$ and use that $0 = \mu^*(\cap_{i \ge 1} S_i) = \lim_{i \rightarrow \infty} \mu^*(S_i)$.}, we observe that there exists $\epsilon^* > 0$ and $w^{\text{sec}} \in \mathcal{W}$ (constructed as $w^{\text{sec}} =  \tilde{w} + [\epsilon', 0]$ for some $\epsilon' > 0$) such that 
\begin{equation}
\label{eq:construction} 
\begin{split}
\alpha^* 
&< \frac{1}{|\mathcal{T}|} \sum_{t \in \mathcal{T}} \mathbbm{1}[u(w^{\text{sec}}, t) \ge 0] \cdot \left(\mathbb{P}_{w' \sim \mu^*} [\MPlatform(w^{\text{sec}}) > \MPlatform(w') + \epsilon^* \text{ or } u(w', t) < -\epsilon^* ]\right)^{P-1} - c(w^{\text{sec}})  \\
&\le \frac{1}{|\mathcal{T}|} \sum_{t \in \mathcal{T}} \mathbbm{1}[u(w^{\text{sec}}, t) \ge 0] \cdot \left(\mathbb{P}_{w' \sim \mu^*} [\MPlatform(w^{\text{sec}}) > \MPlatform(w') \text{ or } u(w', t) < 0 ]\right)^{P-1} - c(w^{\text{sec}}) \\
&= U(w^{\text{sec}}; [\mu^*, \ldots, \mu^*]).
\end{split}
\end{equation}

We claim that $\mu^{\text{sec}}$ taken to be the point mass at $w^{\text{sec}}$ will secure a payoff of 
\[\alpha = \frac{\frac{1}{|\mathcal{T}|} \sum_{t \in \mathcal{T}} \mathbbm{1}[u(w^{\text{sec}}, t) \ge 0] \cdot \left(\mathbb{P}_{w' \sim \mu^*} [\MPlatform(w^{\text{sec}}) > \MPlatform(w') + \epsilon^* \text{ or } u(w', t) < -\epsilon^*]\right)^{P-1} - c(w^{\text{sec}}) + \alpha^*}{2} > \alpha^* \] along the diagonal at $(\mu^*, \ldots, \mu^*)$. For  each $t \in \mathcal{T}$, we define the event $A_t$ to be:
\[A_t = \left\{w' \mid \MPlatform(w^{\text{sec}}) > \MPlatform(w') \text{ or } u(w', t) < 0 \right\}\]
and for $\epsilon > 0$, we define the event $A_t^{\epsilon}$ as:
\[A_t^{\epsilon} = \left\{w' \mid \MPlatform(w^{\text{sec}}) > \MPlatform(w') + \epsilon \text{ or } u(w', t) < -\epsilon \right\} .\]
In this notation, we can rewrite equation \eqref{eq:construction} as:
\[U(w^{\text{sec}}; [\mu^*, \ldots, \mu^*]) \ge \frac{1}{|\mathcal{T}|} \sum_{t \in \mathcal{T}} \mathbbm{1}[u(w^{\text{sec}}, t) \ge 0] \cdot \left(\mu^*(A_t^{\epsilon^*})\right)^{P-1} - c(w^{\text{sec}}) > \alpha^*\]
and $\alpha$ as:
\[\alpha = \frac{\frac{1}{|\mathcal{T}|} \sum_{t \in \mathcal{T}} \mathbbm{1}[u(w^{\text{sec}}, t) \ge 0] \cdot \left(\mu^*(A_t^{\epsilon^*})\right)^{P-1} - c(w^{\text{sec}}) + \alpha^*}{2} > \alpha^* \]

We define a metric $d$ on $\mathbb{R}_{\ge 0}^2$  as follows. Using Lemma \ref{lemma:lipschitz}, we know that $\MPlatform(\cdot)$ and $u(\cdot, t)$ for each $t \in \mathcal{T}$ are Lipschitz in $\|\cdot\|_2$. Let the Lipschitz constants be $L_M$ and $L_t$  for each $t \in \mathcal{T}$, respectively. Consider the metric on $\mathbb{R}_{\ge 0}^2$ given by 
\[d(w,w') = \max(L_M, \max_{t \in \mathcal{T}} L_t) \cdot \|w-w'\|_2.\] 
For $\epsilon > 0$ let $B_{\epsilon}(\mu^*)$ denote the $\epsilon$-ball with respect to the Prohorov metric; using the definition of the weak* topology, we see that $B_{\epsilon}(\mu^*)$ is an open set with respect to the weak* topology. 
For every $w' \in A_t^{\epsilon}$, we see that $A_t$ contains the open neighborhood $B_{\epsilon}(w')$ with respect to $d$. 
By the definition of the Prohorov metric, we know that for all $\mu' \in B_{\epsilon}(\mu^*)$, it holds that
\[\mu'(A_i) \ge \mu^*(A_i^{\epsilon}) - \epsilon.\] 
This implies that
\begin{align*}
 U(w^{\text{sec}}; [\mu', \ldots, \mu']) &\ge 
\frac{1}{|\mathcal{T}|} \sum_{t \in \mathcal{T}} \mathbbm{1}[u(w^{\text{sec}}, t) \ge 0] \cdot \left(\mu'(A_t)\right)^{P-1} - c(w^{\text{sec}}) \\
&\ge 
\frac{1}{|\mathcal{T}|} \sum_{t \in \mathcal{T}} \mathbbm{1}[u(w^{\text{sec}}, t) \ge 0] \cdot \left(\mu^*(A_t^{\epsilon}) - \epsilon \right)^{P-1} - c(w^{\text{sec}})  \\
&\ge \left(\frac{1}{|\mathcal{T}|} \sum_{t \in \mathcal{T}} \mathbbm{1}[u(w^{\text{sec}}, t) \ge 0] \cdot \left(\mu^*(A_t^{\epsilon})\right)^{P-1} - c(w^{\text{sec}})\right) \\
&- \frac{1}{|\mathcal{T}|} \sum_{t \in \mathcal{T}}  \left(\mathbbm{1}[u(w^{\text{sec}}, t) \ge 0] \cdot \underbrace{\left(\mu^*(A_t^{\epsilon})\right)^{P-1} -  \left(\mu^*(A_t^{\epsilon}) - \epsilon \right)^{P-1}}_{(A)}  \right).
\end{align*}
Using that (A) goes to $0$ as $\epsilon$ goes to $0$, we see that for sufficiently small $\epsilon$, it holds that:
\begin{align*}
&\frac{1}{|\mathcal{T}|} \sum_{t \in \mathcal{T}}  \left(\mathbbm{1}[u(w^{\text{sec}}, t) \ge 0] \cdot \underbrace{\left(\mu^*(A_t^{\epsilon})\right)^{P-1} -  \left(\mu^*(A_t^{\epsilon}) - \epsilon \right)^{P-1}}_{(A)}\right)\\
&\le \frac{\frac{1}{|\mathcal{T}|} \sum_{t \in \mathcal{T}} \mathbbm{1}[u(w^{\text{sec}}, t) \ge 0] \cdot \left(\mu^*(A_t^{\epsilon^*})\right)^{P-1} - c(w^{\text{sec}}) - \alpha^*}{3}.
\end{align*}
As long as $\epsilon$ is also less than $\epsilon^*$, this means that: 
\begin{align*}
  U(w^{\text{sec}}; [\mu', \ldots, \mu'])
  &\ge \left(\frac{1}{|\mathcal{T}|} \sum_{t \in \mathcal{T}} \mathbbm{1}[u(w^{\text{sec}}, t) \ge 0] \cdot \left(\mu^*(A_t^{\epsilon})\right)^{P-1} - c(w^{\text{sec}})\right) \\
  &- \frac{\frac{1}{|\mathcal{T}|} \sum_{t \in \mathcal{T}} \mathbbm{1}[u(w^{\text{sec}}, t) \ge 0] \cdot \left(\mu^*(A_t^{\epsilon^*})\right)^{P-1} - c(w^{\text{sec}}) - \alpha^*}{3}\\
  &\ge \left(\frac{1}{|\mathcal{T}|} \sum_{t \in \mathcal{T}} \mathbbm{1}[u(w^{\text{sec}}, t) \ge 0] \cdot \left(\mu^*(A_t^{\epsilon^*})\right)^{P-1} - c(w^{\text{sec}})\right) \\
  &- \frac{\frac{1}{|\mathcal{T}|} \sum_{t \in \mathcal{T}} \mathbbm{1}[u(w^{\text{sec}}, t) \ge 0] \cdot \left(\mu^*(A_t^{\epsilon^*})\right)^{P-1} - c(w^{\text{sec}}) - \alpha^*}{3}\\ 
  &= \frac{2 \left(\frac{1}{|\mathcal{T}|} \sum_{t \in \mathcal{T}} \mathbbm{1}[u(w^{\text{sec}}, t) \ge 0] \cdot \left(\mu^*(A_t^{\epsilon^*})\right)^{P-1} - c(w^{\text{sec}})\right) + \alpha^*}{3} \\
  &> \alpha  
\end{align*}
for all $\mu' \in B_{\epsilon}(\mu^*)$, as desired. 
\end{proof}

\section{Proofs for \Cref{sec:positivecorrelation}}\label{appendix:proofssecpositivecorrelation}

To prove Proposition \ref{prop:positivecorrelation}, we first show that the support of the equilibrium is contained within union of curves of the following form in $\mathbb{R}_{\ge 0}^2$:
\[\Caugt = \left\{[\wcostly, \wcheap] \mid u([\wcostly, \wcheap], t) = 0 \right\} \cup \left\{(0, \wcheap) \mid \wcheap \in [0, \min_{w' \mid u(w', t) = 0} \wcheap')] \right\}, \]
as formalized in the following lemma. 
\begin{lemma}
\label{lemma:ctaugunion}
Let $\mathcal{T} \subseteq \mathbb{R}_{\ge 0}$ be any finite type space. Suppose that $(\mu_1, \ldots, \mu_P)$ is an equilibrium in the game with $M = \MPlatform$. Then $\text{supp}(\mu_i) \subseteq \cup_{t \in \mathcal{T}} \Caugt \cup \left\{w \mid c(w) = 0 \right\}$ for all $i \in [P]$. 
\end{lemma}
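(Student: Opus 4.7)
The plan is to argue by contradiction: for any $w \in \text{supp}(\mu_i)$ with $c(w) > 0$, I will produce a strictly profitable deviation unless $w \in \Caugt$ for some $t \in \mathcal{T}$. I split on the sign pattern $T^+(w) := \{t \in \mathcal{T} : u(w, t) \ge 0\}$; by the type-monotonicity assumption on $u$, either $T^+(w) = \emptyset$, or $T^+(w) = \{t \in \mathcal{T} : t \ge t^*\}$ for a threshold $t^* \in \mathcal{T}$.

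If $T^+(w) = \emptyset$, I compare $w$ against the deviation $w_0 = [0, 0]$. For every realization of $\mathbf{w}_{-i}$ and user type $t$, the platform score of $w$ equals $0$, while the platform score of $w_0$ is $\MPlatform([0,0]) \cdot \mathbbm{1}[u([0,0], t) \ge 0] \ge 0$; hence the winning probability of $w_0$ weakly exceeds that of $w$ under the uniform tie-breaking rule. Combined with $c(w_0) = 0 < c(w)$, this yields strict improvement, contradicting the best-response property.

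For the main case $T^+(w) \neq \emptyset$, I set $t^* := \min T^+(w)$ and consider the auxiliary program $\inf_{w'} c(w')$ subject to $u(w', t^*) \ge 0$ and $\MPlatform(w') \ge \MPlatform(w)$. Using \Cref{lemma:increasing} and continuity of $f_{t^*}$ and $\MPlatform$, the value $\MPlatform(w)$ is attained somewhere on $\mathcal{C}^{\text{aug}}_{t^*}$, so the program is feasible and \Cref{lemma:uniqueness} yields an optimum $w^* \in \mathcal{C}^{\text{aug}}_{t^*}$; the optimum is unique with $\MPlatform(w^*) = \MPlatform(w)$ as soon as $c(w^*) > 0$. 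Type-monotonicity of $u$ gives $T^+(w^*) \supseteq T^+(w)$, so $w^*$'s platform score pointwise dominates $w$'s; hence the deviation to $w^*$ weakly improves winning probability and, since $c(w^*) \le c(w)$, weakly improves creator utility. The best-response condition then forces $c(w^*) = c(w)$; if $c(w) > 0$, uniqueness in \Cref{lemma:uniqueness} gives $w = w^* \in \mathcal{C}^{\text{aug}}_{t^*}$ and we are done, while if $c(w) = 0$ we already have $w \in \{w' : c(w') = 0\}$.

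The subtlety I expect to require the most care is the pointwise monotonicity of winning probability under the score comparisons above. Because the creator's winning indicator is discontinuous at the platform's $\argmax$, I need to enumerate tie-breaking subcases: when the deviating creator's score strictly increases (the $\argmax$ can only shrink or shift in their favor), and when it stays the same (the set of ties is unchanged and winning probability is preserved). In particular, the first case is delicate when $\MPlatform([0,0]) = 0$, since the deviation from $w$ to $w_0$ then fails to raise the score at all, and one must verify that the winning probability is literally preserved rather than decreased. A short case analysis on the $\argmax$ handles both regimes, substituting for what in a smooth game would be a first-order argument.
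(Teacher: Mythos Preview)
Your proposal is correct and follows essentially the same approach as the paper's proof (which factors through Lemma~\ref{lemma:bestresponse}): split on whether $T^+(w)$ is empty, deviate to $[0,0]$ in the empty case, and in the nonempty case apply Lemma~\ref{lemma:uniqueness} at the threshold type $t^* = \min T^+(w)$ to produce $w^* \in \mathcal{C}^{\text{aug}}_{t^*}$ with the same or higher engagement, the same or larger set of satisfied types, and no higher cost. The only stylistic difference is that the paper observes directly that $w \notin \mathcal{C}^{\text{aug}}_{t^*}$ together with Lemma~\ref{lemma:uniqueness}(1) forces $c(w^*) < c(w)$ (strict), whereas you argue weak inequality and then invoke the uniqueness clause of Lemma~\ref{lemma:uniqueness}(2) to conclude $w = w^*$; both routes are valid and equally short, and your explicit verification that $\MPlatform(w)$ lies in the range of $\MPlatform$ on $\mathcal{C}^{\text{aug}}_{t^*}$ (needed to invoke Lemma~\ref{lemma:uniqueness}) and your tie-breaking monotonicity check are details the paper leaves implicit.
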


We prove this lemmas as a corollary to the following sublemma:
\begin{lemma}
\label{lemma:bestresponse}
Let $\mathcal{T} \subseteq \mathbb{R}_{\ge 0}$ be any finite type space. Let $(\mu_1, \ldots, \mu_P)$ be any mixed strategy profile. Then, in the game with $M = \MPlatform$, any best-response $w \in \mathbb{R}_{\ge 0}^2$ to:
\[
\argmax_{w \in \mathbb{R}_{\ge 0}^2} \mathbb{E}_{\textbf{w}_{-i} \sim \mathbf{\mu}_{-i})} [U_i(w; \textbf{w}_{-i})]  
\]
satisfies $w \in \cup_{t \in \mathcal{T}} \Caugt \cup \left\{w \mid c(w) = 0 \right\}$. 
\end{lemma}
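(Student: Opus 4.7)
The plan is to fix a best-response $w$ and argue that either $c(w)=0$ (and we are done) or $w \in \mathcal{C}^{\text{aug}}_{t^*}$ for the \emph{lowest} type $t^*$ the creator can serve. I would first dispose of the degenerate case $T(w):=\{t\in\mathcal{T}:u(w,t)\ge 0\}=\emptyset$: creator $i$'s score is then zero at every type, so deviating to $w'=[0,0]$ weakly increases the probability of being selected (it can only break zero-score ties in $i$'s favor) while strictly reducing cost from $c(w)>0$ to $0$, contradicting optimality. Thus $T(w)\ne\emptyset$, and finiteness of $\mathcal{T}$ makes $t^*:=\min T(w)$ well-defined.

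I would then handle two easy sub-cases that already place $w$ in $\mathcal{C}^{\text{aug}}_{t^*}$. If $u(w,t^*)=0$, then $w\in\mathcal{C}_{t^*}\subseteq\mathcal{C}^{\text{aug}}_{t^*}$ by~\eqref{eq:caugt}. If $u(w,t^*)>0$ and $\wcostly=0$, the strict monotonicity of $u$ in $\wcheap$ together with its $-\infty$ limit yields a unique $\wcheap^0$ with $u([0,\wcheap^0],t^*)=0$, and $u([0,\wcheap],t^*)\ge 0$ forces $\wcheap\le\wcheap^0$, placing $w$ in the second component of~\eqref{eq:caugt}.

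The core step is to rule out the remaining case $u(w,t^*)>0$ and $\wcostly>0$ by a local improvement. The model assumption $(\nabla c(w))_2/(\nabla c(w))_1<(\nabla \MPlatform(w))_2/(\nabla \MPlatform(w))_1$ provides a nonempty open cone of ratios $b/a$ strictly between $(\nabla \MPlatform(w))_1/(\nabla \MPlatform(w))_2$ and $(\nabla c(w))_1/(\nabla c(w))_2$; for any such ratio the direction $d:=(-a,b)$ satisfies $\langle d,\nabla \MPlatform(w)\rangle>0$ and $\langle d,\nabla c(w)\rangle<0$. For sufficiently small $\eta>0$, the perturbed content $w':=w+\eta d$ stays in $\mathbb{R}^2_{\ge 0}$ (using $\wcostly>0$), has strictly higher engagement, and strictly lower cost. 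Continuity of $u(\cdot,t^*)$ together with $u(w,t^*)>0$ lets me further shrink $\eta$ so that $u(w',t^*)\ge 0$; the assumed monotonicity of $\{u(\cdot,t)\ge 0\}$ in $t$ then extends this to every $t\ge t^*$, giving $T(w')\supseteq T(w)$. At each type, creator $i$'s selection probability is weakly monotone in $\MPlatform(w_i)$ and weakly monotone in membership in $T(w_i)$, so $w'$'s selection probability is at least that of $w$; combined with $c(w')<c(w)$, this yields $U_i(w';\mathbf{w}_{-i})>U_i(w;\mathbf{w}_{-i})$, contradicting optimality.

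The step I expect to require the most care is the final one: arguing that bumping $\MPlatform$ strictly upward, and weakly enlarging $T$, really does not hurt the selection probability once ties broken uniformly at random are accounted for. The key is to pick $d$ so that $\Delta\MPlatform>0$ is \emph{strict}---which is possible precisely because the gradient-slope separation is strict---so that creator $i$'s score strictly increases at every type in $T(w)\subseteq T(w')$, making the ``$w'$ is selected at least as often as $w$'' comparison go through cleanly at each realized user type.
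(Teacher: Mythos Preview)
Your proof is correct and complete. The overall architecture matches the paper's---split on whether $T(w)=\emptyset$, and in the nonempty case work with the minimal served type $t^*$---but your \emph{core step} differs in organization. The paper handles the case $T(w)\neq\emptyset$ by invoking the auxiliary Lemma~\ref{lemma:uniqueness}, which characterizes optima of the constrained program $\min_w c(w)$ subject to $u(w,t^*)\ge 0$ and $\MPlatform(w)\ge m$; since any optimum of that program lies in $\mathcal{C}^{\text{aug}}_{t^*}$ and $w$ is assumed not to, the lemma yields a strictly cheaper $w^*\in\mathcal{C}^{\text{aug}}_{t^*}$ with engagement at least $m$. You instead inline the relevant gradient argument directly: choosing a direction $d=(-a,b)$ with $b/a$ strictly between $(\nabla\MPlatform)_1/(\nabla\MPlatform)_2$ and $(\nabla c)_1/(\nabla c)_2$ gives a local perturbation that simultaneously strictly raises $\MPlatform$ and strictly lowers $c$, and continuity of $u$ at the slack constraint $u(w,t^*)>0$ lets you preserve $T(w)$. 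Both routes rest on the same gradient-slope separation assumption; yours is more self-contained for this lemma, while the paper's factors the work into a reusable structural lemma. One incidental point: your treatment of the $T(w)=\emptyset$ case is actually more careful than the paper's, which asserts $\mathbb{E}[U_i(w;\mathbf{w}_{-i})]<0$ on the grounds that ``no user will consume''---but the creator utility in~\eqref{eq:producerutility} is based on \emph{selection}, not consumption, and a score-zero creator can still be selected via tie-breaking. Your argument that $[0,0]$ weakly raises the score at every type (hence weakly raises selection probability) while strictly cutting cost is the right way to close that case.
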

\begin{proof}[Proof of Lemma \ref{lemma:bestresponse}]
Assume for sake of contradiction that a best-response $w'$ satisfies $w' \not\in \cup_{t \in \mathcal{T}} \Caugt \cup \left\{w \mid c(w) = 0 \right\}$. Let $m = \MPlatform(w)$. We will show that $w'$ is not a best response. 

Let $T := \left\{t \in \mathcal{T} \mid u(w', t) \ge 0\right\}$ be the set of types for which $w'$ incurs nonnegative user utility. For the remainder of the analysis, we split into two cases: $T = \emptyset$ and $T \neq \emptyset$. 

First, suppose that $T = \emptyset$. Then no user will never consume content, so since $c(w') > 0$, it holds that $\mathbb{E}_{\mathbf{w}_{-i} \sim \mathbf{\mu}_{-i}}[U_i(w'; \mathbf{w}_{-i})] < 0$. However, if the creator were to instead choose $[0,0]$ which incurs $0$ cost, they would get nonnegative utility 
\[\mathbb{E}_{\mathbf{w}_{-i} \sim \mathbf{\mu}_{-i}}[U_i([0,0]; \mathbf{w}_{-i})] \ge 0 > \mathbb{E}_{\mathbf{w}_{-i} \sim \mathbf{\mu}_{-i}}[U_i(w'; \mathbf{w}_{-i})] .\] Thus, $w'$ is not a best response, which is a contradiction.

Next, suppose that $T \neq \emptyset$. By the assumptions on $u$, it holds that $t = \min_{t' \in \mathcal{T}} t' \in T$. Then, $w$ is a feasible solution to the optimization program $\min_w c(w) \text{ s.t. } u(w, t) \ge 0, \MPlatform(w) \ge m$. By Lemma \ref{lemma:uniqueness}, there exists $w^* \in \Caugt$ that is a feasible solution to $\min_w c(w) \text{ s.t. } u(w, t) \ge 0, \MPlatform(w) \ge m$ such that $c(w^*) < c(w')$. Moreover, by assumption (A3), since $u(w, t) \ge 0$, we see that $u(w^*, t') \ge 0$ for all $t' \in T$. Thus, 
\[\mathbb{E}_{\mathbf{w}_{-i} \sim \mathbf{\mu}_{-i}}[U_i(w^*; \mathbf{w}_{-i})] > \mathbb{E}_{\mathbf{w}_{-i} \sim \mathbf{\mu}_{-i}}[U_i(w'; \mathbf{w}_{-i})],\]
so $w'$ is not a best response which is a contradiction. 
\end{proof}

We can now deduce Lemma \ref{lemma:ctaugunion}.
\begin{proof}[Proof of Lemma \ref{lemma:ctaugunion}]
Assume for sake of contradiction that $w \in \text{supp}(\mu_i)$ satisfies $w \not\in \cup_{t \in \mathcal{T}} \Caugt \cup \left\{w \mid c(w) = 0 \right\}$. By Lemma \ref{lemma:bestresponse}, $w$ is not a best response, which is a contradiction.  
\end{proof}

Proposition \ref{prop:positivecorrelation} follows immediately from Lemma \ref{lemma:ctaugunion} along with the lemmas in \Cref{appendix:auxlemmas}. 
\begin{proof}[Proof of Proposition \ref{prop:positivecorrelation}]
Applying Lemma \ref{lemma:ctaugunion} and Lemma \ref{lemma:one-to-one}, we know that:
\[\text{supp}(\mu_i) \subseteq \cup_{t \in \mathcal{T}} \Caugt \cup \left\{w \mid c(w) = 0 \right\} = \cup_{t \in \mathcal{T}} \left\{ [f_t(\wcheap), \wcheap] \mid \wcheap \ge 0 \right\} \cup \left\{w \mid c(w) = 0 \right\}.\]
Using the fact that gaming is not costless, we see that:
\begin{align*}
 \text{supp}(\mu_i) &\subseteq \cup_{t \in \mathcal{T}} \left\{ [f_t(\wcheap), \wcheap] \mid \wcheap \ge 0 \right\} \cup \left\{w \mid c(w) = 0 \right\} \\
 &= \cup_{t \in \mathcal{T}} \left\{ [f_t(\wcheap), \wcheap] \mid \wcheap \ge 0 \right\} \cup \left\{[0,0] \right\}   
\end{align*}
The fact that the functions $f_t$ are weakly increasing follows from Lemma \ref{lemma:increasing}.
\end{proof}

Theorem \ref{thm:positivecorrelationhomogeneous} follows as a consequence of Proposition \ref{prop:positivecorrelation}.
\begin{proof}[Proof of Theorem \ref{thm:positivecorrelationhomogeneous}]
Let $\mu_1, \ldots, \mu_P$ be an equilibrium, and suppose that $w^1, w^2 \in \cup_{i \in [P]} \text{supp}(\mu_i)$. By Proposition \ref{prop:positivecorrelation}, we see that:
\[\cup_{i \in [P]} \text{supp}(\mu_i) \subseteq \left\{ [f_t(\wcheap), \wcheap] \mid \wcheap \ge 0 \right\} \cup \left\{[0,0] \right\}.\]
Using the fact that $f_t$ is weakly increasing and $f_t(0) \ge 0$, we see that if $\wcheap^2 \ge \wcheap^1$, then $\wcostly^2 \ge \wcostly^1$. 
\end{proof}

\section{Proofs for \Cref{sec:equilibriumcharacterizations}}\label{appendix:proofsequilibriumcharacterizations}

\subsection{Proof of \Cref{thm:onetype}}
Before proving Theorem \ref{thm:onetype}, we prove the following properties of $\muengagement(P, c, u, \mathcal{T})$.
\begin{lemma}
\label{lemma:propertiesdistonetype}
Let $\mathcal{T} = \left\{t\right\}$. The distribution $\muengagement(P, c, u, \mathcal{T})$ satisfies the following properties:
\begin{enumerate}[label=(P\arabic*)]
    \item The only possible atom in the distribution $\muengagement(P, c, u, \mathcal{T})$ is at $(0,0)$, and moreover that $(0,0)$ is an atom when $f_t(0) > 0$.
    \item Suppose that $(\wcheap, \wcostly) \in \text{supp}(\muengagement(P, c, u, \mathcal{T}))$. If $(\wcheap, \wcostly) \neq (0,0)$ or if $f_t(0) = 0$, then it holds that $u([\wcheap, \wcostly], t) \ge 0$.
    \item If $(0,0)$ is an atom of $\muengagement(P, c, u, \mathcal{T})$, then $u([0,0], t) < 0$.
\end{enumerate}   
\end{lemma}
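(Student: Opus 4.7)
\medskip

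The overall plan is to trace each property back to the definition of $\muengagement(P,c,u,\mathcal{T})$ in Definition \ref{def:homogeneous}: the marginal CDF $\mathbb{P}[\Wcheap \le \wcheap] = (\min(1,\CCheap_t(\wcheap)))^{1/(P-1)}$ together with the conditional point-mass structure of $\Wcostly \mid \Wcheap$ already pins down the entire distribution, so the three properties should reduce to (i) continuity/monotonicity of $\CCheap_t$ and (ii) the definition $f_t(\wcheap) = \inf\{\wcostly \mid u([\wcostly,\wcheap],t) \ge 0\}$ together with Lemma \ref{lemma:one-to-one}.

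For (P1), I would first argue that $\CCheap_t$ is continuous on $\mathbb{R}_{\ge 0}$. Since $c$ is continuous by assumption, it suffices to show that the minimum-investment function $f_t$ is continuous. Monotonicity of $f_t$ is Lemma \ref{lemma:increasing}, and continuity follows from the intermediate value theorem applied to $u(\cdot,t)$: for any $\wcheap$, either $u([0,\wcheap],t) \ge 0$ (so $f_t(\wcheap)=0$) and by continuity of $u$ this holds in a neighborhood above the transition, or $u([0,\wcheap],t) < 0$, in which case $f_t(\wcheap)$ is the unique positive solution of $u([\wcostly,\wcheap],t)=0$, which depends continuously on $\wcheap$ by the implicit function reasoning since $u$ is strictly increasing in $\wcostly$. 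Continuity of $\CCheap_t$ makes the CDF of $\Wcheap$ continuous on $(0,\infty)$, so the only possible atom in $\Wcheap$ is at $0$. Because $\Wcostly \mid \Wcheap = \wcheap$ is a point mass for every $\wcheap$ in the support, the only possible atom of the joint distribution is at $(0,0)$. Finally, the atom at $0$ has mass $\mathbb{P}[\Wcheap = 0] = (\min(1,\CCheap_t(0)))^{1/(P-1)} = (\min(1,c([f_t(0),0])))^{1/(P-1)}$; since $c([0,0])=0$ and $(\nabla c(w))_1>0$, this mass is positive if and only if $f_t(0) > 0$, giving the ``when'' statement.

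For (P2), I would split on $\wcheap$. If $\wcheap > 0$, then by construction $\wcostly = f_t(\wcheap)$, so $[\wcostly,\wcheap] = [f_t(\wcheap),\wcheap] \in \Caugt$ by Lemma \ref{lemma:one-to-one}, and every element of $\Caugt$ satisfies $u([\wcostly,\wcheap],t) \ge 0$ directly from the definition \eqref{eq:caugt}. If $\wcheap = 0$, then the construction forces $\wcostly = 0$, so the only point in the support with $\wcheap = 0$ is $(0,0)$; the hypothesis $(\wcheap,\wcostly) \ne (0,0)$ or $f_t(0)=0$ thus reduces to the case $f_t(0)=0$, which by definition of $f_t$ means $u([0,0],t) \ge 0$. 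For (P3), I would invert the calculation in (P1): if $(0,0)$ is an atom, then $\mathbb{P}[\Wcheap = 0] > 0$ forces $\CCheap_t(0) = c([f_t(0),0]) > 0$, which by $c([0,0])=0$ and strict increasingness of $c$ in the first coordinate forces $f_t(0) > 0$; by definition of $f_t$ as an infimum and the fact that $u$ is strictly increasing in $\wcostly$, $f_t(0) > 0$ is equivalent to $u([0,0],t) < 0$.

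I do not expect a real obstacle here: the arguments are essentially unpacking the construction, with the only mildly subtle point being the continuity of $f_t$ at the transition between $f_t \equiv 0$ and $f_t > 0$. Since the paper's lemmas (Lemma \ref{lemma:one-to-one} and Lemma \ref{lemma:increasing}) already do most of the structural work on $\Caugt$ and $f_t$, the proof should be short once continuity of $f_t$ is established.
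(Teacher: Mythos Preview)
Your proposal is correct and follows essentially the same approach as the paper's proof: both read the three properties off the construction in Definition~\ref{def:homogeneous} via the CDF of $\Wcheap$ and the point-mass structure of $\Wcostly \mid \Wcheap$. Your version is actually more thorough --- the paper simply asserts ``the specification of the cdf shows that the only possible atom is at $\Wcheap = 0$'' without justifying the continuity of $\CCheap_t$, and its proof of (P2) skips the case split on $\wcheap = 0$ that you correctly handle.
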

\begin{proof}
 To prove (P1), note that if $(\wcheap, \wcostly) \in \text{supp}(\muengagement(P, c, u, \mathcal{T}))$ is an atom, then $\wcheap$ must be an atom in the marginal distribution $\Wcheap$. The specification of the cdf shows that the only possible atom is at $\Wcheap = 0$. Moreover, $0$ is an atom of $\Wcheap$ if and only if $c(f_t(0), 0) > 0$, which occurs if and only if $f_t(0) > 0$. When $f_t(0) > 0$, we further see that the conditional distribution $\Wcostly$ is a point mass at $0$, as desired. 

To prove (P2), note that $\Wcostly$ is a point mass at $f_t(\wcheap)$. By the definition of $f_t$, it holds that $u([\wcheap, \wcostly], t) \ge 0$. 

To prove (P3), note that the first property showed that $(0,0)$ is an atom if and only if $f_t(0) > 0$. By the definition of $f_t$, we see that $u([0,0], t) < 0$ as desired.
   
\end{proof}
We prove Theorem \ref{thm:onetype}. 
\begin{proof}[Proof of Theorem \ref{thm:onetype}]
Let $\mu = \muengagement(P, c, u, \mathcal{T})$ for notational convenience. We analyze the expected utility of  $H(w) = \mathbb{E}_{\mathbf{w}_{-i} \sim \mu_{-i}}[U_i(w; w_{-i})]$ of a content creator if all of the creators choose the strategy $\mu$. It suffices to show that any $w^* \in \text{supp}(\mu)$ is a best response $w^* \in \argmax_{w} H(w)$. We use the properties (P1)-(P4) in Lemma \ref{lemma:propertiesdistonetype}. 

First, we observe that we can write $H(w)$ as:
\begin{align*}
H(w) &= \mathbb{E}_{\mathbf{w}_{-i} \sim \mu_{-i}}[U_i(w; w_{-i})] \\
&= \mathbbm{1}[u(w, t) \ge 0] \cdot \mathbb{P}_{\Wcheap}[\MPlatform(w) > \MPlatform([f_t(\Wcheap), \Wcheap])]^{P-1} - c(w),
\end{align*}
because (P1) implies that the only possible atom occurs at $[0,0]$, (P3) implies that $u([0,0], t) < 0$ if $[0,0]$ is an atom, and (P2) implies that $\mathbbm{1}[u(w, t) \ge 0] = 0$ for $(\wcheap, \wcostly) \neq (0,0)$.  

If $w \in \text{supp}(\mu)$, then we claim that $H(w) = 0$. If $\wcheap = 0$ and $f_t(\wcheap) = 0$, it is immediate that $H(w) = 0$. If $\wcheap > 0$, then $w = [f_t(\wcheap), \wcheap]$. By (P2), it holds that $u(w, t) \ge 0$. This means that: 
\begin{align*}
H(w) &= \mathbb{P}_{\Wcheap}[\MPlatform([f_t(\wcheap), \wcheap]) > \MPlatform([f_t(\Wcheap), \Wcheap])]^{P-1} - c(w) \\
&=_{(1)} \mathbb{P}_{\Wcheap}[\wcheap > \Wcheap]^{P-1} - c(w) \\ 
&= 0,    
\end{align*}
where (1) uses the fact that $\MPlatform([f_t(\wcheap), \wcheap])$ is strictly increasing in $\wcheap$ (Lemma \ref{lemma:increasing}). 

The remainder of the proof boils down to showing that $H(w) \le 0$ for any $w$. If $u(w, t) < 0$, then $H(w) \le 0$. If $u(w, t) \ge 0$, then 
\[ H(w) = \mathbb{P}_{\Wcheap}[\MPlatform(w) > \MPlatform([f_t(\Wcheap), \Wcheap])]^{P-1} - c(w).\]
It suffices to show that $H(w) \le 0$ at any best-response $w$ such that $u(w, t) \ge 0$. If $w$ is a best response and $u(w, t) \ge 0$, then it must be true that $w$ is a solution to \eqref{eq:inducedcost}. By Lemma \ref{lemma:uniqueness}, this means that $w \in \Caugt$, and by Lemma \ref{lemma:one-to-one}, this means that $w$ is of the form $[f_t(\wcheap), \wcheap]$, which means that:
\[H(w) = \mathbb{P}_{\Wcheap}[\wcheap > \Wcheap]^{P-1} - c(w) \le 0, \]
which proves the desired statement. 
    
\end{proof}

\subsection{Proof of Theorem \ref{thm:onetypeunique}}

We prove Theorem \ref{thm:onetypeunique}.
\begin{proof}[Proof of Theorem \ref{thm:onetypeunique}]

The high-level idea of the proof is to define a new game with a restricted action space that is easier to analyze (we define slightly different variant games for Part 2 and Part 3). Suppose that $\mu$ is a symmetric mixed equilibrium. By Lemma \ref{lemma:ctaugunion} and Lemma \ref{lemma:one-to-one}, we see that: 
\[\text{supp}(\mu) \subseteq \Caugt \cup \left\{w \mid c(w) = 0 \right\} = \left\{ [\wcheap, f_t(\wcheap)] \mid \wcheap \ge 0 \right\} \cup \left\{w \mid c(w) = 0 \right\}.\]
Using the assumption that gaming tricks are costly (i.e. $(\nabla c(w))_2 > 0$ for all $w \in \mathbb{R}_{\ge 0}^2$), we obtain that:
\[\text{supp}(\mu) \subseteq \left\{ [\wcheap, f_t(\wcheap)] \mid \wcheap \ge 0 \right\} \cup \left\{[0, 0] \right\}.\]

This simplification will ultimately enable us to convert the 2-dimensional action space to the 1-dimensional space specified by engagement. By Lemma \ref{lemma:increasing} and since $\MPlatform$ is strictly increasing in both arguments, it holds that $\MPlatform([f_t(\wcheap), \wcheap])$ is a strictly increasing function of $\wcheap$. Let 
\[m_{\text{min}} := \inf_{\wcheap \ge 0} \MPlatform([f_t(\wcheap), \wcheap])\]
and 
\[m_{\text{max}} := \sup_{\wcheap \ge 0} \MPlatform([f_t(\wcheap), \wcheap]).\]
(Note that $m_{\text{max}}$ might be equal to $\infty$.) This means that for each value $m \in [m_{\text{min}}, m_{\text{max}})$, there is exactly one value $w = [f_t(\wcheap), \wcheap]$ such that $\MPlatform(w) = m$. 

We break into two cases: $f_t(\wcheap) = 0$ and $f_t(\wcheap) \ge 0$ for the remainder of the analysis. 

\paragraph{Case 1: $f_t(\wcheap) = 0$.} In the case that $f_t(\wcheap) = 0$, we can further simplify:
\[\text{supp}(\mu) \subseteq \left\{ [\wcheap, f_t(\wcheap)] \mid \wcheap \ge 0 \right\} \cup \left\{[0, 0] \right\} = \left\{ [\wcheap, f_t(\wcheap)] \mid \wcheap \ge 0 \right\}.\]

Let's consider a different game where the action set is $A = [m_{\text{min}}, m_{\text{max}})$. The action $m \in [m_{\text{min}}, \infty)$ corresponds to the unique value $w = [f_t(\wcheap), \wcheap, ]$ such that $\MPlatform(w) = m$. In this new game, the utility function is 
\begin{equation}
\label{eq:tildeU}
  \tilde{U}(a_i; a_{-i}) := 1[a_i = \max_{j \in [P]} a_j] - \tilde{c}(a)  
\end{equation}
where ties are broken uniformly at random and where $\tilde{c}(m) = c(w)$ where $w$ is the unique value in $\left\{ [\wcheap, f_t(\wcheap)] \mid \wcheap \ge 0 \right\}$ such that $\MPlatform(w) = m$. We see that $\tilde{c}$ is strictly increasing in $m$ since $f_t(\wcheap)$ is weakly increasing in $\wcheap$ by Lemma \ref{lemma:increasing} and since gaming tricks are costly by assumption. (In the remainder of the proof, we also slightly abuse notation and implicitly extend the definition of $\tilde{U}$ to mixed strategies by considering expected utility.) 

We first show that there exists a symmetric equilibrium in the new game with action set $A$. We use the fact that we have constructed a symmetric equilibrium in the original game in Theorem \ref{thm:onetype}. The transformed distributions $\tilde{\mu}$ is a symmetric equilibrium in the new game; thus, there exists an equilibrium in the new game. 

We claim that it suffices to show that there is at most one symmetric equilibrium in the new game with action set $A$. Every action $a \in A$ corresponds to a unique $w$. Thus, uniqueness of equilibrium in the transformed game guarantees uniqueness of equilibrium in the original game. 

The remainder of the proof of this case boils down to showing that there is at most one symmetric mixed equilibrium $\tilde{\mu}$ in the new game. Let $\tilde{\mu}$ be a symmetric mixed Nash equilibrium of the transformed game. 

First, we claim that $\mathbb{P}_{M \sim \tilde{\mu}}[M = m'] = 0$ for all $m'$ (no point masses). If $\mathbb{P}[M = m'] > 0$, then because of uniform-at-random tiebreaking, it holds that $m' + \epsilon$ for some $\epsilon$ performs strictly better than $m'$ for some $\epsilon > 0$.

Next, we claim that $m_{\text{min}} \in \text{supp}(\tilde{\mu})$. Assume for sake of contradiction that $m' = \inf_{m \in \text{supp}(\tilde{\mu})} > m_{\text{min}}$. At $m = m'$, the creator gets a utility of $-\tilde{c}(m') < 0$, which means that the creator would get higher utility from the deviation $m = m_{\text{min}}$ where the utility would be $0$. (Since $f_t(\wcheap) = 0$, it holds that $\tilde{c}(m_{\text{min}}) = 0$.) This is a contradiction.  

For $m \in \text{supp}(\tilde{\mu})$, we claim that $\mathbb{P}_{M \sim \tilde{\mu}}[M \le m] = (\tilde{c}(m))^{1/(P-1)}$. This is because $m_{\text{min}} \in \text{supp}(\tilde{\mu})$ and the utility at $m_{\text{min}}$ is $0$, so the utility at any $m \in \text{supp}(\tilde{\mu})$ must be $0$. This implies that 
$\mathbb{P}_{M \sim \tilde{\mu}}[M \le m] = (\tilde{c}(m))^{1/(P-1)}$. 

Finally, it suffices to show that the support is a closed interval of the form $[m_{\text{min}}, m^*]$ where $m^*$ is the unique value such that $\tilde{c}(m^*) = 1$. We first see that since $\mathbb{P}_{M \sim \tilde{\mu}}[M \le m] \le 1$, it must hold that $m \le m^*$ for any $m \in \text{supp}(\tilde{\mu})$. 
To see this, let $Q = \text{supp}(\tilde{\mu}) \cup [m^*, m_{\text{max}}]$. Since $Q$ is a finite union of closed sets, it is a closed set. This means that $\bar{Q} = [m_{\text{min}}, m_{\text{max}}] \setminus Q$ is an open set. It suffices to prove that $\bar{Q} = \emptyset$. Assume for sake of contradiction $\bar{Q} \neq \emptyset$. If $m' \in \bar{Q}$, let $m_1 = \inf \left\{ m < m' \mid m \in \text{supp}(\tilde{\mu}) \right\}$ and $m_2 = \sup \left\{ m > m' \mid m \in \text{supp}(\tilde{\mu}) \right\}$. Since $\bar{Q}$ is open, there is an open neighborhood such that $B_{\epsilon}(m') \subseteq \bar{Q}$, which means that $m_2 > m_1$. However, this means that $\mathbb{P}_{M \sim \tilde{\mu}}[M = m_2] -= \mathbb{P}_{M \sim \tilde{\mu}}[M \le m_2] - \mathbb{P}_{M \le \tilde{\mu}}[M = m_2] > 0$, which is a contradiction.  

This proves the desired statement for Case 1. 

\paragraph{Case 2: $f_t(\wcheap) > 0$.} For this case, let's consider a different game where the action set is $A = \left\{\bot\right\} \cup [m_{\text{min}}, \infty)$. The action $m \in [m_{\text{min}}, \infty)$ corresponds to the unique value $w = [f_t(\wcheap), \wcheap, ]$ such that $\MPlatform(w) = m$. In this new game, the utility function is 
\begin{equation}
\label{eq:tildeUbot}
  \tilde{U}(a_i; a_{-i}) := 1[a_i \neq \bot] \cdot 1[a_i = \max_{j \in [P]} a_j] - \tilde{c}(a)  
\end{equation}
where we use the ordering that $\bot < m_{\min}$, where ties are broken uniformly at random and where $\tilde{c}(\bot) = 0$ and $\tilde{c}(m) = c(w)$ where $w$ is the unique value in $\left\{ [f_t(\wcheap), \wcheap] \mid \wcheap \ge 0 \right\}$ such that $\MPlatform(w) = m$. We similarly see that $\tilde{c}$ is strictly increasing in $m$ since $f_t(\wcheap)$ is weakly increasing in $\wcheap$ by Lemma \ref{lemma:increasing} and since gaming tricks are costly by assumption. (As before, in the remainder of the proof, we also slightly abuse notation and implicitly extend the definition of $\tilde{U}$ to mixed strategies by considering expected utility.)

By an analogous argument to the previous case, we know that there exists a symmetric equilibrium in the new game with action set $A$, and it suffices to show that there is at most one symmetric equilibrium in the new game with action set $A$. The remainder of the proof of this case boils down to showing that there is at most one symmetric mixed equilibrium $\tilde{\mu}$ in the new game. Let $\tilde{\mu}$ be a symmetric mixed Nash equilibrium of the transformed game. We split into several subcases: (1) $\tilde{c}(m_{\text{min}}) > 1$, (2) $\tilde{c}(m_{\text{min}})  = 1$, and (3) $0 < \tilde{c}(m_{\text{min}}) > 1 < 1$. 

\paragraph{Subcase 2a: $\tilde{c}(m_{\text{min}}) > 1$.} Choosing $\bot$ is a strictly dominant strategy. This means that the unique equilibrium is where each $\tilde{\mu}$ is a point mass at $\bot$. 

\paragraph{Subcase 2b: $\tilde{c}(m_{\text{min}}) = 1$.} We  claim that $\tilde{\mu}$ as a point mass $\bot$ is still the unique symmetric mixed Nash equilibrium. Assume for sake of contradiction that there is an equilibrium $\tilde{\mu}$ where $\mathbb{P}_{m \sim \tilde{\mu}}[a \neq \bot] > 0$. It must be true that $\text{supp}(\tilde{\mu}) \subseteq \left\{\bot, m_{\text{min}} \right\}$ (otherwise, $\bot$ would be a better response). Thus,  $\mathbb{P}_{a^* \sim \tilde{\mu}}[a \neq \bot] > 0$. However, this means that: 
\[\tilde{U}(a^*; [\tilde{\mu}, \ldots, \tilde{\mu}]) < 0 = \tilde{U}(\bot; [\tilde{\mu}, \ldots, \tilde{\mu}])\] because of uniform-at-random tiebreaking. This is a contradiction since $a^*$ needs to be a best response. 

\paragraph{Subcase 2b: $0 < \tilde{c}(m_{\text{min}}) < 1$.}
First, we claim that $\mathbb{P}[m = m'] = 0$ for all $m' \in [m_{\text{min}}, \infty)$. If $\mathbb{P}[m = m'] > 0$, then because of uniform-at-random tiebreaking, it holds that $m' + \epsilon$ for some $\epsilon$ performs strictly better than $m'$ for some $\epsilon > 0$.

Next, we claim that $\mathbb{P}_{\tilde{\mu}}[\bot]  = \tilde{c}(m_{\text{min}})^{1/(P-1)}$. First, assume for sake of contradiction that $\mathbb{P}_{\tilde{\mu}}[\bot] > \tilde{c}(m_{\text{min}})^{1/(P-1)}$: in this case, 
\[U(m_{\text{min}}; [\tilde{\mu}, \ldots, \tilde{\mu}]) = \alpha^{P-1} - \tilde{c}(m_{\text{min}}) > 0 = U(\bot; [\tilde{\mu}, \ldots, \tilde{\mu}])\] which is a contradiction. Next, assume for sake of contradiction that $\mathbb{P}_{\tilde{\mu}}[\bot] < \tilde{c}(m_{\text{min}})^{1/(P-1)}$. Since $\tilde{c}(m_{\text{min}})^{1/(P-1)} < 1$ by assumption, this means that that $\mathbb{P}_{\tilde{\mu}}[\bot] < 1$. Let $m' = \min \left\{m \ge m_{\text{min}} \mid m \in \text{supp}(\mu)\right\}$. 
Using that $\mathbb{P}_{\tilde{\mu}}[m = m_{\text{min}}] = 0$, it holds that:
\[\tilde{U}(m'; [\tilde{\mu}, \ldots, \tilde{\mu}]) = \alpha^{P-1} - \tilde{c}(m') \le \alpha^{P-1} - \tilde{c}(m_{\text{min}}) < 0 = U(\bot; [\tilde{\mu}, \ldots, \tilde{\mu}]),\]
which is a contradiction. 

For $m \in \text{supp}(\tilde{\mu})$, we claim that $\mathbb{P}_{M \sim \tilde{\mu}}[M \le m] = (\tilde{c}(m))^{1/(P-1)}$. This is because $\bot \in \text{supp}(\tilde{\mu})$ and the utility at $\bot$ is $0$, so the utility at any $m \in \text{supp}(\tilde{\mu})$ must be $0$. This implies that 
$\mathbb{P}_{M \sim \tilde{\mu}}[M \le m] = (\tilde{c}(m))^{1/(P-1)}$. 

Finally, it suffices to show that the support is a closed interval of the form $[m_{\text{min}}, m^*]$ where $m^*$ is the unique value such that $\tilde{c}(m^*) = 1$. We first see that since $\mathbb{P}_{M \sim \tilde{\mu}}[M \le m] \le 1$, it must hold that $m \le m^*$ for any $m \in \text{supp}(\tilde{\mu})$. 
To see this, let $S = \text{supp}(\tilde{\mu}) \cup [m^*, m_{\text{max}}]$. Since $Q$ is a finite union of closed sets, it is a closed set. This means that $\bar{Q} = [m_{\text{min}}, m_{\text{max}}] \setminus Q$ is an open set. It suffices to prove that $\bar{Q} = \emptyset$. Assume for sake of contradiction $\bar{Q} \neq \emptyset$. If $m' \in \bar{Q}$, let $m_1 = \inf \left\{ m < m' \mid m \in \text{supp}(\tilde{\mu}) \right\}$ and $m_2 = \sup \left\{ m > m' \mid m \in \text{supp}(\tilde{\mu}) \right\}$. Since $\bar{Q}$ is open, there is an open neighborhood such that $B_{\epsilon}(m') \subseteq \bar{Q}$, which means that $m_2 > m_1$. However, this means that $\mathbb{P}_{M \sim \tilde{\mu}}[M = m_2] -= \mathbb{P}_{M \sim \tilde{\mu}}[M \le m_2] - \mathbb{P}_{M \le \tilde{\mu}}[M = m_2] > 0$, which is a contradiction.

\end{proof}

\subsection{Useful setup and lemmas for heterogeneous users}\label{subsec:tools}

In our analysis, it is cleaner to work in the reparametrized space 
\[S := \left\{(\MPlatform([\wcostly, \wcheap]) - s, t) \mid t \in \mathcal{T}, u([\wcostly, \wcheap], t) = 0 \right\}\]
than directly over the content space $\mathbb{R}_{\ge 0}$. (This is the same reparameterized space described in \Cref{appendix:characterization2types}.) Recall that we map each $(v, t) \in S$ to the unique content $h(v, t) \in \mathbb{R}_{\ge 0}^2$ of the form $h(v, t) = [f_{t}(\wcheap), \wcheap]$ such that $\MPlatform([f_{t}(\wcheap), \wcheap]) = v - s$. Conceptually, $h(v,t)$ captures content with engagement $v - s$ optimized for winning type $t$. 

Using that the coefficients $a_t$ are strictly decreasing as given by Assumption \ref{assumption:linearity}, it is easy to see that $h$ is an one-to-one function mapping $S$ to 
\[\cup_{t \in \mathcal{T}} \Ct = \left\{ [\wcostly, \wcheap] \mid t \in \mathcal{T}, u([\wcostly, \wcheap], t) = 0 \right\}.\] We let $h^{-1}$ denote its inverse which is defined on the image $\cup_{t \in \mathcal{T}} \Ct $.

We show that if the support of $\mu$ is contained in $\cup_{t \in \mathcal{T}} \Ct$, then there exists a best response $w$ in $\Ct$.
\begin{lemma}
\label{lemma:bestresponsestronger}
Suppose that gaming is costless (i.e., $\nabla(c(w))_2 = 0$ for all $w \in \mathbb{R}_{\ge 0}^2$), and suppose that $u([0, 0], t) \ge 0$ for all $t \in \mathcal{T}$. Let $\mu$ be a distribution supported on $\cup_{t \in \mathcal{T}} \Ct$. Then, in the game with $M = \MPlatform$, there exists a best response $w \in \mathbb{R}_{\ge 0}^2$ to:
\[\argmax_{w \in \mathbb{R}_{\ge 0}^2} \mathbb{E}_{\textbf{w}_{-i} \sim \mu^{P-1}} [U_i(w; \textbf{w}_{-i})]  \]
such that $w \in \cup_{t \in \mathcal{T}} \Ct$.
\end{lemma}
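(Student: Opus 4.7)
The plan is to invoke Lemma \ref{lemma:bestresponse} to locate any best response in $\cup_{t \in \mathcal{T}} \Caugt \cup \{w : c(w) = 0\}$, and then use the extra hypotheses to modify it into a best response lying in $\cup_{t \in \mathcal{T}} \Ct$. Existence of a best response follows from standard compactness arguments along the lines of the proof of Theorem \ref{thm:equilibriumexistence}: any best response must satisfy $c(w) \leq 1$ (since the deviation $[0,0]$ yields nonneg utility under the hypothesis $u([0,0], t) \geq 0$), bounding $\wcostly$; and content with $\wcheap$ so large that $u(w, t) < 0$ for every $t$ is dominated by $[0,0]$, so we may restrict to a compact region. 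Applying Lemma \ref{lemma:bestresponse}, any best response $w$ lies in $\cup_{t \in \mathcal{T}} \Caugt \cup \{w' : c(w') = 0\}$. Under costless gaming and $c([0,0]) = 0$, the equality $c(w') = 0$ forces the first coordinate of $w'$ to vanish, and inspection of the definition of $\Caugt$ shows that $\Caugt \setminus \Ct$ likewise consists only of points with first coordinate zero. Hence if $w \notin \cup_{t \in \mathcal{T}} \Ct$, then $w = [0, \wcheap]$ for some $\wcheap \geq 0$ with $u(w, t) \neq 0$ for every $t \in \mathcal{T}$.

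For each $t \in \mathcal{T}$, the hypotheses $u([0,0], t) \geq 0$ and $u([0, \wcheap'], t) \to -\infty$ as $\wcheap' \to \infty$, combined with continuity of $u$, yield a unique $\bar{w}_t \geq 0$ with $u([0, \bar{w}_t], t) = 0$. Let $T^+ := \{t \in \mathcal{T} : u(w, t) > 0\}$. The proposed replacement is: if $T^+ \neq \emptyset$, let $t_0 := \min T^+$ and set $w' := [0, \bar{w}_{t_0}] \in \mathcal{C}_{t_0}$; otherwise let $t^* := \min \mathcal{T}$ and set $w' := [0, \bar{w}_{t^*}] \in \mathcal{C}_{t^*}$. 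In either case $w' \in \cup_{t \in \mathcal{T}} \Ct$ and $c(w') = 0 = c(w)$, since both points have first coordinate zero and gaming is costless.

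It remains to verify $w'$ is a best response. In the first case, $u(w, t_0) > 0$ together with $u(w', t_0) = 0$ and strict monotonicity of $u$ in its second coordinate give $\bar{w}_{t_0} \geq \wcheap$, whence $\MPlatform(w') \geq \MPlatform(w)$. The type-monotonicity assumption then forces $\{t \in \mathcal{T} : u(w', t) \geq 0\} = \{t \in \mathcal{T} : u(w, t) \geq 0\}$: for $t \geq t_0$, both sides hold by upward-closedness applied to $u(w', t_0) = 0$ and $u(w, t_0) > 0$; for $t < t_0$, monotonicity in $\wcheap$ gives $u(w', t) \leq u(w, t) < 0$. Consequently $w'$ competes for the same user types as $w$ with weakly higher engagement score, so against opponents drawn from $\mu^{P-1}$ (supported in $\cup_{t \in \mathcal{T}} \Ct$) its expected winning probability is at least that of $w$. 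Hence $\mathbb{E}[U_i(w'; \mathbf{w}_{-i})] \geq \mathbb{E}[U_i(w; \mathbf{w}_{-i})]$, and the reverse inequality holds because $w$ is a best response, so $w'$ is too. In the second case $T^+ = \emptyset$, $w$ wins no user and $\mathbb{E}[U_i(w; \mathbf{w}_{-i})] = 0$, while $w'$ has cost $0$ and nonneg winning probability, so the best-response squeeze forces $\mathbb{E}[U_i(w'; \mathbf{w}_{-i})] = 0$ as well. The main subtlety is ensuring that pushing $\wcheap$ up to $\bar{w}_{t_0}$ preserves the set of types granting nonneg utility, which hinges on the type-monotonicity assumption.
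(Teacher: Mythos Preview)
Your proof is correct and follows essentially the same approach as the paper's: invoke Lemma~\ref{lemma:bestresponse}, then push $\wcheap$ up to the nearest $\Ct$ curve while checking that the set of nonnegative-utility types is preserved and engagement weakly increases. The only cosmetic difference is that you first reduce to $\wcostly = 0$ before the case split (using costless gaming, which makes the cost comparison trivially an equality), whereas the paper keeps $\wcostly$ general and pushes $\wcheap$ up along the line $\wcostly = \text{const}$; the underlying idea is the same.
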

\begin{proof}
Let $w$ be any best-response. By Lemma \ref{lemma:bestresponse}, we know that $w \in \cup_{t \in \mathcal{T}} \Caugt \cup \left\{w \mid c(w) = 0\right\}$. We split into two cases: (1) $u([\wcostly, \wcheap], t) < 0$ for all $t \in \mathcal{T}$, and (2) $u([\wcostly, \wcheap], t) \ge 0$ for some $t \in \mathcal{T}$. 

\paragraph{Case 1: $u([\wcostly, \wcheap], t) < 0$ for all $t \in \mathcal{T}$.} In this case, no users will consume the content $w$. Letting $t_{\text{min}} = \min(\mathcal{T})$ and $\wcheap'$ be such that $u([0, \wcheap'], t_{\text{min}}) = 0$, since gaming is costless, it holds that:
\[\mathbb{E}_{\textbf{w}_{-i} \sim (\muengagement(P, c, u, \mathcal{T}))^{P-1}} [U_i(w; \textbf{w}_{-i})] \le \mathbb{E}_{\textbf{w}_{-i} \sim (\muengagement(P, c, u, 
\mathcal{T}))^{P-1}} [U_i([0, \wcheap']; \textbf{w}_{-i})]. \]
This means that there exists a best-response in $\cup_{t \in \mathcal{T}} \Ct$. 

\paragraph{Case 2: $u([\wcostly, \wcheap], t) \ge 0$ for some $t \in \mathcal{T}$.} 
For the second case, where  $u([\wcostly, \wcheap], t) \ge 0$ for some $t \in \mathcal{T}$, let: 
\[\wcheap' = \inf \left\{\wcheap'' \mid u([\wcostly, \wcheap''], t) = 0 \text{ for some } t \in \mathcal{T}, \wcheap'' \ge \wcheap \right\}.\] We see that by construction, it holds that $\left\{t \in \mathcal{T} \mid u(w, t) \ge 0 \right\} = \left\{t \in \mathcal{T} \mid u([\wcostly, \wcheap']) \ge 0 \right\}$ and it also holds that $\MPlatform(w) \le \MPlatform([\wcostly, \wcheap'])$. Since gaming is costless, we see that 
\[\mathbb{E}_{\textbf{w}_{-i} \sim (\muengagement(P, c, u, \mathcal{T}))^{P-1}} [U_i(w; \textbf{w}_{-i})] \le \mathbb{E}_{\textbf{w}_{-i} \sim (\muengagement(P, c, u, 
\mathcal{T}))^{P-1}} [U_i([\wcostly, \wcheap']; \textbf{w}_{-i})]. \] This means that there exists a best-response in $\cup_{t \in \mathcal{T}} \Ct$. 
   
\end{proof}

We translate the costs into the reparameterized space. 
\begin{lemma}
\label{lemma:costs}
Let $P = 2$, suppose that gaming tricks are costless (that is, $(\nabla(c(w)))_2 = 0$ for all $w \in \mathbb{R}_{\ge 0}^2$), and suppose that $u([0, 0], t) \ge 0$ for all $t \in \mathcal{T}$. Suppose that Assumption \ref{assumption:linearity} holds. Then for any $w \in \cup_{t \in \mathcal{T}} \Ct$, it holds that:
\[ c(w) = \max(0, a_{t} \cdot v - 1),\]
where $(v, t) = h^{-1}(w)$. 
\end{lemma}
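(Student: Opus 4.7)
My plan is to show that for any $w \in \Ct$, the content $w$ itself is an optimum of the program \eqref{eq:inducedcost} defining $\CEngagement_t(\MPlatform(w))$; Assumption \ref{assumption:linearity} then yields $c(w) = \CEngagement_t(\MPlatform(w)) = \max(0, a_t v - 1)$. Throughout I will exploit three facts: $\Ct \subseteq \Caugt$ (from \eqref{eq:caugt}); $\MPlatform$ is strictly increasing along $\Caugt$ by Lemma \ref{lemma:increasing}, so $w \in \Caugt$ is the \emph{only} point of $\Caugt$ with engagement $\MPlatform(w)$; and gaming is costless, so $c$ depends only on $\wcostly$ and vanishes iff $\wcostly = 0$.

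In the easy case $a_t v > 1$, Assumption \ref{assumption:linearity} gives $\CEngagement_t(v-s) = a_t v - 1 > 0$. Lemma \ref{lemma:uniqueness} then produces a unique optimum $w^* \in \Caugt$ with $\MPlatform(w^*) = v - s = \MPlatform(w)$, and strict monotonicity of $\MPlatform$ along $\Caugt$ forces $w^* = w$, giving $c(w) = a_t v - 1$.

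The harder case $a_t v \le 1$ requires proving $c(w) = 0$, equivalently $\wcostly = 0$. Using $u([0,0], t) \ge 0$ together with strict monotonicity of $u([0, \cdot], t)$ descending to $-\infty$, there is a unique $\wcheap^*_t \ge 0$ with $u([0, \wcheap^*_t], t) = 0$, and any $w \in \Ct$ necessarily has $\wcheap \ge \wcheap^*_t$ (since for $\wcheap < \wcheap^*_t$ one would have $u([\wcostly, \wcheap], t) \ge u([0, \wcheap], t) > 0$). It therefore suffices to show $\wcheap = \wcheap^*_t$, which I will obtain from the identity $\MPlatform([0, \wcheap^*_t]) + s = 1/a_t$. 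The inequality $\le$ holds because $[0, \wcheap^*_t]$ is feasible with cost $0$, forcing $\CEngagement_t(\MPlatform([0, \wcheap^*_t])) = 0$ and hence $a_t(\MPlatform([0, \wcheap^*_t]) + s) \le 1$ by Assumption \ref{assumption:linearity}. For the reverse, observe that for every threshold $m > \MPlatform([0, \wcheap^*_t])$, Lemma \ref{lemma:uniqueness} combined with strict monotonicity of $\MPlatform$ on $\Caugt$ forces any optimum of \eqref{eq:inducedcost} to satisfy $\wcheap > \wcheap^*_t$, hence $\wcostly > 0$ and strictly positive cost; continuity of $\max(0, a_t(m+s) - 1)$ in $m$ then rules out a strict inequality $a_t(\MPlatform([0, \wcheap^*_t]) + s) < 1$. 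Combining $v \le 1/a_t = \MPlatform([0, \wcheap^*_t]) + s$ with $\MPlatform(w) \ge \MPlatform([0, \wcheap^*_t])$ and strict monotonicity of $\MPlatform$ on $\Ct$ yields $w = [0, \wcheap^*_t]$ and so $c(w) = 0$. The main obstacle is precisely this second case, namely extracting the identity $v^*_t = 1/a_t$ from the linearity assumption; the first case is a direct invocation of Lemma \ref{lemma:uniqueness}.
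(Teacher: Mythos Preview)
Your proof is correct. Both you and the paper reduce the statement to $c(w) = \CEngagement_t(\MPlatform(w))$ and then invoke Assumption~\ref{assumption:linearity}, but the paper establishes this equality in a single uniform stroke rather than by case analysis: since $w \in \Ct \subseteq \Caugt$ is feasible for the program \eqref{eq:inducedcost} at threshold $m = \MPlatform(w)$, one has $c(w) \ge \CEngagement_t(m)$; Lemma~\ref{lemma:uniqueness} supplies an optimum $w' \in \Caugt$ with $\MPlatform(w') \ge \MPlatform(w)$, and because $\MPlatform$ is strictly increasing and $c$ is weakly increasing along $\Caugt$ (Lemma~\ref{lemma:increasing} together with costless gaming and monotonicity of $f_t$), one obtains $c(w') \ge c(w)$, closing the sandwich. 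Your split into $a_t v > 1$ versus $a_t v \le 1$ and the auxiliary identity $\MPlatform([0, \wcheap^*_t]) + s = 1/a_t$ are sound but unnecessary once one sees this monotonicity argument; the paper's route is shorter, while yours has the side benefit of locating exactly where the zero-cost portion of $\Ct$ sits in engagement space.
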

\begin{proof}
We first claim that 
\[ c(w) = \CEngagement_{t}(\MPlatform(w)).\]
To see this, observe that by Lemma \ref{lemma:uniqueness}, there exists $w' \in \Caugt$ such that $\MPlatform(w') \ge \MPlatform(w)$ and $\CEngagement_{t}(\MPlatform(w)) = c(w') \le c(w)$. By Lemma \ref{lemma:increasing}, this implies that $w' = w$, so $c(w) = \CEngagement_{t}(\MPlatform(w))$ as desired.

Next, we observe that by Assumption \ref{assumption:linearity}, it holds that:
\[\CEngagement_{t}(\MPlatform(w)) =  \max(0, a_{t} \cdot v - 1).\]

Putting this all together gives the desired statement. 

\end{proof}

\subsection{Proof of Theorem \ref{thm:Ntypes} and Proposition \ref{prop:2typeswellseparated}}

We prove Theorem \ref{thm:Ntypes} and then deduce Proposition \ref{prop:2typeswellseparated} as a corollary.

We first apply the reparametrization from \Cref{subsec:tools} to the random variable $(\V, T)$ to Definition \ref{def:Ntypes}. Using the specification in in Definition \ref{def:Ntypes}, it is easy to see that every $w \in \text{supp}(\muengagement(P, c, u, \mathcal{T}))$ is $\Ct$, which means that the distribution $h^{-1}(W)$ for $W \sim \muengagement(P, c, u, \mathcal{T})$ is well-defined. In the following proposition, we simplify the distribution $(\V, T)$. 
\begin{proposition}
\label{prop:NtypesMdistribution}
Consider the setup of Definition \ref{def:Ntypes}. Let $(\V, T) = h^{-1}(W)$ where $W \sim \muengagement(P, c, u, \mathcal{T})$. Then it holds that $\mathbb{P}[T = t_i]$ satisfies: 
\[\mathbb{P}[T = t_i] =
\begin{cases}
  \frac{1}{N-i+1} & \text{ if }  1 \le i \le N' -1\\
  1 - \sum_{i'=1}^{N'-1} \frac{1}{N-i'+1} & \text{ if }  i = N' \\
  0 & \text{ if }  i > N'. \\
\end{cases}
\] 
Moreover, the conditional distribution $\V \mid T = t_{i}$ is distributed as a uniform distribution over $\left[\frac{1}{a_{t_i}}, \frac{1}{a_{t_i}} \cdot \left(1 + \frac{1}{N} \right) \right]$ for $1 \le i \le N' - 1$ and a uniform distribution over 
\[\left[\frac{1}{a_{t_{N'}}},  \frac{1}{a_{t_{N'}}} \cdot \left( 1+ \frac{N - N' + 1}{N} \cdot \left(1 - \sum_{j=1}^{N' - 1} \frac{1}{N- j + 1}\right)\right) \right]\]
for $i = N'$. 
\end{proposition}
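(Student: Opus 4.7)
The plan is to unpack Definition \ref{def:Ntypes} into the reparametrized coordinates, identify the type variable $T$ with the index of the active mixture component, and then translate the cdf of $\Wcheap^i$ into the cdf of $V$ via the linear cost identity supplied by Lemma \ref{lemma:costs}.

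First I would verify that $h^{-1}(W)$ is well-defined almost surely, so that the random vector $(V,T)$ makes sense. By Definition \ref{def:Ntypes} and Lemma \ref{lemma:one-to-one}, the $i$-th component $(\Wcheap^i,\Wcostly^i)$ is supported on the curve $\{[f_{t_i}(\wcheap),\wcheap]\mid \wcheap\ge 0\}=\Caugt_{t_i}$. Moreover, because gaming is costless and $u([0,0],t)\ge 0$, we have $\CCheap_{t_i}(\wcheap)=0$ whenever $f_{t_i}(\wcheap)=0$, so the cdf specified in Definition \ref{def:Ntypes} assigns zero mass to $\Caugt_{t_i}\setminus\Ct_{t_i}$. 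Hence the $i$-th component lives on $\Ct_{t_i}$ with probability one, and by the linear cost identity $c(w)=\max(0,a_{t_i}v-1)$ from Lemma \ref{lemma:costs}, distinct $\Ct_{t_i}$ and $\Ct_{t_j}$ can only intersect where both costs vanish, which is a measure-zero set under the cdf specified in the definition. Thus $h^{-1}(W)$ is almost surely well-defined, and when $W$ is drawn from the $i$-th component, $T$ deterministically equals $t_i$.

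Given this, the marginal of $T$ is immediate: $\mathbb{P}[T=t_i]$ is exactly the mixture weight $\alpha^i$, which yields the piecewise formula in the statement (and $\mathbb{P}[T=t_i]=0$ for $i>N'$ since no component targets those types). For the conditional distribution, I would exploit the fact that conditional on $T=t_i$, the random variable $V$ is a monotone transform of $\Wcheap^i$, namely $V=\MPlatform([f_{t_i}(\Wcheap^i),\Wcheap^i])+s$, which is strictly increasing in $\Wcheap^i$ by Lemma \ref{lemma:increasing}. Hence $\mathbb{P}[V\le v\mid T=t_i]=\mathbb{P}[\Wcheap^i\le \wcheap]$ where $\wcheap$ is the unique preimage. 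Substituting $\CCheap_{t_i}(\wcheap)=c([f_{t_i}(\wcheap),\wcheap])=\max(0,a_{t_i}v-1)$ from Lemma \ref{lemma:costs} into the cdf in Definition \ref{def:Ntypes} gives, for $1\le i\le N'-1$,
\[\mathbb{P}[V\le v\mid T=t_i]=\min\bigl(N(a_{t_i}v-1),1\bigr)\qquad(v\ge 1/a_{t_i}),\]
which is the cdf of a uniform distribution on $[1/a_{t_i},(1+1/N)/a_{t_i}]$, as claimed.

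For the $i=N'$ case the computation is analogous but with the rescaled normalizer $\tfrac{N}{N-N'+1}(1-\sum_{j=1}^{N'-1}\tfrac{1}{N-j+1})^{-1}$ from Definition \ref{def:Ntypes}; I would simply substitute $\CCheap_{t_{N'}}(\wcheap)=a_{t_{N'}}v-1$ and solve for the unique $v$ at which the cdf reaches $1$, which recovers the stated upper endpoint $\tfrac{1}{a_{t_{N'}}}\bigl(1+\tfrac{N-N'+1}{N}(1-\sum_{j=1}^{N'-1}\tfrac{1}{N-j+1})\bigr)$. The main (mild) obstacle is the bookkeeping for this last component: one must carefully cancel the mixture weight $\alpha^{N'}$ against the normalizer inside the cdf and confirm that the resulting slope is constant in $v$ so that the distribution is genuinely uniform; no new ideas beyond Lemma \ref{lemma:costs} and monotonicity of $\MPlatform\circ h$ are needed.
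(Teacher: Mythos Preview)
Your proposal is correct and follows essentially the same route as the paper: both identify $T$ with the mixture component index (so $\mathbb{P}[T=t_i]=\alpha^i$), use Lemma~\ref{lemma:costs} to rewrite $\CCheap_{t_i}(\wcheap)=a_{t_i}v-1$ on the support, and substitute into the cdf from Definition~\ref{def:Ntypes} to read off the uniform law of $V\mid T=t_i$. Your treatment is in fact slightly more careful than the paper's in justifying that $h^{-1}(W)$ is well-defined (the paper simply asserts that the support lies in $\cup_t \Ct$), but the argument is otherwise the same.
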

\begin{proof}
It suffices to show that for each $1 \le i \le N'$, it holds that $\V \mid T = t_{i}$ is distributed according to the uniform distribution specified in the proposition statement. 

The remainder of the analysis boils down to analyzing the distribution $\V \mid T = t_{i}$. Let the distributions $W^i = (\Wcheap^i, \Wcostly^i)$ for $1 \le i \le N'$, and the mixture weights $\alpha^i$ be defined as in Definition \ref{def:Ntypes}. Observe that $\V \mid T = t_{i}$ is distributed as $(\MPlatform([\Wcostly^i, \Wcheap^i] + s, t_{i})$. Note that every $w \in \text{supp}(\Wcostly^i, \Wcheap^i)$ is in the image $\cup_{t \in \mathcal{T}} \Ct$, which means that it can be uniquely expressed as $w = h(v, t)$. To translate the costs, we observe that: 
\begin{align*}
 \CCheap_t(\wcheap) &= c([f_{t_i}(\wcheap), \wcheap]) \\
  &=_{(A)} \max(0, a_{t_i} \cdot v - 1) \\
  &=_{(B)} a_{t_i} \cdot v - 1
\end{align*}
where (A) follows from Lemma \ref{lemma:costs}, and (B) follows the fact that $\wcheap \in \text{supp}(\Wcheap^i)$ implies that $\CCheap_{t_i}(\wcheap) > 0$.  Putting this all together, we see that $\mathbb{P}[\V \le v \mid T = t_i] = 
\mathbb{P}[\Wcheap^i \le \wcheap]$ for any $\wcheap \in \text{supp}(\Wcheap^i)$. Using the specification in Definition \ref{def:Ntypes}, this means that if $1 \le i \le N' - 1$, then it holds that
\begin{align*}
 \mathbb{P}[\V \le v \mid T = t_i] &= 
\mathbb{P}[\Wcheap^i \le \wcheap] \\
&=
 \min\left(N \cdot \CCheap_{t_i}(\wcheap), 1\right)  \\
 &= \min\left( N \cdot (a_{t_i} \cdot v - 1), 1 \right),   
\end{align*}
which implies that $V \mid T = {t_i}$ is a uniform distribution over $\left[\frac{1}{a_{t_i}}, \frac{1}{a_{t_i}} \cdot \left(1 + \frac{1}{N} \right) \right]$ as desired. Similarly, if $i = N'$, then 
\begin{align*}
 \mathbb{P}[\V \le v \mid T = t_i] &= 
\mathbb{P}[\Wcheap^i \le \wcheap] \\
& = \min\left(\frac{N}{N-N'+1} \cdot \left(1 - \sum_{j=1}^{N'-1} \frac{1}{N-j+1} \right)^{-1} \cdot \CCheap_{t_i}(\wcheap), 1\right) \\
&=  \min\left(\frac{N}{N-N'+1} \cdot \left(1 - \sum_{j=1}^{N'-1} \frac{1}{N-j+1} \right)^{-1} \cdot (a_{t_i} \cdot v - 1), 1\right),    
\end{align*}
which implies that $V \mid T = {t_i}$ is a uniform distribution over  
\[\left[\frac{1}{a_{t_{N'}}},  \frac{1}{a_{t_{N'}}} \cdot \left( 1+ \frac{N - N' + 1}{N} \cdot \left(1 - \sum_{j=1}^{N' - 1} \frac{1}{N- j + 1}\right)\right) \right]\] as desired. 
\end{proof}

We prove Theorem \ref{thm:Ntypes}. The main ingredient of the proof is the following lemma which computes the expected utility for different choices of $(v, t) \in S$ when the other creator's actions are selected according to the distribution $(\V, T) = h^{-1}(W)$ for $W \sim \muengagement(P, c, u, \mathcal{T})$. 
\begin{lemma}
\label{lemma:Ntypesanalysis}
Consider the setup of Definition \ref{def:Ntypes}.  Let $(\V, T) = h^{-1}(W)$ be the reparameterized distribution of $W \sim \muengagement(P, c, u, \mathcal{T})$. For any $(v, t) \in S$, the expected utility satisfies:
\[\mathbb{E}_{\V, T} [U_1(h^{-1}(v, t); (\V, T))] \le \frac{N - N' + 1}{N} \cdot \sum_{j = 1}^{N' - 1} \frac{1}{N - j + 1}. \]
Moreover, for $(v, t) \in \text{supp}((\V, T))$, it holds that:
\[\mathbb{E}_{\V, T} [U_1(h^{-1}(v, t); (\V, T))] = \frac{N - N' + 1}{N} \cdot \sum_{j = 1}^{N' - 1} \frac{1}{N - j + 1}. \]
\end{lemma}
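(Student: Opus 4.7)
The plan is to reparametrize via Proposition~\ref{prop:NtypesMdistribution}, which tells us that $\V \mid T = t_j$ is uniform on an interval $I_j$ and, by the well-separation condition $(1+1/N)a_{t_{j+1}} \le a_{t_j}$, the intervals $I_1, \ldots, I_{N'}$ are pairwise disjoint and ordered. Writing the expected utility of creator 1 playing $h(v, t_k)$ by conditioning on the uniformly random user type $t^u$ and using Lemma~\ref{lemma:costs} for the cost, I get
\begin{equation*}
\mathbb{E}_{\V,T}[U_1(h(v, t_k); (\V,T))] = \frac{1}{N}\sum_{i=k}^{N}\bigl(\mathbb{P}[T > t_i] + \mathbb{P}[T \le t_i, \V < v]\bigr) - \max(0, a_{t_k}v - 1),
\end{equation*}
where the outer sum is restricted to $t_i \ge t_k$ because creator 1's content is infeasible for lower user types.

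The second step is to verify the stated value on the support. For $v \in I_k$ with $k \le N'$, the ordering of the $I_j$ implies $\mathbb{P}[T \le t_i, \V < v]$ is independent of $i \in \{k,\ldots,N\}$ and equals $\sum_{j<k} p_j + \mathbb{P}[\V < v \mid T = t_k]\,p_k$. A short calculation using the uniform densities in Proposition~\ref{prop:NtypesMdistribution} (which happens to yield the same formula for $k < N'$ and for $k = N'$ despite the different normalizations) shows $\mathbb{P}[\V < v \mid T = t_k]\,p_k = N(a_{t_k}v - 1)/(N-k+1)$, so after multiplying by $(N-k+1)/N$ this exactly cancels the cost term $a_{t_k}v - 1$. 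The residual
\begin{equation*}
\frac{1}{N}\sum_{j=k+1}^{N'} p_j (j-k) + \frac{N-k+1}{N}\sum_{j=1}^{k-1} p_j
\end{equation*}
is $v$-independent, and I would show it is also $k$-independent by a telescoping argument: the increment from $k$ to $k+1$ reduces to $(N-k+1)p_k - 1$, which vanishes because $p_k = 1/(N-k+1)$ for $k \le N'-1$. Evaluating at $k = N'$ (where the first sum is empty) then recovers $\frac{N-N'+1}{N}\sum_{j=1}^{N'-1}\frac{1}{N-j+1}$ as claimed.

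The third step is the upper bound for arbitrary $(v, t_k) \in S$, split by whether $k \le N'$. For $k \le N'$, differentiating the utility piecewise in $v$ shows that the marginal win probability equals $a_{t_{k'}}$ when $v \in I_{k'}$ with $k' \ge k$ and vanishes on the gaps between intervals, whereas the marginal cost equals $a_{t_k}$ for $v > 1/a_{t_k}$ and zero below. Since $a_{t_{k'}} \le a_{t_k}$ for $k' \ge k$ by well-separation, the utility is weakly increasing for $v \le 1/a_{t_k}$, constant on $I_k$, and weakly decreasing beyond $I_k$, so the maximum over admissible $v$ is attained on $I_k$. For $k > N'$, $\operatorname{supp}(T) \subseteq \{t_1,\ldots,t_{N'}\}$ lies entirely below $t_k$, so $\mathbb{P}[T > t_i] = 0$ and $\mathbb{P}[T \le t_i] = 1$ for all $t_i \ge t_k$, giving utility at most $\frac{N-k+1}{N}\mathbb{P}[\V < v] \le \frac{N-N'}{N}$; I would then verify $\frac{N-N'}{N} \le \frac{N-N'+1}{N}(1-p_{N'})$, which is equivalent to $p_{N'} \le 1/(N-N'+1)$, i.e.\ to $\sum_{i=N-N'+1}^{N} 1/i \ge 1$, which is exactly the defining minimality of $N'$.

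The main obstacle I anticipate is the on-support bookkeeping, particularly handling the $k = N'$ boundary uniformly with $k < N'$ despite $p_{N'}$ having a different closed form, and combining the several sub-regimes of $v$ (below $1/a_{t_k}$, inside $I_k$, in a gap, or inside a different $I_{k'}$) in the $k \le N'$ deviation bound. Once the uniform cancellation between the $v$-dependent win probability and the cost is established and the telescoping identity is in place, the remaining deviation bounds reduce to applications of the well-separation ratios and the defining inequality of $N'$.
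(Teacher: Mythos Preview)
Your proposal is correct and follows the same overall route as the paper: reparametrize via Proposition~\ref{prop:NtypesMdistribution}, express the expected utility using Lemma~\ref{lemma:costs}, verify equality on the support, and bound deviations off the support, handling $k>N'$ separately with the defining inequality of $N'$. The technical execution differs in two worthwhile places. For the on-support residual, the paper computes it directly for each $k$ by an algebraic simplification (splitting into the analogues of your terms and collapsing the double sums), whereas your telescoping argument---showing the increment in $k$ vanishes because $(N-k+1)p_k=1$ for $k\le N'-1$, then evaluating at $k=N'$---is cleaner and treats $k<N'$ and $k=N'$ uniformly. For the off-support bound with $k\le N'$, the paper computes the utility difference explicitly and closes it using the well-separation ratios together with Bernoulli's inequality $(1+1/N)^m\ge 1+m/N$; your piecewise-derivative argument is more direct, since the strict monotonicity $a_{t_{k'}}<a_{t_k}$ for $k'>k$ already forces the utility to decrease beyond $I_k$, and well-separation is only used to order the intervals $I_j$. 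One minor remark: since $(v,t_k)\in S$ already forces $v\ge 1/a_{t_k}$, your ``weakly increasing for $v\le 1/a_{t_k}$'' clause is vacuous, but this does not affect the argument.
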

\begin{proof}[Proof of Lemma \ref{lemma:Ntypesanalysis}]
We first simplify the expected utility.
\begin{align*}
&\mathbb{E}_{\V, T} [U_1(h^{-1}(v, t); h^{-1}(\V, T)] \\
&= \frac{1}{|\mathcal{T}|}\sum_{t' \in \mathcal{T}} (\mathbb{P}_{\V, T}[v \ge V] \cdot \mathbbm{1}[t \le t'] + \mathbb{P}_{\V, T}[V > v, T > t'] \cdot \mathbbm{1}[t \le t']) - c(h^{-1}(v, t)) \\
&=_{(A)} \frac{|\left\{t' \in \mathcal{T} \mid t' \ge t \right\}|}{|\mathcal{T}|} \cdot \mathbb{P}_{\V, T}[V \le v] - \max(0, a_{t} \cdot v - 1) + \frac{1}{|\mathcal{T}|}\sum_{t' \in \mathcal{T} \mid t' \ge t} \mathbb{P}_{\V, T}[V > v, T > t'].
\end{align*}
where (A) uses Lemma \ref{lemma:costs}.

To analyze this expression, we apply Proposition \ref{prop:NtypesMdistribution} to obtain the reparameterized joint distribution $(\V, T)$ of $\muengagement(P, c, u, \mathcal{T})$. By the well-separatedness assumption on the type space, we see that the support of the distributions $\text{supp}(V \mid T \mid t = i)$ are disjoint and that $\max(\text{supp}(V)) = \frac{1}{a_{t_{N'}}} \cdot \left( 1+ \frac{N - N' + 1}{N} \cdot \left(1 - \sum_{j=1}^{N' - 1} \frac{1}{N- j + 1}\right)\right)$. 

We split the remainder of the analysis into several cases: (1) $t \le t_{N' - 1}$ and $\frac{1}{a_t} \le v < \frac{1}{a_t} \cdot (1 + \frac{1}{N})$, (2) $t \le t_{N' - 1}$, $v \ge \frac{1}{a_t} \cdot (1 + \frac{1}{N})$, and $v \le \max(\text{supp}(\V))$, (3) $t = t_{N'}$ and $\frac{1}{a_{t_{N'}}} \le v \le \max(\text{supp}(\V))$, (4) $t \le N'$ and $v \ge \max(\text{supp}(\V))$, and and $v \le \max(\text{supp}(\V))$, and (5) $t > t_{N'}$.

\paragraph{Case 1: $t \le t_{N' - 1}$ and $\frac{1}{a_t} \le v < \frac{1}{a_t} \cdot (1 + \frac{1}{N})$.} We know that $t = t_i$ for some $1 \le i \le N' - 1$. This means that for $(v', t') \in \text{supp}((V, T))$, if $t' > t$, then $v' > v$; moreover, if $t' < t$, then $v' < v$. This implies that 
\begin{align*}
&\mathbb{E}_{\V, T} [U_1(h^{-1}(v, t); (\V, T))] \\
&= \frac{|\left\{t' \in \mathcal{T} \mid t' \ge t_i \right\}|}{|\mathcal{T}|} \cdot \mathbb{P}[V \le v]  - \max(0, a_{t} \cdot v - 1) + \frac{1}{|\mathcal{T}|}\sum_{t' \in \mathcal{T} \mid t' \ge t_i} \mathbb{P}[V > v, T > t']  \\
&= \underbrace{\frac{N - i + 1}{N} \cdot \mathbb{P}[V \le v, T = t_i] - \max(0, a_{t} \cdot v - 1)}_{(A)} + \underbrace{\frac{N - i + 1}{N} \cdot \mathbb{P}[T < t_i] + \frac{1}{N} \sum_{t' \in \mathcal{T} \mid t' \ge t_i} \mathbb{P}[T > t']}_{(B)}.
\end{align*}
For term (A), we apply Proposition \ref{prop:NtypesMdistribution} to see that:
\begin{align*}
 &\frac{N - i + 1}{N} \cdot \mathbb{P}[V \le v, T = t_i] - \max(0, a_{t} \cdot v - 1) \\
 &= \frac{N - i + 1}{N} \cdot \mathbb{P}[V \le v \mid T = t_i] \cdot \mathbb{P}[T = t_{i}] - \max(0, a_{t} \cdot v - 1) \\
 &= \frac{(N - i + 1) \cdot N \cdot (a_{t_i} \cdot v - 1)}{N \cdot (N-i + 1)} - (a_{t} \cdot v - 1) \\ &= 0.  
\end{align*}

For term (B), we apply Proposition \ref{prop:NtypesMdistribution} to see that:
\begin{align*}
 &\frac{N - i + 1}{N} \cdot \mathbb{P}[T < t_i] + \frac{1}{|\mathcal{T}|}\sum_{t' \in \mathcal{T} \mid t' \ge t_i} \mathbb{P}[T > t'] \\
 &= 
 \frac{N - i + 1}{N} \cdot \mathbb{P}[T < t_i] + \frac{1}{N}\sum_{t' \in \mathcal{T} \mid t' \ge t_i} (1 - \mathbb{P}[T \le t']) \\  
 &= \frac{N - i + 1}{N} \cdot \sum_{j = 1}^{i-1} \frac{1}{N - j + 1} + \frac{1}{N} \sum_{j = i}^{N'-1} \left(1 - \sum_{j' = 1}^{j}  \frac{1}{N - j' + 1} \right) \\
 &= \frac{1}{N} \left(\sum_{j = 1}^{i-1} \frac{N - i + 1}{N - j + 1} - \sum_{j' = 1}^{i-1}  \frac{N' - i}{N - j' + 1} + N' - i - \sum_{j' = i}^{N' - 1}  \frac{N' - j'}{N - j' + 1} \right) \\
  &= \frac{1}{N} \left(\sum_{j = 1}^{i-1} \frac{N - N' + 1}{N - j + 1} + \sum_{j' = i}^{N' - 1}  \left( 1- \frac{N' - j'}{N - j' + 1} \right) \right) \\
    &= \frac{1}{N} \left(\sum_{j = 1}^{i-1} \frac{N - N' + 1 }{N - j + 1} + \sum_{j' = i}^{N' - 1}  \frac{N - N' + 1}{N - j' + 1} \right) \\
    &= \frac{N - N' + 1}{N} \cdot \sum_{j = 1}^{N' - 1} \frac{1}{N - j + 1}. 
    \end{align*}

Altogether, this proves that:
\[\mathbb{E}_{\V, T} [U_1(h^{-1}(v, t); (\V, T))] = \frac{N - N' + 1}{N} \cdot \sum_{j=1}^{N' - 1} \frac{1}{N- j + 1}\]
as desired. 

\paragraph{Case 2: $t \le t_{N' - 1}$, $v \ge \frac{1}{a_t} \cdot (1 + \frac{1}{N})$, and $v \le \max(\text{supp}(\V))$.} We know that $t = t_i$ for some $1 \le i \le N' - 1$. Let $i^* \in [i, N']$ be the \textit{maximum} value such that $v > \frac{1}{a_{t_{i^*}}} \cdot (1 + \frac{1}{N})$. (We immediately see that $i^* \neq N'$, because $v \le \max(\text{supp}(\V)) \le \frac{1}{a_{t_{N'}}} \cdot (1 + \frac{1}{N})$.) This means that for $(v', t') \in \text{supp}((V, T))$, if $t' > t_{i^*}$, then $v' > v$; moreover, if $t' < t_{i^*}$, then $v' < v$. This implies that 
\begin{align*}
&\mathbb{E}_{\V, T} [U_1(h^{-1}(v, t); (\V, T))] \\
&= \frac{|\left\{t' \in \mathcal{T} \mid t' \ge t_i \right\}|}{|\mathcal{T}|} \cdot \mathbb{P}_{\V, T}[V \le v]  - \max(0, a_{t} \cdot v - 1) + \frac{1}{|\mathcal{T}|}\sum_{t' \in \mathcal{T} \mid t' \ge t_i} \mathbb{P}_{\V, T}[V > v, T > t']  \\
&= \underbrace{\frac{N - i + 1}{N} \cdot \left(\mathbb{P}[V \le v, T = t_{i^* + 1}] + \sum_{j=i}^{i^*} \mathbb{P}[T = t_{j}] \right)  - \frac{1}{N} \sum_{t' \in \mathcal{T} \mid t' \ge t_i} \mathbb{P}[T > t', V \le v] - \max(0, a_{t} \cdot v - 1)}_{(A)}  \\ 
&+ \underbrace{\frac{N - i + 1}{N} \cdot \mathbb{P}[T < t_i] + \frac{1}{N} \sum_{t' \in \mathcal{T} \mid t' \ge t_i} \mathbb{P}[T > t']}_{(B)}.
\end{align*}
For term (A), we apply Proposition \ref{prop:NtypesMdistribution} to see that:
\begin{align*}
&\frac{N - i + 1}{N} \cdot \left(\mathbb{P}[V \le v, T = t_{i^* + 1}]  + \sum_{j=i}^{i^*} \mathbb{P}[T = t_{j}] \right)  - \frac{1}{N} \sum_{t' \in \mathcal{T} \mid t' \ge t_i} \mathbb{P}[T > t', V \le v] - \max(0, a_{t} \cdot v - 1) \\
&= \frac{N - i + 1}{N} \cdot \left( \mathbb{P}[V \le v, T = t_{i^* + 1}] + \sum_{j=i}^{i^*} \mathbb{P}[T = t_{j}]  \right) \\
&
-\frac{1}{N} \sum_{t' \in \mathcal{T} \mid t' \ge t_i} \left(\mathbb{P}[V \le v, T = t_{i^*+ 1}, t_{i^* + 1} > t'] + \sum_{j=i}^{i^*} \mathbb{P}[T = t_{j}, t_j > t']  \right)
 - (a_{t} \cdot v - 1) \\
 &= \frac{N - i + 1}{N} \cdot \left(\mathbb{P}[V \le v, T = t_{i^* + 1}] + \sum_{j=i}^{i^*} \mathbb{P}[T = t_{j}]  \right) \\
 &- \frac{i^* - i + 1}{N} \cdot \mathbb{P}[V \le v, T = t_{i^* + 1}] - \sum_{j=i}^{i^*}  \frac{j - i}{N}  \cdot \mathbb{P}[T = t_{j}]  
 - (a_{t} \cdot v - 1) \\
  &= \frac{N - i^*}{N} \cdot \mathbb{P}[V \le v, T = t_{i^* + 1}] + \sum_{j=i}^{i^*} \frac{N - j + 1}{N} \cdot \mathbb{P}[T = t_{j}] 
 - (a_{t} \cdot v - 1) \\
 &\le \frac{N - i^*}{N} \cdot \mathbb{P}[T = t_{i^* + 1}] \cdot \mathbb{P}[V \le v | T = t_{i^* + 1}]  + \frac{i^* - i + 1}{N}
 - (a_{t} \cdot v - 1) \\
  &\le_{(1)}   (a_{t_{i^* + 1}} \cdot v -1) + \frac{i^* - i + 1}{N}
 - (a_{t} \cdot v - 1) \\
 &= - (a_t - a_{t_{i^* + 1}}) \cdot v + \frac{i^* - i +1}{N} \\
 &\le - (a_t - a_{t_{i^* + 1}}) \cdot \frac{1}{a_{t_{i^*}}} \cdot (1 + \frac{1}{N}) + \frac{i^* - i +1}{N} \\
  &\le -\frac{a_{t_i}}{a_{t_{i^*}}} \cdot (1 + \frac{1}{N}) + \frac{a_{t_{i^* + 1}}}{a_{t_{i^*}}} \cdot (1 + \frac{1}{N}) + \frac{i^* - i +1}{N} \\
  &\le -\left(1 + \frac{1}{N}\right)^{i^* - i + 1} + 1 + \frac{i^* - i +1}{N} \\
  &\le_{(2)} 0
\end{align*}
where (1) uses the fact that 
\[\mathbb{P}[V \le v \mid T = t_{i^*+1}] \le \min\left(N \cdot \frac{1}{(N -i^*) \alpha^{i^*+1}} \cdot (a_{t_{i^* + 1}} \cdot v -1), 1 \right) \le N \cdot \frac{1}{(N -i^*) \alpha^{i^* + 1}} \cdot (a_{t_{i^* + 1}} \cdot v -1). \]
where (2) uses the fact that for $m \ge 1$ and $x \ge 0$, it holds that $(1 + x)^m \ge 1 + m \cdot x$. 

For term (B), we apply the same argument as in Case 1 to see that it is equal to $\frac{N - N' + 1}{N} \cdot \sum_{j = 1}^{N' - 1} \frac{1}{N - j + 1}$. 

Altogether, this proves that:
\[ \mathbb{E}_{\V, T} [U_1(h^{-1}(v, t); (\V, T))] \le \frac{N - N' + 1}{N} \cdot \sum_{j=1}^{N' - 1} \frac{1}{N- j + 1}\]
as desired. 

\paragraph{Case 3: $t = t_{N'}$ and $\frac{1}{a_{t_{N'}}} \le v \le \max(\text{supp}(\V))$. } For $(v', t') \in \text{supp}((V, T))$, if $t' < t$, then $v' < v$. Moreover, we see that $\max(\text{supp}(\V)) = \frac{1}{a_{t_{N'}}} \cdot \left( 1+ \frac{N - N' + 1}{N} \cdot \left(1 - \sum_{j=1}^{N' - 1} \frac{1}{N- j + 1}\right)\right)$. This implies that 
\begin{align*}
&\mathbb{E}_{\V, T} [U_1(h^{-1}(v, t); (\V, T))] \\
&= \frac{|\left\{t' \in \mathcal{T} \mid t' \ge t_{N'} \right\}|}{|\mathcal{T}|} \cdot \mathbb{P}[V \le v]  - \max(0, a_{t} \cdot v - 1) + \frac{1}{|\mathcal{T}|}\sum_{t' \in \mathcal{T} \mid t' \ge t_{N'}} \mathbb{P}[V > v, T > t']  \\
&= \frac{|\left\{t' \in \mathcal{T} \mid t' \ge t_{N'} \right\}|}{|\mathcal{T}|} \cdot \mathbb{P}[V \le v]  - \max(0, a_{t} \cdot v - 1) \\
&= \underbrace{\frac{N - N' + 1}{N} \cdot \mathbb{P}[V \le v, T = t_{N'}] - \max(0, a_{t} \cdot v - 1)}_{(A)} + \underbrace{\frac{N - N' + 1}{N} \cdot \mathbb{P}[T < t_{N'}]}_{(B)}.
\end{align*}
For term (A), we apply Proposition \ref{prop:NtypesMdistribution} to see that:
\begin{align*}
 &\frac{N - N' + 1}{N} \cdot \mathbb{P}[V \le v, T = t_{N'}] - \max(0, a_{t} \cdot v - 1) \\
 &=  \frac{N - N' + 1}{N} \cdot \mathbb{P}[V \le v \mid T = t_{N'}] \cdot \mathbb{P}[T = t_{N'}] \\
 &= \frac{N - N' + 1}{N} \cdot \frac{N}{N-N'+1} \cdot \left(1 - \sum_{j=1}^{N'-1} \frac{1}{N-j+1} \right)^{-1} \cdot (a_{t_i} \cdot v - 1) \cdot \left(1 - \sum_{j=1}^{N'-1} \frac{1}{N-j+1} \right)- (a_{t} \cdot v - 1) \\
 &= 0.    
\end{align*}

For term (B), we apply Proposition \ref{prop:NtypesMdistribution} to see that:
\begin{align*}
\frac{N - N' + 1}{N} \cdot \mathbb{P}[T < t_{N'}]
=  \frac{N - N' + 1}{N} \cdot \sum_{j=1}^{N' - 1} \mathbb{P}[T  = t_{j}] = \frac{N - N' + 1}{N} \cdot \sum_{j=1}^{N' - 1} \frac{1}{N- j + 1}. 
\end{align*}

Altogether, this proves that:
\[ \mathbb{E}_{\V, T} [U_1(h^{-1}(v, t); (\V, T))] = \frac{N - N' + 1}{N} \cdot \sum_{j=1}^{N' - 1} \frac{1}{N- j + 1}\]
as desired.

\paragraph{Case 4: $t \le N'$ and $v \ge \max(\text{supp}(\V))$.} We know that $t = t_i$ for some $1 \le i \le N$. For $(v', t') \in \text{supp}((V, T))$, we see that $v' < v$. This implies that 
\begin{align*}
&\mathbb{E}_{\V, T} [U_1(h^{-1}(v, t); (\V, T))] \\
&= \frac{|\left\{t' \in \mathcal{T} \mid t' \ge t_i \right\}|}{|\mathcal{T}|} \cdot \mathbb{P}[V \le v]  - \max(0, a_{t} \cdot v - 1) + \frac{1}{|\mathcal{T}|}\sum_{t' \in \mathcal{T} \mid t' \ge t_{N'}} \mathbb{P}[V > v, T > t']  \\
&= \frac{N - i + 1}{N}  - (a_{t_i} \cdot v - 1).
\end{align*}
This expression is upper bounded by the case where $v = \max(\text{supp}(\V))$. If $1 \le i \le N' -1$, we can thus apply Case 2; if $i = N'$, we can apply Case 3. Altogether, this proves that:
\[ \mathbb{E}_{\V, T} [U_1(h^{-1}(v, t); (\V, T))] \le \frac{N - N' + 1}{N} \cdot \sum_{j=1}^{N' - 1} \frac{1}{N- j + 1}\]
as desired.

\paragraph{Case 5: $t > t_{N'}$.} In this case, $(v, t) \in S$ satisfies 
\[v \ge \frac{1}{a_{N' + 1}} \ge \frac{1}{a_{t_{N'}}} \cdot \left( 1 + \frac{1}{N} \right) \ge \frac{1}{a_{t_{N'}}} \cdot \left( 1+ \frac{N - N' + 1}{N} \cdot \left(1 - \sum_{j=1}^{N' - 1} \frac{1}{N- j + 1}\right)\right) = \max(\text{supp}(V)). \]
This means that 
\[\mathbb{E}_{\V, T} [U_1(h^{-1}(v, t); (\V, T))] =  \frac{|\left\{t' \in \mathcal{T} \mid t' \ge t \right\}|}{|\mathcal{T}|} - \max(0, a_{t} \cdot v - 1) \le \frac{|\left\{t' \in \mathcal{T} \mid t' \ge t \right\}|}{|\mathcal{T}|} \le \frac{N - N'}{N}.\]
It suffices to show that $\frac{N - N'}{N} \le \frac{N - N' + 1}{N} \cdot \sum_{j = 1}^{N' - 1} \frac{1}{N - j + 1}$, which we can be written as:
\[\frac{N- N'}{N- N' + 1} = 1 - \frac{1}{N-N'+1} \le \sum_{j = 1}^{N' - 1} \frac{1}{N - j + 1},\]
we know is true by the definition of $N'$. 

\end{proof}

Using Lemma \ref{lemma:Ntypesanalysis}, we prove Theorem \ref{thm:Ntypes}. 
\begin{proof}[Proof of Theorem \ref{thm:Ntypes}]
We first claim that we can work over the reparameterized space $(\V, T)$ defined in \Cref{subsec:tools}.
As described above, every $w \in \text{supp}(\muengagement(P, c, u, \mathcal{T}))$ is in the 
image $\cup_{t \in \mathcal{T}} \Ct$, which means that it is associated with a unique value $h^{-1}(w) = (v, t) \in S$. It thus suffices to show that there exists a best response $w \in \mathbb{R}_{\ge 0}^2$ to:
\[\argmax_{w \in \mathbb{R}_{\ge 0}^2} \mathbb{E}_{\textbf{w}_{-i} \sim (\muengagement(P, c, u, \mathcal{T}))^{P-1}} [U_i(w; \textbf{w}_{-i})]  \]
that is also in the image $\cup_{t \in \mathcal{T}} \Ct$; this follows from Lemma \ref{lemma:bestresponsestronger}. 

We thus work over the reparameterized space for the remainder of the analysis. By Lemma \ref{lemma:Ntypesanalysis}, we see that for any $(v, t) \in S$, the expected utility satisfies:
\[\mathbb{E}_{\V, T} [U_1(h^{-1}(v, t); (\V, T))] \le \frac{N - N' + 1}{N} \cdot \sum_{j = 1}^{N' - 1} \frac{1}{N - j + 1}. \]
Moreover, for $(v, t) \in \text{supp}((\V, T))$, it holds that:
\[\mathbb{E}_{\V, T} [U_1(h^{-1}(v, t); (\V, T))] = \frac{N - N' + 1}{N} \cdot \sum_{j = 1}^{N' - 1} \frac{1}{N - j + 1}. \]
This proves that $(\V, T)$ is an equilibrium in the reparameterized space $S$.

Putting this all together, this proves that  $\muengagement(P, c, u, \mathcal{T})$ is an equilibrium in the original space $\mathbb{R}_{\ge 0}^2$. 
\end{proof}

We prove \Cref{prop:2typeswellseparated}.
\begin{proof}[Proof of \Cref{prop:2typeswellseparated}]
We apply \Cref{thm:Ntypes}. It suffices to show that $N' = 2$. To see this, since $N = 2$, we see that $\sum_{i=1}^{m} \frac{1}{N - i + 1}$ is equal to $\frac{1}{2}$ if $m = 1$ and   $\sum_{i=1}^{m} \frac{1}{N - i + 1}$ is equal to $1 + \frac{1}{2} \ge 1$ if $m = 2$. This means that $N' = 2$ as desired. 
\end{proof}

\subsection{Proof of Theorem \ref{thm:2types}}

Before diving into the proof of \Cref{thm:2types},  we first verify that the distributions in \Cref{def:2types} are indeed well-defined: in particular, it suffices to verify that the ordering of $v$ values at the boundary points indeed proceeds in the order shown in \Cref{fig:2types} and that the total density of $g$ is $1$. We split into the three cases in \Cref{def:2types}. 

\paragraph{Case 1: $a_{t_1} / a_{t_2} \ge 1.5$.}
We first show that: 
\[\frac{1}{a_{t_1}} \le_{(1)} \frac{3}{2 \cdot a_{t_1}}  \le_{(2)} \frac{1}{a_{t_2}} \le_{(3)} \frac{5}{4 \cdot a_{t_2}}. \]
Inequalities (1) and (3) are trivial, and inequality (2) follows from the fact that  $a_{t_1} / a_{t_2} \ge 1.5$.

We next observe that:
\begin{align*}
\int_{v} g(v) dv &= a_{t_1} \left(\frac{3}{2 \cdot a_{t_1}}  - \frac{1}{a_{t_1}}  \right) + 2 a_{t_2} \cdot \left(\frac{5}{4 \cdot a_{t_2}} - \frac{1}{a_{t_2}} \right) \\
&= 0.5 + 0.5 \\
&= 1.
\end{align*}

\paragraph{Case 2: $(5 - \sqrt{5})/2 \le a_{t_1} / a_{t_2} \le 1.5$.}
We first show that: 
\[\frac{1}{a_{t_1}} \le_{(1)} \frac{1}{a_{t_2}} \le_{(1)} \frac{1}{2 a_{t_2} \left(\frac{a_{t_1}}{a_{t_2}} - 1 \right)} \le_{(3)} \frac{1}{a_{t_2}} \cdot \left(2 - \frac{a_{t_1}}{2 \cdot a_{t_2}}\right). \]
Inequality (1) follows from the fact that $a_{t_1} > a_{t_2}$. Inequality (2) follows from the fact that $2 \left(\frac{a_{t_1}}{a_{t_2}} - 1 \right) \le 1$. Inequality (3) can be rewritten as:
\[ 2 \left(\frac{a_{t_1}}{a_{t_2}} - 1 \right)  \left(2 - \frac{a_{t_1}}{2 \cdot a_{t_2}}\right) \ge 1,\]
which follows from the fact that $(5 - \sqrt{5})/2 \le a_{t_1} / a_{t_2} \le 1.5$.  

We next observe that:
\begin{align*}
\int_{v} g(v) dv &=
a_{t_1} \left(\frac{1}{a_{t_2}}   - \frac{1}{a_{t_1}}  \right) + 2 a_{t_2} \cdot \left(\frac{1}{a_{t_2}} \cdot \left(2 - \frac{a_{t_1}}{2 \cdot a_{t_2}}\right) - \frac{1}{a_{t_2}} \right) \\
&= \frac{a_{t_1}}{a_{t_2}}   - 1  + 4 - \frac{a_{t_1}}{\cdot a_{t_2}} - 2 \\
&= 1.
\end{align*}

\paragraph{Case 3: $1 \le a_{t_1} / a_{t_2} \le (5 - \sqrt{5})/2$.}
We first show that: 
\[\frac{1}{a_{t_1}} \le_{(1)} \frac{1}{a_{t_2}} \le_{(2)} \frac{3 - \frac{a_{t_1}}{a_{t_2}} }{2 a_{t_2} \left(2 - \frac{a_{t_1}}{a_{t_2}} \right) }  \le_{(3)} \frac{1}{a_{t_1}} + \left( \frac{1}{a_{t_1}} - \frac{1}{2 a_{t_2}} \right) \left(\frac{3 - \frac{a_{t_1}}{a_{t_2}}}{2 - \frac{a_{t_1}}{a_{t_2}} }\right)  \]
Inequality (1) follows from the fact that $a_{t_1} > a_{t_2}$. Inequality (2) can be written as:
\[2 \left(2 - \frac{a_{t_1}}{a_{t_2}} \right) \le  3 - \frac{a_{t_1}}{a_{t_2}}, \]
which can be written as:
\[1 \le \frac{a_{t_1}}{a_{t_2}}, \]
which holds because $a_{t_1} > a_{t_2}$. Inequality (3) can be written as:
\[ \left( \frac{1}{a_{t_2}} - \frac{1}{a_{t_1}} \right) \left(\frac{3 - \frac{a_{t_1}}{a_{t_2}}}{2 - \frac{a_{t_1}}{a_{t_2}} }\right) \le  \frac{1}{a_{t_1}}, \]
which can be rewritten as: 
\[ \left( \frac{a_{t_1}}{a_{t_2}} - 1 \right) \left(3 - \frac{a_{t_1}}{a_{t_2}}\right) \le  2 - \frac{a_{t_1}}{a_{t_2}} , \]
which follows from the fact that $1 \le a_{t_1} / a_{t_2} \le (5 - \sqrt{5})/2$.  

We next observe that:
\begin{align*}
&\int_{v} g(v) dv \\
&= a_{t_1} \left(\frac{1}{a_{t_2}}   - \frac{1}{a_{t_1}}  \right) + 2 a_{t_2} \cdot \left( \frac{3 - \frac{a_{t_1}}{a_{t_2}} }{2 a_{t_2} \left(2 - \frac{a_{t_1}}{a_{t_2}} \right) }  - \frac{1}{a_{t_2}} \right) \\
&+ a_{t_1} \cdot \left(\frac{1}{a_{t_1}} + \left( \frac{1}{a_{t_1}} - \frac{1}{2 a_{t_2}} \right) \left(\frac{3 - \frac{a_{t_1}}{a_{t_2}}}{2 - \frac{a_{t_1}}{a_{t_2}} }\right)  - \frac{3 - \frac{a_{t_1}}{a_{t_2}} }{2 a_{t_2} \left(2 - \frac{a_{t_1}}{a_{t_2}} \right) } \right) \\
&=  \frac{a_{t_1} }{a_{t_2}} - 1 + \frac{3 - \frac{a_{t_1}}{a_{t_2}} }{\left(2 - \frac{a_{t_1}}{a_{t_2}} \right) }  - 2  + 1 + \left( 1 - \frac{a_{t_1}}{2 a_{t_2}} \right) \left(\frac{3 - \frac{a_{t_1}}{a_{t_2}}}{2 - \frac{a_{t_1}}{a_{t_2}} }\right)  - \frac{a_{t_1}}{2 a_{t_2} } \cdot \frac{3 - \frac{a_{t_1}}{a_{t_2}} }{\left(2 - \frac{a_{t_1}}{a_{t_2}} \right) }  \\
&=  \frac{a_{t_1} }{a_{t_2}} - 2 + \left( 2 - \frac{a_{t_1}}{a_{t_2}} \right) \left(\frac{3 - \frac{a_{t_1}}{a_{t_2}}}{2 - \frac{a_{t_1}}{a_{t_2}} }\right) \\
&=  \frac{a_{t_1} }{a_{t_2}} - 2 + 3 - \frac{a_{t_1}}{a_{t_2}}  \\
&= 1.
\end{align*}

Now, we turn to proving \Cref{thm:2types}. Like in the proof of \Cref{thm:Ntypes}, the main ingredient is computing the expected utility for different choices of $(v, t) \in S$ when the other creator's actions are selected according to the distribution $(\V, T) = h^{-1}(W)$ for $W \sim \muengagement(P, c, u, \mathcal{T})$. Since we have already analyzed Case 1 ($a_{t_1}/a_{t_2} \ge 1.5$) as a consequence of \Cref{thm:Ntypes}, we can focus on Cases 2 and 3. We analyze the expected utility separately for Case 2 and Case 3 in the following lemmas. 
\begin{lemma}
\label{lemma:case2}
Consider the setup of Definition \ref{def:2types}, and and assume that $(5 - \sqrt{5})/2 \le a_{t_1} / a_{t_2} \le 1.5$. Let $(\V, T) = h^{-1}(W)$ be the reparameterized distribution of $W \sim \muengagement(P, c, u, \mathcal{T})$. For any $(v, t) \in S$, the expected utility satisfies:
\[\mathbb{E}_{\V, T} [U_1(h^{-1}(v, t); (\V, T))] \le \frac{a_{t_1}}{2 \cdot a_{t_2}} - \frac{1}{2}. \]
Moreover, for $(v, t) \in \text{supp}((\V, T))$, it holds that:
\[\mathbb{E}_{\V, T} [U_1(h^{-1}(v, t); (\V, T))] = \frac{a_{t_1}}{2 \cdot a_{t_2}} - \frac{1}{2}. \]
\end{lemma}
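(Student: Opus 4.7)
The plan is to follow the same template as the proof of Lemma \ref{lemma:Ntypesanalysis}. First, applying Lemma \ref{lemma:costs} and the observation (made in the proof of Lemma \ref{lemma:Ntypesanalysis}) that a creator playing $(v,t)\in S$ wins a user of type $t'\ge t$ iff either $T>t'$ or ($T\le t'$ and $V\le v$), the expected utility can be rewritten as
\[
\mathbb{E}_{\V,T}[U_1(h^{-1}(v,t);(\V,T))] \;=\; \sum_{t'\in\mathcal{T},\,t'\ge t}\frac{1}{|\mathcal{T}|}\bigl(\mathbb{P}[T>t'] + \mathbb{P}[V\le v,\,T\le t']\bigr) \;-\; \max(0,\,a_t v - 1).
\]
Since $|\mathcal{T}|=2$, this has at most two summands, depending on whether $t=t_1$, $t=t_2$, or $t>t_2$.

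Next, I would compute the three quantities $\mathbb{P}[V\le v]$, $\mathbb{P}[V\le v,\,T=t_1]$, and $\mathbb{P}[V\le v,\,T=t_2]$ piecewise on the three subintervals of $\text{supp}(V)$ determined by Definition \ref{def:2types} Case 2, namely
\[I_1=\left[\tfrac{1}{a_{t_1}},\tfrac{1}{a_{t_2}}\right],\quad I_2=\left[\tfrac{1}{a_{t_2}},\tfrac{1}{2a_{t_2}(a_{t_1}/a_{t_2}-1)}\right],\quad I_3=\left[\tfrac{1}{2a_{t_2}(a_{t_1}/a_{t_2}-1)},\tfrac{1}{a_{t_2}}\bigl(2-\tfrac{a_{t_1}}{2a_{t_2}}\bigr)\right];\]
the density $g$ is piecewise constant and the conditional distribution $T\mid V=v$ is piecewise constant as well, so all these probabilities are linear in $v$ on each subinterval. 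The marginals $\mathbb{P}[T=t_1]=2-a_{t_1}/a_{t_2}$ and $\mathbb{P}[T=t_2]=a_{t_1}/a_{t_2}-1$ are already recorded in Lemma \ref{lemma:property2types}.

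With these pieces in hand I would analyze the following cases: (a) $t=t_1$ with $v\in I_j$ for $j=1,2,3$; (b) $t=t_2$ with $v\in I_j$ for $j=1,2,3$; (c) $v$ strictly above $\max\text{supp}(V)$ with $t\in\{t_1,t_2\}$; (d) $v$ strictly below $1/a_{t_1}$ (where $c=0$ and hence utility is $\le \mathbb{P}[T=t_2]/2 + 0<\tfrac{a_{t_1}}{2a_{t_2}}-\tfrac{1}{2}$ after a check); and (e) $t>t_2$, in which case the content is never consumed, so the utility is at most $0<\tfrac{a_{t_1}}{2a_{t_2}}-\tfrac{1}{2}$. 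The on-support cases are $(v,t_1)\in I_1\cup I_2$ and $(v,t_2)\in I_2\cup I_3$, and in each of these a direct linear-in-$v$ computation shows that the $v$-dependence cancels exactly, leaving the constant $\tfrac{a_{t_1}}{2a_{t_2}}-\tfrac{1}{2}$; for example on $I_1$ with $t=t_1$ one finds $\tfrac{1}{2}(\mathbb{P}[T=t_2]+(a_{t_1}v-1))+\tfrac{1}{2}(a_{t_1}v-1)-(a_{t_1}v-1)=\tfrac{1}{2}\mathbb{P}[T=t_2]$, which is exactly the target value. The off-support cases reduce to showing that increasing $v$ past $\max\text{supp}(V)$ pays full linear cost $a_t v$ with no additional winnings, and that playing the "wrong" type (e.g.\ $t=t_2$ on $I_1$, or $t=t_1$ on $I_3$) yields strictly smaller winnings-minus-cost because either a whole user type is lost or the cost coefficient $a_t$ is too large relative to the marginal gain.

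The main obstacle is the overlap of $\text{supp}(V\mid T=t_1)$ and $\text{supp}(V\mid T=t_2)$ on $I_2$, which makes the joint distribution $(V,T)$ genuinely two-dimensional there: unlike in the well-separated regime of Lemma \ref{lemma:Ntypesanalysis}, we cannot simplify $\mathbb{P}[V\le v,\,T\le t']$ using a product structure, and we must instead integrate the mixing probabilities $\mathbb{P}[T=t_i\mid V=v]$ from Definition \ref{def:2types} carefully. Beyond that technicality, the proof is a bookkeeping exercise: once the piecewise-linear probabilities are tabulated, each case is a short algebraic verification that the $v$-dependent terms cancel against $\max(0,a_tv-1)$ on-support and fail to cancel favorably off-support, yielding the stated equality and inequality.
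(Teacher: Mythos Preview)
Your proposal is correct and follows essentially the same approach as the paper: the same expected-utility formula (derived via Lemma~\ref{lemma:costs}), the same piecewise case analysis over the three subintervals $I_1,I_2,I_3$ crossed with $t\in\{t_1,t_2\}$, and the same reduction of the above-support case to the right endpoint. Note that your cases (d) and (e) are in fact vacuous---$S$ only contains $(v,t)$ with $t\in\mathcal{T}=\{t_1,t_2\}$ and $v\ge 1/a_t$---and the strict inequality you anticipate in (d) would actually be an equality since $\tfrac{1}{2}\mathbb{P}[T=t_2]=\tfrac{a_{t_1}}{2a_{t_2}}-\tfrac{1}{2}$, but these are harmless.
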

\begin{proof}[Proof of Lemma \ref{lemma:case2}]
We first simplify the expected utility.
\begin{align*}
&\mathbb{E}_{\V, T} [U_1(h^{-1}(v, t); h^{-1}(\V, T)] \\
&= \mathbb{P}_{\V, T}[v \ge V] \cdot \mathbbm{1}[t = t_1] + \frac{1}{2} \cdot  \mathbb{P}_{\V, T}[V > v, T = t_2] \cdot \mathbbm{1}[t = t_1] + \frac{1}{2} \cdot  \mathbb{P}_{\V, T}[v \ge V] \cdot \mathbbm{1}[t = t_2] - c(h^{-1}(v, t)) \\
&=_{(1)} \left(\mathbbm{1}[t = t_1] + \mathbbm{1}[t = t_2] \cdot \frac{1}{2} \right)  \cdot \mathbb{P}_{\V, T}[V \le v] - \max(0, a_{t} \cdot v - 1) + \frac{1}{2} \cdot  \mathbb{P}_{\V, T}[V > v, T = t_2] \cdot \mathbbm{1}[t = t_1].
\end{align*}
where (1) uses Lemma \ref{lemma:costs}.

We split the remainder of the analysis into several cases: (A) $v \le \frac{1}{a_{t_2}}$ and $t = t_1$, (B), $\frac{1}{a_{t_2}}  \le v \le \frac{1}{2 a_{t_2} \left(\frac{a_{t_1}}{a_{t_2}} - 1 \right)}$ and $t = t_1$, (C) $\frac{1}{2 a_{t_2} \left(\frac{a_{t_1}}{a_{t_2}} - 1 \right)} \le v \le \frac{1}{a_{t_2}} \cdot \left(2 - \frac{a_{t_1}}{2 \cdot a_{t_2}}\right)$ and $t = t_1$, (D) $\frac{1}{a_{t_2}}  \le v \le \frac{1}{a_{t_2}} \cdot \left(2 - \frac{a_{t_1}}{2 \cdot a_{t_2}}\right)$ and $t = t_2$, and (E) $v \ge \frac{1}{a_{t_2}} \cdot \left(2 - \frac{a_{t_1}}{2 \cdot a_{t_2}}\right)$.

\paragraph{Case A: $v \le \frac{1}{a_{t_2}}$ and $t = t_1$.} We observe that:
\begin{align*}
&\mathbb{E}_{\V, T} [U_1(h^{-1}(v, t); h^{-1}(\V, T)] \\
&= \left(\mathbbm{1}[t = t_1] + \mathbbm{1}[t = t_2] \cdot \frac{1}{2} \right)  \cdot \mathbb{P}_{\V, T}[V \le v] - \max(0, a_{t} \cdot v - 1) + \frac{1}{2} \cdot  \mathbb{P}_{\V, T}[V > v, T = t_2] \cdot \mathbbm{1}[t = t_1] \\
&= \underbrace{\mathbb{P}_{\V, T}[V \le v] - \max(0, a_{t} \cdot v - 1)}_{(1)} + \underbrace{\frac{1}{2} \cdot  \mathbb{P}_{\V, T}[T = t_2]}_{(2)}. 
\end{align*}
For term (1), we observe that:
\[\mathbb{P}_{\V, T}[V \le v] - \max(0, a_{t} \cdot v - 1) = a_{t}  \left(v - \frac{1}{a_{t}} \right) - (a_{t} \cdot v - 1) = 0. \]
For term (2), we apply Lemma \ref{lemma:property2types} to see that 
\begin{align*}
\frac{1}{2} \cdot  \mathbb{P}[T = t_2] 
&= \frac{a_{t_1}}{2 \cdot a_{t_2}} - \frac{1}{2}
\end{align*}
as desired. 

Putting this all together, we see that:
\[\mathbb{E}_{\V, T} [U_1(h^{-1}(v, t); h^{-1}(\V, T)] = \frac{a_{t_1}}{2 \cdot a_{t_2}} - \frac{1}{2}. \]

\paragraph{Case B: $\frac{1}{a_{t_2}}  \le v \le \frac{1}{2 a_{t_2} \left(\frac{a_{t_1}}{a_{t_2}} - 1 \right)}$ and $t = t_1$. } We observe that:
\begin{align*}
&\mathbb{E}_{\V, T} [U_1(h^{-1}(v, t); h^{-1}(\V, T)] \\
&= \left(\mathbbm{1}[t = t_1] + \mathbbm{1}[t = t_2] \cdot \frac{1}{2} \right)  \cdot \mathbb{P}_{\V, T}[V \le v] - \max(0, a_{t} \cdot v - 1) + \frac{1}{2} \cdot  \mathbb{P}_{\V, T}[V > v, T = t_2] \cdot \mathbbm{1}[t = t_1] \\
&= \mathbb{P}_{\V, T}[V \le v] - \max(0, a_{t} \cdot v - 1) + \frac{1}{2} \cdot  \mathbb{P}_{\V, T}[V > v, T = t_2]  \\
&= \mathbb{P}_{\V, T}[V \le v] -\frac{1}{2} \cdot  \mathbb{P}_{\V, T}[V \le v, T = t_2]  - \max(0, a_{t} \cdot v - 1) + \frac{1}{2} \cdot  \mathbb{P}_{\V, T}[T = t_2] \\
&= \underbrace{\mathbb{P}_{\V, T}[V \le v] -\frac{1}{2} \cdot  \mathbb{P}_{\V, T}[V \le v, T = t_2] - \max(0, a_{t} \cdot v - 1)}_{(1)} + \underbrace{\frac{1}{2} \cdot  \mathbb{P}_{\V, T}[T = t_2]}_{(2)}. 
\end{align*}
For term (1), we observe that:
\begin{align*}
&\mathbb{P}_{\V, T}[V \le v] -\frac{1}{2} \cdot  \mathbb{P}_{\V, T}[V \le v, T = t_2] - \max(0, a_{t} \cdot v - 1) \\
&= \mathbb{P}_{\V, T}[V \le 1/a_{t_2}] + \mathbb{P}_{\V, T}[1/a_{t_2} \le V \le v] \left(1 - \frac{1}{2} \cdot  \mathbb{P}_{\V, T}[T = t_2 \mid 1/a_{t_2} \le V \le v] \right) - (a_{t} \cdot v - 1) \\
&= a_{t_1} \left(\frac{1}{a_{t_2}} -  \frac{1}{a_{t_1}}\right) + 2  a_{t_2} \cdot \left(v - \frac{1}{a_{t_2}} \right) \cdot \left(1 - \frac{1}{2} \left(2 - \frac{a_{t_1}}{a_{t_2}} \right) \right)  -  (a_{t_1} \cdot v - 1) \\
&= a_{t_1} \left(\frac{1}{a_{t_2}} -  \frac{1}{a_{t_1}}\right) +  a_{t_1} \cdot \left(v - \frac{1}{a_{t_2}} \right)-  (a_{t_1} \cdot v - 1) \\
&= 0.
\end{align*}
For term (2), we apply Lemma \ref{lemma:property2types} to see that 
\begin{align*}
\frac{1}{2} \cdot  \mathbb{P}[T = t_2] 
&= \frac{a_{t_1}}{2 \cdot a_{t_2}} - \frac{1}{2}
\end{align*}
as desired.

Putting this all together, we see that:
\[\mathbb{E}_{\V, T} [U_1(h^{-1}(v, t); h^{-1}(\V, T)] = \frac{a_{t_1}}{2 \cdot a_{t_2}} - \frac{1}{2}. \]

\paragraph{Case C: $\frac{1}{2 a_{t_2} \left(\frac{a_{t_1}}{a_{t_2}} - 1 \right)} \le v \le \frac{1}{a_{t_2}} \cdot \left(2 - \frac{a_{t_1}}{2 \cdot a_{t_2}}\right)$ and $t = t_1$.} We use the same analysis as in Case B to see that: 
\begin{align*}
&\mathbb{E}_{\V, T} [U_1(h^{-1}(v, t); h^{-1}(\V, T)] \\
&= \underbrace{\mathbb{P}_{\V, T}[V \le v] -\frac{1}{2} \cdot  \mathbb{P}_{\V, T}[V \le v, T = t_2] - \max(0, a_{t} \cdot v - 1)}_{(1)} + \underbrace{\frac{1}{2} \cdot  \mathbb{P}_{\V, T}[T = t_2]}_{(2)}. 
\end{align*}
For term (1), we observe that:
\begin{align*}
&\mathbb{P}_{\V, T}[V \le v] -\frac{1}{2} \cdot  \mathbb{P}_{\V, T}[V \le v, T = t_2] - \max(0, a_{t} \cdot v - 1) \\
&= \mathbb{P}_{\V, T}[V \le 1/a_{t_2}] \\
&+ \mathbb{P}_{\V, T}\left[1/a_{t_2} \le V \le \frac{1}{2 a_{t_2} \left(\frac{a_{t_1}}{a_{t_2}} - 1 \right)}\right] \left(1 - \frac{1}{2} \cdot  \mathbb{P}_{\V, T}\left[T = t_2 \mid 1/a_{t_2} \le V \le \frac{1}{2 a_{t_2} \left(\frac{a_{t_1}}{a_{t_2}} - 1 \right)}\right] \right) \\
&+ \mathbb{P}_{\V, T}\left[\frac{1}{2 a_{t_2} \left(\frac{a_{t_1}}{a_{t_2}} - 1 \right)} \le V \le v\right] \left(1 - \frac{1}{2} \right) - (a_{t} \cdot v - 1) \\
&= a_{t_1} \left(\frac{1}{a_{t_2}} -  \frac{1}{a_{t_1}}\right) +  a_{t_1} \cdot \left(\frac{1}{2 a_{t_2} \left(\frac{a_{t_1}}{a_{t_2}} - 1 \right)} - \frac{1}{a_{t_2}} \right) +  2 a_{t_2} \cdot \frac{1}{2} \left(v - \frac{1}{2 a_{t_2} \left(\frac{a_{t_1}}{a_{t_2}} - 1 \right)} \right)  - (a_{t_1} \cdot v - 1) \\
&= -1 \cdot \left(a_{t_1} - a_{t_2} \right) \cdot \left(v - \frac{1}{2 a_{t_2} \left(\frac{a_{t_1}}{a_{t_2}} - 1 \right)} \right) \\
&\le 0.
\end{align*}
For term (2), we apply Lemma \ref{lemma:property2types} to see that 
\begin{align*}
\frac{1}{2} \cdot  \mathbb{P}[T = t_2] 
&= \frac{a_{t_1}}{2 \cdot a_{t_2}} - \frac{1}{2}
\end{align*}
as desired. 

Putting this all together, we see that:
\[\mathbb{E}_{\V, T} [U_1(h^{-1}(v, t); h^{-1}(\V, T)] \le \frac{a_{t_1}}{2 \cdot a_{t_2}} - \frac{1}{2}. \]

\paragraph{Case D: $\frac{1}{a_{t_2}}  \le v \le \frac{1}{a_{t_2}} \cdot \left(2 - \frac{a_{t_1}}{2 \cdot a_{t_2}}\right)$ and $t = t_2$.} We observe that:
\begin{align*}
\mathbb{E}_{\V, T} [U_1(h^{-1}(v, t); h^{-1}(\V, T)] &= \frac{1}{2}  \cdot \mathbb{P}_{\V, T}[V \le v] - (a_{t_2} \cdot v - 1) \\
&= \frac{1}{2} \cdot  a_{t_1} \left(\frac{1}{a_{t_2}} - \frac{1}{a_{t_1}} \right) + \frac{1}{2} \cdot  2 a_{t_2} \cdot \left(v -  \frac{1}{a_{t_2}} \right)  - (a_{t_2} \cdot v - 1)\\
&= \frac{a_{t_1}}{2 \cdot a_{t_2}} - \frac{1}{2} + a_{t_2} \cdot \left(v -  \frac{1}{a_{t_2}} \right)  - (a_{t_2} \cdot v - 1)\\
&= \frac{a_{t_1}}{2 \cdot a_{t_2}} - \frac{1}{2},
\end{align*}
as desired. 

\paragraph{Case E: $v \ge \frac{1}{a_{t_2}} \cdot \left(2 - \frac{a_{t_1}}{2 \cdot a_{t_2}}\right)$.} We observe that: 
\begin{align*}
\mathbb{E}_{\V, T} [U_1(h^{-1}(v, t); h^{-1}(\V, T)] &= \left(\mathbbm{1}[t = t_1] + \frac{1}{2} \cdot \mathbbm{1}[t = t_2] \right) - \max(0, a_{t} \cdot v - 1). 
\end{align*}
For each value of $t$, since $\frac{1}{a_{t_2}} \cdot \left(2 - \frac{a_{t_1}}{2 \cdot a_{t_2}}\right) = \max(\text{supp}(\V))$, we see that the above expression is upper bounded by the case of $v = \frac{1}{a_{t_2}} \cdot \left(2 - \frac{a_{t_1}}{2 \cdot a_{t_2}}\right)$. If $t = t_1$, we can apply the analysis from Case C; if $t = t_2$, we can apply the analysis from Case E. Altogether, this means that: 
\[\mathbb{E}_{\V, T} [U_1(h^{-1}(v, t); h^{-1}(\V, T)] \le \frac{a_{t_1}}{2 \cdot a_{t_2}} - \frac{1}{2} \]
as desired. 

\end{proof}

\begin{lemma}
\label{lemma:case3}
Consider the setup of Definition \ref{def:2types}, and and assume that $1\le a_{t_1} / a_{t_2} \le (5 - \sqrt{5})/2 $. Let $(\V, T) = h^{-1}(W)$ be the reparameterized distribution of $W \sim \muengagement(P, c, u, \mathcal{T})$. For any $(v, t) \in S$, the expected utility satisfies:
\[\mathbb{E}_{\V, T} [U_1(h^{-1}(v, t); (\V, T))] \le \frac{a_{t_1}}{2 \cdot a_{t_2}} - \frac{1}{2}. \]
Moreover, for $(v, t) \in \text{supp}((\V, T))$, it holds that:
\[\mathbb{E}_{\V, T} [U_1(h^{-1}(v, t); (\V, T))] = \frac{a_{t_1}}{2 \cdot a_{t_2}} - \frac{1}{2}. \]
\end{lemma}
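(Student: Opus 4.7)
The plan is to follow the same template as the proof of Lemma \ref{lemma:case2}, but now using the denser, overlapping support structure of Case 3 (see Figure \ref{fig:sub3}), in which $\mathrm{supp}(V \mid T = t_2) \subseteq \mathrm{supp}(V \mid T = t_1)$. Starting from the same simplification
\[
\mathbb{E}_{\V,T}\!\left[U_1(h^{-1}(v,t);(\V,T))\right]
= \left(\mathbbm{1}[t=t_1] + \tfrac{1}{2}\mathbbm{1}[t=t_2]\right)\mathbb{P}[V \le v]
- \max(0, a_t v - 1)
+ \tfrac{1}{2}\mathbb{P}[V > v, T = t_2]\,\mathbbm{1}[t = t_1],
\]
which follows from Lemma \ref{lemma:costs} exactly as in Lemma \ref{lemma:case2}, we rewrite the last two terms using $\mathbb{P}[V>v,T=t_2] = \mathbb{P}[T=t_2] - \mathbb{P}[V\le v, T=t_2]$ and invoke Lemma \ref{lemma:property2types}, which gives $\tfrac{1}{2}\mathbb{P}[T=t_2] = \tfrac{a_{t_1}}{2 a_{t_2}} - \tfrac{1}{2}$. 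This isolates the target constant, reducing the lemma to showing that the residual quantity
\[
R(v,t) := \left(\mathbbm{1}[t=t_1] + \tfrac{1}{2}\mathbbm{1}[t=t_2]\right)\mathbb{P}[V\le v] - \tfrac{1}{2}\mathbb{P}[V\le v, T=t_2]\,\mathbbm{1}[t=t_1] - \max(0, a_t v - 1)
\]
is nonpositive for all $(v,t)\in S$ and equals $0$ on the support.

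Next, I would split into cases by where $v$ sits relative to the breakpoints $1/a_{t_1} < 1/a_{t_2} < \frac{3-a_{t_1}/a_{t_2}}{2a_{t_2}(2-a_{t_1}/a_{t_2})} < \tfrac{1}{a_{t_1}} + \bigl(\tfrac{1}{a_{t_1}}-\tfrac{1}{2a_{t_2}}\bigr)\bigl(\tfrac{3 - a_{t_1}/a_{t_2}}{2 - a_{t_1}/a_{t_2}}\bigr) =: v_{\max}$ and by $t\in\{t_1,t_2\}$. On each sub-interval, the cdf $\mathbb{P}[V\le v]$ is piecewise linear in $v$ with slope given by the density $g$, and $\mathbb{P}[V\le v, T = t_2]$ is the same cdf weighted by the conditional $\mathbb{P}[T=t_2\mid V = v']$ given in Definition \ref{def:2types}. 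On each of the three intervals where $t = t_1$, a direct calculation shows the effective slope in $v$ of $\mathbb{P}[V\le v] - \tfrac{1}{2}\mathbb{P}[V\le v,T=t_2]$ equals exactly $a_{t_1}$ on the first two intervals and $a_{t_2}$ on the rightmost interval (because the conditional probability jumps from $0$ to $a_{t_1}/a_{t_2}-1$ and back to $0$, and the density jumps correspondingly), so this expression matches $a_{t_1}v - 1$ linearly on the first two intervals and then has a kink with smaller slope on the last, giving $R(v,t_1)=0$ throughout the support and $R(v,t_1)<0$ on the last interval once the slope drops. For $t = t_2$, the support is the middle interval, and on that interval $\tfrac{1}{2}\mathbb{P}[V\le v]$ has slope $a_{t_2}$ (since $g = 2a_{t_2}$ there), matching $a_{t_2}v - 1$ exactly and giving $R(v,t_2)=0$.

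The remaining verification is for $(v,t)$ outside the support: for $v \le 1/a_{t_1}$ the utility is $\le 0 \le \tfrac{a_{t_1}}{2 a_{t_2}} - \tfrac{1}{2}$ (using $a_{t_1}\ge a_{t_2}$), and for $v \ge v_{\max}$ the value $\mathbb{P}[V\le v]=1$ is fixed while $\max(0,a_tv-1)$ is nondecreasing in $v$, so it suffices to check the inequality at $v = v_{\max}$, which reduces to the on-support case. The main obstacle is the bookkeeping in the middle interval $[1/a_{t_2},\, (3 - a_{t_1}/a_{t_2})/(2a_{t_2}(2 - a_{t_1}/a_{t_2}))]$ for $t = t_1$: one must verify that combining the density $2a_{t_2}$ with the conditional weight $a_{t_1}/a_{t_2}-1$ yields precisely the effective slope $2a_{t_2} - \tfrac{1}{2}\cdot 2a_{t_2}(a_{t_1}/a_{t_2}-1) = a_{t_2}(2 - a_{t_1}/a_{t_2})$, and then confirm via the constants at the left endpoint $v = 1/a_{t_2}$ that this linear piece continues to satisfy $R(v,t_1)=0$. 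The boundary constants chosen in Definition \ref{def:2types} are exactly what makes these continuity equations hold, which is why the range $1 < a_{t_1}/a_{t_2} \le (5-\sqrt{5})/2$ appears; the full argument is a direct piecewise linear check in each of the five sub-cases analogous to (A)--(E) in the proof of Lemma \ref{lemma:case2}.
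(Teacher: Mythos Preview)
Your overall strategy matches the paper's: start from the expected-utility decomposition via Lemma \ref{lemma:costs}, pull out the constant $\tfrac{a_{t_1}}{2a_{t_2}}-\tfrac12$ using Lemma \ref{lemma:property2types}, and verify the residual $R(v,t)$ is zero on the support and nonpositive off it through a piecewise-linear case analysis over the four breakpoints (the paper's Cases A--F). However, several of your concrete computations are wrong in ways that would break the argument if carried through.

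First, you have the conditional probabilities swapped: Definition \ref{def:2types} specifies $\mathbb{P}[T=t_1\mid V=v]=a_{t_1}/a_{t_2}-1$ on the middle interval, so $\mathbb{P}[T=t_2\mid V=v]=2-a_{t_1}/a_{t_2}$, not $a_{t_1}/a_{t_2}-1$. With the correct weight the effective slope of $\mathbb{P}[V\le v]-\tfrac12\mathbb{P}[V\le v,T=t_2]$ on the middle interval is $2a_{t_2}-a_{t_2}(2-a_{t_1}/a_{t_2})=a_{t_1}$, consistent with your (correct) earlier assertion; your ``obstacle'' computation $a_{t_2}(2-a_{t_1}/a_{t_2})$ is both algebraically and conceptually off. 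Second, on the rightmost interval the density is $a_{t_1}$ and $\mathbb{P}[T=t_2\mid V=v]=0$, so the effective slope is again $a_{t_1}$, not $a_{t_2}$. Hence $R(v,t_1)=0$ there as well --- as it must, since that interval lies in $\mathrm{supp}(V\mid T=t_1)$; your claim that ``$R(v,t_1)<0$ on the last interval once the slope drops'' would directly contradict the equality part of the lemma. Third, you omit the off-support check for $t=t_2$ with $v$ in the rightmost interval (the paper's Case E): there $\tfrac12\mathbb{P}[V\le v]$ has slope $\tfrac12 a_{t_1}<a_{t_2}$ (using $a_{t_1}/a_{t_2}<2$) while the cost has slope $a_{t_2}$, so $R(v,t_2)$ is strictly decreasing and hence $\le 0$; this does not reduce to your $v\ge v_{\max}$ boundary argument.
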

\begin{proof}[Proof of Lemma \ref{lemma:case3}]

  We first simplify the expected utility.
\begin{align*}
&\mathbb{E}_{\V, T} [U_1(h^{-1}(v, t); h^{-1}(\V, T)] \\
&= \mathbb{P}_{\V, T}[v \ge V] \cdot \mathbbm{1}[t = t_1] + \frac{1}{2} \cdot  \mathbb{P}_{\V, T}[V > v, T = t_2] \cdot \mathbbm{1}[t = t_1] + \frac{1}{2} \cdot  \mathbb{P}_{\V, T}[v \ge V] \cdot \mathbbm{1}[t = t_2] - c(h^{-1}(v, t)) \\
&=_{(1)} \left(\mathbbm{1}[t = t_1] + \mathbbm{1}[t = t_2] \cdot \frac{1}{2} \right)  \cdot \mathbb{P}_{\V, T}[V \le v] - \max(0, a_{t} \cdot v - 1) + \frac{1}{2} \cdot  \mathbb{P}_{\V, T}[V > v, T = t_2] \cdot \mathbbm{1}[t = t_1].
\end{align*}
where (1) uses Lemma \ref{lemma:costs}.

We split the remainder of the analysis into several cases: (A) $v \le \frac{1}{a_{t_2}}$ and $t = t_1$, (B), $\frac{1}{a_{t_2}}  \le v \le \frac{3 - \frac{a_{t_1}}{a_{t_2}} }{2 a_{t_2} \left(2 - \frac{a_{t_1}}{a_{t_2}} \right) } $ and $t = t_1$, (C) $\frac{3 - \frac{a_{t_1}}{a_{t_2}} }{2 a_{t_2} \left(2 - \frac{a_{t_1}}{a_{t_2}} \right) } \le v \le \frac{1}{a_{t_1}} + \left( \frac{1}{a_{t_1}} - \frac{1}{2 a_{t_2}} \right) \left(\frac{3 - \frac{a_{t_1}}{a_{t_2}}}{2 - \frac{a_{t_1}}{a_{t_2}} }\right)$ and $t = t_1$, (D) $\frac{1}{a_{t_2}}  \le v \le \frac{3 - \frac{a_{t_1}}{a_{t_2}} }{2 a_{t_2} \left(2 - \frac{a_{t_1}}{a_{t_2}} \right) }$ and $t = t_2$, (E) $\frac{3 - \frac{a_{t_1}}{a_{t_2}} }{2 a_{t_2} \left(2 - \frac{a_{t_1}}{a_{t_2}} \right) } \le v \le \frac{1}{a_{t_1}} + \left( \frac{1}{a_{t_1}} - \frac{1}{2 a_{t_2}} \right) \left(\frac{3 - \frac{a_{t_1}}{a_{t_2}}}{2 - \frac{a_{t_1}}{a_{t_2}} }\right)$ and $t = t_2$, and (F) $v \ge \max(\text{supp}(\V))$. 

\paragraph{Case A: $v \le \frac{1}{a_{t_2}}$ and $t = t_1$.} We observe that:
\begin{align*}
&\mathbb{E}_{\V, T} [U_1(h^{-1}(v, t); h^{-1}(\V, T)] \\
&= \left(\mathbbm{1}[t = t_1] + \mathbbm{1}[t = t_2] \cdot \frac{1}{2} \right)  \cdot \mathbb{P}_{\V, T}[V \le v] - \max(0, a_{t} \cdot v - 1) + \frac{1}{2} \cdot  \mathbb{P}_{\V, T}[V > v, T = t_2] \cdot \mathbbm{1}[t = t_1] \\
&= \underbrace{\mathbb{P}_{\V, T}[V \le v] - \max(0, a_{t} \cdot v - 1)}_{(1)} + \underbrace{\frac{1}{2} \cdot  \mathbb{P}_{\V, T}[T = t_2]}_{(2)} \\
\end{align*}
For term (1), we see that:
\[\mathbb{P}_{\V, T}[V \le v] - \max(0, a_{t} \cdot v - 1) = a_{t_1} \left( v - \frac{1}{a_{t_1}}\right) - (a_{t_1} \cdot v - 1) = 0. \]
For term (2), we apply Lemma \ref{lemma:property2types} to see that it is equal to $\frac{a_{t_1}}{2 \cdot a_{t_2}} - \frac{1}{2}$. Putting this all together, we see that:
\[\mathbb{E}_{\V, T} [U_1(h^{-1}(v, t); h^{-1}(\V, T)] = \frac{a_{t_1}}{2 \cdot a_{t_2}} - \frac{1}{2}.  \]

\paragraph{Case B: $\frac{1}{a_{t_2}}  \le v \le \frac{3 - \frac{a_{t_1}}{a_{t_2}} }{2 a_{t_2} \left(2 - \frac{a_{t_1}}{a_{t_2}} \right) } $ and $t = t_1$.} We observe that:
\begin{align*}
&\mathbb{E}_{\V, T} [U_1(h^{-1}(v, t); h^{-1}(\V, T)] \\
&= \left(\mathbbm{1}[t = t_1] + \mathbbm{1}[t = t_2] \cdot \frac{1}{2} \right)  \cdot \mathbb{P}_{\V, T}[V \le v] - \max(0, a_{t} \cdot v - 1) + \frac{1}{2} \cdot  \mathbb{P}_{\V, T}[V > v, T = t_2] \cdot \mathbbm{1}[t = t_1] \\
&= \mathbb{P}_{\V, T}[V \le v] - \max(0, a_{t} \cdot v - 1) + \frac{1}{2} \cdot  \mathbb{P}_{\V, T}[V > v, T = t_2] \\
&= \underbrace{\mathbb{P}_{\V, T}[V \le v] - \frac{1}{2} \cdot \mathbb{P}_{\V, T}[V \le v, T = t_2] - \max(0, a_{t} \cdot v - 1)}_{(1)} + \underbrace{\frac{1}{2} \cdot  \mathbb{P}_{\V, T}[T = t_2]}_{(2)}.
\end{align*}
For term (1), we see that:
\begin{align*}
 &\mathbb{P}_{\V, T}[V \le v] - \frac{1}{2} \cdot \mathbb{P}_{\V, T}[V \le v, T = t_2] - \max(0, a_{t} \cdot v - 1) \\
 &= a_{t_1} \left(\frac{1}{a_{t_2}} - \frac{1}{a_{t_1}} \right) + 2  a_{t_2} \left( v - \frac{1}{a_{t_2}}\right) - \frac{1}{2} \cdot  2 \cdot a_{t_2} \left( v - \frac{1}{a_{t_2}}\right) \cdot \left(2 - \frac{a_{t_1}}{a_{t_2}} \right)  
 - (a_{t_1} \cdot v  -1)  \\
  &= a_{t_1} \left(\frac{1}{a_{t_2}} - \frac{1}{a_{t_1}} \right) + 2 a_{t_2} \left( v - \frac{1}{a_{t_2}}\right) \left(1 - \frac{1}{2} \left( 2 - \frac{a_{t_1}}{a_{t_2}} \right) \right) - (a_{t_1} \cdot v  -1)  \\
  &= a_{t_1} \left(v - \frac{1}{a_{t_1}} \right) - (a_{t_1} \cdot v  -1)  \\
  &= 0.
\end{align*}
For term (2), we apply Lemma \ref{lemma:property2types} to see that it is equal to $\frac{a_{t_1}}{2 \cdot a_{t_2}} - \frac{1}{2}$. Putting this all together, we see that:
\[\mathbb{E}_{\V, T} [U_1(h^{-1}(v, t); h^{-1}(\V, T)] = \frac{a_{t_1}}{2 \cdot a_{t_2}} - \frac{1}{2}.  \]

\paragraph{Case C: $\frac{3 - \frac{a_{t_1}}{a_{t_2}} }{2 a_{t_2} \left(2 - \frac{a_{t_1}}{a_{t_2}} \right) } \le v \le \frac{1}{a_{t_1}} + \left( \frac{1}{a_{t_1}} - \frac{1}{2 a_{t_2}} \right) \left(\frac{3 - \frac{a_{t_1}}{a_{t_2}}}{2 - \frac{a_{t_1}}{a_{t_2}} }\right)$ and $t = t_1$.} We use the same argument as in Case B to see that: 
\begin{align*}
\mathbb{E}_{\V, T} [U_1(h^{-1}(v, t); h^{-1}(\V, T)] &= \underbrace{\mathbb{P}_{\V, T}[V \le v] - \frac{1}{2} \cdot \mathbb{P}_{\V, T}[V \le v, T = t_2] - \max(0, a_{t} \cdot v - 1)}_{(1)} + \underbrace{\frac{1}{2} \cdot  \mathbb{P}_{\V, T}[T = t_2]}_{(2)}. 
\end{align*}
For term (1), we see that:
\begin{align*}
 &\mathbb{P}_{\V, T}[V \le v] - \frac{1}{2} \cdot \mathbb{P}_{\V, T}[V \le v, T = t_2] - \max(0, a_{t} \cdot v - 1) \\
 &= a_{t_1} \cdot \mathbb{P}\left[\frac{1}{a_{t_1}} \le \V \le \frac{1}{a_{t_2}}\right] + 2 a_{t_2} \cdot \mathbb{P}\left[\frac{1}{a_{t_2}} \le \V \le \frac{3 - \frac{a_{t_1}}{a_{t_2}} }{2 a_{t_2} \left(2 - \frac{a_{t_1}}{a_{t_2}} \right) } \right] \left(1 - \frac{1}{2} \left( 2 - \frac{a_{t_1}}{a_{t_2}} \right) \right) \\
  &+ a_{t_1} \cdot \mathbb{P}\left[\frac{3 - \frac{a_{t_1}}{a_{t_2}} }{2 a_{t_2} \left(2 - \frac{a_{t_1}}{a_{t_2}} \right) } \le \V \le v \right] - (a_{t_1} \cdot v - 1) \\
   &= a_{t_1} \cdot \left(\frac{1}{a_{t_2}} - \frac{1}{a_{t_1}} \right) + a_{t_1} \cdot \left( \frac{3 - \frac{a_{t_1}}{a_{t_2}} }{2 a_{t_2} \left(2 - \frac{a_{t_1}}{a_{t_2}} \right) } - \frac{1}{a_{t_2}} \right)  + a_{t_1} \left(v - \frac{3 - \frac{a_{t_1}}{a_{t_2}} }{2 a_{t_2} \left(2 - \frac{a_{t_1}}{a_{t_2}} \right) } \right) - (a_{t_1} \cdot v - 1)  \\
 &= a_{t_1} \left(v - \frac{1}{a_{t_1}} \right) -(a_{t_1} \cdot v - 1) \\
 &= 0.
\end{align*}
For term (2), we apply Lemma \ref{lemma:property2types} to see that it is equal to $\frac{a_{t_1}}{2 \cdot a_{t_2}} - \frac{1}{2}$. Putting this all together, we see that:
\[\mathbb{E}_{\V, T} [U_1(h^{-1}(v, t); h^{-1}(\V, T)] = \frac{a_{t_1}}{2 \cdot a_{t_2}} - \frac{1}{2}.  \]

\paragraph{Case D: $\frac{1}{a_{t_2}}  \le v \le \frac{3 - \frac{a_{t_1}}{a_{t_2}} }{2 a_{t_2} \left(2 - \frac{a_{t_1}}{a_{t_2}} \right) }$ and $t = t_2$.} 
We observe that:
\begin{align*}
\mathbb{E}_{\V, T} [U_1(h^{-1}(v, t); h^{-1}(\V, T)] &= \frac{1}{2} \mathbb{P}[\V \le v] - (a_{t_2} \cdot v - 1) \\
&= \frac{1}{2}\cdot a_{t_1} \left(\frac{1}{a_{t_2}} - \frac{1}{a_{t_1}} \right) + \frac{1}{2}\cdot 2 a_{t_2} \left( v - \frac{1}{a_{t_2}} \right) - (a_{t_2} \cdot v - 1) \\
&= \frac{a_{t_1}}{2 \cdot a_{t_2}} - \frac{1}{2} + a_{t_2} \cdot \left(v - \frac{1}{a_{t_2}} \right) - (a_{t_2} \cdot v - 1) \\
&= \frac{a_{t_1}}{2 \cdot a_{t_2}} - \frac{1}{2}.
\end{align*}

\paragraph{Case E: $\frac{3 - \frac{a_{t_1}}{a_{t_2}} }{2 a_{t_2} \left(2 - \frac{a_{t_1}}{a_{t_2}} \right) } \le v \le \frac{1}{a_{t_1}} + \left( \frac{1}{a_{t_1}} - \frac{1}{2 a_{t_2}} \right) \left(\frac{3 - \frac{a_{t_1}}{a_{t_2}}}{2 - \frac{a_{t_1}}{a_{t_2}} }\right)$ and $t = t_2$.} We observe that:
\begin{align*}
&\mathbb{E}_{\V, T} [U_1(h^{-1}(v, t); h^{-1}(\V, T)] \\
&= \frac{1}{2} \mathbb{P}[\V \le v] - (a_{t_2} \cdot v - 1) \\
&= \frac{1}{2}\cdot a_{t_1} \left(\frac{1}{a_{t_2}} - \frac{1}{a_{t_1}} \right) + \frac{1}{2}\cdot 2 a_{t_2} \left( \frac{3 - \frac{a_{t_1}}{a_{t_2}} }{2 a_{t_2} \left(2 - \frac{a_{t_1}}{a_{t_2}} \right) } - \frac{1}{a_{t_2}} \right) + \frac{1}{2} \cdot a_{t_1} \cdot \left(v - \frac{3 - \frac{a_{t_1}}{a_{t_2}} }{2 a_{t_2} \left(2 - \frac{a_{t_1}}{a_{t_2}} \right) } \right)
- (a_{t_2} \cdot v - 1) \\
&= \frac{a_{t_1}}{2 \cdot a_{t_2}} - \frac{1}{2} - a_{t_2} \left( v - \frac{3 - \frac{a_{t_1}}{a_{t_2}} }{2 a_{t_2} \left(2 - \frac{a_{t_1}}{a_{t_2}} \right) } \right) + \frac{1}{2} \cdot a_{t_1} \cdot \left(v - \frac{3 - \frac{a_{t_1}}{a_{t_2}} }{2 a_{t_2} \left(2 - \frac{a_{t_1}}{a_{t_2}} \right) } \right) \\
&= \frac{a_{t_1}}{2 \cdot a_{t_2}} - \frac{1}{2} + \left(\frac{a_{t_1}}{2} - a_{t_2} \right) \left(v - \frac{3 - \frac{a_{t_1}}{a_{t_2}} }{2 a_{t_2} \left(2 - \frac{a_{t_1}}{a_{t_2}} \right) } \right) \\
&\le  \frac{a_{t_1}}{2 \cdot a_{t_2}} - \frac{1}{2} .
\end{align*}

\paragraph{Case F: $v \ge \max(\text{supp}(\V))$.}
In this case, we see that:
\[\mathbb{E}_{\V, T} [U_1(h^{-1}(v, t); h^{-1}(\V, T)] = \left(\mathbbm{1}[t = t_1] + \frac{1}{2} \cdot \mathbbm{1}[t = t_2] \right) - (a_{t} \cdot v - 1). \]
For each value of $t$, this expression is maximized at $v = \max(\text{supp}(\V))$. If $t = t_1$, we can thus apply Case C; if $t = t_2$, we can thus apply Case E. Putting this all together, we see that:
\[\mathbb{E}_{\V, T} [U_1(h^{-1}(v, t); h^{-1}(\V, T)] \le \frac{a_{t_1}}{2 \cdot a_{t_2}} - \frac{1}{2}.  \]

\end{proof}

We now prove \Cref{thm:2types} using Lemma \ref{lemma:case2} and Lemma \ref{lemma:case3}. 
\begin{proof}[Proof of \Cref{thm:2types}]

For Case 1 (where $a_{t_1}/a_{t_2} \ge 1.5$),
we directly obtain the result from the analysis for $N$ well-separated types. In particular, we can apply Theorem \ref{thm:Ntypes} (or Proposition \ref{prop:2typeswellseparated}). Applying Proposition \ref{prop:NtypesMdistribution}, we see that the reparameterization of the distribution specified in Definition \ref{def:2types} is identical to the distribution in Definition \ref{def:2typeswellseparated}, which yields the desired statement. 

The remainder of the proof boils down to analyzing Case 2 ($(5 - \sqrt{5})/2 \le a_{t_1} / a_{t_2} \le 1.5$)and Case 3 ($1 <a_{t_1} / a_{t_2} \le (5 - \sqrt{5})/2 $). 

We first claim that we can work over the reparameterized space $(\V, T)$ defined in \Cref{subsec:tools}. This follows from an analogous argument to the proof of \Cref{thm:Ntypes} which we repeat for completeness. Note that every $w \in \text{supp}(\muengagement(P, c, u, \mathcal{T}))$ is in the 
image $\cup_{t \in \mathcal{T}} \Ct$, which means that it is associated with a unique value $h^{-1}(w) = (v, t) \in S$. It thus suffices to show that there exists a best response $w \in \mathbb{R}_{\ge 0}^2$ to:
\[\argmax_{w \in \mathbb{R}_{\ge 0}^2} \mathbb{E}_{\textbf{w}_{-i} \sim (\muengagement(P, c, u, \mathcal{T}))^{P-1}} [U_i(w; \textbf{w}_{-i})]  \]
that is also in the image $\cup_{t \in \mathcal{T}} \Ct$; this follows from Lemma \ref{lemma:bestresponsestronger}. 

We thus work over the reparameterized space for the remainder of the analysis. For Case 2, 
by Lemma \ref{lemma:case2}, we see that for any $(v, t) \in S$, the expected utility satisfies:
\[\mathbb{E}_{\V, T} [U_1(h^{-1}(v, t); (\V, T))] \le \frac{a_{t_1}}{2 \cdot a_{t_2}} - \frac{1}{2} . \]
Moreover, for $(v, t) \in \text{supp}((\V, T))$, it holds that:
\[\mathbb{E}_{\V, T} [U_1(h^{-1}(v, t); (\V, T))] = \frac{a_{t_1}}{2 \cdot a_{t_2}} - \frac{1}{2} . \]
This proves that $(\V, T)$ is an equilibrium in the reparameterized space $S$. Similarly, for Case 3, by Lemma \ref{lemma:case3}, we see that for any $(v, t) \in S$, the expected utility satisfies:
\[\mathbb{E}_{\V, T} [U_1(h^{-1}(v, t); (\V, T))] \le \frac{a_{t_1}}{2 \cdot a_{t_2}} - \frac{1}{2}. \]
Moreover, for $(v, t) \in \text{supp}((\V, T))$, it holds that:
\[\mathbb{E}_{\V, T} [U_1(h^{-1}(v, t); (\V, T))] = \frac{a_{t_1}}{2 \cdot a_{t_2}} - \frac{1}{2}. \]
This proves that $(\V, T)$ is an equilibrium in the reparameterized space $S$.

Putting this all together, this proves that  $\muengagement(P, c, u, \mathcal{T})$ is an equilibrium in the original space $\mathbb{R}_{\ge 0}^2$. 
\end{proof}

\section{Proofs for \Cref{sec:equilibriumcharacterizationsbaselines}}\label{appendix:equilibriumcharacterizationsbaselines}

\subsection{Proof of Theorem \ref{thm:investmentbased}}
We prove Theorem \ref{thm:investmentbased}. 
\begin{proof}[Proof of Theorem \ref{thm:investmentbased}]
Let $\mu = \muideal(P, c, u, \mathcal{T})$ for notational convenience, and let $(\Wcostly, \Wcheap) \sim \mu$. 
We analyze the expected utility of  
\[H(w) = \mathbb{E}_{\mathbf{w}_{-i} \sim \mu_{-i}}[U_i(w; w_{-i})]\] of a content creator if all of the creators choose the strategy $\mu$. We show that $H(w) = 0$ if $w \in \text{supp}(\mu)$ and $H(w) \le 0$ for any $w \in \mathbb{R}_{\ge 0}^2$.

Let $w \in \mathbb{R}_{\ge 0}^2$ be any content vector, and let $w' = [\wcostly, 0]$ be the vector with identical quality but no gaming tricks. Since 
\[\UtilityCostly(\wcostly, t) = u(w', t) \ge u(w, t),\] $\MIdeal(w) = \MIdeal(w')$, and $c(w) \ge c(w')$,
it holds that $H(w) \le H(w')$. Since all $w'' \in \text{supp}(\mu)$ also satisfy $\wcheap'' = 0$, we can restrict the rest of our analysis to $w$ such that $\wcheap = 0$. 

We split the remainder of the analysis into two cases: (1) $\mathcal{T} = \left\{t \right\}$ and (2) $\beta_t = 0$ for all $t \in \mathcal{T}$. 

\paragraph{Case 1: $\mathcal{T} = \left\{t \right\}$.} It suffices to show that $H(w) = 0$ if $w \in \text{supp}(\mu)$ and $H(w) \le 0$ for any $w = [\wcostly, 0] \in \mathbb{R}_{\ge 0} \times \left\{0\right\}$. 

First, we show that $H(w) \le 0$ for any $w \in \mathbb{R}_{\ge 0}^2$ such that $\wcheap = 0$. If $\wcostly < \beta_t$, then it follows immediately that $H(w) \le 0$. If $\wcostly \ge \beta_t$, then we see that: 
\begin{align*}
 H(w) &= \mathbb{E}_{\mathbf{w}_{-i} \sim \mu_{-i}}[U_i(w'; w_{-i})]    \\
 &= \left(\min\left(1, \CCostlyBaseline(\wcostly) \right)\right) \cdot \mathbbm{1}[\UtilityCostly(\wcostly, t) \ge 0] - \CCostlyBaseline(\wcostly) \\
 &\le \CCostlyBaseline(\wcostly) - \CCostlyBaseline(\wcostly) \\
 &= 0.
\end{align*}

Next, we show that if $w \in \text{supp}(\mu)$, then it holds that $H(w) = 0$. If $\wcostly = 0$, then it follows easily that $H(w) = 0$. Otherwise, we see that: 
\begin{align*}
  H(w)  &= \left(\min\left(1, \CCostlyBaseline(\wcostly) \right)\right) \cdot \mathbbm{1}[\UtilityCostly(\wcostly, t) \ge 0] - \CCostlyBaseline(\wcostly) \\
   &= \CCostlyBaseline(\wcostly) \cdot \mathbbm{1}[\UtilityCostly(\wcostly, t) \ge 0] - \CCostlyBaseline(\wcostly) \\
  &= 0.  
\end{align*}
This proves that $\mu$ is an equilibrium as desired. 

\paragraph{Case 2: $\beta_t = 0$ for all $t \in \mathcal{T}$.} It suffices to show that $H(w) = 0$ if $w \in \text{supp}(\mu)$ and $H(w) \le 0$ for any $w = [\wcostly, 0] \in \mathbb{R}_{\ge 0} \times \left\{0\right\}$. 

First, we show that $H(w) \le 0$ for any $w \in \mathbb{R}_{\ge 0}^2$ such that $\wcheap = 0$. Then we see that: 
\begin{align*}
 H(w) &= \mathbb{E}_{\mathbf{w}_{-i} \sim \mu_{-i}}[U_i(w'; w_{-i})]    \\
 &= \min\left(1, \CCostlyBaseline(\wcostly) \right) - \CCostlyBaseline(\wcostly) \\
 &\le \CCostlyBaseline(\wcostly) - \CCostlyBaseline(\wcostly) \\
 &= 0.
\end{align*}

Next, we show that if $w \in \text{supp}(\mu)$, then it holds that $H(w) = 0$. We see that: 
\begin{align*}
  H(w)  &= \CCostlyBaseline(\wcostly) - \CCostlyBaseline(\wcostly) \\
  &= 0.  
\end{align*}
This proves that $\mu$ is an equilibrium as desired.

\end{proof}

\subsection{Proof of Theorem \ref{thm:randomrecommendations}}
We prove Theorem \ref{thm:randomrecommendations}. 
\begin{proof}
Let $\mu = \murandom(P, c, u, \mathcal{T})$ for notational convenience, and let $(\Wcostly, \Wcheap) \sim \mu$. We analyze the expected utility of  
\[H(w) = \mathbb{E}_{\mathbf{w}_{-i} \sim \mu_{-i}}[U_i(w; w_{-i})]\] of a content creator if all of the creators choose the strategy $\mu$. We show that $w \in \argmax_{w'} H(w')$ for any $w  \in \text{supp}(\mu)$.  

Let $w \in \mathbb{R}_{\ge 0}^2$ be any content vector, and let $w' = [\wcostly, 0]$ be the vector with identical quality but no gaming tricks. Since 
\[\UtilityCostly(\wcostly, t) = u(w', t) \ge u(w, t),\] $\MTrivial(w) = \MTrivial(w')$, and $c(w) \ge c(w')$,
it holds that $H(w) \le H(w')$. Since all $w'' \in \text{supp}(\mu)$ also satisfy $\wcheap'' = 0$, we can restrict the rest of our analysis to $w$ such that $\wcheap = 0$.

We split the remainder of the analysis into two cases: (1) $\mathcal{T} = \left\{t \right\}$ and (2) $\beta_t = 0$ for all $t \in \mathcal{T}$. 

\paragraph{Case 1: $\mathcal{T} = \left\{ t \right\}$.}  We split into two subcases: $\kappa \le 1/P$ and $\kappa \in (1/P, 1]$. 

If $\kappa \le 1/P$, then $\Wcostly$ is a point mass at $\beta_t$. Note that: 
\[H(w) = \mathbb{E}_{\mathbf{w}_{-i} \sim \mu_{-i}}[U_i(w; w_{-i})] = \frac{\mathbbm{1}[\UtilityCostly(\wcostly, t) \ge 0]}{P} - \CCostlyBaseline(w) \le \frac{1}{P} - \kappa.\]
Moreover, for $\wcostly = \wcostly^*$, it holds that $H(w) = \frac{1}{P} - \kappa$. This proves that $w \in \argmax_{w'} H(w')$ for any $w  \in \text{supp}(\mu)$, as desired. 

If $\kappa \in (1/P, 1]$, then we see that $\nu$ is the unique value such $\sum_{i=1}^{P-1} \nu^i = P \cdot \kappa$. Note that $\Wcostly$ is $\beta_t$ with probability $1 - \nu$ and $0$ with probability $\nu$. Moreover, note that:  
\[H(w) = \mathbb{E}_{\mathbf{w}_{-i} \sim \mu_{-i}}[U_i(w; w_{-i})] = \mathbbm{1}[\UtilityCostly(\wcostly, t) \ge 0] \cdot \mathbb{E}_Y\left[\frac{1}{1 + Y}\right] - \CCostlyBaseline(w),\]
where $Y \sim \text{Bin}(P - 1, 1 - \nu)$ is distributed as a binomial random variable with success probability $1 - \nu$. (The second equality holds because $Y$ is distributed as the number of creators $j \neq i$ who choose content generating nonnegative utility for the user.) A simple calculation shows that:
\[ \mathbb{E}_Y\left[\frac{1}{1 + Y}\right] = \frac{1}{P} \sum_{i=0}^{P-1} \nu^i = \kappa,\]
where the last equality follows from the definition of $\eta$. 
This means that $H(w) \le 0$ for all $w$. For $\wcostly = \beta_t$ and $\wcostly = 0$, it holds that $H(w) = 0$. This means that  $w \in \argmax_{w'} H(w')$ for any $w  \in \text{supp}(\mu)$, as desired. 

\paragraph{Case 2: $\beta_t = 0$ for all $t \in \mathcal{T}$.} In this case, $\Wcostly$ is a point mass at $0$. Note that: 
\[H(w) = \mathbb{E}_{\mathbf{w}_{-i} \sim \mu_{-i}}[U_i(w; w_{-i})] = \frac{\mathbbm{1}[\UtilityCostly(\wcostly, t) \ge 0]}{P} - \CCostlyBaseline(w) \le \frac{1}{P}.\]
Moreover, for $\wcostly = 0$, it holds that $H(w) = \frac{1}{P}$. This proves that $w \in \argmax_{w'} H(w')$ for any $w  \in \text{supp}(\mu)$, as desired. 
\end{proof}

\section{Proofs for \Cref{sec:equilibriuminvestment}}

\subsection{Proofs of Theorem \ref{thm:comparisonuserconsumptiongaming} and Theorem \ref{thm:comparisonuserconsumptioninvestment}}\label{appendix:comparisonuserconsumptiongaming}
The main lemma is the following characterization of user consumption of utility as the maximum investment in quality across the content landscape. We slightly abuse notation and for a content landscape $\mathbf{w} = (w_1, \ldots, w_P)$, we use the notation $w \in \mathbf{w}$ to denote $w_j$ for $j \in [P]$. 
\begin{lemma}
\label{lemma:comparisonuserconsumption}
Consider the setup of Theorem \ref{thm:comparisonuserconsumptiongaming}. For $\textbf{w} \in \text{supp}(\muideal(P, c, u, \mathcal{T})^P)$, it holds that \[\UserConsumption(\MIdeal; \textbf{w}) = \max_{w \in \textbf{w}} \wcostly \]
and for $\textbf{w} \in \text{supp}(\muideal(P, c, u, \mathcal{T})^P)$, it holds that 
\[\UserConsumption(\MPlatform; \textbf{w}) = \max_{w \in \textbf{w}} \wcostly.\]
\end{lemma}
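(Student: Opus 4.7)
The plan is to exploit two structural features that collapse both claims into one essentially identical computation: (i) the distribution $\muideal(P,c,u,\mathcal{T})$ places all mass on content with $\wcheap=0$, and (ii) in \Cref{example:linear} the metric $\MPlatform(w)=\wcostly+\wcheap$ agrees with $\MIdeal(w)=\wcostly$ whenever $\wcheap=0$. So even though the second claim uses $\MPlatform$ as the ranking rule, the fact that the landscape $\mathbf{w}$ is drawn from $\muideal^P$ means $\MPlatform$ and $\MIdeal$ are indistinguishable along this landscape.

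First I would record the two consequences of $\mathbf{w}\in\mathrm{supp}(\muideal^P)$ that drive the argument. By the construction in \Cref{sec:investment}, each $w_i$ has $\wcheap^i=0$, so in \Cref{example:linear} we get both $\MPlatform(w_i)=\wcostly^i=\MIdeal(w_i)$ and $u(w_i,t)=\wcostly^i+\alpha$. Hence the platform's selection rules satisfy
\[
i^\ast(\MIdeal;\mathbf{w})=\argmax_{i\in[P]} \wcostly^i\cdot\mathbbm{1}[\wcostly^i+\alpha\ge 0]=i^\ast(\MPlatform;\mathbf{w}),
\]
with the same uniform tie-breaking, and both $\UserConsumption(\MIdeal;\mathbf{w})$ and $\UserConsumption(\MPlatform;\mathbf{w})$ equal the common quantity $\wcostly^{i^\ast}\cdot\mathbbm{1}[u(w_{i^\ast},t)\ge 0]$.

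Second, I would verify that this common quantity equals $\max_{w\in\mathbf{w}}\wcostly$ via a short case split based on \Cref{prop:equilibriumhomogenousinvestment}, which says the marginal support of $\Wcostly$ under $\muideal$ is contained in $\{0\}\cup[\beta_t,1]$ with $\beta_t=\max(0,-\alpha)$. If some $w_i$ has $\wcostly^i\ge\beta_t$, then $\wcostly^i+\alpha\ge 0$, so the argmax picks a content piece whose $\wcostly$ attains $\max_i\wcostly^i$ and whose indicator equals $1$; the user consumes and $\UserConsumption=\max_i\wcostly^i$. Otherwise all $w_i$ have $\wcostly^i=0$ (this can only occur when $\alpha<0$), no creator meets the utility threshold, and $\UserConsumption=0=\max_i\wcostly^i$.

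The only subtlety is the tie-breaking rule when every indicator vanishes: uniformly random tie-breaking still yields a selected piece with $u(w_{i^\ast},t)<0$, so the multiplicative indicator zeros out the consumption, matching $\max_i\wcostly^i=0$ in that regime. Beyond this edge case, the proof is a direct substitution using the $\wcheap=0$ support and the linearity of $\MPlatform$ in \Cref{example:linear}, so I anticipate no substantive obstacle.
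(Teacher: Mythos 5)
The lemma's displayed statement contains a typo that the paper's own proof (and the downstream application in Theorems~\ref{thm:comparisonuserconsumptiongaming} and~\ref{thm:comparisonuserconsumptioninvestment}) make clear: the second clause is meant to be about $\textbf{w} \in \text{supp}(\muengagement(P, c, u, \mathcal{T})^P)$, not $\muideal$. You proved the lemma exactly as typeset, which makes the second claim a near-tautology: since every $w$ in $\text{supp}(\muideal)$ has $\wcheap = 0$, the ranking functions $\MPlatform = \wcostly + \wcheap$ and $\MIdeal = \wcostly$ coincide on the landscape, and both claims become the same computation. That observation is correct but vacuous for what the lemma is actually used for — namely, to compare the equilibrium under engagement-based optimization $\muengagement$ against the equilibrium under investment-based optimization $\muideal$.

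The genuine content of the second claim, and the part your proof does not touch, is that $\UserConsumption(\MPlatform; \textbf{w}) = \max_{w \in \textbf{w}} \wcostly$ when $\textbf{w}$ is drawn from $\muengagement^P$. Here $\wcheap$ is typically positive, so $\MPlatform$ and $\MIdeal$ no longer agree pointwise, and one cannot reduce to the first claim. What makes the statement true is the equilibrium characterization for homogeneous users (Theorem~\ref{thm:onetype} and Lemma~\ref{lemma:increasing}): the support of $\muengagement$ lies on a single curve $\{(f_t(\wcheap), \wcheap)\}$ along which $\wcostly = f_t(\wcheap)$ is weakly increasing in $\wcheap$ and $\MPlatform([f_t(\wcheap), \wcheap])$ is strictly increasing in $\wcheap$. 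Hence, restricted to the equilibrium support, $\MPlatform$ and $\wcostly$ induce the same ordering, so the $\MPlatform$-maximizer is also the $\wcostly$-maximizer. You would also need the observation (which you did make for $\muideal$) that the only support point with negative utility is $[0,0]$, so the indicator is $1$ whenever the consumption quantity is positive. Without the appeal to the monotone structure of the engagement equilibrium support, the second claim does not follow.
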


We now prove Lemma \ref{lemma:comparisonuserconsumption}. 
\begin{proof}[Proof of Lemma \ref{lemma:comparisonuserconsumption}]
We observe that for $w \in \text{supp}(\muideal(P, c, u, \mathcal{T}))$, it holds that if $\mathbbm{1}[u(w, t)] < 0$, then $w = [0, 0]$. Thus, for $\textbf{w} \in \text{supp}(\muideal(P, c, u, \mathcal{T}))^P$, it holds that:
\[\UserConsumption(\MIdeal; \textbf{w}) = \mathbb{E}\left[w^{\text{costly}}_{i^*(\MIdeal; \textbf{w})} \cdot \mathbbm{1}[u(w_{i^*(M; \textbf{w})}, t) \ge 0]\right] =\mathbb{E}\left[w^{\text{costly}}_{i^*(\MIdeal; \textbf{w})}\right]. \]
Moreover, since $\wcheap = 0$ for all $w \in \text{supp}(\muideal(P, c, u, \mathcal{T}))$ and by the definition of $\MIdeal$, we see that $w^{\text{costly}}_{i^*(\MIdeal; \textbf{w})} = \max_{w \in \textbf{w}} \wcostly$. This means that:
\[\UserConsumption(\MIdeal; \textbf{w}) = \mathbb{E}\left[\max_{w \in \textbf{w}} \wcostly \right]. \]

Similarly, we observe that for $w \in \text{supp}(\muengagement(P, c, u, \mathcal{T}))$, it holds that if $\mathbbm{1}[u(w, t)] < 0$, then $w = [0, 0]$. 
Thus, for $\textbf{w} \in \text{supp}(\muengagement(P, c, u, \mathcal{T}))^P$, it holds that:
\[\UserConsumption(\MPlatform; \textbf{w}) = \mathbb{E}\left[w^{\text{costly}}_{i^*(M; \textbf{w})} \cdot \mathbbm{1}[u(w_{i^*(M; \textbf{w})}, t)] \ge 0\right] =\mathbb{E}\left[w^{\text{costly}}_{i^*(M; \textbf{w})}\right]. \]
Moreover, by the definition of $\text{supp}(\muengagement(P, c, u, \mathcal{T}))$ and by the definition of $\MPlatform$, we see that $w^{\text{costly}}_{i^*(\MPlatform; \textbf{w})} = \max_{w \in \textbf{w}} \wcostly$. This means that:
\[\UserConsumption(\MPlatform; \textbf{w}) = \mathbb{E}\left[\max_{w \in \textbf{w}} \wcostly \right]. \]
\end{proof}

Using Lemma \ref{lemma:comparisonuserconsumption}, we prove Theorem \ref{thm:comparisonuserconsumptiongaming} and Theorem \ref{thm:comparisonuserconsumptioninvestment}. 
\begin{proof}[Proof of Theorem \ref{thm:comparisonuserconsumptiongaming} and Theorem \ref{thm:comparisonuserconsumptioninvestment}]

By Lemma \ref{lemma:comparisonuserconsumption}, it suffices to analyze 
\[\mathbb{E}_{\textbf{w} \sim \mu^P} \left[\max_{w \in \textbf{w}} \wcostly \right]\]
where $\mu \in \left\{\muengagement(P, c, u, \mathcal{T}), \muideal(P, c, u, \mathcal{T}) \right\}$.
By \Cref{prop:equilibriumhomogenousengagement}, we see that $\mathbb{P}_{(\Wcostly, \Wcheap) \sim (\muengagement(P, c, u, \mathcal{T}))}[\Wcostly \le \wcostly]$ is equal to: 
\[
\begin{cases}
(-\alpha)^{1/(P-1)} & \text{ if }  0 \le \wcostly \le -\alpha \\
  \left(\min(1, \wcostly + \gamma \cdot t \cdot (\wcostly + \alpha))\right)^{1/(P-1)} & \text{ if } \wcostly \ge \max(0, -\alpha).
\end{cases}
\]
By \Cref{prop:equilibriumhomogenousinvestment}, we see that: 
\[\mathbb{P}_{(\Wcostly, \Wcheap) \sim (\muideal(P, c, u, \mathcal{T}))}[\Wcostly \le \wcostly] = 
\begin{cases}
(-\alpha)^{1/(P-1)} & \text{ if }  0 \le \wcostly \le -\alpha \\
  \left(\min(1, \wcostly)\right)^{1/(P-1)} & \text{ if } \wcostly \ge \max(0, -\alpha).
\end{cases}
\]

\paragraph{Proof of Theorem \ref{thm:comparisonuserconsumptiongaming}.}
The marginal distribution of $\Wcostly$ for $\muengagement(P, c, u, \mathcal{T})$ implies for engagement-based optimization, the distribution of $\Wcostly$ for higher values of $\gamma$ is stochastically dominated by the distribution of $\Wcostly$ for lower values of $\gamma$. This implies that $\mathbb{E}_{\textbf{w} \sim (\muengagement(P, c, u, \mathcal{T}))^P} \left[\max_{w \in \textbf{w}} \wcostly \right]$ is strictly increasing in $\gamma$, which implies that $\mathbb{E}_{\textbf{w} \sim (\muengagement(P, c, u, \mathcal{T}))^P} \left[\UserConsumption(\MPlatform; \textbf{w})\right]$
is strictly increasing in $\gamma$. 

\paragraph{Proof of Theorem \ref{thm:comparisonuserconsumptioninvestment}.}
Observe that the marginal distribution of $\Wcostly$ for $\muideal(P, c, u, \mathcal{T})$ stochastically dominates the distribution of $\Wcostly$ for $\muengagement(P, c, u, \mathcal{T})$, with strict stochastic dominance for $\gamma > 0$. This implies that if $\gamma > 0$:
\[ \mathbb{E}_{\textbf{w} \sim (\muengagement(P, c, u, \mathcal{T}))^P} \left[\max_{w \in \textbf{w}} \wcostly \right] < \mathbb{E}_{\textbf{w} \sim (\muideal_{P, \alpha,  t})^P} \left[\max_{w \in \textbf{w}} \wcostly \right],\]
which implies that 
\[ \mathbb{E}_{\textbf{w} \sim (\muengagement(P, c, u, \mathcal{T}))^P} \left[\UserConsumption(\MPlatform; \textbf{w}) \right] < \mathbb{E}_{\textbf{w} \sim (\muideal_{P, \alpha,  t})^P} \left[\UserConsumption(\MIdeal; \textbf{w})\right].\]
Moreover, if $\gamma = 0$, the two distributions are equal, which implies that 
\[ \mathbb{E}_{\textbf{w} \sim (\muengagement(P, c, u, \mathcal{T}))^P} \left[\max_{w \in \textbf{w}} \wcostly \right] = \mathbb{E}_{\textbf{w} \sim (\muideal_{P, \alpha,  t})^P} \left[\max_{w \in \textbf{w}} \wcostly \right],\] 
which implies that 
\[ \mathbb{E}_{\textbf{w} \sim (\muengagement(P, c, u, \mathcal{T}))^P} \left[\UserConsumption(\MPlatform; \textbf{w}) \right] = \mathbb{E}_{\textbf{w} \sim (\muideal_{P, \alpha,  t})^P} \left[\UserConsumption(\MIdeal; \textbf{w})\right].\]
\end{proof}

\subsection{Proof of Proposition \ref{prop:userconsumptionNtypes}}\label{appendix:userconsumptionNTypes}
We prove \Cref{prop:userconsumptionNtypes}. 
\begin{proof}[Proof of \Cref{prop:userconsumptionNtypes}]
We construct the following instantation of Example \ref{example:linear} with $N \ge 2$ types. Let the type space be $\mathcal{T}_{N, \epsilon} = \left\{(1+ \epsilon) (1 + 1/N)^{i-1} - 1 \mid 1 \le i \le N \right\}$, and let $P =2$, $\alpha = 1$, and $\gamma = 0$. It suffices to show that $\mathbb{E}_{\mathbf{w} \sim \left(\muideal(2, c, u, \mathcal{T}_{N, \epsilon})\right)^2}[\UserConsumption(\MIdeal; \mathbf{w})] = 2/3$ and $\mathbb{E}_{\mathbf{w} \sim \left(\muengagement(2, c, u, \mathcal{T}_{N, \epsilon})\right)^2} \le 1/N$. 

Before analyzing these two expressions, we first compute $f_t$ and $C_t$ for this example: we observe that $f_t(\wcheap) = \wcheap / t$ and $C_t(\wcheap) = \wcheap / t$. 

First, we show that $\mathbb{E}_{\mathbf{w} \sim \left(\muideal(2, c, u, \mathcal{T}_{N, \epsilon})\right)^2}[\UserConsumption(\MIdeal; \mathbf{w})] = 2/3$. We use the characterization in Theorem \ref{thm:investmentbased}. It is easy to see that:
\[\mathbb{E}_{\mathbf{w} \sim \left(\muideal(2, c, u, \mathcal{T}_{N, \epsilon})\right)^2}[\UserConsumption(\MIdeal; \mathbf{w})] = \int_{0}^{\infty} (1 - \mathbb{P}[\Wcostly \le \wcostly]^2) d\wcostly. \]
We see that: 
\[\mathbb{P}[\Wcostly \le \wcostly] =\mathbb{P}[\Wcheap \le t \cdot \wcostly] = \min(1, C_t(t \cdot \wcostly)) = \min(1, \wcostly).  \]
This means that:
\[\mathbb{E}_{\mathbf{w} \sim \left(\muideal(2, c, u, \mathcal{T}_{N, \epsilon})\right)^2}[\UserConsumption(\MIdeal; \mathbf{w})] = \int_0^{1} (1 - \wcostly^2) d\wcostly = \frac{2}{3}, \]
as desired.

Next, we show that $\mathbb{E}_{\mathbf{w} \sim \left(\muengagement(2, c, u, \mathcal{T}_{N, \epsilon})\right)^2} \le 1/N$. We use the characterization in Theorem \ref{thm:Ntypes}. We observe that $\max(\text{supp}(\Wcheap^i)) \le \frac{1}{N t_i}$. This follows from the fact that $\max(\text{supp}(\Wcheap^i)) =  \frac{t_i}{N}$ for $1 \le i \le N' - 1$ and 
\begin{align*}
  \max(\text{supp}(\Wcheap^i)) &=  t_i \cdot \frac{N}{N - N'+1} \left(1 - \sum_{j=1}^{N' -1} \frac{1}{N- j+1}\right)^{-1} \\
  &\le t_i \cdot \frac{N-N'+1}{N } \cdot \left(1 - \sum_{j=1}^{N' -1} \frac{1}{N- j+1}\right) \\
  &\le t_i \cdot \frac{N-N'+1}{N } \cdot \frac{1}{N- N'+1} \\
  &= \frac{t_i}{N}  
\end{align*}
for $i = N'$. 
This implies that:
\[\max_{1 \le i \le N'} \max(\text{supp}(\Wcostly^i)) = \max_{1 \le i \le N'} t \cdot \frac{1}{N t_i} = \frac{1}{N}.  \]
Thus, we see that:
\[\mathbb{E}_{\mathbf{w} \sim \left(\muideal(2, c, u, \mathcal{T}_{N, \epsilon})\right)^2}[\UserConsumption(\MIdeal; \mathbf{w})] \le \max_{w \in \text{supp}(\muideal(2, c, u, \mathcal{T}_{N, \epsilon}))} \wcostly \le \frac{1}{N} \]
as desired. 
\end{proof}

\section{Proofs for \Cref{sec:platformimplications}}

\subsection{Proof of Theorem \ref{thm:comparisonengagement}}\label{appendix:proofrealizedengagement}

The main ingredient of the proof of Theorem \ref{thm:comparisonengagement} is constructing and analyzing an instance where engagement-based optimization achieves a low realized engagement. 
We construct the instance to be Example \ref{example:linear} with costless gaming ($\gamma = 0$), baseline utility $\alpha = 1$, and $P = 2$ creators, and type space $\mathcal{T}_{N, \epsilon} := \left\{(1+ \epsilon) (1 + 1/N)^{i-1} - 1 \mid 1 \le i \le N \right\}$ for sufficiently small $\epsilon > 0$ and sufficiently large $N$.

In order to prove Theorem \ref{thm:comparisonengagement}, we first show the following lemma that relates the realized engagement to the maximum engagement achieved by any content in the content landscape. We again slightly abuse notation and for a content landscape $\mathbf{w} = (w_1, \ldots, w_P)$, we use the notation $w \in \mathbf{w}$ to denote $w_j$ for $j \in [P]$. 
\begin{lemma}
\label{lemma:comparisonrealizedengagement}
Consider Example \ref{example:linear} with costless gaming ($\gamma = 0$), baseline utility $\alpha = 1$, and $P = 2$ creators, and type space $\mathcal{T}_{N, \epsilon} := \left\{(1+ \epsilon) (1 + 1/N)^{i-1} - 1 \mid 1 \le i \le N \right\}$ for some $N \ge 1$ and $\epsilon > 0$. For $\textbf{w} \in \text{supp}(\muideal(2, c, u, \mathcal{T}_{N, \epsilon})^2)$, it holds that \[\RealizedEngagement(\MIdeal; \textbf{w}) = \max_{w \in \textbf{w}} \MPlatform(w) \]
and for $\textbf{w} \in \text{supp}(\muengagement(2, c, u, \mathcal{T}_{N, \epsilon})^2)$, it holds that 
\[\RealizedEngagement(\MPlatform; \textbf{w}) \le \max_{w \in \textbf{w}} \MPlatform(w).\]
\end{lemma}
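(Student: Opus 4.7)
The plan is to verify both claims directly from the definitions of $\RealizedEngagement$ and $i^*$, together with the structural properties of the two equilibria, making essential use of $\alpha = 1 > 0$.

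For the equality claim, I would first invoke the characterization of $\muideal(2, c, u, \mathcal{T}_{N, \epsilon})$ in Theorem \ref{thm:investmentbased}, which implies that every $w$ in the support satisfies $\wcheap = 0$. Since $\alpha = 1$, this gives $u([\wcostly, 0], t) = \wcostly + 1 \ge 0$ for every $t \in \mathcal{T}_{N, \epsilon}$, so the indicator $\mathbbm{1}[u(w, t) \ge 0]$ equals $1$ for every piece of content in the realized landscape and every user type. Moreover, on this support $\MIdeal(w) = \wcostly = \wcostly + \wcheap = \MPlatform(w)$, so the argmax defining $i^*(\MIdeal; \mathbf{w})$ coincides with the argmax of $\MPlatform$ across $\mathbf{w}$ and does not depend on $t$. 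Combining these two observations yields $\RealizedEngagement(\MIdeal; \mathbf{w}) = \MPlatform(w_{i^*(\MIdeal; \mathbf{w})}) = \max_{w \in \mathbf{w}} \MPlatform(w)$.

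For the inequality claim, the bound follows from the definition of $i^*(\MPlatform; \mathbf{w})$ alone and does not use any specific structure of $\muengagement$. Inside the expectation defining $\RealizedEngagement(\MPlatform; \mathbf{w})$, the integrand $\MPlatform(w_{i^*(\MPlatform; \mathbf{w})}) \cdot \mathbbm{1}[u(w_{i^*(\MPlatform; \mathbf{w})}, t) \ge 0]$ equals $\MPlatform(w_{i^*})$ when the selected creator's content gives nonnegative utility to the user of type $t$, and is zero otherwise; in either case it is bounded above by $\max_{w \in \mathbf{w}} \MPlatform(w)$, and taking the expectation over $t$ preserves the bound.

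There is essentially no hard step: the lemma is a bookkeeping statement that isolates $\max_{w \in \mathbf{w}} \MPlatform(w)$ as the common upper envelope of both quantities. The only subtlety is checking that the argmax under $\MIdeal$ agrees with the argmax under $\MPlatform$ on the support of $\muideal$, which needs both $\wcheap \equiv 0$ on that support and $\alpha = 1$ making every user willing to consume every such $w$. This reduction is what lets the main argument of Theorem \ref{thm:comparisonengagement} compare the two equilibria by studying the distribution of $\max_{w \in \mathbf{w}} \MPlatform(w)$ in each, rather than reasoning about the argmax directly.
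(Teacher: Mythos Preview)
Your proposal is correct and follows essentially the same route as the paper's proof: for the equality you use that on the support of $\muideal$ one has $\wcheap=0$ (hence $\MIdeal=\MPlatform$) and $u([\wcostly,0],t)=\wcostly+1\ge 0$ so the indicator is identically $1$; for the inequality you bound the integrand by $\max_{w\in\mathbf{w}}\MPlatform(w)$. If anything, your handling of the indicator in the engagement-based case is slightly more careful than the paper's, which simply writes $\RealizedEngagement(\MPlatform;\mathbf{w})=\MPlatform(w_{i^*})\le\max_w\MPlatform(w)$ without explicitly addressing the indicator or the expectation over $t$.
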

\noindent We defer the proof of Lemma \ref{lemma:comparisonrealizedengagement} to \Cref{appendix:comparisonrealizedengagement}

Given Lemma \ref{lemma:comparisonrealizedengagement}, it suffices to analyze the engagement distribution at equilibrium for engagement-based optimization and investment-based optimization. More formally, within the instance constructed above, let $\V^{I, \epsilon, \mathcal{T}_{N, \epsilon}}$ be the distribution of $\MPlatform(w) + s$
where $w \sim \muideal(2, c, u, \mathcal{T}_{N, \epsilon})$. The distribution $\V^{I, \epsilon, \mathcal{T}_{N, \epsilon}}$ can be characterized in closed-form as follows:
\begin{lemma}
\label{lemma:investmentcdf}
The distribution $\V^{I, \epsilon, \mathcal{T}_{N, \epsilon}}$ has cdf equal to: 
\[\mathbb{P}[\V^{I, \epsilon, \mathcal{T}_{N, \epsilon}} \le v] = 
\begin{cases}
0 & \text{ if } v \le 1 \\
v - 1  & \text{ if } 1 \le v \le 2 \\
1 & \text { if } v \ge 2
\end{cases}
\]
\end{lemma}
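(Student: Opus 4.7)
The plan is to apply the closed-form characterization of $\muideal$ directly and then translate the marginal distribution of $\Wcostly$ into the distribution of $\MPlatform(w) + s$. The key observation is that, by the setup in \Cref{sec:equilibriumcharacterizationsbaselines}, the distribution $\muideal(2, c, u, \mathcal{T}_{N,\epsilon})$ places probability one on $\wcheap = 0$, and since in \Cref{example:linear} we have $\MPlatform(w) = \wcostly + \wcheap$, it follows that $\MPlatform(w) = \wcostly$ almost surely under $\muideal$. So the entire cdf of $\V^{I,\epsilon,\mathcal{T}_{N,\epsilon}}$ is determined by the marginal distribution of $\Wcostly$, shifted by $s$.

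First I would check the hypotheses of \Cref{prop:equilibriumhomogenousinvestment} for the given instance. The instance has $\gamma = 0 \in [0,1)$ and $\alpha = 1 > -1$; moreover, since $\alpha = 1 \ge 0$, condition (b) of the proposition applies, independent of the structure of the type space $\mathcal{T}_{N,\epsilon}$. Plugging in $P = 2$ and $\alpha = 1$ (so $-\alpha = -1 < 0$ makes the first branch of the piecewise formula vacuous), the proposition gives
\[
\mathbb{P}[\Wcostly \le \wcostly] = \min(1, \wcostly) \quad \text{for } \wcostly \ge 0,
\]
so $\Wcostly$ is uniformly distributed on $[0,1]$ under $\muideal(2, c, u, \mathcal{T}_{N,\epsilon})$.

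To complete the proof, I would invoke the continuation of \Cref{example:linear} in \Cref{appendix:additionalassumptions}, which records $s = 1$ as the shift parameter witnessing \Cref{assumption:linearity} for this setup. Combining the two steps, $\V^{I,\epsilon,\mathcal{T}_{N,\epsilon}} = \MPlatform(w) + s = \Wcostly + 1$ is uniformly distributed on $[1,2]$, whose cdf is exactly the one claimed in the lemma statement.

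There is no real obstacle here: the proof is a substitution into formulas already established elsewhere in the paper. It is worth noting why the intricate structure of $\mathcal{T}_{N,\epsilon}$ plays no role whatsoever in the calculation. Because $\alpha = 1 > 0$, every type satisfies $u([0,0], t) \ge 0$, i.e. $\beta_t = 0$ for all $t \in \mathcal{T}_{N,\epsilon}$; therefore investment-based optimization behaves as though the heterogeneity in $\mathcal{T}_{N,\epsilon}$ were absent, and the characterization collapses to the simple uniform distribution above. This cleanness is precisely what makes investment-based optimization the easy baseline against which the much more subtle engagement-based distribution is to be compared in the proof of \Cref{thm:comparisonengagement}.
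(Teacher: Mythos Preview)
Your proposal is correct and follows essentially the same approach the paper uses implicitly: the paper does not give a standalone proof of this lemma, but in the proof of Lemma~\ref{lemma:comparisonrealizedengagement} and the subsequent analysis of Theorem~\ref{thm:comparisonengagement} it invokes exactly the same facts you do (namely that $\Wcheap = 0$ a.s.\ under $\muideal$, that $\Wcostly$ is uniform on $[0,1]$ via Proposition~\ref{prop:equilibriumhomogenousinvestment}, and that $s = 1$), arriving at the cdf $\min(1,\max(0,v-1))$. Your observation that the heterogeneity of $\mathcal{T}_{N,\epsilon}$ collapses because $\beta_t = 0$ for all types is also exactly the mechanism behind condition~(b) of Proposition~\ref{prop:equilibriumhomogenousinvestment}.
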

Moreover, let $\V^{E, \epsilon, \mathcal{T}_{N, \epsilon}}$ be the distribution of $\MPlatform(w) + s$ where $w \sim \muengagement(2, c, u, \mathcal{T})$. While the distribution $\V^{E, \epsilon, \mathcal{T}_{N, \epsilon}}$ is somewhat messy, for each $\epsilon > 0$, we show pointwise convergence to a simpler distribution $V^{e, \epsilon, \infty}$ defined by:
\[
\mathbb{P}[V^{e, \epsilon, \infty} \le v] = \begin{cases}
 0 & \text{ if } v \le 1 + \epsilon \\
 \ln\left(\frac{1}{1 - \ln\left(\frac{v}{1+\epsilon}\right)} \right)  & \ln\left(\frac{1}{1 - \ln\left(\frac{v}{1+\epsilon}\right)} \right) \text{ if } v \in \left[1 + \epsilon , (1+\epsilon) e^{1-e} \right]\\
 1 & \text{ if } v \ge (1+\epsilon) e^{1-e}
\end{cases}
\]
as formalized in the following lemma: 
\begin{lemma}
\label{lemma:engagementlimitcdf}
For each $\epsilon > 0$, the cdf of $\V^{E, \epsilon, \mathcal{T}_{N, \epsilon}}$ as $N \rightarrow \infty$ converges pointwise a.e. to the cdf of $V^{e, \epsilon, \infty}$. 
\end{lemma}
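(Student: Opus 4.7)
The plan is to explicitly compute the cdf $F_N(v) := \mathbb{P}[\V^{E,\epsilon,\mathcal{T}_{N,\epsilon}} \le v]$ using the closed-form characterization of $\muengagement(2, c, u, \mathcal{T}_{N,\epsilon})$, and then take a pointwise limit as $N \to \infty$. For Example \ref{example:linear} with $\alpha = 1$, $\gamma = 0$, Assumption \ref{assumption:linearity} holds with $a_{t_i} = 1/(1+t_i)$ and $s = 1$, and with the choice $t_i = (1+\epsilon)(1+1/N)^{i-1} - 1$ the well-separated condition $a_{t_i} \ge (1 + 1/N) a_{t_{i+1}}$ from \eqref{eq:wellseparated} holds with equality. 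Thus Theorem \ref{thm:Ntypes} applies, and by Proposition \ref{prop:NtypesMdistribution}, writing $\V = \MPlatform(w) + s$ for $w \sim \muengagement(2, c, u, \mathcal{T}_{N,\epsilon})$, the distribution $\V \mid T = t_i$ is uniform on $[(1+\epsilon)(1+1/N)^{i-1}, (1+\epsilon)(1+1/N)^{i}]$ with weight $1/(N-i+1)$ for $1 \le i \le N' - 1$, and an analogous uniform distribution for the truncated residual component $i = N'$.

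Next, for $v \in (1+\epsilon, (1+\epsilon)(1+1/N)^{N'-1})$, let $i_N(v)$ be the unique index with $v \in [(1+\epsilon)(1+1/N)^{i_N(v)-1}, (1+\epsilon)(1+1/N)^{i_N(v)}]$. Telescoping through the mixture gives
\[
F_N(v) = \sum_{j=1}^{i_N(v)-1} \frac{1}{N-j+1} \;+\; \frac{N}{N - i_N(v) + 1} \cdot \left(\frac{v}{(1+\epsilon)(1+1/N)^{i_N(v)-1}} - 1\right).
\]
The second term is bounded above by $1/(N - i_N(v) + 1)$, which vanishes as $N \to \infty$ whenever $v$ is bounded away from the upper end of the support. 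For the harmonic part, using $\sum_{j=1}^{k-1} \frac{1}{N - j + 1} = H_N - H_{N - k + 1}$ together with $H_m = \ln m + \gamma + O(1/m)$ yields $H_N - H_{N - i_N(v) + 1} = \ln\bigl(N / (N - i_N(v) + 1)\bigr) + o(1)$.

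Third, I would estimate $i_N(v)$ asymptotically. Since $v = (1+\epsilon)(1+1/N)^{i_N(v) - 1 + \theta}$ for some $\theta \in [0,1]$, we have $i_N(v) - 1 = \ln(v/(1+\epsilon))/\ln(1 + 1/N) - \theta$, so $i_N(v)/N \to \ln(v/(1+\epsilon))$. Therefore $(N - i_N(v) + 1)/N \to 1 - \ln(v/(1+\epsilon))$, which gives
\[
\lim_{N \to \infty} \ln\frac{N}{N - i_N(v) + 1} = \ln \frac{1}{\,1 - \ln(v/(1+\epsilon))\,},
\]
matching the target cdf. The upper end of the support is controlled by $N'$, which by definition is the smallest index with $\sum_{j=1}^{N'} 1/(N-j+1) \ge 1$; that is, $H_N - H_{N-N'} \to 1$, giving $N'/N \to 1 - 1/e$ and the endpoint $(1+\epsilon)(1+1/N)^{N'-1} \to (1+\epsilon) e^{\,1 - 1/e}$, consistent with the cutoff at which the limiting cdf reaches $1$.

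The main technical obstacle is handling the residual $N'$-th mixture component and the corresponding tail region, where $v$ is not bounded away from the upper endpoint. There the uniform factor and the residual weight $1 - \sum_{j=1}^{N'-1} 1/(N-j+1)$ do not both vanish individually, so one must argue that their product plus the small partial-segment correction still converges pointwise to the right quantity (and, at the single discontinuity point, acknowledge that convergence is only required a.e.). The rest of the argument is routine algebraic manipulation once the harmonic-sum asymptotics and the identification of $i_N(v)/N$ are in place.
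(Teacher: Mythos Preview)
Your approach is correct and matches the paper's proof: compute the explicit cdf via the equilibrium characterization (packaged there as Lemma~\ref{lemma:engagementcdf}), express it through partial harmonic sums, use $i_N(v)/N \to \ln(v/(1+\epsilon))$ and $N'/N \to 1-1/e$, and take the pointwise limit. Your concern about the residual $N'$-th component is overstated: for pointwise a.e.\ convergence you fix $v$ and let $N\to\infty$, so any $v < (1+\epsilon)e^{1-1/e}$ eventually lies in one of the first $N'-1$ segments (since the lower edge $(1+\epsilon)(1+1/N)^{N'-1}$ of the last segment tends to $(1+\epsilon)e^{1-1/e}$), and any $v > (1+\epsilon)e^{1-1/e}$ eventually exceeds the whole support so $F_N(v)=1$; the $N'$-th segment itself never needs direct analysis. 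The paper handles exactly this three-case split, using a dedicated Lemma~\ref{lemma:N'bound} to pin down $N'$ where you invoke $H_N - H_{N-N'} \to 1$.
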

\noindent We defer the proof of Lemma \ref{lemma:engagementlimitcdf} to \Cref{appendix:prooflimitcdflemma}

Using Lemma \ref{lemma:comparisonrealizedengagement}, Lemma \ref{lemma:investmentcdf}, Lemma \ref{lemma:engagementcdf}, and Lemma \ref{lemma:engagementlimitcdf}, we prove Theorem \ref{thm:comparisonengagement}.
\begin{proof}[Proof of Theorem \ref{thm:comparisonengagement}]
By Lemma \ref{lemma:comparisonrealizedengagement}, it suffices to show that the following limits exist and that 
\begin{equation}
\label{eq:sufficientconditionengagement}
 \lim_{\epsilon \rightarrow 0}  \lim_{N \rightarrow \infty} \mathbb{E}_{\muengagement(2, c, u, \mathcal{T}_{N, \epsilon})}\left[\max_{w \in \textbf{w}} \MPlatform(w) \right] < \mathbb{E}_{\muideal(2, c, u, \mathcal{T}_{N, \epsilon})}\left[\max_{w \in \textbf{w}} \MPlatform(w) \right]   
\end{equation}
\noindent We analyze the left-hand side of  \eqref{eq:sufficientconditionengagement}, then analyze the right-hand side of \eqref{eq:sufficientconditionengagement}, and then use these analyses to prove \eqref{eq:sufficientconditionengagement}. 
\paragraph{Analysis of left-hand side of \eqref{eq:sufficientconditionengagement}.}
We first analyze the left-hand side of \eqref{eq:sufficientconditionengagement}. We see that:  
\begin{align*}
   \mathbb{E}_{\muengagement(2, c, u, \mathcal{T}_{N, \epsilon})}\left[\max_{w \in \textbf{w}} \MPlatform(w) \right] &= \mathbb{E}_{(V_1, V_2) \sim (\V^{E, \epsilon, \mathcal{T}_{N, \epsilon}})^2}\left[\max(V_1, V_2) \right] \\
   &= \int_{0}^{\infty} \left(1 - \left(\mathbb{P}[\V^{E, \epsilon, \mathcal{T}_{N, \epsilon}} \le v]\right)^2\right) dv.
\end{align*}
We take a limit of this expression as $N \rightarrow \infty$. By Lemma \ref{lemma:engagementlimitcdf}, the function  $\left(1 - \left(\mathbb{P}[\V^{E, \epsilon, \mathcal{T}_{N, \epsilon}} \le v]\right)^2\right)$ pointwise approaches the function $\left(1 - \left(\mathbb{P}[\V^{E, \epsilon, \infty } \le v]\right)^2\right)$. Moreover, we see that $\left(1 - \left(\mathbb{P}[\V^{E, \epsilon, \mathcal{T}_{N, \epsilon}} \le v]\right)^2\right) \le g(v)$, where $g(v) = 1$ if $0 \le v \le 3$ and $g(v) = 0$ if $v \ge 3$. Applying dominated convergence with dominating function $g$, we see that: 
\[\lim_{N \rightarrow \infty}  \int_{0}^{\infty} \left(1 - \left(\mathbb{P}[\V^{E, \epsilon, \mathcal{T}_{N, \epsilon}} \le v]\right)^2\right) dv = \int_{0}^{\infty} \left(1 - \left(\mathbb{P}[\V^{E, \epsilon, \infty} \le v]\right)^2\right) dv. \]

We next take a limit of $\int_{0}^{\infty} \left(1 - \left(\mathbb{P}[\V^{E, \epsilon, \infty} \le v]\right)^2\right) dv$ as $\epsilon \rightarrow 0$. We see that $\V^{E, \epsilon, \infty}$ pointwise a.e. approaches $\V^{E, \infty}$ defined by:
\[ 
\mathbb{P}[\V^{E, \infty} \le v] =
\begin{cases}
    0 & \text{ if } v \le 1 \\
    \ln \left(\frac{1}{1 - \ln(x)} \right) &\text{ if } 1 \le v \le e^{1-1/e} \\
    1 & \text{ if } v \ge e^{1-1/e}
\end{cases}.
\]
We again apply dominated convergence with the same function $g$ as above to see that:
\begin{align*}
\int_{0}^{\infty} \left(1 - \left(\mathbb{P}[\V^{E, \infty} \le v]\right)^2\right) dv &= \lim_{\epsilon \rightarrow 0}   \int_{0}^{\infty} \left(1 - \left(\mathbb{P}[\V^{E, \epsilon, \infty} \le v]\right)^2\right) dv.
\end{align*}
Putting this all together, we see that:
\[\lim_{\epsilon \rightarrow 0}  \lim_{N \rightarrow \infty} \mathbb{E}_{\muengagement(2, c, u, \mathcal{T}_{N, \epsilon})}\left[\max_{w \in \textbf{w}} \MPlatform(w) \right] = \int_{0}^{\infty} \left(1 - \left(\mathbb{P}[\V^{E, \infty} \le v]\right)^2\right) dv.\]

\paragraph{Analysis of right-hand side of \eqref{eq:sufficientconditionengagement}.}
We next analyze the right-hand side of \eqref{eq:sufficientconditionengagement} as
\begin{align*}
   \mathbb{E}_{\muideal(2, c, u, \mathcal{T}_{N, \epsilon})}\left[\max_{w \in \textbf{w}} \MPlatform(w) \right] &= \mathbb{E}_{(V_1, V_2) \sim (\V^{I, \epsilon, \mathcal{T}_{N, \epsilon}})^2}\left[\max(V_1, V_2) \right] \\
   &= \int_{0}^{\infty} \left(1 - \left(\mathbb{P}[\V^{I, \epsilon, \mathcal{T}_{N, \epsilon}} \le v]\right)^2\right) dv \\
   &= \int_{0}^{\infty} \left(1 - \left(\min(1, \max(0, v-1)) \right)^2\right) dv.
\end{align*}

\paragraph{Proof of \eqref{eq:sufficientconditionengagement}.}
We are now ready to prove \eqref{eq:sufficientconditionengagement}. Using the above calculations, it suffices to show:
\[ \int_{0}^{\infty} \left(1 - \left(\mathbb{P}[\V^{E, \infty} \le v]\right)^2\right) dv < \int_{0}^{\infty} \left(1 - \left(\min(1, \max(0, v-1)) \right)^2\right) dv.\]
To show this, it suffices to  show that $\mathbb{P}[\V^{E, \infty} \le v] \ge \min(1, \max(0, v-1))$ for all $v$ and $\mathbb{P}[\V^{E, \infty} \le v] > \min(1, \max(0, v-1))$ for $v \in (e^{1-1/e}, 1)$. The fact that $\mathbb{P}[\V^{E, \infty} \le v] > \min(1, \max(0, v-1))$ for $v \in (e^{1-1/e}, 1)$ follows easily from the functional form of $\mathbb{P}[\V^{E, \infty} \le v]$. 
Moreover,  $\mathbb{P}[\V^{E, \infty} \le v] = 0 = \min(1, \max(0, v-1))$ for $v \le 1$. To see that $\mathbb{P}[\V^{E, \infty} \le v] \ge \min(1, \max(0, v-1))$ for $v \in [1, e^{1-1/e}]$, we apply Lemma \ref{lemma:inequality} (see \Cref{appendix:prooflemmainequality} for the statement and proof). 

\end{proof}

\subsubsection{Proof of Lemma \ref{lemma:comparisonrealizedengagement}}\label{appendix:comparisonrealizedengagement}

We prove Lemma \ref{lemma:comparisonrealizedengagement}.
\begin{proof}[Proof of Lemma \ref{lemma:comparisonrealizedengagement}]

We first prove the lemma statement for investment-based optimization, and then we prove the statement for engagement-based optimization. 
\paragraph{Proof for investment-based optimization.}  Recall that for the instance that we have constructed, the baseline utility is $\alpha = 1$ and the engagement metric is $\MPlatform(w) = \wcostly + \wcheap$. By \Cref{prop:equilibriumhomogenousinvestment} and \Cref{thm:investmentbased}, there is a symmetric mixed Nash equilibrium $\muideal(2, c, u, \mathcal{T})$ given by the joint distribution $(\Wcostly, \Wcheap)$ where $\Wcheap$ is a point mass at $0$ and $\Wcostly$ is specified by:
\[\mathbb{P}[\Wcostly \le \wcostly] = \min(1, \wcostly). \]
This means that for $w \in \text{supp}(\muideal(2, c, u, \mathcal{T}))$, it holds that 
\[\MPlatform(w) = \MIdeal(w) = \wcostly.\]
This means that for $\textbf{w} \in \text{supp}(\muideal(2, c, u, \mathcal{T})^2)$, it holds that:
\[w_{i^*(\MIdeal; \textbf{w})} = \argmax_{w \in \textbf{w}} \MIdeal(w) = \argmax_{w \in \textbf{w}} \MPlatform(w).  \]
This means that: 
\[ \RealizedEngagement(\MIdeal; \textbf{w}) = \MPlatform\left(\argmax_{w \in \textbf{w}} \MPlatform(w)\right) = \max_{w \in \textbf{w}} \MPlatform(w)  \]
as desired. 

\paragraph{Proof for engagement-based optimization.} Suppose that $\textbf{w} \in \text{supp}(\muengagement(2, c, u, \mathcal{T})^2)$. Since $w_{i^*(\MPlatform; \textbf{w})} \in \textbf{w}$, it holds that: 
\[ \RealizedEngagement(\MPlatform; \textbf{w}) = \MPlatform(w_{i^*(\MPlatform; \textbf{w})}) \le \max_{w \in \textbf{w}} \MPlatform(w)  \]
as desired. 
    
\end{proof}

\subsubsection{Proof of Lemma \ref{lemma:engagementlimitcdf}}\label{appendix:prooflimitcdflemma}

To prove Lemma \ref{lemma:engagementlimitcdf}, we first compute the cdf of the distribution $\V^{E, \epsilon, \mathcal{T}_{N, \epsilon}}$.
\begin{lemma}
\label{lemma:engagementcdf}
Let $N'$ be the minimum number such that $\sum_{i=1}^{N'} \frac{1}{N-i+1} \ge 1$. Let $\alpha^i$ be equal to $\frac{1}{N-i+1}$ for $1 \le i \le N' - 1$ and be equal to $1 - \sum_{i'=1}^{N'-1} \frac{1}{N - i'+1}$ for $i = N'$. The distribution $\V^{E, \epsilon, \mathcal{T}_{N, \epsilon}}$ has cdf $\mathbb{P}[\V^{E, \epsilon, \mathcal{T}_{N, \epsilon}} \le v]$ equal to: 
\[
\tiny{
\begin{cases}
 0 & \text{ if } v \le (1+\epsilon) \\
 \sum_{i'=1}^{i-1} \frac{1}{N-i'+1} + \frac{N}{(N-i+1)} \left(\frac{v}{(1+\epsilon)(1+1/N)^{i-1}} - 1 \right)   & \text{ if } v \in \left[(1+\epsilon) \cdot (1+1/N)^{i-1}, (1+\epsilon) \cdot (1+1/N)^{i}\right] \text{    for some } 1 \le i \le N'-1  \\
\sum_{i'=1}^{N'-1} \frac{1}{N-i'+1} + \frac{N}{(N-N'+1)} \left(\frac{v}{(1+\epsilon)(1+1/N)^{N'-1}} - 1 \right)   & \text{ if } v \ge (1+\epsilon) \cdot (1+1/N)^{N'-1} \\
 & \text{ and } v \le (1+\epsilon) \cdot (1+1/N)^{N'-1} \cdot \left( 1 + \frac{N - N'+1}{N} \cdot \left(1 - \sum_{j=1}^{N'-1} \frac{1}{N - j+1} \right) \right) \\
 1 & \text{ if } v \ge (1+\epsilon) \cdot (1+1/N)^{N'-1} \cdot \left( 1 + \frac{N - N'+1}{N} \cdot \left(1 - \sum_{j=1}^{N'-1} \frac{1}{N - j+1} \right) \right).
\end{cases}
}
\]
\end{lemma}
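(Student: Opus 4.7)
The plan is to reduce the lemma to a direct computation using Proposition \ref{prop:NtypesMdistribution}, together with the specific form of $a_{t_i}$ for the type space $\mathcal{T}_{N,\epsilon}$. First I would verify that the instance satisfies the hypotheses of Theorem \ref{thm:Ntypes}: in Example \ref{example:linear} with $\alpha=1$ and $\gamma=0$, Assumption \ref{assumption:linearity} holds with $a_t = 1/(1+t)$ and $s=1$, so for $t_i = (1+\epsilon)(1+1/N)^{i-1}-1$ we have
\[
 \frac{1}{a_{t_i}} = 1 + t_i = (1+\epsilon)(1+1/N)^{i-1},
 \qquad
 \frac{a_{t_i}}{a_{t_{i+1}}} = 1 + \frac{1}{N},
\]
so the ``well-separated'' condition \eqref{eq:wellseparated} holds with equality. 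Hence $\muengagement(2, c, u, \mathcal{T}_{N,\epsilon})$ from Definition \ref{def:Ntypes} is a symmetric mixed equilibrium, and Proposition \ref{prop:NtypesMdistribution} applies.

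Next, I would identify $\V^{E,\epsilon,\mathcal{T}_{N,\epsilon}}$ with the first coordinate of $(\V,T) = h^{-1}(W)$ for $W \sim \muengagement(2, c, u, \mathcal{T}_{N,\epsilon})$: by construction of $h$ we have $\MPlatform(h(v,t)) = v - s$, so $\V = \MPlatform(W) + s$, which (with $s=1$) is exactly the random variable whose cdf the lemma requests. Proposition \ref{prop:NtypesMdistribution} then tells me that $\mathbb{P}[T = t_i] = 1/(N-i+1)$ for $1\le i \le N'-1$ and $\mathbb{P}[T=t_{N'}] = 1 - \sum_{i'=1}^{N'-1} 1/(N-i'+1)$, and that $\V \mid T = t_i$ is uniform on $\bigl[1/a_{t_i},\, (1+1/N)/a_{t_i}\bigr]$ for $1\le i\le N'-1$, while $\V\mid T=t_{N'}$ is uniform on the longer interval displayed in that proposition.

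The proof then reduces to the law of total probability, computed piecewise on the intervals $I_i := [(1+\epsilon)(1+1/N)^{i-1},\, (1+\epsilon)(1+1/N)^{i}]$. For $v \in I_i$ with $1 \le i \le N'-1$, the earlier types $t_1,\ldots,t_{i-1}$ contribute the full mass $\sum_{i'=1}^{i-1} 1/(N-i'+1)$ since their supports lie entirely to the left of $v$; the type $t_i$ contributes
\[
 \mathbb{P}[T=t_i]\cdot \mathbb{P}[\V\le v\mid T=t_i]
 = \frac{1}{N-i+1}\cdot N\left(\frac{v}{(1+\epsilon)(1+1/N)^{i-1}} - 1\right),
\]
using uniformity of $\V\mid T = t_i$ and the substitution $1/a_{t_i} = (1+\epsilon)(1+1/N)^{i-1}$; and types $t_{i+1},\ldots,t_{N'}$ contribute zero since their supports lie to the right of $v$. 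Summing yields the middle case of the stated cdf. The analogous computation for $v$ in the last interval (replacing the density scale $N$ by $N/(N-N'+1)\cdot(1-\sum_{j=1}^{N'-1}1/(N-j+1))^{-1}$ and using $\mathbb{P}[T=t_{N'}] = 1 - \sum_{i'=1}^{N'-1} 1/(N-i'+1)$) gives the fourth case, with the factor $(1-\sum_{j=1}^{N'-1}1/(N-j+1))^{-1}$ cancelling against $\mathbb{P}[T=t_{N'}]$ to leave the clean coefficient $N/(N-N'+1)$ shown in the statement. The boundary cases $v \le (1+\epsilon)$ and $v \ge \max\text{supp}(\V)$ follow immediately.

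There is no real obstacle: the work is essentially bookkeeping. The only delicate point is to check that the piecewise formula is continuous at the join points $v = (1+\epsilon)(1+1/N)^i$ for $i < N'-1$ and at $v = (1+\epsilon)(1+1/N)^{N'-1}$; continuity at these boundaries is equivalent to the telescoping identity $\sum_{i'=1}^{i-1} \frac{1}{N-i'+1} + \frac{1}{N-i+1} = \sum_{i'=1}^{i} \frac{1}{N-i'+1}$, which is immediate, and provides a useful sanity check.
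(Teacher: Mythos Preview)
Your proposal is correct and follows essentially the same approach as the paper's proof: apply Proposition \ref{prop:NtypesMdistribution} and use that the conditional supports $\text{supp}(\V\mid T=t_i)$ are pairwise disjoint under the well-separated condition, then compute the marginal cdf of $\V$ via the law of total probability. The paper's proof is a two-line sketch invoking exactly these two ingredients, whereas you spell out the piecewise computation in full (including the cancellation in the $i=N'$ case and the continuity check at the join points), but the underlying argument is identical.
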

\begin{proof}
We apply Proposition \ref{prop:NtypesMdistribution}. The statement follows from this specification along with the fact that the $\text{supp}(V \mid T = t_i)$ and $\text{supp}(V \mid T = t_j)$ are disjoint for $i \neq j$ by the assumption of well-separated types. 
\end{proof}

We next bound the value $N'$ in Lemma \ref{lemma:engagementcdf}. 
\begin{lemma}
\label{lemma:N'bound}
Let $N'$ be the minimum number such that $\sum_{i=1}^{N'} \frac{1}{N-i+1} \ge 1$. For sufficiently large $N$, it holds that:
\[\frac{N+1}{e} - 1 < N - N'+1 < \frac{N}{e^{1 - \frac{3}{N+1}}}.\]
\end{lemma}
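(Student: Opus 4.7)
The plan is to convert the sum into a tail of the harmonic series and then sandwich it with integral bounds. Rewrite $\sum_{i=1}^{N'} \frac{1}{N-i+1} = \sum_{j=N-N'+1}^{N} \frac{1}{j}$, and set $k := N - N' + 1$; this is the quantity we need to bound. The defining property of $N'$ becomes the pair of inequalities
\[
\sum_{j=k}^{N} \frac{1}{j} \;\ge\; 1 \;>\; \sum_{j=k+1}^{N} \frac{1}{j},
\]
which is what we will exploit.

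First, I would establish the lower bound $k > (N+1)/e - 1$. Since $1/x$ is decreasing, the right-hand inequality above gives
\[
1 \;>\; \sum_{j=k+1}^{N} \frac{1}{j} \;\ge\; \int_{k+1}^{N+1} \frac{dx}{x} \;=\; \ln\!\left(\frac{N+1}{k+1}\right),
\]
so $(N+1)/(k+1) < e$, yielding $k > (N+1)/e - 1$ as desired.

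Next, I would establish the upper bound $k < N/e^{1 - 3/(N+1)}$. Applying the standard one-rectangle-plus-integral upper bound to the left-hand inequality gives
\[
1 \;\le\; \sum_{j=k}^{N} \frac{1}{j} \;\le\; \frac{1}{k} + \int_{k}^{N} \frac{dx}{x} \;=\; \frac{1}{k} + \ln\!\left(\frac{N}{k}\right),
\]
so $\ln(N/k) \ge 1 - 1/k$, i.e., $k \le N / e^{1 - 1/k}$. Now I plug in the lower bound on $k$: $1/k < e/(N+1-e)$. For $N$ large enough that $N+1 > 3e/(3-e)$ (a fixed constant), one checks that $e/(N+1-e) < 3/(N+1)$, so $1 - 1/k > 1 - 3/(N+1)$ strictly, and hence $k \le N/e^{1-1/k} < N/e^{1 - 3/(N+1)}$.

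There is no real obstacle here; the only care needed is in tracking strict versus non-strict inequalities, and in checking the threshold on $N$ at which $e/(N+1-e) < 3/(N+1)$ kicks in (this is a one-line algebraic inequality equivalent to $(3-e)(N+1) > 3e$). Both bounds follow directly from these two applications of the integral test to the harmonic tail.
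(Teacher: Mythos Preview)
Your proposal is correct and follows essentially the same route as the paper: rewrite the sum as a harmonic tail $\sum_{j=k}^N 1/j$ with $k=N-N'+1$, use the integral lower bound on $\sum_{j=k+1}^N 1/j$ to get $k>(N+1)/e-1$, and the one-rectangle-plus-integral upper bound on $\sum_{j=k}^N 1/j$ to get $\ln(N/k)\ge 1-1/k$, then feed the lower bound on $k$ back in to replace $1/k$ by $3/(N+1)$. The only cosmetic difference is that you spell out the threshold $N+1>3e/(3-e)$ explicitly, whereas the paper states the equivalent condition $(N+1)/e-1\ge (N+1)/3$ and leaves it as ``for sufficiently large $N$.''
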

\begin{proof}
We first rewrite:
\[\sum_{i=1}^{M} \frac{1}{N-i+1} = \sum_{i=N-M+1}^{N} \frac{1}{i}.  \]
Using an integral bound, we observe that
\[\int_{N-M+1}^{N+1} \frac{1}{x} dx \le \sum_{i=N-M+1}^{N} \frac{1}{i} \le \frac{1}{N-M+1} + \int_{N-M+1}^{N} \frac{1}{x} dx. \]
This implies that: 
\[\ln\left(\frac{N+1}{N-M+1} \right) \le \sum_{i=N-M+1}^{N} \frac{1}{i} \le \frac{1}{N-M+1} + \ln\left(\frac{N}{N-M+1} \right). \]

Since $N'$ is the minimum number such that $\sum_{i=N-N'+1}^{N} \frac{1}{i} \ge 1$, it must hold that: (1) $\sum_{i=N-N'+1}^{N} \frac{1}{i} \ge 1$, and (2) $\sum_{i=N-N'+2}^{N} \frac{1}{i} < 1$. 

Condition (2) implies that:
\[\ln\left(\frac{N+1}{N-N'+2} \right) \le \sum_{i=N-N'+2}^{N} \frac{1}{i} < 1,\]
which we can rewrite as:
\[N+1 < e \cdot (N-N'+2), \]
which we can rewrite as:
\[N - N'+1 > \frac{N+1}{e} - 1. \]

Condition (1) implies that: 
\[\frac{1}{N-N'+1} + \ln\left(\frac{N}{N-N'+1} \right) \ge \sum_{i=N-N'+1}^{N} \frac{1}{i} \ge 1, \]
which we can rewrite as:
\[ \ln\left(\frac{N}{N-N'+1} \right) \ge 1 - \frac{1}{N-N'+1}.\]
Using Condition (2), we see that $N - N'+1 > \frac{N+1}{e} - 1 \ge \frac{N+1}{3}$ for sufficiently large $N$, so:
\[ \ln\left(\frac{N}{N-N'+1} \right) \ge 1 - \frac{1}{N-N'+1} > 1 - \frac{3}{N+1}.\]
We can write this as:
\[ N - N'+1 < \frac{N}{e^{1 - \frac{3}{N+1}}}  \]

where the last inequality uses the upper bound on $N'$ derived from Condition (2). 

\end{proof}

We prove Lemma \ref{lemma:engagementlimitcdf}. 
\begin{proof}[Proof of Lemma \ref{lemma:engagementlimitcdf}]
Fix $\epsilon > 0$. We apply Lemma \ref{lemma:engagementcdf}. Let $F_{N,\epsilon}$ be the cdf of $\V^{E, \epsilon, \mathcal{T}_{N, \epsilon}}$. 

The first case is $v \le 1+\epsilon$. It follows easily that $F_{N,\epsilon}(x) = 0$ for all $v \le (1+\epsilon)$ for all $N \ge 2$, which means that $\lim_{N \rightarrow \infty} F_{N, \epsilon}(v) = 0$ for all $x \le (1+\epsilon)$. 

The next case is $v > e^{1 - \frac{1}{e}} (1+\epsilon)$. We see that for 
\[v \ge (1+\epsilon) \cdot (1+1/N)^{N'-1} \cdot \left( 1 + \frac{N - N'+1}{N} \cdot \left(1 - \sum_{j=1}^{N'-1} \frac{1}{N - j+1} \right) \right),\] it holds that $F_{N, \epsilon}(x) = 1$. Observe that $\left(1 - \sum_{j=1}^{N'-1} \frac{1}{N - j+1} \right) \le \frac{1}{N-N'+1}$ by the definition of $N'$, which means that:
\[(1+\epsilon) \cdot (1+1/N)^{N'-1} \cdot \left( 1 + \frac{N - N'+1}{N} \cdot \left(1 - \sum_{j=1}^{N'-1} \frac{1}{N - j+1} \right) \right) \le (1+\epsilon) \cdot (1+1/N)^{N'}. \]
For sufficiently large $N$, applying Lemma \ref{lemma:N'bound}, we see that:
\[(1+\epsilon) \cdot (1+1/N)^{N'} = (1+\epsilon) \cdot \left((1+1/N)^{N}\right)^{N'/N} \ge (1+\epsilon) \cdot \left((1+1/N)^{N}\right)^{\frac{N+1}{N} - \frac{1}{e^{1 - \frac{3}{N+1}}}}. \]
This expression approaches $e^{1 - \frac{1}{e}} \cdot (1+\epsilon)$, which means that for any $v > e^{1 - \frac{1}{e}} \cdot (1+\epsilon)$, for sufficiently large $N$, it holds that $F_{N, \epsilon}(v) = 1$ as desired. Thus, for any $v > e^{1 - \frac{1}{e}} \cdot (1+\epsilon)$, it holds that $\lim_{N \rightarrow \infty} F_{N, \epsilon}(v) = 1$.

The next case is $(1+ \epsilon) < v < e^{1 - \frac{1}{e}}(1+\epsilon)$. In this case, for sufficiently large $N$, it holds that:
\begin{align*}
 (1 + \epsilon)\left(1+\frac{1}{N}\right)^{N'-1} &= (1 + \epsilon)\left(\left(1+\frac{1}{N}\right)^{N}\right)^{\frac{N'-1}{N}} \\
 &>  (1 + \epsilon)\left((1+1/N)^{N}\right)^{1 - \frac{1}{e^{1 - \frac{3}{N+1}}}},
\end{align*}
which approaches $e^{1-\frac{1}{e}} \cdot (1+\epsilon)$ in the limit. This means that for sufficiently large $N$, it holds that $x < (1 + \epsilon)\left(1+\frac{1}{N}\right)^{N'-1}$. For $v \in \left[(1+\epsilon) \cdot (1+1/N)^{i-1}, (1+\epsilon) \cdot (1+1/N)^{i}\right]$,
applying Lemma \ref{lemma:engagementcdf}, we see that:
\[ \sum_{i'=1}^{i-1} \frac{1}{N-i'+1}  \le F_{N,\epsilon}(v) \le \sum_{i'=1}^{i} \frac{1}{N-i'+1}. \]
Using an integral bound, we can lower bound the left-hand side as:
\[\sum_{i'=1}^{i-1} \frac{1}{N-i'+1} \ge \int_{N-i+2}^{N+1} \frac{1}{x} dx = \ln\left(\frac{N+1}{N-i+2} \right) = \ln\left(\frac{1+\frac{1}{N}}{1-\frac{i}{N}+\frac{2}{N}} \right). \]
We can also upper bound, for sufficiently large $N$, the right-hand side as: 
\begin{align*}
  \sum_{i'=1}^{i} \frac{1}{N-i'+1} &\le \frac{1}{N-i+1} + \int_{N-i+1}^N \frac{1}{x} dx \\
  &=\frac{1}{N-i+1} + \ln\left(\frac{N}{N-i+1} \right) \\
  &\le \frac{1}{N-N'+1} + \ln\left(\frac{N}{N-i+1} \right) \\ 
  &\le_{(A)} \frac{1}{\frac{N+1}{e} - 1} + \ln\left(\frac{1}{1-\frac{i}{N}+\frac{1}{N}} \right) 
\end{align*}
where (A) follows from Lemma \ref{lemma:N'bound}. 
Putting this all together, we see that 
\[v \in \left[(1+\epsilon) \cdot (1+1/N)^{i-1}, (1+\epsilon) \cdot (1+1/N)^{i}\right],\]
then 
\[ \ln\left(\frac{1+\frac{1}{N}}{1-\frac{i}{N}+\frac{2}{N}} \right) \le F_{N,\epsilon}(v)  \le \frac{1}{\frac{N+1}{e} - 1} + \ln\left(\frac{1}{1-\frac{i}{N}+\frac{1}{N}} \right).\]
We next reparameterize $i$ as $\alpha = i/N$. We can rewrite $v \in \left[(1+\epsilon) \cdot (1+1/N)^{i-1}, (1+\epsilon) \cdot (1+1/N)^{i}\right]$ as $v \in \left[(1+\epsilon) \cdot ((1+1/N)^N)^{\frac{i}{N}-\frac{1}{N}}, (1+\epsilon) \cdot ((1+1/N)^N)^{\frac{i}{N}}\right]$, or alternatively as 
\[v \in \left[(1+\epsilon) \cdot ((1+1/N)^N)^{\alpha-\frac{1}{N}}, (1+\epsilon) \cdot ((1+1/N)^N)^{\alpha}\right],\]
and the bound as:
 \[ \ln\left(\frac{1+\frac{1}{N}}{1-\alpha +\frac{2}{N}} \right) \le F_{N,\epsilon}(v)  \le \frac{1}{\frac{N+1}{e} - 1} + \ln\left(\frac{1}{1-\alpha+\frac{1}{N}} \right).\]
For any $v = (1+\epsilon) \cdot e^{\beta}$ where $\beta < 1 - 1/e$, we see that for sufficiently large $N$, it holds that:
\[ \lim_{N \rightarrow \infty} F_{N, \epsilon}((1+\epsilon) \cdot e^{\beta}) = \ln\left(\frac{1}{1 - \beta} \right) \]
Reparameterizing $\beta = \ln\left(\frac{v}{1+\epsilon}\right)$, we obtain:
\[\lim_{N \rightarrow \infty} F_{N, \epsilon}(v) = \ln\left(\frac{1}{1 - \ln\left(\frac{v}{1+\epsilon}\right)} \right) \]
\end{proof}

\subsubsection{Statement and proof of Lemma \ref{lemma:inequality}}\label{appendix:prooflemmainequality}

\begin{lemma}
\label{lemma:inequality}
For $x \in [1, e^{1-1/e}]$, it holds that:
\begin{equation}
\label{eq:original}
 \ln\left(\frac{1}{1 - \ln(x)}\right) \ge x - 1   
\end{equation}
\[ \]
\end{lemma}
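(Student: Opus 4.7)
} The plan is to rewrite the inequality as the nonnegativity of a single auxiliary function and then show this function is monotone. Define
\[ f(x) := -\ln(1 - \ln(x)) - (x - 1). \]
Note that $f$ is well-defined and smooth on $[1, e^{1-1/e}]$, since $1 - \ln(x) \ge 1 - (1 - 1/e) = 1/e > 0$ throughout. The desired inequality \eqref{eq:original} is equivalent to $f(x) \ge 0$ on this interval. First I would verify the boundary value $f(1) = -\ln(1) - 0 = 0$, which gives us the base case for a monotonicity argument.

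Next I would compute the derivative:
\[ f'(x) = \frac{1}{x(1-\ln(x))} - 1. \]
So the inequality $f'(x) \ge 0$ on $[1, e^{1-1/e}]$ reduces to showing that the auxiliary function
\[ g(x) := x(1 - \ln(x)) \]
satisfies $g(x) \le 1$ on the same interval. This is the heart of the proof. Differentiating gives $g'(x) = (1 - \ln(x)) + x \cdot (-1/x) = -\ln(x)$, which is nonpositive for $x \ge 1$. Hence $g$ is nonincreasing on $[1, e^{1-1/e}]$, and since $g(1) = 1$, we get $g(x) \le 1$ throughout, with equality only at $x = 1$.

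Combining these pieces: $g(x) \le 1$ implies $f'(x) \ge 0$ on $[1, e^{1-1/e}]$, so $f$ is nondecreasing. Together with $f(1) = 0$, this yields $f(x) \ge 0$ on the entire interval, which is exactly \eqref{eq:original}. There is no genuinely hard step here; the whole argument is a straightforward reduction to monotonicity of $g(x) = x(1-\ln(x))$, a function whose derivative $-\ln(x)$ has an obvious sign on $[1,\infty)$. The only thing worth being careful about is ensuring $1 - \ln(x)$ stays strictly positive so the logarithm in $f$ is defined, and this is guaranteed by the upper endpoint $e^{1-1/e} < e$ of the interval.
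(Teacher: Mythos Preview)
Your proof is correct but takes a genuinely different route from the paper. The paper proceeds by two successive substitutions ($x_1 = \ln x$ and then $x_2 = x_1/(1-x_1)$) and invokes two standard Pad\'e-type bounds, namely $\ln(1+z) \ge \tfrac{2z}{2+z}$ and $e^{2z} \le \tfrac{1+z}{1-z}$, to chain through to the result. You instead differentiate directly: reducing the claim to the single auxiliary inequality $g(x) = x(1-\ln x) \le 1$ on $[1, e^{1-1/e}]$, which follows at once from $g'(x) = -\ln x \le 0$ and $g(1) = 1$. Your argument is shorter and entirely self-contained (no external inequalities are quoted), whereas the paper's approach is built from reusable rational approximations to $\ln$ and $\exp$; for this particular lemma, though, that machinery is more than is needed, and your monotonicity reduction is the cleaner path.
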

\begin{proof}
In the proof, we will use the following two standard bounds: (F1) $\ln(1+z) \ge \frac{2z}{2+z}$ for $z \ge -1$ and (F2) $e^{2z} \le \frac{1+z}{1-z}$ for $z \in (0,1)$. 

 First, let's reparamterize and set $x_1 = \ln(x)$, so that the range of $x_1$ is now $(0, 1-1/e)$. Then we can rewrite \eqref{eq:original} as:
 \[\ln\left(\frac{1}{1 - x_1}\right) \ge e^{x_1} - 1. \]
 We can now apply (F2) to $z = x_1/2 \in (0, 0.5 - \frac{1}{2e})$ to see that it suffices to show that:
 \[\ln\left(\frac{1}{1 - x_1}\right) \ge \frac{1+x_1}{1-x_1} - 1, \]
 which can be simplified to
 \[\ln\left(\frac{1}{1 - x_1}\right) \ge \frac{x_1}{1-x_1/2}, \]
 which can be simplified to
 \[
   \ln\left(\frac{1}{1 - x_1}\right) \ge \frac{2x_1}{2-x_1}  
 \]
 Let's reparameterize again to set $x_2 = \frac{x_1}{1-x_1}$ so that the range of $x_2$ is now $[0, e-1]$. Using that $x_1 = \frac{x_2}{1+x_2}$, it thus suffices to show: 
  \[
   \ln\left(1 + x_2\right) \ge \frac{ \frac{2 x_2}{1+x_2}}{2 - \frac{x_2}{1+x_2}},  
 \]
 which can be simplified to:
  \[
   \ln\left(1 + x_2\right) \ge \frac{2x_2}{2+x_2}.  
 \]
 This follows from (F1). 
  
\end{proof}

\subsection{Proof of Proposition \ref{prop:comparisonengagementhomogeneous}}\label{appendix:comparisonengagementhomogeneous}
We prove \Cref{prop:comparisonengagementhomogeneous}.
\begin{proof}[Proof of \Cref{prop:comparisonengagementhomogeneous}]

Observe that for the instance that we have constructed, the minimum investment level is $\beta_t = \max(0, -\alpha)$, the cost function is $c([\wcostly, 0]) = \wcostly + \gamma \cdot \wcheap$, and the engagement metric is $\MPlatform(w) = \wcostly + \wcheap$. The function $f_t(\wcheap)$ is equal to $\max(0, (\wcheap/t) - \alpha)$. 

We define the following quantities:
\begin{align*}
T^{\text{I}} &:= \mathbb{E}_{\mathbf{w} \sim \left(\muideal(P, c, u, \mathcal{T})\right)^P}[\RealizedEngagement(\MIdeal; \mathbf{w})] \\
T^{\text{E}} &:= \mathbb{E}_{\mathbf{w} \sim \left(\muengagement(P, c, u, \mathcal{T})\right)^P}[\RealizedEngagement(\MPlatform; \mathbf{w})].
\end{align*}
It suffices to show that $T^{\text{E}} \ge T^{\text{I}}$. 

We first analyze each term separately. 

For the term $T^{\text{I}}$, we apply \Cref{prop:equilibriumhomogenousinvestment} and \Cref{thm:investmentbased}. This means that $\muideal(P, c, u, \mathcal{T})$ is specified by joint distribution $(\Wcostly, \Wcheap)$ where $\Wcheap$ is a point mass at $0$ and $\Wcostly$ is distributed as:
\begin{align*}
   \mathbb{P}_{(\Wcostly, \Wcheap) \sim \muideal(P, c, u, \mathcal{T})} [\Wcostly \le \wcostly] &= \begin{cases}
(-\alpha)^{1/(P-1)} & \text{ if }  0 \le \wcostly \le -\alpha \\
  \left(\min(1, \wcostly)\right)^{1/(P-1)} & \text{ if } \wcostly \ge \max(0, -\alpha).
\end{cases}
\end{align*}
To analyze $T^{I}$, we observe that for $w \in \text{supp}(\muideal(P, c, u, \mathcal{T}))$, it holds that 
\[\MPlatform(w) = \MIdeal(w) = \wcostly.\]
This means that for $\textbf{w} \in \text{supp}(\muideal(P, c, u, \mathcal{T})^P)$, it holds that:
\[w_{i^*(\MIdeal; \textbf{w})} = \argmax_{w \in \textbf{w}} \MIdeal(w) = \argmax_{w \in \textbf{w}} \wcostly.  \]
This means that: 
\[ \RealizedEngagement(\MIdeal; \textbf{w}) = \MPlatform\left(\argmax_{w \in \textbf{w}} \wcostly\right) = \max_{w \in \textbf{w}} \wcostly, \]
which means that: 
\[T^{\text{I}} = \mathbb{E}_{\mathbf{w} \sim \left(\muideal(P, c, u, \mathcal{T})\right)^P}[\RealizedEngagement(\MIdeal; \mathbf{w})] = \mathbb{E}_{\mathbf{w} \sim \left(\muideal(P, c, u, \mathcal{T})\right)^P}\left[\max_{w \in \textbf{w}} \wcostly \right].
\]
We rewrite this expression as follows. Let $Z^{\text{I}}$ be a random variable given by the maximum of $P$ i.i.d. realizations of a random variable distributed as $\Wcostly$. This random variable has cdf: 
\begin{align*}
   \mathbb{P} [Z^{\text{I}} \le z] &= \begin{cases}
(-\alpha)^{P/(P-1)} & \text{ if }  0 \le z \le -\alpha \\
  \left(\min(1, z)\right)^{P/(P-1)} & \text{ if } z \ge \max(0, -\alpha).
\end{cases}
\end{align*}
This means that:
\[T^{\text{I}} = \mathbb{E}_{\mathbf{w} \sim \left(\muideal(P, c, u, \mathcal{T})\right)^P}\left[Z^{\text{I}} \right].\]

For the term $T^{\text{E}}$, we apply \Cref{prop:equilibriumhomogenousengagement} and \Cref{thm:onetype}. This means that $\muengagement(P, c, u, \mathcal{T})$ is specified by joint distribution $(\Wcostly, \Wcheap)$ where $ \mathbb{P}_{(\Wcostly, \Wcheap) \sim \muengagement(P, c, u, \mathcal{T})}$ is equal to:
\begin{align*}
  \begin{cases}
(-\alpha)^{1/(P-1)} & \text{ if }  0 \le \wcostly \le -\alpha \\
  \left(\min(1, \wcostly + \gamma \cdot t \cdot (\wcostly + \alpha))\right)^{1/(P-1)} & \text{ if } \wcostly \ge \max(0, -\alpha).
\end{cases}
\end{align*}
Moreover, this means that the distribution $\Wcheap \mid \Wcostly = \wcostly$ for $\wcostly \in \text{supp}(\Wcostly)$ takes the following form. If $\wcostly > 0$, the distribution $\Wcheap \mid \Wcostly = \wcostly$ is a point mass at $f_t^{-1}(\wcostly) =  t \cdot (\wcostly + \alpha)$. If $\wcostly = 0$, then $\Wcheap \mid \Wcostly = \wcostly$ 
is a point mass at $0$ if $\alpha \le 0$, $\Wcheap \mid \Wcostly = \wcostly$  is distributed according to the cdf $\min\left(1, \left(\frac{\wcheap}{t \cdot \alpha}\right)^{1/(P-1)}\right)$ if $\alpha > 0$ and $\gamma > 0$, and $\Wcheap \mid \Wcostly = \wcostly$  is distributed as a point mass at $t \cdot \alpha$ if $\alpha > 0$ and $\gamma = 0$. To analyze $T^{\text{E}}$, we observe that:
\[w_{i^*(\MPlatform; \textbf{w})} = \argmax_{w \in \textbf{w}} \MPlatform(w).  \]
This means that: 
\[ \RealizedEngagement(\MPlatform; \textbf{w}) = \MPlatform\left(\argmax_{w \in \textbf{w}} \MPlatform(w) \right) = \max_{w \in \textbf{w}} (\wcostly + \wcheap), \]
which means that: 
\[T^{\text{E}} = \mathbb{E}_{\mathbf{w} \sim \left(\muengagement(P, c, u, \mathcal{T})\right)^P}[\RealizedEngagement(\MIdeal; \mathbf{w})] = \mathbb{E}_{\mathbf{w} \sim \left(\muengagement(P, c, u, \mathcal{T})\right)^P}\left[\max_{w \in \textbf{w}} (\wcostly + \wcheap) \right].
\]
We rewrite this expression as follows. Let $Z^{\text{E}}$ be a random variable given by the maximum of $P$ i.i.d. realizations of a random variable distributed as $\Wcostly + \Wcheap$. To formalize the cdf of $Z^{\text{E}}$, we need to rewrite $\wcostly + \gamma \cdot t \cdot (\wcostly + \alpha)$ in terms of $z := \wcostly + \wcheap$. 
Using the fact that for $w \in \text{supp}(\muengagement(P, c, u, \mathcal{T}))$ such that $\wcostly > 0$, it holds that:
\[z :=\wcostly + \wcheap = \wcostly + t \cdot (\wcostly + \alpha) =  \wcostly \cdot (1 + t) + t \cdot \alpha. \]
and 
\begin{align*}
\wcostly + \gamma \cdot t \cdot (\wcostly + \alpha)
&= \wcostly (1 + \gamma \cdot t ) + \gamma \cdot t \cdot \alpha \\
&= \frac{1+ \gamma \cdot t}{1 + t} (1 + t) \wcostly + \gamma \cdot t \cdot \alpha \\
&= \frac{1+ \gamma \cdot t}{1 + t} \left((1 + t) \wcostly + t \cdot \alpha\right) - t \cdot \alpha \cdot  \frac{1+ \gamma \cdot t}{1 + t} + \gamma  \cdot t \cdot \alpha\\
&= \frac{1+ \gamma \cdot t}{1 + t} \cdot z + t \cdot \alpha \cdot  \left(\gamma - \frac{1+ \gamma \cdot t}{1 + t} \right) \\
&= \frac{1+ \gamma \cdot t}{1 + t} \cdot z - t \cdot \alpha \cdot \frac{1 - \gamma}{1 + t}.
\end{align*}
If $\alpha \le 0$, then this random variable has cdf: 
\begin{align*}
   \mathbb{P} [Z^{\text{E}} \le z] &= \begin{cases}
(-\alpha)^{P/(P-1)} & \text{ if }  0 \le z \le -\alpha \\
  \left(\min\left(1, \frac{1+ \gamma \cdot t}{1 + t} \cdot z - t \cdot \alpha \cdot \frac{1 - \gamma}{1 + t} \right)\right)^{P/(P-1)} & \text{ if } z \ge \max(0, -\alpha).
\end{cases}
\end{align*}
Otherwise, if $\alpha > 0$, then this random variable has cdf:
\begin{align*}
   \mathbb{P} [Z^{\text{E}} \le z] &= \begin{cases}
  (z \cdot \gamma)^{1/(P-1)} & \text{ if }  z \le t \cdot \alpha \\
  \left(\min\left(1, \frac{1+ \gamma \cdot t}{1 + t} \cdot z - t \cdot \alpha \cdot \frac{1 - \gamma}{1 + t} \right)\right)^{P/(P-1)} & \text{ if } z > t \cdot \alpha.
\end{cases}
\end{align*}
This means that:
\[T^{\text{E}} = \mathbb{E}_{\mathbf{w} \sim \left(\muideal(P, c, u, \mathcal{T})\right)^P}\left[Z^{\text{E}} \right].\]

We now combine these expressions and compare $T^{\text{E}}$ and $T^{\text{I}}$. First, we see that $Z^{\text{E}}$ stochastically dominates $Z^{\text{I}}$, since:
\[\frac{1+ \gamma \cdot t}{1 + t} \cdot z - t \cdot \alpha \cdot \frac{1 - \gamma}{1 + t} \le  z.  \]
This implies that:
\[T^{\text{E}} = \mathbb{E}_{\mathbf{w} \sim \left(\muideal(P, c, u, \mathcal{T})\right)^P}\left[Z^{\text{E}} \right] \ge \mathbb{E}_{\mathbf{w} \sim \left(\muideal(P, c, u, \mathcal{T})\right)^P}\left[Z^{\text{I}} \right]  = T^{\text{I}}
\]
as desired. 
\end{proof}

\section{Proofs for \Cref{subsec:userimplications}}

\subsection{Proof of Theorem \ref{thm:comparisonuserutility}}\label{appendix:comparisonuserutility}

The proof of Theorem \ref{thm:comparisonuserutility} follows from the following characterizations of the realized user utility for engagement-based optimization (Lemma \ref{lemma:comparisonuserutilityengagement}) and random recommendations (Lemma \ref{lemma:comparisonuserutilityrandom}) , stated and proved below. 
\begin{lemma}
\label{lemma:comparisonuserutilityrandom}
Consider the setup of Theorem \ref{thm:comparisonuserutility}. Then it holds that: 
\[\mathbb{E}_{\mathbf{w} \sim \left(\murandom(P, c, u, \mathcal{T})\right)^P}[\UserUtility(\MTrivial; \mathbf{w})] = \begin{cases}
\alpha & \text{ if }  \alpha  > 0\\
 0 & \text{ if } \alpha \le 0. 
\end{cases} 
\]
\end{lemma}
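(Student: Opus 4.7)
The plan is to read off the equilibrium distribution $\murandom(P, c, u, \mathcal{T})$ from Theorem \ref{thm:randomrecommendations} and directly evaluate the user welfare. Since the setup is Example \ref{example:linear} with $\mathcal{T} = \{t\}$, I first specialize the relevant quantities: the reduced cost function is $\CCostlyBaseline(\wcostly) = \wcostly$, the utility is $\UtilityCostly(\wcostly, t) = \wcostly + \alpha$, and the minimum investment level is $\beta_t = \max(0, -\alpha)$.

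By Theorem \ref{thm:randomrecommendations}, every $w$ in the support of $\murandom(P, c, u, \mathcal{T})$ has $\wcheap = 0$, and the marginal of $\Wcostly$ is supported on $\{0, \beta_t\}$. The user utility at these two points is $u([0, 0], t) = \alpha$ and $u([\beta_t, 0], t) = \beta_t + \alpha = \max(\alpha, 0)$. The first step is then to split on the sign of $\alpha$ and observe in each case what random recommendations deliver.

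If $\alpha > 0$, then $\beta_t = 0$, so $\kappa = \CCostlyBaseline(0) = 0 \le 1/P$, which by Theorem \ref{thm:randomrecommendations} makes $\Wcostly$ a point mass at $0$. Hence every realization of $\mathbf{w}$ is the all-zeros landscape, the recommended content gives the user utility $\alpha > 0$, and the welfare is exactly $\alpha$. If $\alpha = 0$, the same reasoning applies but the user utility at $[0,0]$ is $0$, yielding welfare $0$. If $\alpha < 0$, then $\beta_t = -\alpha > 0$, and in the Theorem \ref{thm:randomrecommendations} characterization both atoms can have mass: $\Wcostly = 0$ yields utility $\alpha < 0$, which is killed by the indicator $\mathbbm{1}[u \ge 0]$, while $\Wcostly = -\alpha$ yields utility exactly $0$. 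Hence in every realization the contribution to $\UserUtility(\MTrivial; \mathbf{w})$ is $0$.

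The steps are essentially mechanical given the equilibrium characterization, so I do not anticipate a real obstacle. The only subtlety to handle carefully is the $\alpha < 0$ case, where one must check that both atoms of $\Wcostly$ contribute $0$ to the welfare --- one because the user declines consumption, the other because the realized utility is exactly zero --- regardless of how ties are broken in $i^*(\MTrivial; \mathbf{w})$. Combining the three cases yields the stated piecewise expression.
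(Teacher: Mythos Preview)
Your proposal is correct and follows essentially the same approach as the paper's proof: both read off the equilibrium $\murandom$ from Theorem~\ref{thm:randomrecommendations}, observe that the support is $\{[0,0],[\beta_t,0]\}$ with $\wcheap=0$, and then split on the sign of $\alpha$ to evaluate the welfare directly. The paper groups $\alpha\le 0$ into a single case rather than treating $\alpha=0$ separately, but the logic is identical.
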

\begin{proof}
If $\alpha > 0$, then we see that $\UtilityCostly(0, t) = \alpha$. This means that:
\[\min_{\wcostly} \left\{ \CCostlyBaseline(\wcostly) \mid \UtilityCostly(\wcostly, t) \ge 0 \right\} = 0\]
and moreover the $\min$ is achieved at $w = [0, 0]$. This means that $\nu = 0$ and $\murandom(P, c, u, \mathcal{T})$ is a point mass at $[0, 0]$. This means that:
\[\mathbb{E}_{\mathbf{w} \sim \left(\murandom(P, c, u, \mathcal{T})\right)^P}[\UserUtility(\MTrivial; \mathbf{w})]  = \UtilityCostly(0, t) = \alpha. \]

If $\alpha \le 0$, then 
\[\wcostly^* := \argmin_{\wcostly'} \left\{\CCostlyBaseline(\wcostly') \mid \UtilityCostly(\wcostly', t) \ge 0 \right\}\]
satisfies $\UtilityCostly(\wcostly^*, t) = 0$. This means that if $(\Wcostly, \Wcheap) \sim \murandom(P, c, u, \mathcal{T})$, it holds that $\text{supp}(\Wcostly) \subseteq \left\{\wcostly^*, 0 \right\}$. Moreover, for any content landscape $\mathbf{w} \in \text{supp}(\murandom(P, c, u, \mathcal{T}))^P$, we see that:
\[\UserUtility(\MTrivial; \textbf{w}) := \mathbb{E}[u(w_{i^*(\MTrivial; \textbf{w})}, t) \cdot \mathbbm{1}[u(w_{i^*(\MTrivial; \textbf{w})}, t) \ge 0]] = 0.\] This means that:
\[\mathbb{E}_{\mathbf{w} \sim \left(\murandom(P, c, u, \mathcal{T})\right)^P}[\UserUtility(\MTrivial; \mathbf{w})]  = \UtilityCostly(\wcostly^*) = 0. \]

\end{proof}
\begin{lemma}
\label{lemma:comparisonuserutilityengagement}
Consider the setup of Theorem \ref{thm:comparisonuserutility}. If $\alpha > 0$, then it holds that:
\[\mathbb{E}_{\mathbf{w} \sim \left(\muengagement(P, c, u, \mathcal{T})\right)^P}[\UserUtility(\MPlatform; \mathbf{w})] < \alpha.\]
If $\alpha \le 0$, then it holds that:
\[\mathbb{E}_{\mathbf{w} \sim \left(\muengagement(P, c, u, \mathcal{T})\right)^P}[\UserUtility(\MPlatform; \mathbf{w})] = 0.\]
\end{lemma}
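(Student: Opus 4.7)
The plan is to use the explicit equilibrium characterization for Example~\ref{example:linear} from Proposition~\ref{prop:equilibriumhomogenousengagement}, together with Definition~\ref{def:homogeneous}, to reduce the realized user utility under $\muengagement$ to an expectation involving only the maximum of $P$ i.i.d.\ copies of $\Wcheap$.

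First I would unpack the support. In Example~\ref{example:linear} one has $f_t(\wcheap) = \max(0,\wcheap/t - \alpha)$, so every $w \in \mathrm{supp}(\muengagement(P,c,u,\mathcal{T}))$ either equals the atom $[0,0]$ (present only when $\alpha < 0$, since by Lemma~\ref{lemma:propertiesdistonetype} the atom condition $f_t(0) > 0$ is equivalent to $\alpha < 0$) or lies on the curve $w = [\max(0,\wcheap/t - \alpha),\,\wcheap]$ for some $\wcheap > 0$. Plugging into $u(w,t) = \wcostly - \wcheap/t + \alpha$ gives $u = \alpha$ at the atom, $u = 0$ whenever $\wcheap > t\alpha$ (so $\wcostly = \wcheap/t - \alpha$), and $u = \alpha - \wcheap/t \ge 0$ on the low-$\wcheap$ portion with $\wcostly = 0$ and $0 < \wcheap \le t\alpha$ (nonempty only when $\alpha > 0$). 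Since $\MPlatform(w) = \wcostly + \wcheap$ is strictly increasing in $\wcheap$ along the support (Lemma~\ref{lemma:increasing}), engagement-based optimization picks the realization with the largest $\Wcheap$, which I denote $M := \max_{i\in[P]} \Wcheap_i$.

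When $\alpha \le 0$ the low-$\wcheap$ branch is empty, so every $w$ in the support either has $u(w,t) = 0$ (on the curve) or $u(w,t) = \alpha \le 0$ (at the atom, where the user declines consumption if $\alpha < 0$ and realizes $0$ utility if $\alpha = 0$). Thus $\UserUtility(\MPlatform;\mathbf{w}) = 0$ almost surely, yielding the second half of the lemma. When $\alpha > 0$ the reduction above gives
\[
\mathbb{E}[\UserUtility(\MPlatform;\mathbf{w})] = \mathbb{E}\bigl[(\alpha - M/t)\,\mathbbm{1}[M \le t\alpha]\bigr] \le \alpha \cdot \mathbb{P}[M \le t\alpha].
\]
To strengthen this to $< \alpha$, I would split on whether $\mathbb{P}[M > t\alpha] > 0$. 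If so, the displayed bound is already strict. Otherwise $M \le t\alpha$ almost surely, so $\mathbb{E}[\UserUtility] = \alpha - \mathbb{E}[M]/t$, and strictness reduces to $\mathbb{E}[M] > 0$, hence to $\Wcheap > 0$ almost surely; this holds because for $\alpha > 0$, Definition~\ref{def:homogeneous} gives $\mathbb{P}[\Wcheap \le \wcheap] = (\min(1,\CCheap_t(\wcheap)))^{1/(P-1)}$ with $\CCheap_t(0) = c([0,0]) = 0$, so the cdf is continuous and vanishes at $0$.

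The main obstacle is the strict inequality in the $\alpha > 0$ case: depending on the parameters $\gamma$, $t$, $\alpha$, the maximum engagement $M$ may concentrate strictly above the threshold $t\alpha$ (e.g.\ when $\gamma = 0$, since then $\Wcheap > t\alpha$ almost surely) or strictly below it (when $\gamma t \alpha \ge 1$), so no single uniform argument seems to cover both regimes—the dichotomy on $\mathbb{P}[M > t\alpha]$ is essential. Each sub-case is short once the two-branch structure of the equilibrium support is laid out, and the remainder of the proof is routine bookkeeping that parallels the proof of Lemma~\ref{lemma:comparisonuserutilityrandom}.
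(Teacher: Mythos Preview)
Your proof is correct and follows essentially the same approach as the paper: both rest on the observation that every $w$ in the equilibrium support has $u(w,t) \le \alpha$, with equality only at $[0,0]$, which carries zero mass when $\alpha > 0$. The paper states this pointwise bound directly via the structure of $\Caugt$ and concludes strictness immediately from $u < \alpha$ almost surely, whereas you compute the realized welfare explicitly as a function of $M = \max_i \Wcheap_i$ and then split on whether $\mathbb{P}[M > t\alpha] > 0$; this dichotomy is correct but unnecessary, since once you have established $\Wcheap > 0$ almost surely (which you do), the inequality $u(w_{i^*},t) < \alpha$ holds almost surely and the strict expectation bound follows without any case split.
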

\begin{proof}
First, suppose that $\alpha \le 0$. In this case, we see that $u(w, t) \le 0$ for all $w \in \text{supp}(\muengagement(P, c, u, \mathcal{T}))^P)$. This implies that for any content landscape $\mathbf{w} \in \text{supp}(\muengagement(P, c, u, \mathcal{T}))^P$, we see that:
\[\UserUtility(\MPlatform; \textbf{w}) := \mathbb{E}[u(w_{i^*(\MPlatform; \textbf{w})}, t) \cdot \mathbbm{1}[u(w_{i^*(\MPlatform; \textbf{w})}, t) \ge 0]] = 0.\]
This means that:
\[\mathbb{E}_{\mathbf{w} \sim \left(\muengagement(P, c, u, \mathcal{T})\right)^P}[\UserUtility(\MPlatform; \mathbf{w})]  = 0. \]

For $\alpha > 0$, we see that $w = [0,0]$ is the unique value such that $w \in \Caugt$ and $u(w, t) \ge \alpha$. Moreover, by Lemma \ref{lemma:one-to-one}, we know that $\left\{[f_t(\wcheap), \wcheap] \mid \wcheap \ge 0\right\} = \Caugt$. We observe that $\text{supp}(\muengagement(P, c, u, \mathcal{T}))^P$ is contained in $\left\{[f_t(\wcheap), \wcheap] \mid \wcheap \ge 0\right\} = \Caugt$. This means that $u(w, t) < \alpha$ for all $w \in \text{supp}(\muengagement(P, c, u, \mathcal{T}))^P$ such that $w \neq [0, 0]$. Since there is no point mass at $0$, this means that the probability $[0, 0]$ shows up in the content landscape is $0$, so 
\[\mathbb{P}[\UserUtility(\MPlatform; \textbf{w})  < \alpha] = \mathbb{P}\left[\mathbb{E}[u(w_{i^*(\MPlatform; \textbf{w})}, t) \cdot \mathbbm{1}[u(w_{i^*(\MPlatform; \textbf{w})}, t) \ge 0]] < \alpha \right] = 1. \]
This means that:
\[\mathbb{E}_{\mathbf{w} \sim \left(\muengagement(P, c, u, \mathcal{T})\right)^P}[\UserUtility(\MPlatform; \mathbf{w})]  < \alpha. \]
\end{proof}

Using Lemma \ref{lemma:comparisonuserutilityengagement} and Lemma \ref{lemma:comparisonuserutilityrandom}, we prove Theorem \ref{thm:comparisonuserutility}. 
\begin{proof}[Proof of Theorem \ref{thm:comparisonuserutility}]
We apply Lemma \ref{lemma:comparisonuserutilityengagement} and Lemma \ref{lemma:comparisonuserutilityrandom}. When $\alpha >0$, we see that:
\[\mathbb{E}_{\mathbf{w} \sim \left(\muengagement(P, c, u, \mathcal{T})\right)^P}[\UserUtility(\muengagement; \mathbf{w})] < \alpha = \mathbb{E}_{\mathbf{w} \sim \left(\murandom(P, c, u, \mathcal{T})\right)^P}[\UserUtility(\MTrivial; \mathbf{w})]. \]
When $\alpha \le 0$, we see that:
\[\mathbb{E}_{\mathbf{w} \sim \left(\muengagement(P, c, u, \mathcal{T})\right)^P}[\UserUtility(\muengagement; \mathbf{w})] = 0 = \mathbb{E}_{\mathbf{w} \sim \left(\murandom(P, c, u, \mathcal{T})\right)^P}[\UserUtility(\MTrivial; \mathbf{w})] \]
\end{proof}
\subsection{Proofs of Proposition \ref{prop:userutility2types} and Proposition \ref{prop:userutility2typeswellseparated}}\label{appendix:userutility2types}

Both Proposition \ref{prop:userutility2types} and Proposition \ref{prop:userutility2typeswellseparated} leverage instantiations of \Cref{example:KMR} with $P = 2$ and $\gamma = 0$, where the type space is 
\begin{equation}
\label{eq:typespace2types}
 \mathcal{T}_{2, \epsilon, c} = \left\{\epsilon, c(1+\epsilon) - 1\right\}   
\end{equation}
for some $\epsilon > 0$ and $c > 1$. We first analyze the user utility for engagement-based optimization and investment-based optimization for instantations of this form.
\begin{lemma}
\label{lemma:engagementtwotypes}
Consider \Cref{example:KMR} with $P = 2$, $\gamma = 0$, and type space $ \mathcal{T}_{2, \epsilon, c}$ defined by \eqref{eq:typespace2types}.  If $c \ge 1.5$, then 
\[\mathbb{E}_{\mathbf{w} \sim \left(\muengagement(2, c, u, \mathcal{T}_{2, \epsilon, c})\right)^2}[\UserUtility(\MPlatform; \mathbf{w})] \le \frac{3}{16} \cdot W \cdot (c-1) \cdot (1+\epsilon).\]
If $1 < c \le (5- \sqrt{5})/2$, then $\mathbb{E}_{\mathbf{w} \sim \left(\muengagement(2, c, u, \mathcal{T}_{2, \epsilon, c})\right)^2}$ is at least 
\[\frac{1}{2} \cdot \left( 1 - \left(\frac{(3 -c) (c-1)}{2 - c} \right)^2\right)  \cdot W \cdot \frac{c \cdot (3 - c) \cdot (c-1) \cdot (1+\epsilon)}{2 (2 - c)}. \]
\end{lemma}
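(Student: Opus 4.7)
The plan is to apply the equilibrium characterization of Theorem \ref{thm:2types} in the reparameterized space $S$ of Definition \ref{def:2types}, and then directly compute (resp. lower bound) the expected welfare by splitting on the user type and on which ``target type'' the winning creator's content is aimed at. Observe first that in Example \ref{example:KMR} with $\gamma = 0$, Assumption \ref{assumption:linearity} holds with $a_t = 1/(1+t)$, $s = 0$, so for the type space $\mathcal{T}_{2,\epsilon,c} = \{\epsilon,\, c(1+\epsilon)-1\}$ we get $a_{t_1}/a_{t_2} = c$. Writing $h(v,t_1) = [f_{t_1}(\wcheap), \wcheap]$, a short calculation using $f_{t_1}(\wcheap) = \wcheap/t_1 - 1$ and $\wcheap = v\, t_1\, a_{t_1}$ gives the key identity
\[
u(h(v, t_1),\, t_2) \;=\; W\, v\, a_{t_1}\, (t_2 - t_1),
\qquad
u(h(v, t_2),\, t_1) \;<\; 0,
\]
while $u(h(v, t_i), t_i) = 0$ by construction of $h$. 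Consequently, the only configurations that contribute positively to $\UserUtility(\MPlatform; \mathbf{w})$ at a content landscape drawn from $\muengagement(2, c, u, \mathcal{T}_{2,\epsilon,c})^2$ are those in which the user is of type $t_2$ \emph{and} the winning creator's content is $t_1$-targeted; all other cases give utility $0$ (either the winner's content is aimed at the user's type, or a $t_1$-user cannot consume $t_2$-targeted content at all).

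Let $(V_1, T_1), (V_2, T_2) \sim $ i.i.d. from the reparameterized equilibrium distribution, and let $V_{\max} = \max(V_1, V_2)$ with associated target $T_{\max}$ (the platform maximizes engagement, which equals $v$ since $s=0$). The expected welfare then equals
\[
\tfrac{1}{2}\, W\, a_{t_1}\, (t_2 - t_1)\;\mathbb{E}\!\left[V_{\max}\, \mathbbm{1}[T_{\max} = t_1]\right],
\]
with $t_2 - t_1 = (c-1)(1+\epsilon)$. For Case 1 ($c \ge 1.5$), the supports of $V \mid T = t_1$ and $V \mid T = t_2$ in Definition \ref{def:2types} are disjoint with the $t_2$-support strictly above the $t_1$-support; hence $T_{\max} = t_1$ iff both creators draw $T = t_1$, which from the density occurs with probability $(1/2)^2 = 1/4$. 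On this event $V_{\max} \le 3/(2 a_{t_1})$, and the upper bound $\frac{3}{16} W (c-1)(1+\epsilon)$ follows after multiplying out. For Case 3 ($1 < c \le (5-\sqrt{5})/2$), restrict attention to the event $V_{\max} \ge v^\ast$ where $v^\ast = \frac{3-c}{2 a_{t_2}(2-c)}$; on this event the winning $v$ lies in the fourth interval of Definition \ref{def:2types}, where $\mathbb{P}[T = t_1 \mid V = v] = 1$, so $T_{\max} = t_1$ almost surely. A direct integration of the density $a_{t_1}$ over $[v^\ast, v^{\ast\ast}]$ (with $v^{\ast\ast}$ the upper endpoint) yields $\mathbb{P}[V \ge v^\ast] = \frac{c^2 - 5c + 5}{2-c}$, so $1 - \mathbb{P}[V \ge v^\ast] = \frac{(c-1)(3-c)}{2-c}$, and therefore $\mathbb{P}[V_{\max} \ge v^\ast] = 1 - \bigl(\tfrac{(c-1)(3-c)}{2-c}\bigr)^2$. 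Using $V_{\max} \ge v^\ast$ as a pointwise lower bound and computing $v^\ast\, a_{t_1}\,(t_2-t_1) = \frac{c(3-c)(c-1)(1+\epsilon)}{2(2-c)}$ gives exactly the claimed lower bound.

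The main obstacle is bookkeeping rather than conceptual: translating between the reparameterized variables $(V, T)$ and the content $w = h(V, T)$, tracking which intervals of $V$ correspond to $t_1$- versus $t_2$-targeted content (especially in Case 3 where supports overlap and $T \mid V$ is non-degenerate on the middle interval), and verifying the identity $u(h(v,t_1), t_2) = W v a_{t_1}(t_2 - t_1)$ and the auxiliary identity $v^\ast a_{t_1}(t_2-t_1) = \frac{c(3-c)(c-1)(1+\epsilon)}{2(2-c)}$. Once these are in hand, everything else is routine integration against the piecewise-constant density $g$ of Definition \ref{def:2types}.
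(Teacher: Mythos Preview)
Your proposal is correct and takes essentially the same approach as the paper's proof: both work in the reparameterized $(V,T)$-space of Definition~\ref{def:2types}, observe that only the configuration ``user is $t_2$ and winning content is $t_1$-targeted'' contributes positively, compute $u(h(v,t_1),t_2)$ explicitly, and then in Case~1 use disjointness of supports plus $V_{\max}\le 3/(2a_{t_1})$, while in Case~3 restrict to the event $V_{\max}\ge v^\ast$ on which $T_{\max}=t_1$ and bound $V_{\max}$ below by $v^\ast$. Your identity $u(h(v,t_1),t_2)=W\,v\,a_{t_1}(t_2-t_1)$ is equivalent to the paper's expression $\frac{W}{t_1}\wcheap(c-1)(1+\epsilon)$ via $\wcheap=v\,t_1 a_{t_1}$, and your closed-form expressions for $\mathbb{P}[V\le v^\ast]$ and $v^\ast a_{t_1}(t_2-t_1)$ match the paper's computations.
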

\begin{proof}
We apply \Cref{thm:2types}. Let $t_1 = \epsilon$ and $t_2 = c (1+\epsilon) - 1$. We see that $a_{t_1} = \frac{1}{1 + \epsilon}$, $a_{t_2} = \frac{1}{c(1+\epsilon)}$, and $s = 0$ in \Cref{example:KMR}. This implies that $c = \frac{a_{t_1}}{a_{t_2}}$. Moreover, we see that $f_t(\wcheap) = \max(0, (\wcheap) / t) - 1)$. Recall that our goal is to analyze
\[ \mathbb{E}_{\mathbf{w} \sim \left(\muengagement(2, c, u, \mathcal{T}_{2, \epsilon, c})\right)^2}[\UserUtility(\MPlatform; \mathbf{w})] = \mathbb{E}_{\mathbf{w} \sim \left(\muengagement(2, c, u, \mathcal{T}_{2, \epsilon, c})\right)^2}[\mathbb{E}[u(w_{i^*(\MPlatform; \textbf{w})}, t) \cdot \mathbbm{1}[u(w_{i^*(\MPlatform; \textbf{w})}, t) > 0]]].\]
To analyze this expression, we consider the reparameterization of the equilibrium given by the random vector $(\V, T)$ defined in \Cref{appendix:characterization2types}. As described in \Cref{appendix:characterization2types}, the function $h(v, t)$ corresponds to unique content over the form 
\[h(v, t) = [f_t(\wcheap), \wcheap] = [\max(0, (\wcheap) / t) - 1), \wcheap]\] 
that satisfies 
\[\MPlatform(h(v, t)) = \MPlatform([\max(0, (\wcheap) / t) - 1), \wcheap]) = \max(0, (\wcheap) / t) - 1) + \wcheap + 1
= v.\]

We first show that if $u(h(v,t), t') > 0$ for $(v,t) \in \text{supp}((\V, T))$ and $i \in \left\{1,2\right\}$, it must hold that 
$t' = t_2$ and $t = t_1$. We first observe that $u(h(v,t), t_1) \le 0$. Moreover, it holds that $u(h(v,t), t_2) = 0$ if $t = t_2$. This proves the desired statement.

We next show that $(v,t_1) \in \text{supp}((\V, T))$, it holds that: 
\[u(h(v,t_1), t_2) = \frac{W}{t_1} \cdot \wcheap \cdot (c - 1) \cdot ( 1 + \epsilon).\] 
Let $w =h(v, t_1)$. Observe that the utility function satisfies:
\[ u(w, t_2) = W \cdot t_2 \cdot (\wcostly - \wcheap/t_2 + 1)\] by definition. Since $(v, t_1) \in \text{supp}((\V, T))$, the equilibrium structure tells us that $w = h(v, t_1)$ satisfies $\wcheap \ge t_1$ and $\wcostly = (\wcheap / t_1) - 1$. This implies that 
\begin{align*}
  u(h(v,t_1), t_2) &= u(w, t_2) \\
  &= W \cdot t_2 \cdot (\wcostly - \wcheap/t_2 + 1) \\
  &= W \cdot t_2 \cdot \wcheap \cdot \left(\frac{1}{t_1} - \frac{1}{t_2}\right) = \frac{W}{t_1} \cdot \wcheap \cdot \left(t_2 - t_1 \right) \\
  &=  \frac{W}{t_1} \cdot \wcheap \cdot (c - 1) \cdot ( 1 + \epsilon)
\end{align*}
as desired. 

For the remainder of the analysis, we split into two cases: $c \ge 1.5$ and $1 < c \le (5- \sqrt{5})/2$.

\paragraph{Case 1: $c \ge 1.5$.} This corresponds to the first case of \Cref{thm:2types}. To analyze the user welfare expression, it suffices to restrict to the cases where the winning content $w_{i^*(\MPlatform; \textbf{w})}$ satisfies 
$u(w_{i^*(\MPlatform; \textbf{w})}, t') > 0$. By the above argument, it suffices to restrict to the case where the user has type $t' = t_2$ which occurs with probability $1/2$. 

We next show that $(v,t_1) \in \text{supp}((\V, T))$, it holds that: 
\[u(h(v,t_1), t_2) = \frac{3}{2} \cdot W \cdot (c - 1) \cdot ( 1 + \epsilon).\] We see that 
\[v = \wcheap (1 + 1/t_1) = \wcheap \cdot \frac{t_1 + 1}{t_1} \le \frac{3}{2 a_{t_1}} = \frac{3 (t_1 +1)}{2}\]
which means that:
\[\wcheap \le \frac{3 t_1}{2}. \] 
This implies that:
\begin{align*}
 u(h(v,t_1), t_2) &= \frac{W}{t_1} \cdot \wcheap \cdot (c - 1) \cdot ( 1 + \epsilon) \\
 &\le \frac{3}{2} \cdot W \cdot (c - 1) \cdot ( 1 + \epsilon)
\end{align*}
as desired. 

Let the content landscape $\textbf{w} = [w_1, w_2]$ in the reparameterized space be such that $w_1 = h(v^1, t^1)$ and $w_2 = h(v^2, t^2)$, and let $w_{i^*(\MPlatform; \textbf{w})}$ be $h(v^*, t^*)$. Since the user has type $t' = t_2$, the structure of the equilibrium in \Cref{thm:2types} implies that $t^* = t_1$ only if $t^1 = t^2 = t_1$. This implies that if $u(w_{i^*(\MPlatform; \textbf{w})}, t') > 0$, then it must hold that $t^1 = t^2 = t_1$. Moreover, in this case, the user utility is at most:
\[u(w_{i^*(\MPlatform; \textbf{w})}, t') \le \frac{3}{2} \cdot W \cdot (c-1) \cdot (1+\epsilon). \]

Putting this all together, we see that:
\begin{align*}
& \mathbb{E}_{\mathbf{w} \sim \left(\muengagement(2, c, u, \mathcal{T}_{2, \epsilon, c})\right)^2}[\UserUtility(\MPlatform; \mathbf{w})] \\
&= 
\mathbb{E}_{\mathbf{w} \sim \left(\muengagement(2, c, u, \mathcal{T}_{2, \epsilon, c})\right)^2}[\mathbb{E}[u(w_{i^*(\MPlatform; \textbf{w})}, t) \cdot \mathbbm{1}[u(w_{i^*(\MPlatform; \textbf{w})}, t) > 0]]] \\
&\le \mathbb{P}_{t' \sim T}[t' = t_2] \cdot \mathbb{P}_{(v^1, t^1) \sim (V,T)} [t^1 = t_1] \cdot \mathbb{P}_{(v^2, t^2) \sim (V,T)} [t^2 = t_1] \cdot \frac{3}{2} \cdot W \cdot (c-1) \cdot (1+\epsilon) \\
&= \frac{3}{16} \cdot W \cdot (c-1) \cdot (1+\epsilon).
\end{align*}

\paragraph{Case 2: $1 < c \le (5- \sqrt{5})/2$.} This corresponds to the third case of \Cref{thm:2types}. To analyze the user welfare expression, it suffices to restrict to the cases where the winning content $w_{i^*(\MPlatform; \textbf{w})}$ satisfies 
$u(w_{i^*(\MPlatform; \textbf{w})}, t') > 0$. By the above argument, it suffices to restrict to the case where the user has type $t' = t_2$ which occurs with probability $1/2$. 

Let the content landscape $\textbf{w} = [w_1, w_2]$ in the reparameterized space be such that $w_1 = h(v^1, t^1)$ and $w_2 = h(v^2, t^2)$, and let $w_{i^*(\MPlatform; \textbf{w})}$ be $h(v^*, t^*)$. Since the user has type $t' = t_2$, the structure of the equilibrium in \Cref{thm:2types} implies that if $v^1 > \frac{3 - \frac{a_{t_1}}{a_{t_2}}}{2 a_{t_2} \left(2 - \frac{a_{t_1}}{a_{t_2}} \right)} $ or if $v^2 > \frac{3 - \frac{a_{t_1}}{a_{t_2}}}{2 a_{t_2} \left(2 - \frac{a_{t_1}}{a_{t_2}} \right)} $, then it holds that $t^* = t_1$ and $v^* > \frac{3 - \frac{a_{t_1}}{a_{t_2}}}{2 a_{t_2} \left(2 - \frac{a_{t_1}}{a_{t_2}} \right)} $.

Moreover, we claim that if $w = h(v, t)$ is such that $v > \frac{3 - \frac{a_{t_1}}{a_{t_2}}}{2 a_{t_2} \left(2 - \frac{a_{t_1}}{a_{t_2}} \right)}$ and $t = t_1$, then the user utility is at least:
\[u(w, t') \ge W \cdot \frac{c \cdot (3 - c) \cdot (c-1) \cdot (1+\epsilon)}{2 (2 - c)}. \]
Note that:
\[\frac{3 - \frac{a_{t_1}}{a_{t_2}}}{2 a_{t_2} \left(2 - \frac{a_{t_1}}{a_{t_2}} \right)}  = \frac{c \cdot (1+\epsilon) \cdot (3 - c)}{2 (2 - c)}. \]
Moreover, we see that 
\[v = \wcheap(1 + 1/t_1) = \wcheap \cdot \frac{1 + \epsilon}{\epsilon} \ge \frac{c \cdot (1+\epsilon) \cdot (3 - c)}{2 (2 - c)}.\]
This implies that:
\[ \wcheap \ge \frac{c \cdot \epsilon \cdot (3 - c)}{2 (2 - c)}. \]
This implies that:
\[u(w, t') \ge \frac{W}{t_1} \cdot \wcheap \cdot (c-1) \cdot (1+\epsilon) \ge \frac{W}{\epsilon} \cdot \frac{c \cdot \epsilon \cdot (3 - c)}{2 (2 - c)} \cdot (c-1) \cdot (1+\epsilon) = W \cdot \frac{c \cdot (3 - c) \cdot (c-1) \cdot (1+\epsilon)}{2 (2 - c)}. \]

Putting this all together, we see that:
\begin{align*}
& \mathbb{E}_{\mathbf{w} \sim \left(\muengagement(2, c, u, \mathcal{T}_{2, \epsilon, c})\right)^2}[\UserUtility(\MPlatform; \mathbf{w})] \\
&= 
\mathbb{E}_{\mathbf{w} \sim \left(\muengagement(2, c, u, \mathcal{T}_{2, \epsilon, c})\right)^2}[\mathbb{E}[u(w_{i^*(\MPlatform; \textbf{w})}, t) \cdot \mathbbm{1}[u(w_{i^*(\MPlatform; \textbf{w})}, t) > 0]]] \\
&\ge \mathbb{P}_{t' \sim T}[t' = t_2] \cdot \mathbb{P}_{(v^1, t^1) \sim (V,T), (v^2, t^2) \sim (V,T)} \left[v^1 \text{ or } v^2 > \frac{3 - \frac{a_{t_1}}{a_{t_2}}}{2 a_{t_2} \left(2 - \frac{a_{t_1}}{a_{t_2}} \right)} \right] \cdot W \cdot \frac{c \cdot (3 - c) \cdot (c-1) \cdot (1+\epsilon)}{2 (2 - c)}. \\
&= \frac{1}{2} \cdot \left(1 - \left( 1 - \mathbb{P}_{(v, t) \sim (V,T)} \left[v > \frac{3 - \frac{a_{t_1}}{a_{t_2}}}{2 a_{t_2} \left(2 - \frac{a_{t_1}}{a_{t_2}} \right)} \right]\right)^2 \right) \cdot W \cdot \frac{c \cdot (3 - c) \cdot (c-1) \cdot (1+\epsilon)}{2 (2 - c)}. \\
\end{align*}
To simplify this expression, we see that: 
\begin{align*}
 1 - \mathbb{P}_{(v, t) \sim (V,T)} \left[v > \frac{3 - \frac{a_{t_1}}{a_{t_2}}}{2 a_{t_2} \left(2 - \frac{a_{t_1}}{a_{t_2}} \right)} \right] &= 1 - a_{t_1} \cdot \left(\frac{1}{a_{t_1}} -  \frac{3 - \frac{a_{t_1}}{a_{t_2}}}{2 - \frac{a_{t_1}}{a_{t_2}}} \left(\frac{1}{a_{t_2}} - \frac{1}{a_{t_1}} \right)\right) \\
 &= 1 - a_{t_1} \cdot \frac{1}{a_{t_1}} \left( 1 - \frac{3 - \frac{a_{t_1}}{a_{t_2}}}{2 - \frac{a_{t_1}}{a_{t_2}}} \left(\frac{a_{t_1}}{a_{t_2}} - 1 \right) \right) \\
 &= 1 - \left(1 - \frac{(3 -c) (c-1)}{2 - c} \right)  \\ 
 &= \frac{(3 -c) (c-1)}{2 - c}.
\end{align*}
Plugging this into the above expression, we obtain:
\[\frac{1}{2} \cdot \left( 1 - \left(\frac{(3 -c) (c-1)}{2 - c} \right)^2\right)  \cdot W \cdot \frac{c \cdot (3 - c) \cdot (c-1) \cdot (1+\epsilon)}{2 (2 - c)}. \]
\end{proof}

\begin{lemma}
\label{lemma:randomtwotypes}
Consider \Cref{example:KMR} with $P = 2$, $\gamma = 0$, and type space $ \mathcal{T}_{2, \epsilon, c}$ defined by \eqref{eq:typespace2types}.  If $c \ge 1.5$, then 
\[\mathbb{E}_{\mathbf{w} \sim \left(\murandom(2, c, u, \mathcal{T}_{2, \epsilon, c})\right)^2}[\UserUtility(\MTrivial; ƒ\mathbf{w})] = \frac{W}{2} \cdot \left(\epsilon + c(1+\epsilon) - 1 \right).\]
\end{lemma}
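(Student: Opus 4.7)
\textbf{Proof proposal for Lemma \ref{lemma:randomtwotypes}.} The plan is to observe that for Example \ref{example:KMR}, the utility at the zero action satisfies $u([0,0], t) = W \cdot t \cdot (0 - 0/t + 1) = W \cdot t \ge 0$ for every $t > 0$. Hence the minimum investment level $\beta_t = \min\{\wcostly \ge 0 : u([\wcostly, 0], t) \ge 0\}$ equals $0$ for both types $t_1 = \epsilon$ and $t_2 = c(1+\epsilon) - 1$. This puts us in case (b) of \Cref{thm:randomrecommendations}, so $\murandom(2, c, u, \mathcal{T}_{2,\epsilon,c})$ has its marginal of $\Wcostly$ as a point mass at $0$; combined with the baseline assumption that the marginal of $\Wcheap$ is a point mass at $0$, the entire distribution $\murandom$ is a point mass at $[0,0]$.

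Consequently, any realized content landscape is $\mathbf{w} = ([0,0],[0,0])$ with probability $1$. For a user of type $t' \sim \mathcal{T}_{2,\epsilon,c}$, both pieces of content yield utility $W \cdot t' \ge 0$, so $\mathbbm{1}[u(w_{i^*(\MTrivial; \mathbf{w})}, t') \ge 0] = 1$ and the realized consumption utility is $W \cdot t'$ regardless of the random tiebreak. Therefore
\[
\mathbb{E}_{\mathbf{w} \sim (\murandom(2, c, u, \mathcal{T}_{2,\epsilon,c}))^2}[\UserUtility(\MTrivial; \mathbf{w})] = \mathbb{E}_{t' \sim \mathcal{T}_{2,\epsilon,c}}[W \cdot t'] = \frac{W}{2}\bigl(\epsilon + c(1+\epsilon) - 1\bigr),
\]
which is the claimed equality.

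There is essentially no obstacle here: the entire argument reduces to verifying the hypothesis $\beta_t = 0$ needed to invoke \Cref{thm:randomrecommendations}(b), and then computing an expectation over a two-point uniform type distribution. The assumption $c \ge 1.5$ in the statement is not actually used in the derivation — it is simply the regime in which this lemma is later combined with \Cref{lemma:engagementtwotypes} to prove \Cref{prop:userutility2typeswellseparated}.
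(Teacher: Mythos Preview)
Your proposal is correct and follows essentially the same argument as the paper: verify $\beta_t = 0$ for both types via $u([0,0],t) = W \cdot t \ge 0$, invoke Theorem~\ref{thm:randomrecommendations} to conclude $\murandom$ is a point mass at $[0,0]$, and then average $u([0,0],t') = W \cdot t'$ over the two types. Your remark that the hypothesis $c \ge 1.5$ plays no role in the computation is also accurate.
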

\begin{proof}
 We apply Theorem \ref{thm:randomrecommendations}. In the construction, we see that the minimum investment level $\beta_t = 0$. This means that $\kappa =0$ so $\nu = 0$. Thus we see that the distribution $\murandom(P, c, u, \mathcal{T})$ is a point mass at $w = [0, 0]$. Observe that $u([0, 0], t) = W \cdot t$, which means that:
\[\mathbb{E}_{\mathbf{w} \sim \left(\murandom(2, c, u, \mathcal{T}_{2, \epsilon, c})\right)^2}[\UserUtility(\MTrivial; \mathbf{w})] = \frac{1}{2} \left(u([0, 0], \epsilon) + u([0, 0], c(1+\epsilon) - 1)\right) = \frac{W}{2} \cdot \left(\epsilon + c(1+\epsilon) - 1 \right). \]
\end{proof}

Using Lemma \ref{lemma:engagementtwotypes} and Lemma \ref{lemma:randomtwotypes}, we prove \Cref{prop:userutility2types}.
\begin{proof}[Proof of \Cref{prop:userutility2types}]
We construct an instances with two types where the user welfare of engagement-based optimization exceeds the user welfare of random recommendations. Interestingly, users are \textit{nearly homogeneous} in these instances. Consider \Cref{example:KMR} with $P = 2$, $\gamma = 0$, and type space $ \mathcal{T}_{2, \epsilon, c}$ defined by \eqref{eq:typespace2types}. 
Let $\muengagement(2, c, u,  \mathcal{T}_{2, \epsilon, c})$ be the symmetric mixed equilibrium specified in Definition \ref{def:2types}, and let $\muideal_{2, c, u} := \muideal(P, c, u, \mathcal{T}_{2, \epsilon, c})$ be the symmetric mixed equilibrium specified in Theorem \ref{thm:investmentbased}.\footnote{A simple consequence of the formulation in \Cref{sec:investment} is that $\muideal(P, c, u, \mathcal{T}_{2, \epsilon, c})$ is independent of $c$ and $\epsilon$ for the type spaces that we have constructed.} It suffices to show that there exists $c > 1$ such that 
  \[\limsup_{\epsilon \rightarrow 0} \frac{\mathbb{E}_{\mathbf{w} \sim \left(\muengagement(2, c, u, \mathcal{T}_{2, \epsilon, c})\right)^2}[\UserUtility(\MPlatform; \mathbf{w})]}{ \mathbb{E}_{\mathbf{w} \sim \left(\murandom(2, c, u, \mathcal{T}_{2, \epsilon, c})\right)^2}[\UserUtility(\MPlatform; \mathbf{w})]} > 1. \]

We apply Lemma \ref{lemma:engagementtwotypes} and Lemma \ref{lemma:randomtwotypes} to see that:
  \begin{align*}
   & \limsup_{\epsilon \rightarrow 0} \frac{\mathbb{E}_{\mathbf{w} \sim \left(\muengagement(2, c, u, \mathcal{T}_{2, \epsilon, c})\right)^2}[\UserUtility(\MPlatform; \mathbf{w})]}{ \mathbb{E}_{\mathbf{w} \sim \left(\murandom(2, c, u, \mathcal{T}_{2, \epsilon, c})\right)^2}[\UserUtility(\MPlatform; \mathbf{w})]}[\UserUtility(\MPlatform; \mathbf{w})] \\
   &\ge \limsup_{\epsilon \rightarrow 0} \frac{\frac{1}{2} \cdot \left( 1 - \left(\frac{(3 -c) (c-1)}{2 - c} \right)^2\right)  \cdot W \cdot \frac{c \cdot (3 - c) \cdot (c-1) \cdot (1+\epsilon)}{2 (2 - c)}}{\frac{W}{2} \cdot \left(\epsilon + c(1+\epsilon) - 1 \right)}  \\
   &= \limsup_{\epsilon \rightarrow 0} \frac{\left( 1 - \left( \frac{(3 -c) (c-1)}{2 - c}  \right)^2\right)  \cdot \frac{c \cdot (3 - c) \cdot (c-1) \cdot (1+\epsilon)}{2 (2 - c)}}{\left(\epsilon + c(1+\epsilon) - 1 \right)}  \\
   &=  \frac{\left( 1 - \left( \frac{(3 -c) (c-1)}{2 - c} \right)^2\right)  \cdot \frac{c \cdot (3 - c) \cdot (c-1)}{2 (2 - c)}}{c - 1}  \\
   &= \left( 1 - \left( \frac{(3 -c) (c-1)}{2 - c} \right)^2\right)  \cdot \frac{c \cdot (3 - c)}{2 (2 - c)}.
   \end{align*}
   It is easy to see that there exists $c > 1$ such that the above expression is strictly greater than $1$, as desired.  
\end{proof}

Using Lemma \ref{lemma:engagementtwotypes} and Lemma \ref{lemma:randomtwotypes}, we prove \Cref{prop:userutility2typeswellseparated}.

\begin{proof}[Proof of \Cref{prop:userutility2typeswellseparated}]
 We construct instances with two types where user welfare of random recommendations exceeds the user welfare of engagement-based optimization. Interestingly, users are \textit{well-separated} in these instances. Consider \Cref{example:KMR} with $P = 2$, $\gamma = 0$, and type space $ \mathcal{T}_{2, \epsilon, c}$ defined by \eqref{eq:typespace2types}. It suffices to show that if $c \ge 1.5$, then it holds that: 
  \[\mathbb{E}_{\mathbf{w} \sim \left(\muengagement(2, c, u, \mathcal{T}_{2, \epsilon, c})\right)^2}[\UserUtility(\MPlatform; \mathbf{w})] < \mathbb{E}_{\mathbf{w} \sim \left(\murandom(2, c, u, \mathcal{T}_{2, \epsilon, c})\right)^2}[\UserUtility(\MTrivial; \mathbf{w})]. \]
  We apply Lemma \ref{lemma:engagementtwotypes} and Lemma \ref{lemma:randomtwotypes} to see that:
  \begin{align*}
   \mathbb{E}_{\mathbf{w} \sim \left(\muengagement(2, c, u, \mathcal{T}_{2, \epsilon, c})\right)^2}[\UserUtility(\MPlatform; \mathbf{w})] &\le \frac{3}{16} \cdot W \cdot (c-1) \cdot (1+\epsilon) \\
   &< \frac{W}{2} \cdot (c-1) \cdot (1+\epsilon) \\
   &\le \frac{W}{2} \cdot \left(\epsilon + c(1+\epsilon) - 1 \right) \\
   &= \mathbb{E}_{\mathbf{w} \sim \left(\murandom(2, c, u, \mathcal{T}_{2, \epsilon, c})\right)^2}[\UserUtility(\MTrivial; \mathbf{w})],
  \end{align*}
  as desired.   
\end{proof}

\end{document}